\newif\ifpods
\newif\ifarxiv
\author{Christian Janos Lebeda}
\affiliation{%
  \institution{Inria, Université de Montpellier, INSERM}
  \country{France}}
\email{christian-janos.lebeda@inria.fr}
\author{Aleksandar Nikolov}
\affiliation{%
  \department{Department of Computer Science}
  \institution{University of Toronto}
  \country{Canada}}
\email{sasho.nikolov@utoronto.ca}
\author{Haohua Tang}
\affiliation{%
  \department{Department of Computer Science}
  \institution{University of Toronto}
  \country{Canada}}
\email{haohua.tang@mail.utoronto.ca}
\newtheorem{theorem}{Theorem}
\newtheorem{corollary}[theorem]{Corollary}
\newtheorem{definition}[theorem]{Definition}
\newtheorem{lemma}[theorem]{Lemma}
\newtheorem{observation}[theorem]{Observation}
\newtheorem{fact}[theorem]{Fact}
\theoremstyle{remark}
\newtheorem{remark}{Remark}
\newtheorem{theorem}{Theorem}[section]
\newtheorem{lemma}[theorem]{Lemma}
\newtheorem{corollary}[theorem]{Corollary}
\theoremstyle{definition}
\newtheorem{definition}[theorem]{Definition}
\theoremstyle{remark}
\newtheorem{remark}[theorem]{Remark}
\DeclareMathOperator{\tr}{tr}
\newcommand{\uni}{\mathcal{U}}
\newcommand{\R}{\mathbb{R}}
\newcommand{\C}{\mathbb{C}}
\DeclareMathOperator{\supp}{supp}
\newcommand{\christian}[1]{\textcolor{blue}{Christian: #1}}
\newcommand{\sasho}[1]{\textcolor{green}{Sasho: #1}}
\newcommand{\haohua}[1]{\textcolor{teal}{Haohua: #1}}
 \renewcommand{\christian}[1]{}
 \renewcommand{\sasho}[1]{}
 \renewcommand{\haohua}[1]{}
\newmdenv[
  topline=false,
  bottomline=false,
  rightline=false,
  skipabove=\topsep,
  skipbelow=\topsep
]{lined}
\newcommand{\gausstime}{\lambda}
\newcommand{\fq}{F} 
\newcommand{\abs}[1]{\vert #1\vert}
\newcommand{\ip}[1]{\langle #1\rangle}
\renewcommand{\SS}{\mathcal{S}}
\DeclareMathOperator{\E}{\mathbb{E}}
\DeclareMathOperator{\err}{err}
\title[Optimal Gaussian Noise for Differentially Private Marginal and Product Queries]{Weighted Fourier Factorizations: Optimal Gaussian Noise for Differentially Private Marginal and Product Queries}
\begin{document}

    
\begin{abstract}
    We revisit the task of releasing marginal queries under differential privacy with additive (correlated) Gaussian noise. 
    We first give a construction for answering arbitrary workloads of weighted marginal queries, over arbitrary domains.
    Our technique is based on releasing queries in the Fourier basis with independent noise with carefully calibrated variances, and reconstructing the marginal query answers using the inverse Fourier transform.
    We show that our algorithm, which is a factorization mechanism, is exactly optimal among all factorization mechanisms, both for minimizing the sum of weighted noise variances, and for minimizing the maximum noise variance. Unlike algorithms based on optimizing over all factorization mechanisms via semidefinite programming, our mechanism runs in time polynomial in the dataset and the output size. This construction recovers results of Xiao et al.~[Neurips 2023] with a simpler algorithm and optimality proof, and a better running time. 
    
    We then extend our approach to a generalization of marginals which we refer to as product queries. 
    We show that our algorithm is still exactly optimal for this more general class of queries.
    Finally, we show how to embed extended marginal queries, which allow using a threshold predicate on numerical attributes, into product queries. We show that our mechanism is \emph{almost} optimal among all factorization mechanisms for extended marginals, in the sense that it achieves the optimal (maximum or average) noise variance up to lower order terms.
\end{abstract}

\maketitle

\christian{I added this note so that we remember to disable notes before uploading}

\section{Introduction}
\label{sec:introduction}

In this work we study marginal queries and generalizations under differential privacy. We consider datasets $D$ in which each data point is specified by $d$ categorical attributes (we discuss our generalization to numerical queries later). A marginal query is given by a subset $S$ of the attributes, and asks, for each possible setting $t$ of the attributes in $S$, for the number of data points in $D$ that have attributes in $S$ with values agreeing with $t$. For example, consider a dataset that tracks sex, education level, place of residence, marital status, presence or absence of some genetic markers, and whether a person has been diagnosed with a certain disease. Then the answer to a marginal query corresponding to the sex attribute, one of the genetic marker attributes, and the disease diagnosis attribute, is a 3-dimensional table with $2\times 2\times 2$ cells: one for each setting of these three attributes. The cells of the table give the number of males that have the genetic marker and have been diagnosed with the disease, the number of females that have the genetic marker and have been diagnosed, the number of males that don't have the marker and have been diagnosed, etc. 

Marginal queries like these are known by different names, e.g., OLAP data cubes, and contingency tables, and are ubiquitous when summarizing high-dimensional data, including data from surveys, clinical studies, and official statistics. Often, however, the underlying data is sensitive, and protecting its privacy is sometimes even mandated by law. In these situations, releasing marginal queries can raise significant privacy concerns. Answers to a rich enough set of marginal queries can reveal enough about a dataset to enable an adversary to reconstruct most of the data~\cite{KRSU10}, or to infer the membership of a given data point in the dataset~\cite{BunUV14}. For this reason, we adopt the differential privacy framework~\cite{dwork06calibrating}, and study marginal query release subject to the constraints of this framework.

A concrete motivating example to keep in mind are the marginal queries released by the US Census Bureau for the 2020 Census of Housing and Population. After it was discovered that prior disclosure avoidance systems failed to adequately protect the privacy of the census data~\cite{GarfinkelAM19,jason2020privacy,UScensus2021reconstruction}, the US Census Bureau implemented a new differentially private algorithm, the TopDown algorithm, for the 2020 Census~\cite{censusTopDown}. The TopDown algorithm releases estimates of selected marginal queries partitioned based on a geographical hierarchical structure. 
Since the accuracy of certain marginal queries has high practical importance additional privacy budget is allocated for those estimates. We similarly allow the user to specify an importance weight for each marginal query. 


As another motivating application, marginal queries have also been used as a subroutine when generating synthetic data~\cite{McKenna_Miklau_Sheldon_2021,McKennaSM19,ZhangEtal21-PrivSyn}. 
Those techniques typically select a set of marginals that are then privately estimated using Gaussian noise. 
The synthetic data is generated so that it roughly matches the measured marginals. 

In some applications it is also natural to consider more complex extensions of marginal queries. A particularly natural example are the extended marginal queries (called prefix-marginals in~\cite{McKenna_Miklau_Hay_Machanavajjhala_2023}), in which the attributes are partitioned into categorical and numerical, and, for a set $S$ of attributes, the extended marginal query asks, for each possible setting of the categorical attributes in $S$, and each possible choice of prefix intervals for the numerical attributes in $S$, how many data points agree with the settings of the categorical attributes and have numerical attribute values lying in the chosen prefix intervals. Extended marginals allow us, for example, to ask how many data points correspond to males under the age of 35, or to females under the age of 45, etc. Workloads of extended marginals also appear in products released by the US Census Bureau, see examples described by McKenna et al.~\cite{McKenna_Miklau_Hay_Machanavajjhala_2023}.

Marginal query release under differential privacy has been studied extensively since differential privacy was first introduced. An exhaustive account of this line of work is beyond the scope of this paper, but we highlight the results most relevant to our contributions. The early work of Barak, Chaudhuri, Dwork, Kale, McSherry, and Talwar~\cite{BCDKMT07}, which initiated the formal study of differentially private marginal query release, considered binary attributes and proposed a method to release answers to marginal queries by adding Laplace noise to a workload of Fourier queries, i.e., queries that compute the Fourier transform of the empirical distribution of the data. They further showed how to make the query answers consistent with some real dataset. Later work showed lower bounds on the minimum error necessary to answer marginal queries under differential privacy~\cite{KRSU10,BunUV14}, and investigated the trade-offs between privacy, accuracy, and computational complexity in answering marginal queries~\cite{UllmanV10,ThalerUV12,DworkNT15}. 
McKenna, Miklau, Hay, and Machanavajjhala proposed an efficient method to approximately optimize over a class of private algorithms for answering marginal queries~\cite{McKenna_Miklau_Hay_Machanavajjhala_2023}. In particular, they consider algorithms that first compute private estimates of other marginal queries, and then reconstruct answers to the marginal queries that were originally asked. 
Xiao, He, Zhang, and Kifer~\cite{xiao2023optimal} gave an explicit algorithm for answering marginal queries which is efficient and optimal over the same class of private algorithms considered by McKenna et al. Concurrent to our work, the same authors, joined by Toksoz and Ding, extended their technique to support more general queries, including extended marginal queries, but did not show any optimality results for this extension~\cite{xiao2025optimal}. 
We discuss their work and its relation to ours in more detail below.


\subsection{Problem Setup}
We consider a dataset $D := (x^{(1)}, \ldots, x^{(n)})$, consisting of a sequence of $n$ points from a data universe $\uni$. Each data point has $d$ attributes, where the $i$-th attribute is drawn from the set $\uni_i := \{0, \ldots, m_i-1\}$.\footnote{We can accommodate any finite set by mapping it bijectively to $\{0, \ldots, m_i - 1\}$ for some $m_i$.} Marginal queries are then specified by sets $S \subseteq [d]$, and partial assignments $t \in \uni_S := \prod_{i \in S}\uni_i$. I.e., $t$ has a value $t_i \in \uni_i$ for each $i \in S$, and specifies an assignment to each attribute in $S$. Then the marginal queries are given by $q_{S,t}(D)$, equal to the number of data points $x^{(i)}$ in $D$ agreeing with $t$ on $S$, i.e., $q_{S,t}(D) := |\{i: x_j^{(i)} = t_j\ \forall j \in S\}|$. A marginal query workload $Q_{\SS}$ is then specified by a collection of sets of attributes $\SS$. We always assume that, for each $S \in \SS$, all possible marginal queries $q_{S,t}(D)$ for all $t \in \uni_S$ are asked. I.e., we assume that, for each $S \in \SS$, we need to privately estimate the full table of marginals corresponding to $S$.

To generalize this set-up to extended marginals, we partition the set of attributes $[d]$ into the categorical attributes $C$, and the numerical attributes $N$. We can now redefine $q_{S,t}(D)$ to equal 
\[
q_{S,t}(D)
|\{i: x_j^{(i)} = t_j\ \forall j \in S\cap C, x_j^{(i)} \le t_j \ \forall j \in S \cap N\}|.
\]
Once again, an extended marginal query workload is specified by a collection of sets of attributes $\SS$, and we assume that, for each $S \in \SS$, all possible queries $q_{S,t}(D)$ for all $t \in \uni_S$ are included in the workload.

Informally, differential privacy requires that the randomized algorithm $\mathcal{A}$ that takes as input a private dataset $D$, and outputs (approximate) answers to the queries in $Q_\SS$, has the property that the probability distribution on outputs of $\mathcal{A}(D)$ is similar to the probability distribution on outputs of $\mathcal{A}(D')$ for any $D'$ that is neighboring to $D$. 
In our work, we adopt the add/remove notion of neighboring, i.e., $D$ and $D'$ are neighboring (denoted $D \sim D'$) if and only if we can get $D'$ from $D$ by adding or removing at most one data point from $D$. We refer to Section~\ref{sec:prelim-dp} \ifpods and \cref{sec:pods-prelim-dp} \fi
for the formal definitions. \ifpods Our mechanisms rely on the Gaussian mechanism, which answers queries with independent Gaussian noise scaled to sensitivity, as a basic primitive. We describe it and its properties in detail in \cref{sec:pods-prelim-dp}.\fi

\ifpods \else
Let us take the Gaussian mechanism as a baseline~\cite{DinurNissim03,DworkN04,DworkKMMN06OurDataOurselves}. 
For a function $f:\uni^* \to \R^K$,
the Gaussian mechanism releases $f(D) + Z$, where $Z$ is a $K$-dimensional normally distributed random vector with independent coordinates, 
and variance $(\Delta f)^2$ at each coordinate.\footnote{In the rest of the introduction we ignore the privacy parameters. Technically, we give the results for $1$-GDP.}
Here $\Delta f$ is the sensitivity of $f$, and equals $\max_{D\sim D'}\|f(D) - f(D')\|_2$. If we take $f$ to be the function that maps $D$ to the true answers to all marginal queries in $\SS$, then it is easy to see that $\Delta f = \sqrt{|\SS|}$, since adding or removing a data point can only affect $q_{S,t}(D)$ for a single partial assignment $t$, and for that $t$ we have $|q_{S,t}(D) - q_{S,t}(D')| = 1$. 
We can improve over this baseline by correlating the Gaussian noise, as we discuss below.\fi

\subsection{Our Contributions, and Factorization Mechanisms}
\label{sec:contributions}

\paragraph{Weighted marginal workloads.}
Our first contribution is a mechanism that privately answers any workload of marginal queries over arbitrary finite domains. 
Motivated by applications, such as the US Census, in which different marginal queries have different importance, we allow the queries to be weighted, and optimize a weighted average of the noise variances. In particular, we assign a non-negative weight $p(S)$ to each attribute set $S$ and measure error as the square root of the weighted sum of variances, where the variance of $q_{S,t}$ is scaled by $\frac{p(S)}{|\uni_S|}$. In this normalization, the weight $p(S)$ of $S$ is split among the $|\uni_S|$ partial assignments $t$ defining different marginal queries on $S$. We call this notion of error the weighted root mean squared error. This is similar to error measures considered in prior work, e.g., by McKenna et al.~\cite{McKenna_Miklau_Hay_Machanavajjhala_2023}, and Xiao et al.~\cite{xiao2023optimal}. 
Our algorithm 
is inspired by the Fourier theoretic approach of~\cite{BCDKMT07}. We first compute the Fourier coefficients of the empirical distribution of $D$, and add independent Gaussian noise to each resulting Fourier query, with the variance of the noise carefully chosen to minimize the total noise added. This step already achieves the privacy guarantees. We then use the inverse Fourier transform to reconstruct answers to the marginal queries from the noisy Fourier query estimates. The resulting algorithm is roughly as efficient as the baseline Gaussian mechanism, and has running time polynomial in the dataset size, the dimension $d$, and the number of queries. While this approach is similar to that of Barak et al.~\cite{BCDKMT07}, we extend it to Gaussian noise, non-binary domains, and optimize it by choosing the noise variances non-uniformly. 
We discuss our approach in more details in \cref{sec:overview}.
\ifpods
The algorithm is described precisely as Algorithm~\ref{alg:pods-gen}, and its guarantees for weighted root mean squared error are given in Theorem~\ref{thm:pods-upper-bound-general-avg}. 
\else
The algorithm is described precisely as Algorithm~\ref{alg:gen}, and its guarantees for weighted root mean squared error are given in Theorem~\ref{thm:upper-bound-general-avg}. 
\fi


\paragraph{Optimality among Factorization Mechanisms.} This algorithm is an instance of the factorization mechanisms framework of Edmonds, Nikolov, and Ullman~\cite{EdmondsNU20}, which itself generalizes the matrix mechanism of Li, Miklau, Hay, McGregor, and Rastogi~\cite{LiMHMR15}.\footnote{It is common to refer to all factorization mechanisms as matrix mechanisms. We prefer the name factorization mechanisms, and reserve the name matrix mechanism for the original mechanisms proposed in~\cite{LiMHMR15}.} In general, a factorization mechanism ``factors'' the queries $Q$ into strategy queries $R$ and a reconstruction matrix $L$. The strategy queries should be linear, in the sense that we can write them as $R(D) = \sum_{i = 1}^n R(x^{(i)})$. The true query answers are given by $LR(D)$. To use this factorization as a private mechanism, we release $R(D)$ using the Gaussian mechanism, and multiply the resulting private estimate by $L$. In the case of our algorithm, the strategy queries are given by the Fourier queries, and the reconstruction matrix is derived from the inverse Fourier transform.

Factorization mechanisms have been the subject of intense research, and can significantly improve over the Gaussian mechanism baseline in many settings~\cite{McKenna_Miklau_Hay_Machanavajjhala_2023,HenzingerUU23,xiao2023optimal,HenzingerU25,LiuUZ24,lebeda2024} (see also the recent survey~\cite{factmech-survey}). The class of factorization mechanisms is also equivalent to mechanisms that add unbiased correlated Gaussian noise to the true query answers~\cite{NikolovT24}. Furthermore, factorization mechanism are known to achieve nearly optimal error among all differentially private mechanisms in several important settings~\cite{NikolovTZ16-GeometryOfDPSparseApproximate,EdmondsNU20,NikolovT24}. For example, factorization algorithms are optimal up to absolute constants among unbiased differentially private estimates of the true query answers~\cite{NikolovT24}, and among all differentially private algorithms when the dataset size $n$ is large enough~\cite{EdmondsNU20}. Similar techniques have also been used to reduce error for hierarchical queries~\cite{DawsonGK0KLMMNS23}. Nevertheless, in other settings biased and data-dependent algorithms may achieve smaller error~\cite{HayRMS10}. 

As our next contribution, we show that our mechanism achieves \emph{optimal weighted root mean squared error} among all factorization mechanisms for any marginal workload.
More precisely, no factorization mechanism can achieve smaller weighted average noise variance. 
This result is given in \ifpods\cref{thm:pods-fact-main}\else\cref{thm:fact-main}\fi. 

Let us remark here that it is possible to optimize error (measured as any linear function of the noise variances) over all factorization mechanisms using semidefinite programming~\cite{EdmondsNU20}. The resulting running times are polynomial in the number of queries $|Q_{\SS}|$ and the universe size $|\uni|$. In the case of marginal queries, however, $|\uni|$ is exponential in $d$ or worse, and this running time is prohibitive. Our results, by contrast, achieve running times polynomial in $d$ and give explicit factorizations.

Next, we consider the problem of minimizing the maximum variance over all queries in an arbitrary workload of marginals. Here we reuse our algorithm for weighted root mean squared error. We choose weights $p^*(S)$ for $S \in \SS$ that sum to $1$, and maximize the weighted root mean squared error. Since the resulting maximization problem is concave in the weights, we can solve it efficiently using standard methods. Once the $p^*(S)$ weights are computed, we simply run Algorithm~\ref{alg:gen}. First order optimality conditions show that, for these weights, the weighted root mean squared error equals the maximum variance, as $p^*$ is supported on queries for which the noise has maximum variance. Once again, the algorithm runs in polynomial time in $n$, $d$, and $|Q_{\SS}|$, and it achieves \emph{optimal maximum variance} among all factorization mechanisms. The guarantees of the algorithm are given in
\ifpods
\Cref{lem:pods-error-gen-max}, and its optimality is proved in Theorem~\ref{thm:pods-fact-main}. 
\else
Theorem~\ref{thm:upper-bound-general}, and its optimality is proved in Theorem~\ref{thm:fact-main}. 

It is worth noting that, unlike the algorithm described above, this one is not a ``closed form'' solution, since it relies on an optimization routine to find $p^*$. Nevertheless, the algorithm still has the same structure of measuring Fourier queries and reconstructing marginal query answers using the inverse Fourier transform, and only the variance of the noise added to each Fourier query depends on $p^*$. Moreover, $p^*$ can be found much more efficiently than optimizing over all factorization mechanisms.
\fi

We note that there are known asymptotic expressions for the optimal error achievable on the workload of all $k$-way marginals over binary domains by either factorization~\cite{EdmondsNU20,NikolovT24}, or arbitrary differentially private mechanism~\cite{BunUV14}. By contrast, we are interested in exactly optimal factorizations and exact expressions for their error.

\paragraph{Extensions.} We extend our technique to more expressive workloads, which we refer to as product queries. We associate a function $\phi_j:\uni_j \to \R$  to each attribute $j \in [d]$ and define, for $S \subseteq [d]$ and $t \in \uni_S$, the queries $q^\phi_{S,t}(D)$ as  $\sum_{i=1}^n\prod_{j \in S}\phi_j(t_j-x^{(i)}_j)$, where the difference is interpreted $\bmod\,m_j$. Using the same approach as for marginals, we give factorization mechanisms for arbitrary workloads of product queries, and prove their optimality among all factorization mechanisms with respect to weighted root mean squared error, and maximum variance. The mechanisms are roughly as computationally efficient as the ones for marginals. The upper bound is given as \cref{thm:upper-bound-product-avg}~and~\cref{thm:upper-bound-product} in~\ifpods\cref{sec:pods-product-queries}\else\cref{sec:weighted-product-queries}\fi, and the lower bound is in~\cref{thm:fact-main-product}. 

We then show that we can embed any workload of extended marginal queries into a workload of product queries, to which we can apply our mechanism (\cref{thm:upper-bound-pref-avg}). This approach is similar to the design of explicit factorization mechanisms for prefix (i.e., threshold) queries by embedding the into ``circulant queries'', i.e., queries that count points in an interval on a circle~\cite{Choquette-ChooM23,HenzingerU25}. Indeed, prefix queries are a special case of extended marginals when there is only one numerical attribute, and our mechanism for this special case reduces to the factorization mechanism of Henzinger and Upadhyay~\cite{HenzingerU25}. We also show lower bounds for extended marginal queries that match the error of our mechanisms up to lower order terms, both for root mean squared error and maximum variance (\cref{thm:ext-marginals-lb}). In particular, the lower and upper bounds converge to the same value as the minimum domain size of numerical attributes grows to infinity. For the special case of prefix queries, our lower bounds are slightly weaker than the best known lower bound~\cite{matouvsek2020factorization}, but only by an additive constant. At the same time they are significantly more general. 

\paragraph{Comparison with ResidualPlanner.}
In closely related prior work,
Xiao et al.~\cite{xiao2023optimal} proposed a mechanism, ResidualPlanner, which 
efficiently computes private estimates for an arbitrary workload of marginals, and achieves optimal error among all factorization mechanisms for a class of error measures that includes weighted root mean squared error and maximum variance.
The authors define a \emph{subtraction matrix} and construct \emph{residual queries} by combining subtraction matrices using the Kronecker product. These residual queries serve as the strategy queries for the mechanism.
ResidualPlanner computes all residual queries required for the marginal workload, adds \emph{correlated} Gaussian noise to each vector of residual query answers, and recovers marginal estimates by inverting the transformation. 



To speed up processing time, ResidualPlanner represents martices implicitly. 
The authors show how to optimize the privacy budget used for each residual query for each error measure in the class they consider. Notably, for weighted root mean squared error, they give closed form expressions for the optimal privacy budget allocation. The extended version of the paper \citep{xiao2025optimal} (which is concurrent with this work) introduces ResidualPlanner+, which supports more general workloads similar to our product queries, but requires an externally provided \emph{strategy replacement matrix}, which is used to construct the subtraction matrix. 

There is significant overlap between our results and \cite{xiao2023optimal,xiao2025optimal}, but the techniques we use are significantly different. Next we go into more details about how our results compare with ResidualPlanner and ResidualPlanner+, and argue that our Fourier-theoretic approach offers some advantages.
First, we improve over Residual Planner~\cite{xiao2023optimal} in the following ways:
\begin{itemize}
    \item Running time: Reconstructing estimates of marginal queries is the bottleneck for the running time of our mechanisms. \cite{xiao2023optimal} reconstruct estimates to a $k$-way marginal with domain $\uni_S$ in time $O(k \vert \uni_S \vert^2)$.
    We reconstruct estimates in time $O(\vert \uni_s \vert \log (\vert \uni_S \vert))$. 
    \ifarxiv\christian{They claim $k \vert \uni_s \vert^2$ in the Theorem. The proof suggests that they might actually achieve the stronger $k 2^k \vert \uni_s \vert$. We still beats this unless $k < loglog(m)$.}\fi
    \item Simplicity: We argue that our mechanism is simpler than ResidualPlanner. Our strategy queries compute Fourier coefficients, rather than using Kronecker products of custom matrices.
    Since our technique releases a number of sensitivity $1$ queries, our privacy proofs are significantly simpler. Computing the variance of any marginal estimate is similarly easy.
    \item Explicit factorization: A technical difference from our technique is that the noise added to the answers of a residual query in ResidualPlanner is correlated for non-binary data, which is not standard for the factorization framework. While ResidualPlanner can be expressed equivalently as a standard factorization mechanism, Xiao et al.~state they avoid doing so because of the complexity of the re-formulation. By contrast, we show that our mechanisms can be expressed as standard factorization mechanisms in a straightforward way.
    \item Simpler lower bounds: Xiao et al.~show that ResidualPlanner is optimal among all factorization mechanisms for a class of error measures via a symmetry argument. They argue that the covariance matrix of the noise distribution of a factorization mechanism must satisfy certain symmetries, induced by the structure of marginal queries, and that ResidualPlanner finds an optimal mechanism with these symmetries. The technical details of this argument are fairly involved. By contrast, we derive simple necessary and sufficient optimality conditions for \emph{any} factorization mechanism.  Verifying that our mechanism satisfies these optimality conditions is then straightforward.
\end{itemize}
We improve over ResidualPlanner+ for generalizations of marginals~\cite{xiao2025optimal} in the following ways:
\begin{itemize}
    \item External input: We give simple explicit algorithms for estimating any workload of product queries. ResidualPlanner+ relies on \emph{strategy replacement} matrices as external input to the framework.
    \item Optimality results: ResidualPlanner+ provides no strong theoretical optimality guarantees for the error it achieves, beyond the ones already known for ResidualPlanner. The error they achieve is also heavily dependent on the externally provided strategy replacement matrices. In contrast, we give matching upper and lower bounds for product queries both for weighted root mean squared error and maximum variance, using the same technique we used for marginal queries. We also give a lower bound for extended marginal queries that matches the leading term of our upper bound. Here we use the singular value lower bound on the error of factorization mechanisms~\cite{LiM13}, and use insights from the lower bound for product queries in order to compute the necessary estimates of sums of singular values for extended marginals. Note that, by contrast, it is unclear how to adapt the symmetry argument used to show the optimality of ResidualPlanner to extended marginal queries. 
\end{itemize} 

\subsection{Technical Intuition}
\label{sec:overview}


Here we present the central idea behind our approach.
For simplicity, we present the results for $1$-GDP\footnote{The definition of GDP is deferred to \ifpods\cref{sec:pods-prelim-dp}\else\cref{sec:prelim-dp}\fi. In this section, we only use the fact that the Gaussian mechanism satisfies differential privacy, or, formally, $1$-GDP, and that post-processing preserves it.} and focus on estimating all $2$-way marginal queries over binary attributes. 
In the end of this section we discuss how we generalize the technique to other settings.
All results apply naturally to $\mu$-GDP by scaling all noise samples by $1/\mu$.

\ifpods \else The Gaussian mechanism would release estimates for all $2$-way marginals by adding independent noise from $\mathcal{N}(0, {d \choose 2})$ to each query.
We aim to reduce the magnitude of noise by taking advantage of the inherent correlation between marginal queries. \fi

We privately estimate aggregate queries in the Fourier basis. 
In the case of binary attributes, the queries we are interested in are relatively simple. 
For all subsets of $[d]$ with at most $2$ elements we answer aggregate queries of the form
\[
\fq_{\emptyset}(D) = \sum_{i \in [\vert D \vert]} (-1)^{0} = \vert D \vert\,; \quad
\fq_{\{j\}}(D) = \sum_{i \in [\vert D \vert]} (-1)^{x_j^{(i)}}\,; \quad
\fq_{\{j, k\}}(D) = \sum_{i \in [\vert D \vert]} (-1)^{x_j^{(i)} + x_k^{(i)}}.
\]
These queries are,  up to scaling, the Fourier coefficients up to level $2$ of the empirical distribution of dataset points. 

It is easy to see that each of these queries has sensitivity $1$. 
We can thus release these queries privately by adding noise from $\mathcal{N}(0, {d \choose 2} + d + 1)$ to each of them. 
However, we can do slightly better if we add less noise to the queries that we will reuse for multiple marginal queries later. 
Similarly to \cite{lebeda2024} whose mechanism handles the case of $1$-way marginals, we scale the variance inversely proportional to the square root of the number of queries in which each Fourier coefficient appears. Specifically, for all $j, k \in [d]$ we release the following:
\begin{align*}
\sigma^2 = {d \choose 2} + d\sqrt{d-1} + \sqrt{d \choose 2}
\quad & \quad \tilde \fq_{\emptyset}(D) = \fq_{\emptyset}(D) + \mathcal{N}\left(0, \sigma^2/\sqrt{d \choose 2}\right) \,; \\
\tilde \fq_{\{j\}}(D) = \fq_{\{j\}}(D) + \mathcal{N}\left(0, \sigma^2/\sqrt{d - 1}\right) 
\quad & \quad \tilde \fq_{\{j, k\}}(D) = \fq_{\{j, k\}}(D) + \mathcal{N}(0, \sigma^2) \,.
\end{align*}

We can release all the noisy queries above under our privacy constraints. 
We omit the proof here.

From these noisy estimates, we can recover unbiased estimates for the $2$-way marginals as post-processing. Notice that for any $j \in [d]$, $t_j \in \{0,1\}$, and any $x \in \{0,1\}^d$,
\[
\frac{1 + (-1)^{t_j} (-1)^{x_j}}{2} = 
\begin{cases}
    1 & t_j = x_j\\
    0 & t_j \neq x_j
\end{cases}.
\]
Therefore, for any $j, k \in [d]$ and $t \in \{0,1\}^{\{j,k\}}$,  
\begin{align*}
    q_{\{j,k\},t}(x) 
&\coloneq \mathbbm{1}\{t_j = x_j\} \cdot  \mathbbm{1}\{t_k = x_k\} 
= \frac{1 + (-1)^{t_j} (-1)^{x_j}}{2}\cdot \frac{1 + (-1)^{t_k} (-1)^{x_k}}{2} \\
&= \frac14(1 + (-1)^{t_j} (-1)^{x_j} + (-1)^{t_k} (-1)^{x_k} + (-1)^{t_j + t_k}(-1)^{x_j + x_k}) \\
&= \frac14(\fq_\emptyset(x) + (-1)^{t_j} \fq_{\{j\}}(x) + (-1)^{t_k} \fq_{\{k\}}(x) + (-1)^{t_j + t_k}\fq_{\{j,k\}}(x)) \,.
\end{align*}
As a result, a marginal query on the attributes $\{j,k\}$ with assignment $t \in \{0,1\}^{\{j,k\}}$ can be estimated by
\[
 \tilde{q}_{\{j,k\}, t}(D) = \frac{1}{4}\left( \tilde \fq_{\emptyset}(D) + (-1)^{t_j} \cdot \tilde \fq_{\{j\}}(D) + (-1)^{t_k} \cdot \tilde \fq_{\{k\}}(D) + (-1)^{t_j + t_k} \cdot \tilde \fq_{\{j,k\}}(D)  \right) \,.
\]
Then the error $\tilde{q}_{\{j,k\}, t}(D)-{q}_{\{j,k\}, t}(D)$ is distributed as a mean zero Gaussian with variance
\[
    \frac{1}{4^2} \left(1 + 2/\sqrt{d-1} + 1/\sqrt{d \choose 2} \right) \cdot \left({d \choose 2} + d\sqrt{d-1} + \sqrt{d \choose 2}\right) \,.
\]

The expression above approaches ${d \choose 2}/16$ as $d$ increases, where the main source of the error is the estimate of $\tilde \fq_{\{j,k\}}(D)$. By contrast, a straightforward application of the Gaussian mechanism would add noise with variance ${d\choose 2}$ to each query.
The approach above generalizes naturally to $k$-way marginals by estimating Fourier coefficients up to level $k$.
Each marginal query for binary attributes can be recovered using $2^k$ Fourier queries, and the standard deviation of the error approaches $2^{-k} \sqrt{d \choose k}$ as $d$ increases.
In the more general setting where an attribute has one of $m$ values the Fourier coefficients are complex roots of unity, but the sensitivity is still bounded by $1$.
The improvement over i.i.d. noise is not as large as for binary attributes, since the marginals are not as correlated.
Nevertheless, we still achieve an improvement and the standard deviation of the error approaches
$(1 - 1/m)^{k} \sqrt{d \choose k}$.

Then, in \ifpods\cref{sec:algorithm} \else\cref{sec:upper-bound-arbitrary-marginal} \fi we consider much more general workloads, in which attributes can have different domain size and we assign a weight to the error of marginal queries for each subset of attributes.
The algorithm is slightly more complicated in this setting, but the underlying idea is the same.
We privately estimate the Fourier coefficients required to recover all non-zero weight marginals, and we allocate additional privacy budget to important coefficients that are reused for many marginals.
Finally, in \ifpods\cref{sec:pods-product-queries-short} \else\cref{sec:upper-bound-product-and-extended} \fi we extend our technique to product queries, and to extended marginals. The underlying idea of the mechanism still remains the same, but the product queries affects the privacy budget allocation.

\subsection*{Organization}

\ifpods
The remainder of the paper is organized as follows.
In \cref{sec:preliminaries} we present the problem of privately estimating marginal queries and provide the relevant background.
In \cref{sec:algorithm} we present our mechanisms for privately releasing marginal queries and provide matching lower bounds. 
In \cref{sec:pods-product-queries-short} we discuss our results for product queries and extended marginals. 
Due to space limitations, we defer proofs and some technical details to the appendix.
\else
The remainder of the paper is organized as follows.
In \cref{sec:preliminaries} we present the problem of privately estimating marginal queries and provide the relevant background.
In \cref{sec:algorithm} we present our mechanisms for privately releasing marginal queries. 
In \cref{sec:upper-bound-product-and-extended} we extend our mechanism to product queries and extended marginals. 
In \cref{sec:lower-bound} we show lower bounds for factorization mechanisms for answering weighted marginal and product queries that match our upper bounds, as well as a lower bound for extended marginals showing our mechanism is optimal up to lower order terms.
\fi

\section{Preliminaries}
\label{sec:preliminaries}

\subsection{Marginal Queries}
\label{sec:prelim-marginals}

We consider data points $x \in \uni$ with $d$ attributes where the $i$-th attribute has domain $\uni_i \coloneq \{0,1,\dots,m_i-1\}$.
In general, we have $\uni \coloneq \prod_{i \in [d]} \uni_i$. A commonly studied special case is binary attributes where we have $x \in \{0,1\}^d$.

A $k$-way marginal query is parameterized by a set $S \subseteq [d]$ where $\vert S \vert = k$ and an assignment $t \in \uni_S \coloneq \prod_{i \in S} \uni_i$.
For a data point $x \in \uni$ a marginal query $q_{S,t} \colon \uni \rightarrow \{0,1\}$ evaluates to
\[
q_{S, t}(x) = \begin{cases}
    1 & \text{if } \forall i \in S : x_i = t_i, \\
    0 & \text{otherwise.}
\end{cases}
\]

We define a marginal query of a dataset $D$ with $n$ data points as $q_{S,t}(D) = \sum_{i=1}^{\vert D \vert} q_{S,t}(x^{(i)})$, where $x^{(i)}$ is the $i$-th data point in $D$ for some arbitrary order.
Our goal is to privately estimate the value of marginal queries for a dataset under differential privacy. 
Note that for any $S \subseteq [d]$, there are a total of $\vert \uni_S \vert = \prod_{i \in S} m_i$ marginal queries, one for each possible assignment of attributes in $S$. 
We always assume that, once we choose $S$, we ask each possible marginal query for each possible setting of the attributes in $S$.
This is standard practice both in research papers~(e.g. \cite{McKennaSM19, McKenna_Miklau_Sheldon_2021}), and in deployments of differential privacy~(e.g. \cite{censusTopDown}).
Specifically, we are given a workload $Q_{\SS}$ of marginal queries defined by a collection $\SS$ of subsets of $[d]$. 
For each subset $S \in \SS$, the workload contains all queries $q_{S,t}$ for all $t \in \uni_S := \prod_{i \in S}\uni_i$. 
We use the notation $\SS_{\downarrow}$ for the collection of all sets $R \subseteq [d]$ such that $R\subseteq S$ for some $S \in \SS$. I.e., this is the closure of $\SS$ under taking subsets.

\ifarxiv
\paragraph{Extensions}

We extend our technique to a generalization of marginal queries that we call product queries. We define product queries by functions $\phi = (\phi_1, \ldots, \phi_d)$, $\phi_j:\uni_j \to \R$, and a query $q^\phi_{S,t}$ on a single data point is defined by 
\[
q^\phi_{S,t}(x) := \prod_{j \in S} \phi_j((t_j - x_j)\bmod m_j).
\]
We recover standard marginals $q_{S,t}$ from product queries by setting $\phi_j(z) := \mathbbm{1}\{z = 0\}$.

We also consider an extension of marginals where attributes can be categorical or numerical. The queries for categorical attributes match standard marginal queries. For numerical attributes, we consider prefix or suffix predicates. 
We thus define $T_i:= \uni_i$ for $i \in C$, and $T_i := \{-m_i, \ldots, 0, \ldots, m_i-1\}$ 
for $i\in N$, and $T_S := \prod_{i \in S}T_i$, where $C$ and $N$ are the set of categorical and numerical attributes.
We can now redefine the query $q_{S,t}$ for $S \subseteq [d]$ and $t \in T_S$ as 
\[
q_{S,t}(x) := \left(\prod_{j \in S \cap C} \mathbbm{1}\{x_j = t_j\}\right)\left(\prod_{\substack{j \in S \cap N\\t_j \ge 0}}\mathbbm{1}\{x_j \le t_j\}\right)\left(\prod_{\substack{j \in S \cap N\\t_j < 0}}\mathbbm{1}\{x_j \ge |t_j|\}\right)
\]
We show how to embed extended marginals as product queries in \cref{sec:product-apps}.

\fi

\subsection{Differential Privacy}
\label{sec:prelim-dp}


Differential privacy~\cite{dwork06calibrating} is a framework for preserving privacy by ensuring that the distributions of a mechanism for any pair of neighboring datasets are not too far apart.
The difference between neighboring datasets, as defined below, corresponds to adding or removing all data about any individual data point.
Several definitions of differential privacy exists, and
our results apply to any variant satisfied by the Gaussian mechanism. 
We present our results using Gaussian Differential Privacy because it exactly describes the privacy guarantees of additive Gaussian noise.
Informally, GDP ensures that the pair of distributions for neighboring datasets is at least as hard to distinguish as two normal distributions.
\ifpods
We provide the formal definition of $\mu$-GDP in the appendix (\Cref{def:gdp}).
\fi

\begin{definition}[Neighboring datasets]
    \label{def:neighboring}
    Two datasets $D$ and $D'$ are neighboring, denoted $D \sim D'$, if we can obtain one dataset of the pair by adding one data point to the other dataset.
\end{definition}

\ifarxiv
\begin{definition}[{Gaussian Differential Privacy~\cite[Definition~4]{DongRS22-GDP}}]
    \label{def:gdp}
    A randomized mechanism $\mathcal{M} \colon \uni^* \rightarrow \mathcal{R}$ satisfies $\mu$-GDP if for all pairs of neighboring datasets $D \sim D'$ it holds that
    \[
        T(\mathcal{M}(D),\mathcal{M}(D')) \geq T(\mathcal{N}(0, 1), \mathcal{N}(\mu, 1)) \,,
    \]
    where $T(P,Q): [0,1] \rightarrow [0,1]$ denotes the trade-off function for two distributions $P$ and $Q$ defined on the same space. 
    The tradeoff function is defined as 
    \[         
        T(P,Q)(\alpha) = \inf\{\beta_\phi : \alpha_\phi \leq \alpha \} \,,     
    \] 
    where the infimum is taken over all (measurable) rejection rules $\phi$, and $\alpha_\phi$ and $\beta_\phi$ denote the type I and type II error rates, respectively. 
\end{definition}

The Gaussian mechanism~\cite{DinurNissim03,DworkN04,DworkKMMN06OurDataOurselves} is one of the most important tools in differential privacy. 
The mechanism achieves the desired privacy guarantees by adding unbiased Gaussian noise to all queries scaled by the $\ell_2$ sensitivity. 
Applying the Gaussian mechanism directly to our setting gives us a baseline for privately estimating marginal queries.

\begin{lemma}[The Gaussian mechanism]
    \label{lem:gaussian}
    Let $q \colon \uni^* \rightarrow \mathbb{R}^d$ be a set of queries with $\ell_2$ sensitivity $\Delta q \coloneq \max_{D \sim D'} \|q(D) - q(D')\|_2$. 
    Then the mechanism that outputs $q(X) + Z$ where $Z \sim \mathcal{N}\left(0, \frac{(\Delta q)^2}{\mu^2} I_d\right)$ satisfies $\mu$-GDP.
\end{lemma}

\begin{lemma}[Gaussian noise for marginal queries]
    \label{lem:baseline}
    Let $\SS = (S_1,\dots,S_m)$ be a collection of sets such that $S_i \subseteq [d]$.
    Then the mechanism that for each $i \in [m]$ and each assignment $t \in \uni_{S_i}$ independently samples noise $Z_{S_i,t} \sim \mathcal{N}(0, m/\mu^2)$ and releases $q_{S_i, t}(D) + Z_{S_i,t}$ satisfies $\mu$-GDP.
\end{lemma}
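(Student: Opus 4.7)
The plan is to apply the Gaussian mechanism (\Cref{lem:gaussian}) to the vector-valued query $q(D)$ that stacks all answers $q_{S_i,t}(D)$ for $i \in [m]$ and $t \in \uni_{S_i}$. It suffices to show that the $\ell_2$ sensitivity of this stacked query is exactly $\sqrt{m}$: the stated privacy guarantee then follows immediately by plugging variance $(\sqrt{m})^2/\mu^2 = m/\mu^2$ into \Cref{lem:gaussian}, since i.i.d.\ noise $\mathcal{N}(0,m/\mu^2)$ added coordinatewise is exactly noise from $\mathcal{N}(0, (m/\mu^2) I)$.

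To bound the sensitivity, fix neighbouring datasets $D \sim D'$ that differ by a single data point $x \in \uni$. For each fixed $S_i$, exactly one assignment $t^{(i)} \in \uni_{S_i}$ matches $x$ on the coordinates in $S_i$, namely $t^{(i)}_j = x_j$ for all $j \in S_i$. Every other query $q_{S_i,t}$ with $t \neq t^{(i)}$ takes the same value on $D$ and $D'$, while $|q_{S_i,t^{(i)}}(D) - q_{S_i,t^{(i)}}(D')| = 1$ since the added/removed point contributes $1$ to precisely this cell. Hence the block indexed by $S_i$ contributes exactly $1$ to $\|q(D) - q(D')\|_2^2$, and summing over $i \in [m]$ yields $\|q(D) - q(D')\|_2^2 = m$, so $\Delta q = \sqrt{m}$.

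The main (in fact only) point to verify carefully is this per-block contribution, which amounts to observing that a single data point $x$ ``hits'' exactly one cell of each marginal table indexed by $S_i$. This is immediate from the definition of marginal queries and the fact that $\uni_{S_i} = \prod_{j \in S_i}\uni_j$ enumerates every possible joint assignment on $S_i$ exactly once. Consequently there is no genuine obstacle: the lemma is a direct corollary of \Cref{lem:gaussian} combined with this unit-$\ell_2$-mass-per-block observation.
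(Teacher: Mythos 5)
Your proof is correct and follows the same argument as the paper: both compute the $\ell_2$ sensitivity by observing that each block $S_i$ contributes exactly one changed cell (of magnitude $1$), giving $\Delta q = \sqrt{m}$, and then invoke \Cref{lem:gaussian}. Your write-up is just a more detailed spelling-out of the paper's one-line sensitivity calculation.
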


\begin{proof}
    Notice that for each $S_i$, adding or removing a data point changes exactly one marginal query by $1$ while the remaining $(\prod_{i \in S} \vert \uni_i \vert) - 1$ queries are unaffected. 
    The $\ell_2$ sensitivity for all queries in $\SS$ is thus $\sqrt{\sum_{i \in [m]} 1^2} = \sqrt{m}$.
    The privacy guarantee follows from Lemma \ref{lem:gaussian}.
\end{proof}
\fi

\ifarxiv
We use some standard properties of differential privacy in our proofs.

\begin{lemma}[Post-processing~{\cite[Proposition~4]{DongRS22-GDP}}]
    \label{lem:post-processing}
    Let $\mathcal{M} \colon \mathcal{U}^* \rightarrow \mathcal{R}$ denote any $\mu$-GDP mechanism. 
    Then for any (randomized) function $g \colon \mathcal{R} \rightarrow \mathcal{R}'$ the composed mechanism $g \circ \mathcal{M} \colon \mathcal{U}^* \rightarrow \mathcal{R}'$ also satisfies $\mu$-GDP.
\end{lemma}

\begin{lemma}[Composition~{\cite[Corollary~3.3]{DongRS22-GDP}}]
    \label{lem:composition}
    Let $\mathcal{M}_1 \colon \mathcal{U}^* \rightarrow \mathcal{R}_1$ and $\mathcal{M}_2 \colon \mathcal{U}^* \rightarrow \mathcal{R}_2$ denote a pair of mechanisms that satisfies $\mu_1$-GDP and $\mu_2$-GDP, respectively. 
    Then the mechanism $\mathcal{M}(D) = (\mathcal{M}_1(D), \mathcal{M}_2(D))$ that outputs the result from both mechanisms satisfies $\sqrt{\mu^2_1 + \mu_2^2}$-GDP.
\end{lemma}
\fi

\ifpods
The privacy properties of our mechanisms follow from the Gaussian mechanism which is typically defined for real-valued queries. 
We use a variant for complex-valued queries defined below. 
It is easy to show that the complex-valued Gaussian mechanism is equivalent to a real-valued Gaussian mechanism, see the proof of \Cref{lem:complex-to-real-reduction} for details.
\begin{lemma}[The complex Gaussian mechanism]
    \label{lem:pods-complex-to-real-reduction}
    Let $q \colon \uni^* \rightarrow \mathbb{C}^d$ be a set of queries with $\ell_2$ sensitivity $\Delta q \coloneq \max_{D \sim D'} \|q(D) - q(D')\|_2$. 
    Then the mechanism that outputs $q(D) + Z$ where $Z \sim \mathcal{CN}\left(0, \frac{2(\Delta q)^2}{\mu^2} I_d\right)$ satisfies $\mu$-GDP.
\end{lemma}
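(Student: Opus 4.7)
The plan is to reduce the complex Gaussian mechanism to the standard real Gaussian mechanism by identifying $\mathbb{C}^d$ with $\mathbb{R}^{2d}$ via the real/imaginary decomposition, and then invoking the privacy guarantee of the real Gaussian mechanism stated in the preliminaries together with post-processing. This is a very short, purely calculational argument, and no deep new ideas are needed.

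Concretely, I would first define the real-valued query $\tilde q : \uni^* \to \R^{2d}$ by $\tilde q(D) := (\operatorname{Re} q(D),\, \operatorname{Im} q(D))$. Since for any vector $w \in \C^d$ we have $\|w\|_2^2 = \|\operatorname{Re} w\|_2^2 + \|\operatorname{Im} w\|_2^2$, the $\ell_2$ sensitivity of $\tilde q$ in $\R^{2d}$ equals the $\ell_2$ sensitivity $\Delta q$ of $q$ in $\C^d$. Next I would recall the convention used for $\mathcal{CN}$: a sample $Z \sim \mathcal{CN}(0, \sigma^2 I_d)$ can be written as $Z = X + iY$ with $X, Y \sim \mathcal{N}(0, \tfrac{\sigma^2}{2} I_d)$ independent. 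Taking $\sigma^2 = 2(\Delta q)^2/\mu^2$, the noise decomposes into real and imaginary parts that are independent $\mathcal{N}(0, (\Delta q)^2/\mu^2 I_d)$ random vectors. Therefore the mechanism $q(D) + Z$ is, in the $\R^{2d}$ identification, exactly the real Gaussian mechanism $\tilde q(D) + \tilde Z$ with $\tilde Z \sim \mathcal{N}(0, (\Delta q)^2/\mu^2 I_{2d})$.

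Finally, by the (real-valued) Gaussian mechanism fact from the preliminaries, this real mechanism satisfies $\mu$-GDP, and recombining the $2d$ real outputs into $d$ complex outputs is a deterministic post-processing, so by \Cref{lem:post-processing} (in the non-PODS version, or the corresponding fact cited in \cref{sec:pods-prelim-dp}), the original complex-valued mechanism also satisfies $\mu$-GDP.

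There is no substantive obstacle; the only thing to be careful about is the normalization convention for the complex Gaussian distribution, which is precisely what forces the factor of $2$ in the variance $2(\Delta q)^2/\mu^2$ appearing in the statement. Once that convention is fixed, the proof reduces to a one-line change of variables plus a citation of the real Gaussian mechanism and post-processing.
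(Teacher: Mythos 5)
Your proposal is correct and matches the paper's proof essentially verbatim: both define the real-valued query $(\operatorname{Re} q, \operatorname{Im} q) \in \R^{2d}$, observe that its $\ell_2$ sensitivity equals $\Delta q$, invoke the real Gaussian mechanism (Lemma~\ref{lem:gaussian}), and conclude by post-processing after noting that the variance convention for $\mathcal{CN}$ produces independent real and imaginary parts with variance $(\Delta q)^2/\mu^2$ each. No further comment is needed.
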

\fi

\ifpods
Due to space limitations we discuss preliminaries for complex numbers in \Cref{sec:prelim-complex}.
\else
\subsection{Complex Numbers} 
\label{sec:prelim-complex}

For any complex number $z$, we use $\overline{z}$ for its complex conjugate, and \[\abs{z} = \sqrt{z\overline{z}} = \sqrt{\mathrm{Re}(a)^2 + \mathrm{Im}(a)^2}\] for the absolute value. We recall the standard inner product over $\C^d$, defined as $\ip{x,y} = \sum_{i=1}^d x_i \overline{y_i}$. The standard $\ell_2$ norm on $\C^d$ is $\|x\|_2 = \sqrt{\ip{x,x}} = \sqrt{\sum_{i=1}^d \abs{x_i}^2}$. For a matrix $A$ with complex entries, we use $A^*$ for its conjugate transpose, i.e., $A^*_{i,j} = \overline{A_{j,i}}$ for all $i$ and $j$.

\begin{definition}[Roots of unity]
    \label{def:roots-unity}
    For any positive integer $n$, the $n$-th roots of unity is the set of complex numbers $z$ satisfying $z^n = 1$. For the primitive $n$-th root of unity, we write 
    \[
      \omega_n \coloneq e^{2\pi i / n} = \cos(2\pi/n) + i \cdot \sin(2\pi/n) \,.
    \]
\end{definition}
Note that all roots of unity and their powers lie on the unit circle in $\C$, i.e., $|\omega_n| = 1$.

\begin{lemma}[Multidimensional fast Fourier transform~\cite{cooley1965algorithm}]
    \label{lem:multidim-fft}
    Let $\uni := \uni_1 \times \ldots \times \uni_d$, where $\uni_i := \{0, \ldots, m_i - 1\}$, and let $x \in \C^\uni$ be a $d$-dimensional vector of complex values indexed by $\uni$. 
    Consider for any $t \in \uni$ the $d$-dimensional discrete Fourier transform (DFT) 
    \[
        y_t = \sum_{a \in \uni} x_a \prod_{i = 1}^d \omega_{m_i}^{a_it_i} \,.
    \]
    All entries of $y \in \C^\uni$ can be computed from $x$ in time $O(m \log m)$ where $m = \prod_{i \in [d]} m_i$ using a multi-dimensional DFT (see e.g. \cite[Section~30-3]{CLRS} for a construction).
\end{lemma}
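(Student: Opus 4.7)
The plan is to reduce the $d$-dimensional DFT to a sequence of one-dimensional DFTs along each coordinate axis, and invoke a standard one-dimensional FFT for the inner transforms. Since the exponential kernel factorizes as $\prod_{i=1}^d \omega_{m_i}^{a_i t_i}$, we can rewrite
\[
y_t = \sum_{a_1 \in \uni_1} \omega_{m_1}^{a_1 t_1} \sum_{a_2 \in \uni_2}\omega_{m_2}^{a_2 t_2}\cdots \sum_{a_d \in \uni_d}\omega_{m_d}^{a_d t_d} x_a.
\]
Processing the sums one axis at a time, the inner sum over $a_i$ is, for each fixed value of the remaining indices, a one-dimensional DFT of length $m_i$ applied to the current partial transform stored in the array. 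Thus it suffices to show that sweeping through each axis in turn using a fast one-dimensional DFT routine achieves the desired complexity.

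Next I would invoke a one-dimensional FFT subroutine that runs in $O(n \log n)$ for an arbitrary positive integer $n$. The classical radix-$2$ Cooley--Tukey algorithm suffices when $n$ is a power of two; for general $n$ one can use a mixed-radix Cooley--Tukey variant, or Bluestein's chirp-$z$ transform, both of which attain the same $O(n \log n)$ bound. Along axis $i$, we must perform one such one-dimensional DFT of size $m_i$ for each choice of the other $d-1$ coordinates, giving $m / m_i$ transforms per axis at a total cost of
\[
O\!\left(\tfrac{m}{m_i}\cdot m_i \log m_i\right) = O(m \log m_i).
\]

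Summing over all $d$ axes, the total running time is
\[
O\!\left(m \sum_{i=1}^d \log m_i\right) = O(m \log m),
\]
since $\sum_{i=1}^d \log m_i = \log \prod_{i=1}^d m_i = \log m$. The only step requiring a little care is the assumption of an $O(n\log n)$ one-dimensional FFT for \emph{arbitrary} $n$, rather than only for powers of two; this is the main subtlety, and it is handled by the Bluestein/mixed-radix constructions mentioned above. The referenced CLRS Section~30-3 gives a concrete implementation.
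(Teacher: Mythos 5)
Your proposal is correct and is exactly the standard axis-by-axis reduction referenced by the paper's citation to CLRS Section~30-3; the paper itself does not supply a proof but simply cites Cooley--Tukey and CLRS, so there is nothing further to compare against. You correctly identify the one non-trivial point, namely that the $m_i$ need not be powers of two, and handle it by invoking a general-$n$ $O(n\log n)$ one-dimensional FFT (mixed-radix or Bluestein); the cost accounting $\sum_i O(m\log m_i) = O(m\log m)$ is sound.
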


\begin{definition}[Complex Gaussian distribution]
    \label{def:complex-gaussian}
    We denote the complex zero-mean normal distribution as $\mathcal{CN}(0, \sigma^2)$, where $\sigma^2 \ge 0$ is the variance.
    If $Z \in \mathbb{C}$ is a sample from the distribution $Z \sim \mathcal{CN}(0, \sigma^2)$, then the real and imaginary parts of $Z$ are distributed as independent samples from $\mathcal{N}(0, \sigma^2/2)$.

    More generally, we denote the complex $d$-variate zero-mean normal distribution as $\mathcal{CN}(0, \Sigma)$, where the covariance matrix $\Sigma \in \C^{d\times d}$ is a Hermitian positive semidefinite matrix. $\mathcal{CN}(0, \Sigma)$ is the probability distribution on $\C^d$ with probability density function \[p(z) := \frac{1}{\pi^d \sqrt{\det(\Sigma)}}\exp\left(-\ip{\Sigma z, z}\right).\]
\end{definition}

The following lemma, which is standard, will be useful in analyzing our algorithms.

\begin{lemma}\label{lem:gauss-lincomb}
    Suppose that $Z\sim \mathcal{CN}(0,\Sigma)$ is a $d$-variate normally distributed random vector, and that $A$ is an $\ell \times d$ matrix with complex entries. Then $AZ \sim \mathcal{CN}(0,A\Sigma A^*)$. In particular, if $Z_1, \ldots, Z_d$ are independent, and $Z_i \sim \mathcal{CN}(0,\sigma_i^2)$, and $a \in \C^d$ then $\sum_{i=1}^d a_i Z_i \sim \mathcal{CN}(0,\sum_{i=1}^d |a_i|^2 \sigma_i^2)$.
\end{lemma}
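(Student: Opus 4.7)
The second claim follows immediately from the first by taking $A$ to be the $1\times d$ matrix with entries $a_1,\ldots,a_d$, in which case $A\Sigma A^* = \sum_{i=1}^d |a_i|^2 \sigma_i^2$, and noting that independence of $Z_1,\ldots,Z_d$ with $Z_i\sim\mathcal{CN}(0,\sigma_i^2)$ makes the joint vector $Z$ a $\mathcal{CN}(0,\mathrm{diag}(\sigma_1^2,\ldots,\sigma_d^2))$ random variable. So the plan reduces to proving the first claim.

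My approach is to reduce the statement to the analogous result for real Gaussian vectors, which is standard. Using the identification $\C^d\cong\R^{2d}$ via $z\mapsto(\mathrm{Re}(z),\mathrm{Im}(z))$, a centered complex Gaussian $Z\sim\mathcal{CN}(0,\Sigma)$ corresponds to a centered real Gaussian on $\R^{2d}$. Writing the Hermitian PSD matrix $\Sigma = M + iN$ with $M$ real symmetric PSD and $N$ real antisymmetric, a direct inspection of the density from \Cref{def:complex-gaussian} shows that the corresponding real covariance on $\R^{2d}$ has the $2\times 2$ block form
\[
\Sigma_\R = \tfrac{1}{2}\begin{pmatrix} M & -N \\ N & M \end{pmatrix}.
\]
Similarly, writing $A = B + iC$ with $B,C$ real, the $\C$-linear map $z\mapsto Az$ is represented under the identification by the $2\ell\times 2d$ real block matrix
\[
A_\R = \begin{pmatrix} B & -C \\ C & B \end{pmatrix},
\]
because $(B+iC)(X+iY) = (BX - CY) + i(BY + CX)$.

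With these representations in hand, I apply the classical real-Gaussian fact that $A_\R Z \sim \mathcal{N}(0, A_\R \Sigma_\R A_\R^T)$, and then verify by block-matrix multiplication that $A_\R \Sigma_\R A_\R^T$ equals exactly the block encoding of $A\Sigma A^*$ as a Hermitian PSD matrix on $\C^\ell$. This translates back to $AZ\sim \mathcal{CN}(0, A\Sigma A^*)$, as desired. To handle the case where $A\Sigma A^*$ may be singular (so that the density is not defined directly), I would first treat the invertible case and then use a limiting argument, perturbing $\Sigma$ by $\varepsilon I$ and letting $\varepsilon\to 0^+$, which preserves both sides of the distributional identity under weak convergence.

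\textbf{Main obstacle.} The bookkeeping in the block-matrix verification $A_\R \Sigma_\R A_\R^T = (A\Sigma A^*)_\R$ is the only non-routine step; one must correctly track how $M,N,B,C$ combine under multiplication and transpose, using that $N$ is antisymmetric and $M$ is symmetric. A cleaner alternative, if one wishes to avoid this computation, is to first establish that the characteristic function of $Z\sim\mathcal{CN}(0,\Sigma)$ is $\phi_Z(w) = \exp(-\tfrac14 w^*\Sigma w)$ for $w\in\C^d$ (viewing it as the real characteristic function on $\R^{2d}$ via $w\mapsto \mathbb{E}[\exp(i\,\mathrm{Re}\langle w,Z\rangle)]$), from which
\[
\phi_{AZ}(w) = \mathbb{E}[\exp(i\,\mathrm{Re}\langle A^*w, Z\rangle)] = \phi_Z(A^*w) = \exp\!\bigl(-\tfrac14 w^*(A\Sigma A^*)w\bigr)
\]
and uniqueness of characteristic functions immediately yields the conclusion; this version trades the block algebra for a single Gaussian integral to verify the characteristic function formula.
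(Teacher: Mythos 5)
The paper asserts this lemma without proof, introducing it as a standard fact, so there is no internal argument to compare against; your proposal is a correct proof of that standard fact, and both routes you sketch go through. One caveat worth flagging: the density in \Cref{def:complex-gaussian} is misstated --- for the intended circularly-symmetric complex Gaussian with $\E[ZZ^*]=\Sigma$, the density should be $\frac{1}{\pi^d\det\Sigma}\exp(-z^*\Sigma^{-1}z)$ rather than $\frac{1}{\pi^d\sqrt{\det\Sigma}}\exp(-\ip{\Sigma z, z})$. Your claimed real block covariance $\tfrac12\left(\begin{smallmatrix}M & -N\\ N & M\end{smallmatrix}\right)$ is the one that comes from the corrected density, and it is also the one consistent with the paper's scalar-case description (real and imaginary parts independent, each of variance $\sigma^2/2$), so your proof is aimed at the right target even though the cited formula is off. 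Of your two routes, I would recommend the characteristic-function calculation for a self-contained write-up: it bypasses the block-matrix bookkeeping entirely (which, for the record, does check out --- $A_\R\Sigma_\R A_\R^T$ is indeed the block encoding of $A\Sigma A^*$), it extends verbatim to the case where $A\Sigma A^*$ is singular, and it thereby eliminates the separate $\varepsilon I$ perturbation-and-limit step you propose for the density-based argument.
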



\fi

\ifpods
\section{Optimal Factorization for Weighted Marginal Queries}
\label{sec:algorithm}

In this section we introduce our technique for adding noise to marginal queries. 
We first show how the marginal queries can be recovered from aggregate queries in the Fourier basis of $\uni$. 
This observation immediately yields a correlated Gaussian noise mechanism, or, equivalently, a factorization mechanism, that achieves noise with lower variance than the standard Gaussian mechanism. 
It is then easy to observe that some of the queries in the Fourier basis have a larger impact on the error than other.
By allocating privacy budget weighted by the importance of the queries, we can further reduce the error of our factorization mechanism.
In \Cref{sec:pods-lower-bound}, we show that our mechanism is, in fact, optimal among all factorization mechanisms!

Due to space limitations we omit proofs from the main body. 
Full proofs and some additional results can be found in an extended version of this section in \Cref{app:pods-marginals-upper-bounds}.

\subsection{Marginal Queries in the Fourier Basis}
\label{sec:pods-fourier-prelim}

Let $a \in \uni$ be a choice of value for each attribute, and recall that $\omega_{m_i} \coloneq \exp(2\pi i / m_i)$ is a primitive $m_i$-th root of unity. Our algorithm relies on aggregate queries of the form
\begin{equation}
    \label{eq:pods-fourier-query}
    \fq_a(D) := \sum_{i = 1}^{\vert D \vert} \overline{\chi_a(x^{(i)})}
    \text{\ \ where\ \ }
    \chi_{a}(x) :=  \prod_{j =1}^d \omega_{m_j}^{a_j \cdot x_j}.
\end{equation}
As in the example of binary domains from \cref{sec:overview}, these queries give, up to scaling, the Fourier coefficients of the empirical distribution of $D$. The functions $\chi_a$ are the Fourier characters, which form an orthogonal basis for functions on $\uni$. 
We denote the support of $a$ by $\supp(a) \coloneq \{j: a_j \neq 0\}$. The size of the support, i.e., the weight of $a$, is denoted by $\|a\|_0$.

It is easy to see that adding or removing one data point from $D$ always changes the value of any $F_{a}(D)$ by a unit complex number (see Definition~\ref{def:roots-unity} for background).
We use this fact later to bound the sensitivity for privatizing the queries.
Note that $\chi_0(x) = 1$ for all $x \in \uni$, so $F_0(D) = \vert D \vert$ is the dataset size.

Next, we show how we recover marginal queries using the aggregate Fourier queries above using an inverse discrete Fourier transform. 
The main observation is that, for any $x\in \uni$, $j \in [d]$, and $t_j \in \uni_j$, 
\[
\frac{\sum_{a=0}^{m_j-1} \omega_{m_j}^{a t_j}\omega_{m_j}^{-a x_j}}{m_j}
=
\frac{\sum_{a=0}^{m_j-1} \omega_{m_j}^{a (t_j-x_j)}}{m_j}
=
\begin{cases}
    1 & t_j = x_j\\
    0 & t_j \neq x_j
\end{cases}, 
\]
and, therefore,
\begin{equation}\label{eq:pods-marginals-inv-x}
q_{S,t}(x) = \prod_{j \in S} \mathbbm{1}\{t_j = x_j\} = 
\prod_{j \in S}\frac{\sum_{a=0}^{m_j-1} \omega_{m_j}^{a t_j}\omega_{m_j}^{-a x_j}}{m_j}
=
\frac{1}{\vert \uni_S \vert}\sum_{\substack{a \in \uni\\ \supp(a) \subseteq S}} {\chi_a(t)}\overline{\chi_a(x)} \,.
\end{equation}
Above, we slightly abused notation: we used $\chi_a(t)$ even though $t_j$ is defined only for $j \in S$. Note, nevertheless, that this is well defined since $\supp(a) \subseteq S$, and $\chi_a(t)$ only depends on the coordinates of $t$ in $\supp(a)$.

Summing over dataset points now gives us
\begin{align}
q_{S,t}(D) = 
\sum_{i = 1}^{\vert D\vert} q_{S,t}(x^{(i)}) 
&= 
\frac{1}{\vert \uni_S \vert}\sum_{\substack{a \in \uni\\ \supp(a) \subseteq S}} {\chi_a(t)}{\fq_a(D)} \,.
\label{eq:pods-marginals-inverse-fourier}
\end{align}


\subsection{Estimating Arbitrary Marginal Query Workloads}
\label{sec:pods-upper-bound-arbitraty-marginal}

We now consider a general set up for answering arbitrary workloads of marginal queries using queries in the Fourier basis. Recall that a workload $Q_{\SS}$ of marginal queries on the universe $\uni := \uni_1 \times \ldots \times \uni_d$, where $\uni_i := \{0, \ldots, m_i - 1\}$, is defined by a collection $\SS$ of subsets of $[d]$. For each subset $S \in \SS$, the workload contains all queries $q_{S,t}$ for all $t \in \uni_S := \prod_{i \in S}\uni_i$. 

We first consider an error metric which is a weighted version of root mean squared error. In this case, together with the workload $Q_{\SS}$, we are also given a weight function $p:\SS \to \R_{\ge 0}$. Although it is not essential to our algorithm, we will assume the weights are normalized, i.e., $\sum_{S \in \SS} p(S) = 1$. Then the goal is to compute estimates $\tilde{q}_{S,t}(D)$ for all $S \in \SS$ and all $t \in \uni_S$ that minimize the error
\[
\err_p(q,\tilde{q}) := 
\left(\sum_{S \in \SS} \frac{p(S)}{|\uni_S|} \sum_{t \in \uni_S}\E[(\tilde{q}_{S,t}(D) - q_{S,t}(D))^2] \right)^{1/2}.
\]
Notice that, in this definition, the weight $p(S)$ of a set of
attributes $S$ is evenly split between the $\uni_S$ marginal queries
corresponding to $S$.
That is, we assume that the estimates for all assignments of $S$ have equal importance. 
This assumption aligns with practical deployments such as the US Census~\cite{censusTopDown}, and was also made in prior work~\cite{McKenna_Miklau_Hay_Machanavajjhala_2023,xiao2023optimal}.

We now discuss our approach for estimating weighted marginal queries in the Fourier basis.
We privately estimate all Fourier aggregate  queries $\fq_{a}$
necessary for reconstructing the answers to queries in $Q_{\SS}$.
Notice in \cref{eq:marginals-inverse-fourier} that any query $F_{a}(D)$ is used to estimate $q_{S,t}(D)$ for any set of attributes $S$ that contains $\supp(a)$. Rather than privately estimating $F_a(D)$ separately for each such $S$, we can, of course, reuse the estimate.
Reusing values already gives us a slight improvement over the Gaussian mechanism baseline.
However, we can further reduce the error by releasing more accurate answers to Fourier coefficient that are reused a lot.
The
amount of noise added to $\fq_a$ is proportional to how much this noise contributes to $\err_p(q,\tilde{q})^2$ in expectation. 
As a final step, we always remove any imaginary part of the estimate, since that must come from noise.
The full description is given in Algorithm~\ref{alg:pods-gen}. 
In \Cref{sec:pods-lower-bound} we also show that this choice of noise magnitudes gives an optimal
factorization for $Q_{\SS}$.
Below we present some intuition behind the design of our algorithm and then present its privacy and error guarantees.

\begin{algorithm}
\begin{algorithmic}


\State For any $a \in \uni$, let
\[
\tau_a :=
\sqrt{\sum_{\substack{S \in \SS\\S \supseteq \supp(a)}}\frac{p(S)}{|\uni_S|^2}},
\hspace{2em}
\text{and }
\hspace{2em}
\tau := \frac{1}{\mu^2} \sum_{a \in \uni}\tau_a.
\]

\State For any $a$ such that $\tau_a > 0$, privately release an estimate of $\fq_{a}(D)$ (see \cref{eq:pods-fourier-query}) by adding independent noise such that
    \[
        \tilde \fq_{a}(D) =\fq_{a}(D) + Z_{a}, \mathrm{~where~}Z_{a} \sim \mathcal{CN}\left(0, \frac{2\tau}{\tau_{a}}\right).
    \]
    
\State Estimates for marginal queries for $S \in \SS$ can be recovered using post-processing by computing
    \[
      \tilde{q}_{S,t}(D) = \mathrm{Re}\left( \frac{1}{ |\uni_S| }\sum_{\substack{a \in \uni\\ \supp(a) \subseteq S}} {\chi_a(t)}{\tilde \fq_a(D)}  \right) , \text{ where } \chi_a(t) = \prod_{i \in S} \omega_{m_i}^{a_i t_i}.
    \] 
\end{algorithmic}
\caption{Differentially private estimate of a workload $Q_{\SS}$ of weighted marginals.}
\label{alg:pods-gen}
\end{algorithm}


To gain intuition behind the choice of $\tau$ in Algorithm~\ref{alg:pods-gen}, note that each estimate's error $\tilde{q}_{S,t}(D) - q_{S,t}(D)$ is a sum of $\vert \uni_S \vert$ random variables. 
Specifically, for each $a$ where $\supp(a) \subseteq S$, $\fq_a$ contributes to the sum a random variable with variance 
\[
    \E\left[\left(\mathrm{Re}\left(\frac{1}{\vert \uni_S \vert} {\chi_a(t)}{(\tilde \fq_a(D) - \fq_a(D))} \right)\right)^2\right] = \E\left[\left(\mathrm{Re}\left(\frac{\tilde{\fq}_a(D) - \fq_a(D)}{\vert \uni_S \vert}\right)\right)^2\right] = \frac{\sigma_a^2}{\vert \uni_S \vert^2} ,
\]
where $\sigma_a^2$ is half the variance of $Z_a$. 
Since the noise added to all $F_a$ queries are independent, the variance of $\tilde{q}_{S,t}(D) - q_{S,t}(D)$ is simply the sum of the variances for each of the $\vert \uni_S \vert$ random variables, i.e., $\sum_{a \in \uni_S} \sigma^2_a/\vert \uni_S \vert^2$.
Summing the contribution of the estimate of $F_a(D)$ to the expectation of $\err_p(q,\tilde{q})^2$ over all $S\supseteq \supp(a)$, we have 
\[
    \sum_{\substack{S \in \SS\\\supp(a) \subseteq S}} \frac{p(S)}{\vert \uni_S \vert} \sum_{t \in \uni_S} \frac{\sigma_a^2}{|\uni_S|^2} 
    = \sigma_a^2 \sum_{\substack{S \in \SS\\\supp(a) \subseteq S}}\frac{p(S)}{|\uni_S|^2}.
\]
We set $\tau_a$ to the square root of the multiplier of $\sigma_a^2$ on the right hand side above.
This leads to the optimal values for all $\sigma_a$ when minimizing $\err_p(q,\tilde{q})^2$.
We refer to the work of Lebeda and Pagh for more details on calibrating Gaussian noise for $\ell_2^2$ error when queries have different scales~\cite[Chapter~4]{lebeda2023phd}.

Recall that $\SS_{\downarrow}$ is the collection of all sets $R \subseteq [d]$ such that $R\subseteq S$ for some $S \in \SS$.
The properties of \Cref{alg:pods-gen} are summarized in the following theorem:
\begin{theorem}
    \label{thm:pods-upper-bound-general-avg}
    Let $D$ be a dataset containing data points $x \in \uni$, where $\uni := \uni_1 \times \ldots \times \uni_d$, $\uni_i := \{0, \ldots, m_i - 1\}$. Let the workload $Q_{\SS}$ of marginal queries on the universe $\uni$ be defined by a collection $\SS$ of subsets of $[d]$ such that for each subset $S \in \SS$, the workload contains all queries $q_{S,t}$ for all $t \in \uni_S := \prod_{i \in S}\uni_i$. Given a weight function $p:\SS \to \R_{\ge 0}$,
    there exists a $\mu$-GDP mechanism that estimates all marginal queries in $Q_{\SS}$ and satisfies
    \[
    \err_p(q,\tilde{q}) 
    = \frac{1}{\mu}\sum_{R \in \SS_{\downarrow}}\left(\prod_{j \in R}(m_j-1)\right)\sqrt{\sum_{\substack{S \in \SS\\S \supseteq R}}\frac{p(S)}{|\uni_S|^2}} \,.
    \]
    Additionally, the noise for all marginal estimates can be sampled in time 
    \[
    O\left(\sum_{S\in \SS}\left(\prod _{i\in S}m_i \right)\log\left(\prod _{i\in S}m_i \right) + \gausstime \sum_{R\in \SS_{\downarrow}}\prod _{i\in R}(m_i - 1)\right) 
    \]
    where $O(\gausstime)$ is the time required for sampling a standard Gaussian. 
\end{theorem}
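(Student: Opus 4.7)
The plan is to verify the theorem in three parts: privacy, error formula, and running time. For the privacy analysis, the key observation is that each Fourier query $\fq_a$ has $\ell_2$ sensitivity exactly $1$, since adding or removing a single data point $x$ changes $\fq_a(D)$ by $\overline{\chi_a(x)}$, a unit complex number. To handle the non-uniform noise variances, I would rescale and consider the single vector-valued query $h(D) := (\sqrt{\tau_a/(2\tau)}\,\fq_a(D))_{a:\tau_a>0}$. Releasing the $\tilde{\fq}_a$'s as prescribed by \Cref{alg:pods-gen} is equivalent (up to the invertible coordinate rescaling, which is post-processing) to releasing $h(D) + W$, where $W$ has i.i.d.\ $\mathcal{CN}(0,1)$ coordinates. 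A short calculation then gives
\[
\|h(D)-h(D')\|_2^2 \;=\; \sum_{a}\frac{\tau_a}{2\tau} \;=\; \frac{1}{2\tau}\sum_{a}\tau_a \;=\; \frac{\mu^2}{2}
\]
by the definition of $\tau$, so \Cref{lem:pods-complex-to-real-reduction} yields $\mu$-GDP, and post-processing preserves this for the marginal estimates.

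For the error expression, I would use \cref{eq:pods-marginals-inverse-fourier} to write
\[
\tilde{q}_{S,t}(D) - q_{S,t}(D) \;=\; \mathrm{Re}\!\left(\frac{1}{|\uni_S|}\sum_{\substack{a\in\uni\\\supp(a)\subseteq S}} \chi_a(t)\,Z_a\right).
\]
Since $|\chi_a(t)|=1$ and the $Z_a$'s are independent, by \Cref{lem:gauss-lincomb} the inner sum is $\mathcal{CN}(0,\sum_{a:\supp(a)\subseteq S}\tfrac{2\tau}{\tau_a})$, whose real part is a centered real Gaussian with variance $\frac{1}{|\uni_S|^2}\sum_{a:\supp(a)\subseteq S}\tfrac{\tau}{\tau_a}$ (the factor $1/2$ from taking the real part cancels the factor $2$ in the complex variance). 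Summing these variances with weights $p(S)/|\uni_S|$ and then swapping the order of summation between $a$ and $S$ gives
\[
\err_p(q,\tilde{q})^2 \;=\; \tau\sum_{a\in\uni}\frac{1}{\tau_a}\sum_{\substack{S\in\SS\\S\supseteq\supp(a)}}\frac{p(S)}{|\uni_S|^2} \;=\; \tau\sum_{a}\tau_a \;=\; \mu^2\tau^2,
\]
where the last step uses $\tau_a^2 = \sum_{S\supseteq\supp(a)} p(S)/|\uni_S|^2$ and the definition of $\tau$. Taking square roots and grouping the $a$'s by their support $R\in\SS_\downarrow$, noting that $|\{a:\supp(a)=R\}|=\prod_{j\in R}(m_j-1)$ and that $\tau_a$ depends only on $\supp(a)$, produces the claimed closed-form expression.

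For the running time, noise sampling costs $O(\gausstime)$ per complex Gaussian variable, and the number of $a$'s with $\tau_a>0$ is exactly $\sum_{R\in\SS_\downarrow}\prod_{j\in R}(m_j-1)$. Each reconstruction step for a fixed $S\in\SS$ is a multi-dimensional inverse DFT of size $|\uni_S|$ on the Fourier coefficients supported inside $S$, which by \Cref{lem:multidim-fft} runs in $O(|\uni_S|\log|\uni_S|)$. Adding over $S\in\SS$ yields the stated bound. The main bookkeeping obstacle is managing the factor-of-two conventions in the complex Gaussian: the variance of $Z_a$ is $2\tau/\tau_a$, but only its real part appears in the reconstructed estimate, and these two factors of two must line up precisely both in the privacy sensitivity calculation and in the variance accounting — otherwise the clean identity $\err_p(q,\tilde{q})=\tfrac{1}{\mu}\sum_a\tau_a$ would be off.
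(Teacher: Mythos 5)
Your proof is correct and takes essentially the same approach as the paper's: release Fourier queries with noise calibrated by $\tau_a$, reconstruct marginals via the inverse discrete Fourier transform, compute the per-query variance, sum with weights and swap the order of summation over $a$ and $S$ to collapse the error to $\tfrac{1}{\mu}\sum_a\tau_a$, then group by $\supp(a)$ and use FFT for the running time. The only cosmetic departure is in the privacy step, where you fold the per-coordinate scalings into a single vector-valued query of sensitivity $\mu/\sqrt{2}$ and invoke the complex Gaussian mechanism once (undoing the rescaling as post-processing), whereas the paper treats each $\tilde{F}_a$ as a $\mu_a$-GDP release with $\mu_a^2 = \tau_a/\tau$ and invokes GDP composition; the two arguments are equivalent.
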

\begin{proof}
    The mechanism is \Cref{alg:pods-gen}.
    The privacy guarantees are shown in \Cref{lem:gen-DP}. 
    The error bound is proven in Lemma \ref{lem:error-gen}.
    The time complexity of sampling noise utilizes FFT, with the same argument as in Lemma \ref{lem:FFT-speedup}.
\end{proof}
Here we give a short intuition behind the proof of each property of the theorem. 
The privacy proof is standard. We show that we can release all $\tilde{F}_a(D)$ under $\mu$-GDP using the (complex) Gaussian mechanism (\Cref{lem:pods-complex-to-real-reduction}).
We recover the marginal estimates using post-processing which does not affect the privacy guarantees.
The noise for each marginal estimate is the sum of Gaussian random variables. 
The error is therefore also distributed as a Gaussian. 
We find the value of $\err_p(q,\tilde{q})$ by simply summing the error terms for all marginal estimates.
For the running time, notice that each marginal estimate is a sum of $\vert \uni_S \vert$ terms, and the post-processing step therefore takes time $O(\vert \uni_S \vert)$ for a single marginal estimate. 
However, we can use a multi-dimensional discrete fast Fourier transform to speed up computation of all marginal estimates of $S$ from $O(\vert \uni_S \vert^2)$ to $O(\vert \uni_S \vert \log (\vert \uni_S \vert))$.
\ifarxiv
\christian{Note on randomness complexity here? \cite{GarfinkelL20,CanonneSV25}}
\fi


Next we turn to the setting in which our measure of error is the maximum standard deviation of $\tilde{q}_{S,t}(D) - {q}_{S,t}(D)$ over all $S \in \SS$ and all $t \in \uni_S$. We reuse Algorithm~\ref{alg:pods-gen}, with a ``worst case'' choice of $p$, i.e., the $p$ that maximizes the error in \Cref{thm:pods-upper-bound-general-avg}. The privacy of this algorithm follows directly from the theorem. We also have the following error bound.

\begin{lemma}\label{lem:pods-error-gen-max}
  Let $\tilde{q}_{S,t}(D)$ be the estimates produced by Algorithm~\ref{alg:pods-gen} with weights
  \begin{equation}\label{eq:pods-pstar}
    p^* \in \arg \max\left\{\sum_{R \in \SS_{\downarrow}}\left(\prod_{j \in R}(m_j-1)\right)\sqrt{\sum_{\substack{S \in \SS\\S \supseteq R}}\frac{p(S)}{|\uni_S|^2}}: \sum_{S \in \SS}p(S) = 1, p(S) \ge 0\ \forall S \in \SS\right\}.
  \end{equation}
  Then, 
  \[
    \max_{S \in \SS, t \in \uni_S} (\E[(\tilde{q}_{S,t}(D)-q_{S,t}(D))^2])^{1/2} =
    \frac{1}{\mu}\sum_{R \in \SS_{\downarrow}}\left(\prod_{j \in R}(m_j-1)\right)\sqrt{\sum_{\substack{T\in \SS\\T\supseteq R}}\frac{p^*(T)}{|\uni_T|^2}}.
  \]
\end{lemma}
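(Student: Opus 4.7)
The plan is to combine a direct variance calculation with the Karush--Kuhn--Tucker (KKT) optimality conditions for the concave program in \eqref{eq:pods-pstar}. Throughout, I would write $f(p) := \sum_{R \in \SS_{\downarrow}}\prod_{j \in R}(m_j-1)\,\tau_R(p)$, where $\tau_R(p) := \bigl(\sum_{T \in \SS,\, T \supseteq R} p(T)/|\uni_T|^2\bigr)^{1/2}$ is the common value of $\tau_a$ for every $a$ with $\supp(a)=R$, and note that $\tau = f(p)/\mu^2$ and that $p^*$ is any maximizer of $f$ on the probability simplex.

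First I would compute, for each $S \in \SS$ and $t \in \uni_S$, the variance $\sigma_S^2$ of $\tilde q_{S,t}(D)-q_{S,t}(D)$. Since $q_{S,t}(D)$ is real and the Fourier reconstruction is linear, the error equals $\mathrm{Re}\bigl(|\uni_S|^{-1}\sum_{a:\supp(a)\subseteq S}\chi_a(t)Z_a\bigr)$ with independent $Z_a \sim \mathcal{CN}(0, 2\tau/\tau_a)$. Using $|\chi_a(t)|=1$ and the rotational invariance of the complex Gaussian, the $a$-term contributes variance $\tau/(|\uni_S|^2 \tau_a)$ to the real part, independently of $t$. Grouping indices by their support $R=\supp(a)$ gives
\[
\sigma_S^2 \;=\; \frac{\tau}{|\uni_S|^2}\sum_{R\subseteq S}\frac{\prod_{j\in R}(m_j-1)}{\tau_R}.
\]
In particular the maximum over $t$ in the lemma reduces to $\max_{S\in\SS}\sigma_S$, and \Cref{thm:pods-upper-bound-general-avg} provides the identity $\sum_{S\in\SS}p(S)\sigma_S^2 = f(p)^2/\mu^2$, which is just a restatement of the average-case bound.

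Next I would differentiate $f$ in $p$. Using $\partial_{p(S)}\tau_R = \mathbbm{1}[R \subseteq S]/(2\tau_R|\uni_S|^2)$ and swapping the order of summation, I would obtain the key identity
\[
\partial_{p(S)} f(p) \;=\; \frac{1}{2|\uni_S|^2}\sum_{R\subseteq S}\frac{\prod_{j\in R}(m_j-1)}{\tau_R} \;=\; \frac{\mu^2\,\sigma_S^2(p)}{2\,f(p)}.
\]
Each $\tau_R$ is the square root of a linear function of $p$, so $f$ is concave on the simplex, and the KKT conditions for maximizing $f$ over $\{p\ge 0:\sum_S p(S)=1\}$ yield a multiplier $\kappa$ with $\partial_{p(S)}f(p^*)=\kappa$ whenever $p^*(S)>0$ and $\partial_{p(S)}f(p^*)\le\kappa$ otherwise. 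Translating through the derivative identity, $\sigma_S^2(p^*) = 2\kappa f(p^*)/\mu^2$ on the support of $p^*$ and $\sigma_S^2(p^*)\le 2\kappa f(p^*)/\mu^2$ off it. Averaging with weights $p^*(S)$ and using $\sum_S p^*(S)\sigma_S^2(p^*) = f(p^*)^2/\mu^2$ pins down $\kappa = f(p^*)/2$, so $\max_{S\in\SS}\sigma_S^2(p^*) = f(p^*)^2/\mu^2$, and taking square roots yields the lemma.

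The main obstacle is the bookkeeping behind the derivative identity $\partial_{p(S)} f(p) = \mu^2\sigma_S^2(p)/(2f(p))$: one must reindex the sum defining $f$ by the support $R=\supp(a)$ and match it with the corresponding sum that appears in $\sigma_S^2$. Once this identity is in hand, concavity of $f$ and the standard KKT characterization on the simplex close the argument without any further use of the structure of marginal or Fourier characters.
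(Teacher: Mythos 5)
Your argument is essentially the same as the paper's: the paper proves a companion Lemma (\cref{lem:pstar-optimality}) that records precisely the first-order optimality conditions you state via KKT, with the derivative computed to equal $\frac{1}{2|\uni_S|^2}\sum_{a:\supp(a)\subseteq S}\tau_{\supp(a)}^{-1}$, and then concludes \cref{lem:pods-error-gen-max} by the same averaging step you use. Your reformulation via $\partial_{p(S)} f(p) = \mu^2\sigma_S^2(p)/(2f(p))$ and an explicit multiplier $\kappa$ is a clean way to package the argument.

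There is one gap you should close. You invoke the differentiable form of the KKT conditions and the derivative identity $\partial_{p(S)} f(p^*) = \mu^2\sigma_S^2(p^*)/(2f(p^*))$, but each $\tau_R(p) = \bigl(\sum_{T\supseteq R}p(T)/|\uni_T|^2\bigr)^{1/2}$ has an infinite one-sided derivative where it vanishes, so $f$ is not a priori differentiable at $p^*$. Moreover, $\sigma_S^2(p^*)$ as you write it divides by $\tau_R(p^*)$ for each $R\subseteq S$, so for an $S\in\SS$ with $p^*(S)=0$ the expression is not obviously finite or even well-defined (indeed \cref{alg:pods-gen} only releases $\tilde\fq_a$ when $\tau_a>0$, so if some $\tau_R(p^*)=0$ with $R\subseteq S$, the reconstruction of $q_{S,t}$ is missing a needed Fourier estimate). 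The paper closes this by showing, as the first claim of \cref{lem:pstar-optimality}, that any maximizer $p^*$ has $\tau_R(p^*)>0$ for \emph{every} $R\in\SS_\downarrow$: otherwise one can shift mass $\beta$ onto some $S\supseteq R$ and gain $\Omega(\sqrt\beta)$ from the newly positive term while losing only $O(\beta)$ elsewhere, contradicting optimality. Once you add this preliminary step, the rest of your proof goes through; you could alternatively argue that a maximizer must have a nonempty superdifferential, but the explicit perturbation is both more elementary and simultaneously guarantees that $\sigma_S^2$ is well defined for all $S\in\SS$, including those with $p^*(S)=0$.
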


The lemma follow from the proofs of \Cref{lem:error-gen-max,lem:pstar-optimality} in the appendix. On a high level, to prove the lemma we analyze the first order optimality conditions of \eqref{eq:pods-pstar}, and observe that they guarantee that $p^*(S) > 0$ only for those $S \in \SS$ for which the noise variance $\E[(\tilde{q}_{S,t}(D) - q_{S,t}(D))^2]$ achieves its maximum value over all $S \in \SS$. Because of this, the average squared error with weights $p^*(S)$ equals the maximum noise variance.
In \Cref{sec:pods-lower-bound} we show that the error bound of \Cref{lem:pods-error-gen-max} is also optimal among all factorization mechanisms.

It is worth noting also that, since the noise added to each query is normally distributed, standard concentration bounds show that 
\[
\E\left[\max_{S \in \SS, t \in \uni_S} |\tilde{q}_{S,t}(D)-q_{S,t}(D))|\right] \le
\sqrt{2 \ln |Q_{\SS}|} \, \max_{S \in \SS, t \in \uni_S} (\E[(\tilde{q}_{S,t}(D)-q_{S,t}(D))^2])^{1/2}.
\]

\begin{remark}\label{rem:pods-ellp-algo}
It is likely that our approach extends to other measures considered in prior work~\cite{xiao2023optimal,NikolovT24,LiuUZ24}, but in this paper we focus only on the most commonly used error measures of average and maximum squared error.
\end{remark}

\subsection{Lower Bounds}
\label{sec:pods-lower-bound}

One of our technical contributions is a simple proof that our mechanism is optimal among all factorization mechanisms. 
The technical details and our proofs are in \Cref{sec:lower-bound}.
Here we discuss factorization mechanisms and present our main result.


We start with some necessary notation. 
For a matrix $M$, let $\|M\|_{1\to 2}$ be the maximum $\ell_2$ norm of a column of $M$, and $\|M\|_{2\to\infty}$ the maximum $\ell_2$ norm of a row of $M$. Let $\|M\|_F$ be the Frobenius norm of $M$, i.e., $\|M\|_F = \sqrt{\tr(M^TM)}$ for real matrices, and $\sqrt{\tr(M^*M)}$ for complex matrices. 

Let us recall the factorization mechanisms framework. Suppose we are given a query workload $Q$ over a universe $\uni$: $Q$ is a set of queries, where each $q$ is a function from $\uni$ to $\R$, and induces a query on datasets $D = (x^{(1)}, \ldots, x^{(n)})$ given by $q(D) := \sum_{i=1}^n q(x^{(i)})$. We also use $Q(D)$ to denote the vector of query answers $(q(D))_{q \in Q}$. We represent $Q$ by a workload matrix $W \in \R^{Q \times \uni}$ with entries $W_{q,x} := q(x)$. We also represent the dataset $D$ by a histogram vector $h \in \mathbb{Z}^\uni$, where $h_x := |\{i: x^{(i)} = x\}|$ is the number of occurrences of $x$ in $D$. These definitions turn the workload into a linear function of the histogram: $Q(D) = Wh$. 

For a factorization $W = LR$ of the workload matrix, we define a corresponding private factorization mechanism for answering queries in $Q$ as follows. We draw a normally distributed vector $Z \sim \mathcal{N}\left(0,\frac{\|R\|_{1\to 2}^2}{\mu^2}I_{r}\right)$, where $r$ is the number of rows of $R$, and output the vector of query answers $L(Rh + Z) = Wh +LZ = Q(D) + LZ$. Note that, using $r_x$ to denote the column of $R$ indexed by $x\in \uni$, $Rh = \sum_{i=1}^n r_{x^{(i)}}$ is simply another workload of queries with sensitivity $\max_{x \in \uni} \|r_x\|_2 = \|R\|_{1\to 2}$. 
These are usually called the strategy queries, and we can view the factorization mechanism as answering a well chosen set of strategy queries and using them to reconstruct answers to the queries in $Q$. 
In our algorithm the strategy queries compute weighted Fourier coefficient of the histogram vector.
We show how our mechanisms are expressed in the factorization framework in \Cref{sec:our-algs-are-factorizations}.

It is easy to verify that a factorization mechanism satisfies $\mu$-GDP: outputting $Rh + Z$ satisfies $\mu$-GDP by Lemma~\ref{lem:gaussian}, and then multiplying by $L$ is just post-processing. Moreover, an easy calculation shows that the factorization mechanism achieves error bounds
\begin{align*}
\E[\|Wh - (L(Rh+Z))\|_2^2]^{1/2}
&= \E[\|LZ\|_2^2]^{1/2} = \frac1\mu \|L\|_{F}\|R\|_{1\to 2};\\
\max_{q \in Q}\E[\vert (Wh)_q - (L(Rh+Z))_q \vert^2]^{1/2}
&= \max_{q \in Q} \E[|(LZ)_q|^2]^{1/2} = \frac1\mu \|L\|_{2\to\infty}\|R\|_{1\to 2}.
\end{align*}
This motivates the definitions of the $\gamma_F(W)$ and $\gamma_2(W)$ norms: they correspond to the minimal error bounds achievable by a factorization mechanism. 
For a $K \times N$ real matrix $W$, we have the factorization norms
\[
\gamma_F(W) := \inf\{\|L\|_F \|R\|_{1\to 2}: LR = W\};\quad
\gamma_2(W) := \inf\{\|L\|_{2\to\infty} \|R\|_{1\to 2}: LR = W\}.
\]
Note also that if we are interested in weighted root mean squared error, i.e., if each query $q \in Q$ is assigned non-negative weight $p(q)$, then we get the error bound
\[
\E\left[\sum_{q \in Q} p(q)( (Wh)_q - (L(Rh + Z))_q )^2\right]^{1/2}
= \E[\|P^{1/2} LZ\|_2^2] = \frac1\mu \|P^{1/2}L\|_{F}\|R\|_{1\to 2},
\]
where $P$ is the diagonal matrix indexed by queries and with diagonal entries $P_{q,q} := p(q)$. This error bound is optimized by the factorization that achieves $\gamma_F(P^{1/2}W)$.

Our main result is that the factorizations underlying our mechanisms are optimal for weighted root mean squared error and maximum error, respectively.

\begin{theorem}\label{thm:pods-fact-main}
For a workload of marginal queries $Q_{\SS}$ over a universe $\uni$ with workload matrix $W$, a weight function $p:\SS \to \R_{\ge 0}$, and a corresponding diagonal matrix $P$ indexed by queries $(S,t)$, $S\in \SS$, $t \in \uni_S$, with entries $P_{(S,t),(S,t)} = \frac{p(S)}{|\uni_S|}$, the factorization $LR = W$ which corresponds to \Cref{alg:pods-gen} (given in in \eqref{eq:factorization-gen} in the appendix) satisfies
    \begin{equation}\label{eq:pods-gammaF-opt}
    \gamma_F(P^{1/2} W) = \|P^{1/2} L\|_F \|R\|_{1\to 2} 
    = 
    \sum_{R \in \SS_{\downarrow}}\left(\prod_{j \in R}(m_j-1)\right)\sqrt{\sum_{\substack{T \in \SS\\T \supseteq R}}\frac{p(T)}{|\uni_T|^2}}.
    \end{equation}
    Moreover, for $p^*:\SS\to \R_{\ge 0}$ chosen as in \eqref{eq:pods-pstar}, the factorization satisfies
    \[
    \gamma_2(W) = \|L\|_{2\to\infty}\|R\|_{1\to 2} 
    = 
    \sum_{R \in \SS_{\downarrow}}\left(\prod_{j \in R}(m_j-1)\right)\sqrt{\sum_{\substack{T\in \SS\\T\supseteq R}}\frac{p^*(T)}{|\uni_T|^2}},
    \]
    where we recall that $\SS_{\downarrow}$ is the collection of sets $R$ which are subsets of $S$ for some $S \in \SS$.    
\end{theorem}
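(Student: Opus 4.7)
The plan is to prove both equalities by matching upper and lower bounds, treating $\gamma_F(P^{1/2}W)$ first and then deducing the $\gamma_2$ result from it. For the upper bound on $\gamma_F(P^{1/2}W)$ I will extract the factorization $W = LR$ underlying Algorithm~\ref{alg:pods-gen}: restrict attention to $a \in \uni$ with $\tau_a > 0$, take $R_{a,x} = \sqrt{\tau_a}\,\overline{\chi_a(x)}$, and take $L_{(S,t),a} = \chi_a(t)/(|\uni_S|\sqrt{\tau_a})$ when $\supp(a)\subseteq S$, else $0$. The reconstruction identity \eqref{eq:pods-marginals-inverse-fourier} immediately gives $LR = W$. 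Since $|\chi_a(x)|=1$, every column of $R$ has squared $\ell_2$-norm $\sum_a \tau_a$, so $\|R\|_{1\to 2}^2 = \sum_a \tau_a$. A short calculation using $|\uni_S|$ assignments per $S$ and then swapping the order of summation yields
\[
\|P^{1/2}L\|_F^2 = \sum_{S\in\SS} \frac{p(S)}{|\uni_S|^2}\sum_{a:\,\supp(a)\subseteq S} \frac{1}{\tau_a} = \sum_{a\in\uni}\frac{\tau_a^2}{\tau_a} = \sum_{a\in\uni}\tau_a,
\]
so $\|P^{1/2}L\|_F\|R\|_{1\to 2} = \sum_a \tau_a$, and grouping $a$'s by their support $R = \supp(a)$ recovers exactly the right-hand side of \eqref{eq:pods-gammaF-opt}.

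For the matching lower bound on $\gamma_F(P^{1/2}W)$, I will invoke the standard nuclear-norm inequality $\|AB\|_* \leq \|A\|_F\|B\|_F$, applied to any factorization $P^{1/2}W = L'R'$. Since $R'$ has $|\uni|$ columns, $\|R'\|_F \leq \sqrt{|\uni|}\,\|R'\|_{1\to 2}$, giving $\gamma_F(P^{1/2}W) \geq \|P^{1/2}W\|_*/\sqrt{|\uni|}$. To evaluate the nuclear norm, I diagonalize $W^*PW$ in the Fourier basis. Using \eqref{eq:pods-marginals-inv-x} and orthogonality of characters, the kernel $(W^*PW)_{x,y} = \sum_{S\in\SS}\frac{p(S)}{|\uni_S|}\mathbbm{1}\{x_S=y_S\}$ is translation-invariant in $x-y \bmod m$, and expanding $\mathbbm{1}\{x_S=y_S\}$ into characters gives
\[
W^*PW \;=\; \sum_{a\in\uni}\tau_a^{2}\,\chi_a\chi_a^{*}.
\]
Since the $\chi_a$ are orthogonal with $\|\chi_a\|_2^2=|\uni|$, the eigenvalues of $W^*PW$ are $|\uni|\tau_a^2$, so the singular values of $P^{1/2}W$ are $\sqrt{|\uni|}\tau_a$, and $\|P^{1/2}W\|_* = \sqrt{|\uni|}\sum_a \tau_a$. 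Dividing by $\sqrt{|\uni|}$ matches the upper bound. The main technical obstacle in this paragraph is the Fourier diagonalization; once that is written out, the rest is routine.

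For $\gamma_2$, note that when $\sum_S p(S) = 1$ the matrix $P$ is a probability distribution over query indices, hence $\|P^{1/2}L\|_F^2 = \sum_{(S,t)}\frac{p(S)}{|\uni_S|}\|L_{(S,t)}\|_2^2 \leq \|L\|_{2\to\infty}^2$ for any factorization $LR = W$. Consequently $\gamma_2(W) \geq \gamma_F(P^{1/2}W)$ for every probability weighting, and taking $p = p^*$ from \eqref{eq:pods-pstar} gives $\gamma_2(W) \geq \gamma_F(P^{*1/2}W) = \sum_a \tau_a^{(p^*)}$. For the matching upper bound I will use the first-order (KKT) optimality conditions for $p^*$: writing the concave objective as $\sum_a \tau_a^{(p)}$ and differentiating in $p(S)$ yields $\partial_{p(S)} \sum_a \tau_a = \tfrac{1}{2|\uni_S|^2}\sum_{a:\,\supp(a)\subseteq S}(\tau_a^{(p)})^{-1}$, which at $p^*$ must equal a common constant $2C$ for every $S$ in $\supp(p^*)$ and be at most $2C$ elsewhere. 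But this quantity is precisely $\|L_{(S,t)}\|_2^2$, so $\|L\|_{2\to\infty}^2 = 2C$ and, integrating against $p^*$, also $\|P^{*1/2}L\|_F^2 = 2C$. Combining with $\|R\|_{1\to 2}^2 = \sum_a \tau_a^{(p^*)}$ gives $\|L\|_{2\to\infty}\|R\|_{1\to 2} = \sum_a \tau_a^{(p^*)}$, matching the lower bound. The main obstacle here is this KKT step: one must argue that at $p^*$ every marginal query $(S,t)$ with $p^*(S)>0$ achieves the same (maximum) row norm of $L$, so that the max-row-norm bound and the weighted-Frobenius bound coincide.
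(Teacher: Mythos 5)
Your route is essentially the paper's. The upper bound uses the same Fourier factorization as \eqref{eq:factorization-gen} (yours only moves the scalar $\sum_b \tau_b$ between $L$ and $R$, which changes neither $\|P^{1/2}L\|_F\|R\|_{1\to 2}$ nor $\|L\|_{2\to\infty}\|R\|_{1\to 2}$), and the same norm computation giving $\sum_a \tau_a$. Your $\gamma_F$ lower bound is the singular-value bound $\gamma_F(P^{1/2}W)\ge \|P^{1/2}W\|_{tr}/\sqrt{|\uni|}$, i.e.\ Lemma~\ref{lm:gammaF-dual} with the uniform diagonal matrix, and your evaluation of the trace norm by diagonalizing $W^*PW=\sum_a \tau_a^2\,\chi_a\chi_a^*$ in the Fourier basis is correct (eigenvalues $|\uni|\tau_a^2$, hence trace norm $\sqrt{|\uni|}\sum_a\tau_a$); this is the same content as the paper's verification that $L^*PL$ is a multiple of $RR^*$, just packaged as an explicit spectrum computation. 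The $\gamma_2$ lower bound via $\gamma_2(W)\ge\gamma_F((P^*)^{1/2}W)$ and the upper bound via row-norm equalization at $p^*$ likewise mirror the paper's argument.

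The one genuine gap is in the first-order-optimality step for $p^*$. Your derivative formula $\frac{1}{2|\uni_S|^2}\sum_{a:\,\supp(a)\subseteq S}(\tau_a^{(p^*)})^{-1}$, and hence the whole KKT argument, presupposes that $\tau_a^{(p^*)}>0$ for every $a$ with $\supp(a)\in\SS_{\downarrow}$ (equivalently, every $R\in\SS_{\downarrow}$ is contained in some $S$ with $p^*(S)>0$); the objective is a sum of square roots and is not differentiable where some $\tau_a$ vanishes. This positivity is not automatic and must be proved; the paper does it as the first claim of Lemma~\ref{lem:pstar-optimality}, by a perturbation argument (shifting mass $\beta$ toward an uncovered set gains order $\sqrt{\beta}$ in the objective while losing only order $\beta$). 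The issue is not merely technical: if some $\tau_a^{(p^*)}$ were zero, your factorization built from $p^*$ (which keeps only the Fourier rows with $\tau_a>0$) would fail to reconstruct the corresponding zero-weight queries, so $LR=W$ would not hold and $\|L\|_{2\to\infty}\|R\|_{1\to 2}$ would not upper-bound $\gamma_2(W)$. With that positivity claim added, your argument goes through and coincides with the paper's. (Also a cosmetic slip: the derivative equals half the squared row norm of $L$, so your constants $2C$ are off by a factor of two; this does not affect the conclusion that the row norms are constant and maximal on the support of $p^*$.)
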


\section{Extensions}
\label{sec:pods-product-queries-short}

In \Cref{sec:pods-product-queries} we extend our technique to a generalization of marginal queries that we call product queries. Recall that product queries are defined by functions $\phi = (\phi_1, \ldots, \phi_d)$, $\phi_j:\uni_j \to \R$, and a query $q_{S,t}$ on a single data point is defined by 
\[
q^\phi_{S,t}(x) := \prod_{j \in S} \phi_j((t_j - x_j)\bmod m_j).
\]

We recover standard marginals $q_{S,t}$ by setting $\phi_j(z) := \mathbbm{1}\{z = 0\}$. Making other choices of $\phi$ can allow for more expressive queries, and we explore this in Section~\ref{sec:product-apps}. 

We omit most technical details from the main body due to space limitations. We note that our mechanism generalizes naturally to these more expressive queries with two simple changes. We still use the Fourier queries as the strategy queries, and reconstruct answers to the product queries using the inverse Fourier transform. During reconstruction, we need to scale the contribution of each Fourier query using the Fourier coefficients of the $\phi_j$ functions, and, to achieve optimal error, we also correspondingly modify the  importance weight $\tau_a$ of each Fourier query. 

\begin{algorithm}
\begin{algorithmic}

\State For any $a \in \uni$, let
\[
\widehat{\phi_j}(a) := \sum_{z = 0}^{m_j-1}\phi_j(z)\omega_{m_j}^{-a\cdot z},
\text{ }
\tau_a :=
\sqrt{\sum_{\substack{S \in \SS\\S \supseteq \supp(a)}}\frac{p(S)\prod_{j \in S}|\widehat{\phi_j}(a_j)|^2}{|\uni_S|^2}},
\hspace{0.1em}
\text{ and }
\hspace{0.1em}
\tau := \frac{1}{\mu^2} \sum_{a \in \uni}\tau_a.
\]

\State For any $a$ such that $\tau_a > 0$, privately release an estimate of $\fq_{a}(D)$ by adding independent noise such that
    \[
        \tilde \fq_{a}(D) =\fq_{a}(D) + Z_{a}, \mathrm{~where~}Z_{a} \sim \mathcal{CN}\left(0, \frac{2\tau}{\tau_{a}}\right).
    \]
    
\State Estimates for product queries for $S \in \SS$ can be recovered using post-processing by computing
    \[
      \tilde{q}^\phi_{S,t}(D) = \mathrm{Re}\left( \frac{1}{ |\uni_S| }\sum_{\substack{a \in \uni\\ \supp(a) \subseteq S}} \left(\prod_{j \in S}\widehat{\phi_j}(a_j)\right){\chi_a(t)}{\tilde \fq_a(D)}  \right) , \text{ where } \chi_a(t) = \prod_{i \in S} \omega_{m_i}^{a_i t_i}.
    \] 
\end{algorithmic}
\caption{Algorithm for estimating a workload $Q^\phi_{\SS}$ of weighted product queries.}
\label{alg:pods-gen-product}
\end{algorithm}

\begin{theorem}(simplified)
    Let $D$ be a dataset containing data points $x \in \uni$, where $\uni := \uni_1 \times \ldots \times \uni_d$, $\uni_i := \{0, \ldots, m_i - 1\}$. For a choice of functions $\phi := (\phi_1, \ldots, \phi_d)$, $\phi_i:\uni_i \to \R$, let the workload $Q^\phi_{\SS}$ of product queries $q^\phi_{S,t}$ defined by a collection $\SS$ of subsets of $[d]$ be such that, for each subset $S \in \SS$, the workload contains all queries $q^\phi_{S,t}$ for all $t \in \uni_S := \prod_{i \in S}\uni_i$. Given a weight function $p:\SS \to \R_{\ge 0}$, \Cref{alg:pods-gen-product} is an optimal factorization mechanism for the weighted error defined as 
    \[
    \err_p(q,\tilde{q}) 
    :=\left(\sum_{S \in \SS} \frac{p(S)}{|\uni_S|} \sum_{t \in \uni_S}\E[(\tilde{q}^\phi_{S,t}(D) - q^\phi_{S,t}(D))^2] \right)^{1/2}.
    \]
    Additionally, we can optimize over weight functions similarly to \Cref{lem:pods-error-gen-max} to recover an optimal factorization for maximum variance.
\end{theorem}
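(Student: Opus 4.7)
The plan is to mirror the proofs of \Cref{thm:pods-upper-bound-general-avg} and \Cref{thm:pods-fact-main} for ordinary marginals, replacing the Fourier identity for indicators with one tailored to the product functions $\phi_j$. The first step is to derive the Fourier factorization of a single product query,
\[
q^\phi_{S,t}(x) \;=\; \frac{1}{|\uni_S|} \sum_{\substack{a \in \uni\\ \supp(a) \subseteq S}} \Bigl(\prod_{j \in S}\widehat{\phi_j}(a_j)\Bigr)\,\chi_a(t)\,\overline{\chi_a(x)},
\]
obtained by applying the inverse DFT to each factor $\phi_j((t_j - x_j)\bmod m_j)$ and expanding the product (note that for $j \notin \supp(a)$ we have $a_j = 0$, so the correct coefficient $\widehat{\phi_j}(0)$ appears). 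Summing over data points establishes unbiasedness of the reconstruction step in Algorithm~\ref{alg:pods-gen-product} and writes the workload matrix $W$ as a factorization $W = LR$ whose strategy rows are rescaled Fourier characters and whose reconstruction rows include the weight $\frac{1}{|\uni_S|}\prod_{j \in S}\widehat{\phi_j}(a_j)$. Privacy then follows verbatim from the marginal case: each $\fq_a(D)$ has sensitivity $1$ because $|\chi_a(x)|=1$, so the rescaled vector $(\sqrt{\tau_a/(2\tau)}\,\fq_a(D))_a$ has squared $\ell_2$ sensitivity $\sum_a \tau_a/(2\tau) = \mu^2/2$, and releasing it with coordinate-wise $\mathcal{CN}(0,1)$ noise satisfies $\mu$-GDP by \Cref{lem:pods-complex-to-real-reduction}; everything afterwards is post-processing.

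For the error bound, I would observe that $\tilde q^\phi_{S,t}(D) - q^\phi_{S,t}(D)$ is the real part of a linear combination of the independent noises $Z_a \sim \mathcal{CN}(0, 2\tau/\tau_a)$ with coefficients $\frac{1}{|\uni_S|}(\prod_j \widehat{\phi_j}(a_j))\chi_a(t)$, and then use \Cref{lem:gauss-lincomb} to compute the per-query variance. Weighting by $p(S)/|\uni_S|$, summing over $t \in \uni_S$ (which yields a factor of $|\uni_S|$, since the per-$a$ contribution is independent of $t$), and swapping the order of summation with $a$ reduces the expression to
\[
\err_p(q,\tilde q)^2 \;=\; \sum_{a \in \uni} \frac{\tau}{\tau_a}\sum_{\substack{S \in \SS\\S \supseteq \supp(a)}} \frac{p(S)\prod_{j \in S}|\widehat{\phi_j}(a_j)|^2}{|\uni_S|^2} \;=\; \tau\sum_a \tau_a \;=\; \mu^2\tau^2,
\]
giving the closed form $\err_p(q,\tilde q) = \frac{1}{\mu}\sum_{a \in \uni}\tau_a$. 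This specializes to the marginal formula in \Cref{thm:pods-upper-bound-general-avg} when $\phi_j = \mathbbm{1}\{\cdot = 0\}$, since then every $\widehat{\phi_j}$ is identically $1$ and $\sum_a \tau_a$ collapses to the sum over $R \in \SS_{\downarrow}$ with multiplicity $\prod_{j \in R}(m_j-1)$.

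The main obstacle is proving optimality among all factorization mechanisms, which I would establish by adapting the optimality conditions derived in \Cref{thm:pods-fact-main}. Those conditions are stated for an arbitrary factorization $W = LR$ and amount to matching a pair of trace/spectral lower bounds on $\|P^{1/2}L\|_F$ and $\|R\|_{1\to 2}$ through a precise alignment of their column and row norms. In our construction $R$ still has equal column norms because $|\chi_a(x)|=1$ uniformly in $a$ and $x$, while the product structure of $\phi$ diagonalizes cleanly across coordinates in the Fourier basis, so the squared row norms of $P^{1/2}L$ factor as $\prod_j |\widehat{\phi_j}(a_j)|^2$ times the marginal-case weights. A direct computation shows $\|P^{1/2}L\|_F^2 = \|R\|_{1\to 2}^2 = \sum_a \tau_a$ for our factorization, and the KKT-type conditions carry over from the marginal proof with summation now running over all $a \in \uni$ rather than $R \in \SS_{\downarrow}$, identifying $\gamma_F(P^{1/2}W) = \sum_a \tau_a$ and hence minimum error $\frac{1}{\mu}\sum_a \tau_a$. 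Finally, for the maximum-variance statement, I would reuse the concave-programming argument of \Cref{lem:pods-error-gen-max}: the map $p \mapsto \err_p^2$ is concave on the simplex over $\SS$, so first-order optimality at a maximizer $p^*$ forces $p^*$ to be supported on the sets $S$ attaining the overall maximum per-query variance; thus $\err_{p^*}$ equals this maximum, and any factorization that improved upon it would beat ours on weighted RMSE under $p^*$, contradicting the optimality just established.
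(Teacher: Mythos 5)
Your proposal is essentially correct and follows the paper's Fourier-factorization route: the inverse DFT expansion of $q^\phi_{S,t}$, the sensitivity argument for privacy, and the variance computation via Lemma~\ref{lem:gauss-lincomb} all match \cref{thm:upper-bound-product-avg} and the surrounding lemmas. The optimality direction, however, is only gestured at. The paper's argument (Lemma~\ref{lem:gammaF-opt-product}) hinges on two concrete facts you do not state: the trace-norm lower bound is applied with the specific dual witness $S := \frac{1}{|\uni|}I$, and the decisive structural identity is that $L^*PL$ is \emph{diagonal} — which follows from Fourier orthogonality $\frac{1}{|\uni|}\sum_{x\in\uni}\chi_a(x)\overline{\chi_b(x)} = \mathbbm{1}\{a=b\}$ applied inside each $\uni_S$ — with $(L^*PL)_{a,a}$ proportional to $E_{a,a} = (RR^*)_{a,a}/|\uni|$, yielding $L^*PL = c\,RR^*$ as required by Lemma~\ref{lm:gammaF-dual}. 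Your phrase that the product structure ``diagonalizes cleanly'' points in the right direction but conflates row norms with the diagonality of $L^*PL$; the row-norm computation gives $\|P^{1/2}L\|_F$ (the upper bound), whereas diagonality of $L^*PL$ is what closes the lower bound. This is the one gap you would need to fill.

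On the maximum-variance claim your argument is valid but genuinely different from the paper's. The paper invokes the tightness characterization for $\gamma_2$ (Lemma~\ref{lm:gamma2-dual}), checking in addition that rows of $L$ with $p^*(S) > 0$ all have maximal $\ell_2$ norm via Lemma~\ref{lem:pstar-prod-optimality}. You instead argue by contradiction using only the inequality $\gamma_2(W) \ge \gamma_F((P^*)^{1/2}W)$ together with the already-established $\gamma_F$ optimality and the first-order condition that $\err_{p^*}$ equals the maximum per-query standard deviation. Both routes are sound; yours avoids the second dual characterization at the cost of the explicit concavity/KKT step, which you would still need to carry out to justify that $p^*$ concentrates on the maximizing sets. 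The small normalization claim $\|P^{1/2}L\|_F^2 = \|R\|_{1\to 2}^2 = \sum_a\tau_a$ differs from the paper's scaling (where $\|R\|_{1\to 2} = 1$), but is an equivalent rescaling of the factorization and gives the same product.
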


\begin{proof}
    We present our upper bound for weighted error in \Cref{thm:upper-bound-product-avg}, and our result for maximum error in \cref{thm:upper-bound-product}.
    We provide matching lower bounds in \cref{thm:fact-main-product}.
\end{proof}

Finally, we show in \cref{sec:product-apps} how to embed extended marginal as product queries. Extended marginals combine prefix (and suffix) queries over numerical attributes with standard marginals over categorical attributes. Here we show that our mechanism is \textit{almost} optimal. The small gap between our upper and lower bounds arises because we embed prefix and suffix queries into a slightly more expressive product query.
However, note that no optimal explicit factorization is known even for a single prefix query despite extensive study of this problem due to its relevance for private machine learning\christian{could add more citations - such as work by Joel, or the Google paper for memory efficient approximation} (e.g. \cite{HenzingerUU23,HenzingerU25,henzingerKU2026normalized,factmech-survey}).
Our technique matches the error of the factorization mechanism of~\cite{HenzingerU25} for a single prefix query, and is only a small additive constant worse than the best known explicit factorization~\cite{henzingerKU2026normalized}. 
We note that the only known explicit factorization that outperforms ours on prefix queries is concurrent work \citep{henzingerKU2026normalized}, and our mechanism answers a much more general class of queries.  

We also present matching lower bounds for factorization mechanisms product queries in \cref{sec:lower-bounds-product}, and nearly matching lower bounds for extended marginals in \cref{sec:lower-bound-extended-marginals}.


\else
\section{Optimal Factorization for Weighted Marginal Queries}
\label{sec:algorithm}

In this section we introduce our technique for adding noise to marginal queries. 
We first show how the marginal queries can be recovered from aggregate queries in the Fourier basis of $\uni$. 
This observation immediately yields a correlated Gaussian noise mechanism, or, equivalently, a factorization that achieves noise with lower variance than the standard Gaussian mechanism. 
It is then easy to observe that some of the queries in the Fourier basis are used for more marginal queries than others. 
By allocating privacy budget weighted by the importance of the queries, we can further reduce the error of our factorization.
In \Cref{sec:lower-bound}, we show that our factorization is, in fact, optimal! 

\subsection{Marginal Queries in the Fourier Basis}
\label{sec:fourier-prelim}

Let $a \in \uni$ be a choice of value for each attribute, and recall that $\omega_{m_i} \coloneq \exp(2\pi i / m_i)$ is an $m_i$-th root of unity. Our algorithm relies on aggregate queries of the form
\begin{equation}
    \label{eq:fourier-query}
    \fq_a(D) := \sum_{i = 1}^{\vert D \vert} \overline{\chi_a(x^{(i)})}
    \text{\ \ where\ \ }
    \chi_{a}(x) :=  \prod_{j =1}^d \omega_{m_j}^{a_j \cdot x_j}.
\end{equation}
As in the example of binary domains from \cref{sec:overview}, these queries give, up to scaling, the Fourier coefficients of the empirical distribution of $D$. The functions $\chi_a$ are the Fourier characters, which form an orthogonal basis for functions on $\uni$. 
In the common special case where $\vert \uni_1 \vert = \vert \uni_2 \vert = \dots = \vert \uni_d \vert = m$, the queries take the form
\begin{equation}
    \label{eq:fourier-same-domain}
    \fq_a(D) := \sum_{i = 1}^{\vert D \vert} \omega_{m}^{-\ip{a,x}}.
\end{equation}
We denote the support of $a$ by $\supp(a) \coloneq \{j: a_j \neq 0\}$. The size of the support, i.e., the weight of $a$, is denoted by $\|a\|_0$ and is an important parameter.

It is easy to see that adding or removing one data point from $D$ always changes the value of any $F_{a}(D)$ by a unit complex number (see Definition~\ref{def:roots-unity} for background).
We use this fact later to bound the sensitivity for privatizing the queries.
Note that in the special case where $a = 0$ all values of $\uni$ evaluate to $1$, so $F_0(D) = \vert D \vert$ is the dataset size.

Next, we show how we recover marginal queries using the aggregate Fourier queries above using the inverse discrete Fourier transform. 
The main observation is that, for any $x\in \uni$, $j \in [d]$, and $t_j \in \uni_j$, 
\[
\frac{\sum_{a=0}^{m_j-1} \omega_{m_j}^{a t_j}\omega_{m_j}^{-a x_j}}{m_j}
=
\frac{\sum_{a=0}^{m_j-1} \omega_{m_j}^{a (t_j-x_j)}}{m_j}
=
\begin{cases}
    1 & t_j = x_j\\
    0 & t_j \neq x_j
\end{cases}, 
\]
and, therefore,
\begin{equation}\label{eq:marginals-inv-x}
q_{S,t}(x) = \prod_{j \in S} \mathbbm{1}\{t_j = x_j\} = 
\prod_{j \in S}\frac{\sum_{a=0}^{m_j-1} \omega_{m_j}^{a t_j}\omega_{m_j}^{-a x_j}}{m_j}
=
\frac{1}{\vert \uni_S \vert}\sum_{\substack{a \in \uni\\ \supp(a) \subseteq S}} {\chi_a(t)}\overline{\chi_a(x)} \,.
\end{equation}
Above, we slightly abused notation: we used $\chi_a(t)$ even though $t_j$ is defined only for $j \in S$. Note, nevertheless, that this is well defined since $\supp(a) \subseteq S$, and $\chi_a$ only depends on coordinates in $\supp(a)$.

Summing over dataset points now gives us
\begin{align}
q_{S,t}(D) = 
\sum_{i = 1}^{\vert D\vert} q_{S,t}(x^{(i)}) 
&= 
\frac{1}{\vert \uni_S \vert}\sum_{\substack{a \in \uni\\ \supp(a) \subseteq S}} {\chi_a(t)}{\fq_a(D)} \,.
\label{eq:marginals-inverse-fourier}
\end{align}

\subsection{Warm-up: Estimating All $k$-way Marginals}
\label{sec:all-kway}

In this subsection we show how to privately answer all $k$-way marginals using the Fourier representation discussed above. 
We focus on the setting where all attributes are defined on the same domain, such that $\vert \uni_1 \vert = \vert \uni_2 \vert = \dots = \vert \uni_d \vert = m$.
We use this simpler problem as a warm-up before discussing the more general setting at the end of the section.

First, we show that we can privately release estimates of the complex-valued Fourier aggregate queries using complex Gaussian noise. 
Then, we show that correlating the noise across marginal queries by simply reusing estimates gives a utility improvement.
We further improve the variance by distributing the privacy budget similar to \cite{lebeda2024}.
In fact, we recover Lebeda's mechanism in the simplest setting of estimating all $1$-way marginals for binary attributes ($m = 2$).

We start with a basic lemma that the complex Gaussian mechanism satisfies GDP, since it is equivalent to running the standard real-valued Gaussian mechanism separately for the real and imaginary components.

\begin{lemma}
    \label{lem:complex-to-real-reduction}
    Let $q \colon \uni^* \rightarrow \mathbb{C}^d$ be a set of queries with $\ell_2$ sensitivity $\Delta q \coloneq \max_{D \sim D'} \|q(D) - q(D')\|_2$. 
    Then the mechanism that outputs $q(D) + Z$ where $Z \sim \mathcal{CN}\left(0, \frac{2(\Delta q)^2}{\mu^2} I_d\right)$ satisfies $\mu$-GDP.
\end{lemma}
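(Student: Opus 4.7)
The plan is to reduce the complex Gaussian mechanism to a standard real-valued Gaussian mechanism, so that privacy follows from \Cref{lem:gaussian}. First, I would identify $\C^d$ with $\R^{2d}$ by sending a complex vector $z$ to the real vector $(\mathrm{Re}(z), \mathrm{Im}(z))$. Under this identification the Hermitian inner product on $\C^d$ has the property that $\|z\|_2^2 = \|\mathrm{Re}(z)\|_2^2 + \|\mathrm{Im}(z)\|_2^2$, so the $\ell_2$ sensitivity of the real-valued function $\tilde{q}(D) := (\mathrm{Re}(q(D)), \mathrm{Im}(q(D))) \in \R^{2d}$ equals the $\ell_2$ sensitivity $\Delta q$ of the original complex query.

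Next I would unpack the complex normal distribution. By \Cref{def:complex-gaussian}, a sample $Z \sim \mathcal{CN}(0, \sigma^2 I_d)$ has independent real and imaginary parts, each distributed as $\mathcal{N}(0, (\sigma^2/2) I_d)$. Plugging in $\sigma^2 = 2(\Delta q)^2/\mu^2$, both components are independent samples from $\mathcal{N}(0, (\Delta q)^2/\mu^2 \cdot I_d)$. Concatenating them yields a random vector $\tilde{Z} \sim \mathcal{N}(0, (\Delta q)^2/\mu^2 \cdot I_{2d})$ in $\R^{2d}$. Hence the output of the complex mechanism, viewed under the $\C^d \leftrightarrow \R^{2d}$ identification, is precisely $\tilde{q}(D) + \tilde{Z}$.

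Finally, since $\tilde{q}$ has sensitivity $\Delta q$ and $\tilde{Z}$ is isotropic Gaussian noise with variance $(\Delta q)^2/\mu^2$ in each coordinate, \Cref{lem:gaussian} implies that the mechanism $\tilde{q}(D) + \tilde{Z}$ satisfies $\mu$-GDP. The identification $\C^d \leftrightarrow \R^{2d}$ is a deterministic bijection, hence by post-processing (\Cref{lem:post-processing}) the complex-valued output also satisfies $\mu$-GDP. No step looks difficult; the only thing to verify carefully is the factor of $2$ in the variance, which is needed precisely because a complex Gaussian splits its variance evenly between the real and imaginary parts.
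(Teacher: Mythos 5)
Your proof is correct and takes essentially the same approach as the paper's: identify $\C^d$ with $\R^{2d}$ via real and imaginary parts, observe the sensitivity is preserved, unpack the complex Gaussian into two independent real Gaussians with variance $(\Delta q)^2/\mu^2$ each, invoke \Cref{lem:gaussian}, and conclude by post-processing.
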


\begin{proof}
    Define an alternative query set $\hat{q} \colon \uni^* \rightarrow \mathbb{R}^{2d}$ as 
    \[
    \hat{q}_i(D) = \begin{cases}
        \mathrm{Re}(q_i(D)) & \mathrm{if~} i \leq d, \\ 
        \mathrm{Im}(q_{i - d}(D)) & \mathrm{otherwise.}
    \end{cases}
    \]
    Then for any pair of datasets $D$ and $D'$ we have 
    \begin{align*}
        \|\hat{q}(D) - \hat{q}(D') \|_2 &= \sqrt{\sum_{i \in [2d]} (\hat q(D)_i - \hat q(D')_i) ^2} \\
        &= \sqrt{\sum_{i \in [d]} (\hat q(D)_i - \hat q(D')_i) ^2 + (\hat q(D)_{i + d} - \hat q(D')_{i + d}) ^2} = \|{q}(D) - {q}(D') \|_2 \,,
    \end{align*}
    where the last equality follows from $(\hat q(D)_i - \hat q(D')_i) ^2 + (\hat q(D)_{i + d} - \hat q(D')_{i + d}) ^2 = \vert {q}(D)_i - {q}(D')_i \vert ^2$ by definition of $\hat q$.  
    As such, we have $\Delta q = \Delta \hat q$ and we can release $\hat q(D) + \hat Z$ where $\hat Z \sim \mathcal{N}\left(0, \frac{(\Delta q)^2}{\mu^2} I_{2d}\right)$ under $\mu$-GDP by \cref{lem:gaussian}.
    If we post processes $\hat q(D) + \hat Z$ by constructing complex numbers such that $\mathrm{Re}(\tilde q_i(D)) = (\hat q(D) + \hat Z)_i$ and $\mathrm{Im}(\tilde q_i(D)) = (\hat q(D) + \hat Z)_{i + d}$ then $\tilde q$ is distributed as $q(D) + Z$. 
    The lemma therefore holds by the post processing property of GDP (\cref{lem:post-processing}).
\end{proof}

We now consider the setting where we want to privately estimate all $k$-way marginals. 
Notice in \cref{eq:marginals-inverse-fourier} that queries $F_{a}(D)$ for which $\|a\|_0 \leq \vert S \vert$ are used to estimate $q_{S,t}(D)$ for any set of attributes $S$ that contains $\supp(a)$. Rather than privately estimating $F_a(D)$ separately for each such $S$, we can, of course, reuse the estimate. In total, to estimate all $k$-way marginals, we need to estimate $F_a(D)$ for each $a$ of weight $\|a\|_0 \le k$, which brings the number of aggregate queries to
\[
    \sum_{j = 0}^k {d \choose j} (m-1)^j \,.
\]
We can then estimate any $k$-way marginal query by post-processing private estimates of the $m^k$ relevant Fourier queries. 
As a final step, we always remove any imaginary part of the estimate, since that must come from noise.
Estimating all $F_a(D)$ queries privately and reusing values already gives us a slight improvement over the baseline where the error is distributed as $\mathcal{N}(0,{d \choose k}/\mu^2)$ since $m^{-k} \sum_{j = 0}^k {d \choose j} (m-1)^j \leq {d \choose k}$. 
However, we can further reduce the error by releasing more accurate answers to Fourier coefficient that are reused.
We scale the variance of each estimate inversely proportional to the square root of the number of queries it is used for.
This scaling is optimized for $\ell_2^2$ error with Gaussian noise as shown for different settings in \cite{AumullerLNP24,DaganJYZZ24,lebeda2024}.
The pseudocode of our mechanism is in \Cref{alg:allkway}.

\begin{algorithm}
\begin{algorithmic}
\State  For each $a \in \uni$ of weight $\|a\|_0 \le k$, compute the value 
    \[
         \fq_a(D) := \sum_{i = 1}^{\vert D \vert} \omega_{m}^{-\ip{a,x^{(i)}}}.
    \]

\State Let 
    \[
       \tau := \frac{1}{\mu^2} \sum_{j=0}^k {d \choose j} (m - 1)^j \sqrt{{d - j}\choose {k - j}} \enspace .
    \]

\State Privately release estimate of each $\fq_{a}(D)$, $\|a\|_0 \le k$, by adding independent noise such that
    \[
        \tilde \fq_{a}(D) =\fq_{a}(D) + Z_{a}, \mathrm{~where~}Z_{a} \sim \mathcal{CN}\left(0, 2\tau \Big/ \sqrt{{d - \|a\|_0} \choose {k - \|a\|_0}}\right) \enspace .
    \]
    
\State Estimates for $k$-way marginal queries can be recovered using post-processing by computing
    \[
      \tilde{q}_{S,t}(D) = \mathrm{Re}\left( \frac{1}{ m^k }\sum_{\substack{a \in \uni\\ \supp(a) \subseteq S}} {\chi_a(t)}{\tilde\fq_a(D)}  \right), \text{ where } \chi_a(t) = \omega_m^{\sum_{i \in S}a_i t_i}.
    \]    
\end{algorithmic}
\caption{Differentially private estimates of all $k$-way marginals.}
\label{alg:allkway}
\end{algorithm}

Next, we show the privacy properties and error of our mechanism.
Later we discuss how to speed up computation time over a trivial implementation by computing estimates using FFT.

\begin{lemma}
    \label{lem:algo-DP}
    Algorithm~\ref{alg:allkway} satisfies $\mu$-GDP.
\end{lemma}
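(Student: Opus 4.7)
The plan is to reduce the privacy analysis to the complex Gaussian mechanism (\Cref{lem:complex-to-real-reduction}) by rescaling the Fourier queries so that the noise becomes isotropic. The reconstruction of the marginal estimates $\tilde q_{S,t}(D)$ from the $\tilde F_a(D)$ values is a deterministic function of the released quantities, so by the post-processing property it suffices to show that the vector $(\tilde F_a(D))_{\|a\|_0 \le k}$ is $\mu$-GDP.

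Let $\sigma_a^2 := \tau / \sqrt{\binom{d-\|a\|_0}{k-\|a\|_0}}$, so that $Z_a \sim \mathcal{CN}(0, 2\sigma_a^2)$. Consider the rescaled vector-valued query $g(D) := (F_a(D)/\sigma_a)_{\|a\|_0 \le k}$. The key observation is that $\chi_a$ takes values on the complex unit circle, so adding or removing any single data point $x$ changes each $F_a(D)$ by $\overline{\chi_a(x)}$, a complex number of modulus $1$. Therefore the $\ell_2$ sensitivity of $g$ on neighboring datasets satisfies
\[
(\Delta g)^2 \;=\; \sum_{\|a\|_0 \le k} \frac{|\overline{\chi_a(x)}|^2}{\sigma_a^2} \;=\; \sum_{\|a\|_0 \le k} \frac{1}{\sigma_a^2}.
\]

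Next I would verify that $(\Delta g)^2 = \mu^2$, which is exactly the reason $\tau$ was chosen as in \Cref{alg:allkway}. Grouping the vectors $a \in \uni$ with $\|a\|_0 \le k$ by their weight $j = \|a\|_0$, and noting there are $\binom{d}{j}(m-1)^j$ such $a$, we have
\[
\sum_{\|a\|_0 \le k} \frac{1}{\sigma_a^2}
\;=\; \frac{1}{\tau}\sum_{j=0}^{k} \binom{d}{j}(m-1)^j \sqrt{\binom{d-j}{k-j}}
\;=\; \frac{1}{\tau}\cdot (\mu^2 \tau) \;=\; \mu^2,
\]
by the definition of $\tau$ in the algorithm.

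Now applying \Cref{lem:complex-to-real-reduction} to $g$, the mechanism that releases $g(D) + W$ with $W \sim \mathcal{CN}(0, 2(\Delta g)^2/\mu^2 \cdot I) = \mathcal{CN}(0, 2I)$ satisfies $\mu$-GDP. By post-processing, multiplying coordinate $a$ of this output by $\sigma_a$ is still $\mu$-GDP, and it produces exactly $F_a(D) + \sigma_a W_a$, where $\sigma_a W_a \sim \mathcal{CN}(0, 2\sigma_a^2)$, independently across $a$ (since $W$ has independent coordinates). This matches the distribution of $\tilde F_a(D)$ in \Cref{alg:allkway}, completing the proof.

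There is no real obstacle here; the only non-routine step is the verification that the algorithmic choice of $\tau$ is exactly what makes the sensitivity of the rescaled query vector equal $\mu$, which is why the budget allocation in the algorithm is correct by design.
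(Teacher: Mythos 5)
Your proof is correct, and it takes a genuinely (if mildly) different route from the paper's. The paper argues via GDP \emph{composition} (\Cref{lem:composition}): it observes that releasing each individual noisy Fourier query $\tilde F_a(D)$ is $\mu_a$-GDP with $\mu_a^2 = \sqrt{\tbinom{d-\|a\|_0}{k-\|a\|_0}}/\tau$, and then checks that $\sum_a \mu_a^2 = \mu^2$ to compose them. You instead rescale the vector of Fourier queries by the per-coordinate noise standard deviations $\sigma_a$ so that the noise becomes isotropic, compute the $\ell_2$ sensitivity of the rescaled query vector $g$ directly, verify $(\Delta g)^2 = \mu^2$ from the definition of $\tau$, and invoke the (complex) Gaussian mechanism once, recovering the algorithm's output by a post-processing rescale. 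Both arguments rely on the same three ingredients (post-processing, sensitivity $1$ for each $F_a$, and \Cref{lem:complex-to-real-reduction}), and they of course coincide numerically because the GDP composition theorem for Gaussian mechanisms is itself proved by exactly this kind of reduction to a single higher-dimensional Gaussian. Your version avoids composition as a black box, which makes the dependence on the choice of $\tau$ a bit more transparent, at the cost of carrying the rescaling bookkeeping by hand; the paper's version is shorter because it delegates that bookkeeping to \Cref{lem:composition}.
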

\begin{proof}
    Since the released estimates $\tilde{q}_{S,t}(D)$ are just post-processing of the Fourier query estimates $\tilde{F}_a(D)$, it is enough to show that releasing $\tilde{F}_a(D)$ for all $a \in \uni$ of weight $\|a\|_0 \le k$ satisfied $\mu$-GDP. We show this via composition. Observe that, since the sensitivity of $F_a(D)$ is $1$, by Lemma~\ref{lem:complex-to-real-reduction} releasing $\tilde{F}_a(D)$ satisfies $\mu_a$-GDP, where 
    \(
    \mu_a^2 := {\sqrt{d-\|a\|_0 \choose k-\|a\|_0}}\Big/{\tau}. 
    \)
    Moreover, notice that there are ${d \choose \ell} (m-1)^\ell$ choices of $a$ with weight $\|a\|_0 = \ell$. Therefore, the lemma follows from the composition property of GDP (Lemma~\ref{lem:composition}), since
    \[
    \sum_{a: \|a\|_0 \le k} \mu_a^2 = \sum_{\ell=0}^k {d \choose \ell} (m-1)^\ell \frac{\sqrt{d-\ell \choose k-\ell}}{\tau} = \mu^2.\qedhere
    \]
\end{proof}

\begin{lemma}
    \label{lem:error-k-way}
    For any $S\subseteq [d]$, $|S| = k$, and any $t \in \{0,\dots,m-1\}^S$, the estimate $\tilde{q}_{S,t}(D)$ computed by Algorithm~\ref{alg:allkway} has error $\tilde{q}_{S,t}(D) - {q}_{S,t}(D)$ distributed as $\mathcal{N}(0, \sigma^2)$, where
    \[
        \sigma = \frac{1}{\mu m^{k} \sqrt{{d\choose k}} }\sum_{\ell = 0}^k {d \choose \ell} (m - 1)^\ell\sqrt{{d - \ell} \choose {k - \ell}}.
    \]
\end{lemma}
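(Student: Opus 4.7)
The plan is a direct calculation using the inverse Fourier identity \eqref{eq:marginals-inverse-fourier} and the properties of complex Gaussians already collected in \Cref{lem:gauss-lincomb}. First, since $q_{S,t}(D)$ is real, I would write the error as
\[
\tilde q_{S,t}(D)-q_{S,t}(D)
= \mathrm{Re}\!\left(\frac{1}{m^{k}}\sum_{\substack{a\in\uni\\ \supp(a)\subseteq S}} \chi_a(t)\,\bigl(\tilde F_a(D)-F_a(D)\bigr)\right)
= \mathrm{Re}\!\left(\frac{1}{m^{k}}\sum_{\substack{a\in\uni\\ \supp(a)\subseteq S}} \chi_a(t)\,Z_a\right),
\]
using \eqref{eq:marginals-inverse-fourier} and the definition of $\tilde F_a(D)$ in Algorithm~\ref{alg:allkway}.

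Next, because each $\chi_a(t)$ is a product of roots of unity, we have $|\chi_a(t)|=1$. Applying \Cref{lem:gauss-lincomb} to the independent samples $Z_a\sim \mathcal{CN}(0,2\tau/\sqrt{\binom{d-\|a\|_0}{k-\|a\|_0}})$, the inner sum is distributed as $\mathcal{CN}(0,V)$ with
\[
V \;=\; \frac{1}{m^{2k}}\sum_{\substack{a\in\uni\\ \supp(a)\subseteq S}} \frac{2\tau}{\sqrt{\binom{d-\|a\|_0}{k-\|a\|_0}}}.
\]
Taking the real part of a centered complex Gaussian of variance $V$ yields a real Gaussian of variance $V/2$ (by \Cref{def:complex-gaussian}), so the error is normally distributed with mean $0$ and variance $\sigma^{2}=V/2$.

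Grouping the terms by the weight $\ell=\|a\|_0$: for each $\ell\in\{0,\dots,k\}$ there are exactly $\binom{k}{\ell}(m-1)^{\ell}$ values of $a$ with $\supp(a)\subseteq S$ and $\|a\|_0=\ell$ (choose $\ell$ coordinates of $S$ and a nonzero value in each). Therefore
\[
\sigma^{2} \;=\; \frac{\tau}{m^{2k}}\sum_{\ell=0}^{k}\binom{k}{\ell}(m-1)^{\ell}\,\frac{1}{\sqrt{\binom{d-\ell}{k-\ell}}}.
\]

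Finally, the identity $\binom{d}{\ell}\binom{d-\ell}{k-\ell}=\binom{d}{k}\binom{k}{\ell}$ rewrites this as
\[
\sigma^{2}\;=\;\frac{\tau}{m^{2k}\,\binom{d}{k}}\sum_{\ell=0}^{k}\binom{d}{\ell}(m-1)^{\ell}\sqrt{\binom{d-\ell}{k-\ell}},
\]
and plugging in $\tau=\mu^{-2}\sum_{j=0}^{k}\binom{d}{j}(m-1)^{j}\sqrt{\binom{d-j}{k-j}}$ gives $\sigma$ exactly in the claimed form. There is no real obstacle in this proof; the only point to be careful about is the factor of $1/2$ picked up when taking the real part of a complex Gaussian, and the small combinatorial identity used to convert the $\binom{k}{\ell}$ sum into a $\binom{d}{\ell}$ sum that matches the expression for $\tau$.
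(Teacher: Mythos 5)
Your proposal is correct and follows essentially the same route as the paper's proof: express the error as the real part of a complex Gaussian combination via \eqref{eq:marginals-inverse-fourier}, apply Lemma~\ref{lem:gauss-lincomb} with $|\chi_a(t)|=1$, halve the variance for the real part, group by $\|a\|_0$, and use $\binom{d}{\ell}\binom{d-\ell}{k-\ell}=\binom{d}{k}\binom{k}{\ell}$ to match the expression for $\tau$. Nothing to change.
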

\begin{proof}
    Let $\tilde{q}_{S,t}'(D)$ be $\tilde{q}_{S,t}(D)$ without removing the imaginary part. By equation~\eqref{eq:marginals-inverse-fourier}, 
    \[
    \tilde{q}'_{S,t}(D) - {q}_{S,t}(D)
    =
    \frac{1}{m^k} \sum_{\substack{a \in \uni\\ \supp(a) \subseteq S}} {\chi_a(t)}{Z_a}.
    \]
    Therefore, by Lemma~\ref{lem:gauss-lincomb}, and since $|\chi_a(t)| = 1$, $\tilde{q}'_{S,t}(D) - {q}_{S,t}(D)\sim \mathcal{CN}(0,\sigma_\C^2)$, where
    \[
    \sigma_\C^2 = 
    \frac{2\tau}{m^{2k}} \sum_{\substack{a \in \uni\\ \supp(a) \subseteq S}} \frac{1}{\sqrt{d-\|a\|_0\choose k-\|a\|_0}}
    = 
    \frac{2\tau}{m^{2k}} \sum_{\ell=0}^k \frac{{k\choose \ell}(m-1)^\ell}{\sqrt{d-\ell\choose k-\ell}}.
    \]
    The second equality holds because there are ${k\choose \ell}(m-1)^\ell$ choices of $a$ with $\supp(a) \subseteq S$ and $\|a\|_0 = \ell$. Using the identity ${d \choose k}{k \choose \ell} = {d\choose \ell}{d-\ell\choose k-\ell}$, and plugging in the value of $\tau$, we get
    \[
    \sigma_{\C}^2 = \frac{2\tau}{m^{2k}}\sum_{\ell=0}^k \frac{{d\choose \ell}(m-1)^\ell\sqrt{d-\ell\choose k-\ell}}{{d\choose k}}
    = \frac{2}{\mu^2m^{2k}{d\choose k} } \left( \sum_{\ell = 0}^k {d \choose \ell} (m - 1)^\ell\sqrt{{d - \ell} \choose {k - \ell}}\right)^2.
    \]
    The lemma now follows after observing that $\tilde{q}_{S,t}(D) - {q}_{S,t}(D)$ is distributed as the real part of $\tilde{q}'_{S,t}(D) - {q}_{S,t}(D)$ which has variance $\sigma^2 = \frac12 \sigma_\C^2$.
\end{proof}

The improvement of \cref{lem:error-k-way} over the Gaussian mechanism baseline with standard deviation $\sqrt{d \choose k}/\mu$ depends on the parameters $d$, $k$, and $m$. 
When $d = k$ both mechanisms have $\sigma = 1/\mu$, and for fixed $k$ and large $m$ and $d$ we have $\sigma \approx ((m-1)/m)^k\sqrt{d \choose k}/\mu$ in \cref{lem:error-k-way}.
The intuition behind this improvement factor is that $(m-1)^k$ of the coefficients used in \cref{eq:marginals-inverse-fourier} are unique to $S$, while the remaining $m^k - (m-1)^k$ coefficients are reused for other marginals.
When $d$ is sufficiently large, we can estimate these remaining coefficients with little noise at only a small privacy cost, because they are reused in many marginals queries.
We plot the relative improvement over the baseline for small parameters in \cref{fig:improvement-factor}.

\begin{figure}[h]
    \centering
    \includegraphics[width=0.6\textwidth]{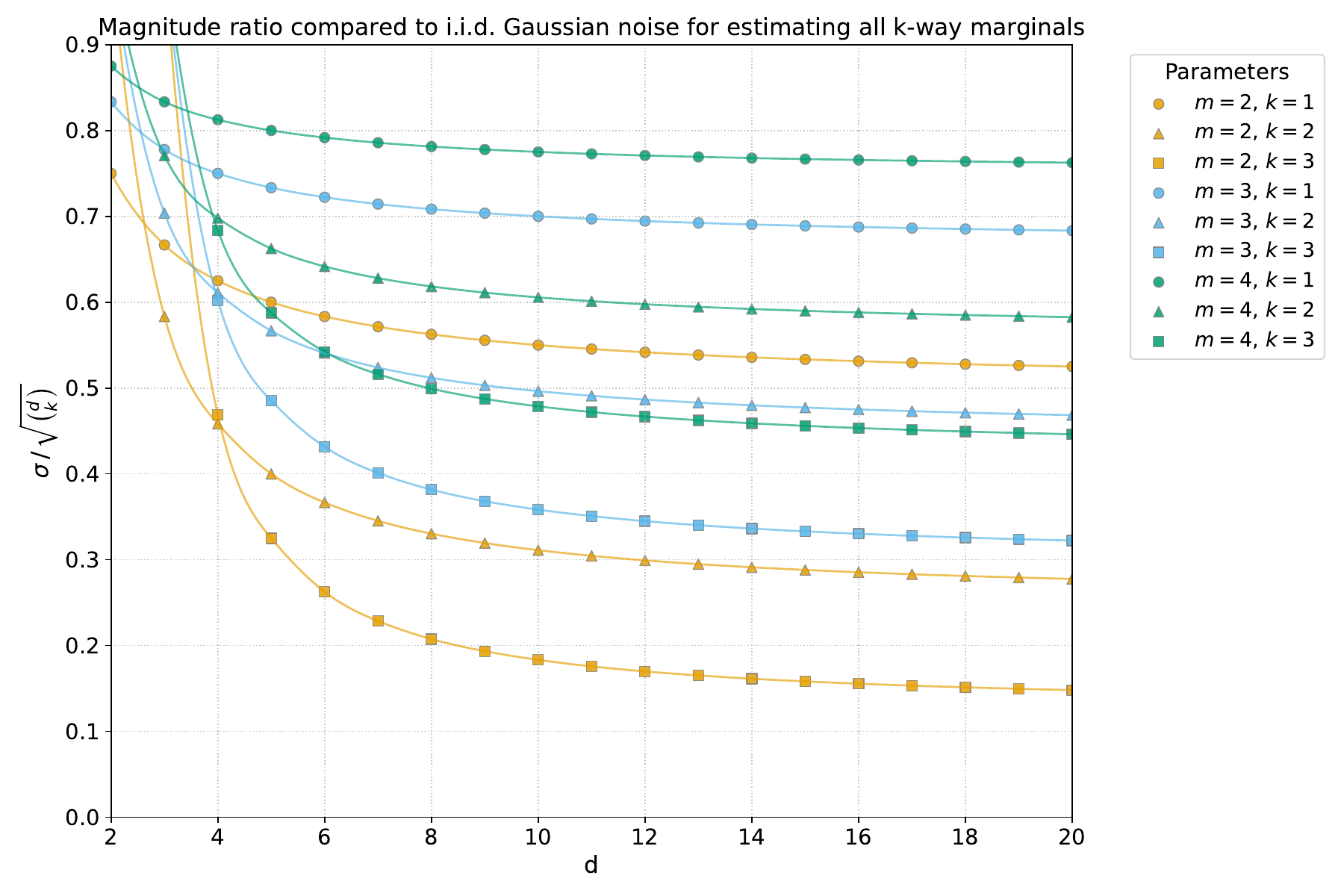}
    \caption{
    Comparison for small values of $k$, $m$, and $d$ of the standard deviation from \cref{thm:upper-bound} relative to the baseline that adds i.i.d. noise with magnitude $\sqrt{d \choose k}$.
    When $d$ increases the improvement ratio approaches $(1-1/m)^k$. 
    }
    \label{fig:improvement-factor}
\end{figure}

Next, we discuss the computation time of \Cref{alg:allkway}.
We focus on the time for adding noise, since all baselines (even non-private solutions) must compute all $q_{S,t}(D)$.
Note that the mechanism that adds i.i.d. noise to each marginal query runs in time $O(\gausstime {d \choose k}m^k)$, where $O(\gausstime)$ is the time needed for sampling a standard Gaussian\footnote{In practice, implementations for (approximately) sampling a standard Gaussian often have randomized running time~(e.g. \cite{Canonne_Kamath_Steinke_2022}). For simplicity, we assume that sampling runs in time $O(\gausstime)$ deterministically. Our results technically only bound the expected running time of our algorithms.}.
Our mechanism uses fewer Gaussian samples because we reuse samples across marginals. 
However, we combine $m^k$ samples for each marginal estimate.
This results in a running time overhead of $O({d \choose k} m^{2k}k)$ 
using a straightforward implementation. 
We can speed up the computation time if we compute all assignments for each $S$ at the same time using FFT.

\begin{lemma}
    \label{lem:FFT-speedup}
    Assume that we sample a standard Gaussian in time $O(\gausstime)$, and that we have access to all non-private $k$-way marginals $q_{S,t}(D)$. 
    Then all private $k$-way marginal estimates $\tilde q_{S,t}(D)$ from \cref{alg:allkway} can be computed in time 
    $O({d \choose k}m^kk\log{(m)} + \gausstime \sum_{\ell = 0}^k {d \choose \ell}(m-1)^\ell)$.
\end{lemma}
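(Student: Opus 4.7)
The plan is to split the cost of \cref{alg:allkway} into three pieces: sampling the complex Gaussian noise variables $Z_a$, computing the Fourier coefficients $F_a(D)$ from the given marginals, and performing an inverse Fourier transform to recover each $\tilde{q}_{S,t}(D)$. I will handle the last two pieces by performing a single $k$-dimensional forward FFT and a single $k$-dimensional inverse FFT per subset $S \subseteq [d]$ with $|S|=k$.

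First, I would pre-sample and store $Z_a \sim \mathcal{CN}\bigl(0,\,2\tau/\sqrt{{d-\|a\|_0 \choose k-\|a\|_0}}\bigr)$ for every $a\in\uni$ with $\|a\|_0\le k$. There are $\sum_{\ell=0}^k {d\choose \ell}(m-1)^\ell$ such $a$, so this stage costs $O\bigl(\gausstime \sum_{\ell=0}^k {d\choose \ell}(m-1)^\ell\bigr)$ time, matching the second term of the claimed bound. Storing these samples up front, rather than re-drawing them each time a set $S$ is processed, guarantees that the noise added to each shared coefficient $F_a$ is the same for every $S\supseteq \supp(a)$, as required by the algorithm.

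Second, for each $S$ of size $k$ I would recover $F_a(D)$ for $\supp(a)\subseteq S$ directly from the marginal table $(q_{S,t}(D))_{t\in\uni_S}$ using character orthogonality. A short calculation from \eqref{eq:marginals-inv-x} and the identity $\sum_{t\in \uni_S}\chi_b(t) = |\uni_S|\,\mathbbm{1}\{b=0\}$ gives
\[
F_a(D) = \sum_{t \in \uni_S} \overline{\chi_a(t)}\, q_{S,t}(D),
\]
which is exactly a $k$-dimensional DFT on an input of size $m^k$ and thus runs in $O(m^k k \log m)$ time by \cref{lem:multidim-fft}. I then add the precomputed $Z_a$'s to obtain $\tilde{F}_a(D)$ and apply the inverse $k$-dimensional DFT, at matching cost $O(m^k k \log m)$, to compute all $m^k$ estimates $\tilde{q}_{S,t}(D)$ simultaneously; taking real parts is a trivial $O(m^k)$ step. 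Summing over the ${d\choose k}$ subsets $S$ of size $k$ yields the first term of the claimed bound.

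The only real ``obstacle'' is purely organizational: making sure each $Z_a$ is drawn exactly once, even though $F_a$ is reused in the reconstruction for every $S\supseteq \supp(a)$. This is handled by the up-front sampling step. No other FFT or arithmetic subtlety is involved beyond a standard invocation of \cref{lem:multidim-fft}, so adding the two contributions yields the stated runtime.
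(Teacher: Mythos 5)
Your proposal is correct and follows essentially the same approach as the paper: pre-sample all $Z_a$ once (costing $O\bigl(\gausstime\sum_{\ell=0}^k{d\choose\ell}(m-1)^\ell\bigr)$ since each complex Gaussian is two real samples), and then invoke the multi-dimensional FFT of \cref{lem:multidim-fft} once per subset $S$ of size $k$ to reconstruct all $m^k$ estimates. The only difference is that you run two FFTs per $S$ (a forward DFT to recover $F_a(D)$ from the given marginals, then an inverse DFT after adding noise), whereas one can exploit linearity to do a single inverse FFT on the $Z_a$ values alone to produce the noise vector $\bigl(\mathrm{Re}\,\frac{1}{m^k}\sum_{a}\chi_a(t)Z_a\bigr)_{t\in\uni_S}$ and add it directly to the known $q_{S,t}(D)$; this halves the constant but does not change the $O\bigl({d\choose k}m^k k\log m\bigr)$ bound, so your argument is equally valid.
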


\begin{proof}
    It is easy to see that only $2\sum_{\ell = 0}^k {d \choose \ell}(m-1)^\ell$ Gaussian samples are needed. 
    Thus, it suffices to prove that for any fixed $S$ of size $k$, we can compute all marginal queries with support $S$ in time $O(m^k k\log{m})$ given access to the relevant Gaussian samples. 
    This is a standard multi-dimensional FFT, which can be computed by a multi-dimensional FFT algorithm 
    within time $O(m^kk\log{(m)})$ (see \cref{lem:multidim-fft}).
\end{proof}

We are now ready to state one of our main results by summarizing the properties of our technique.

\begin{theorem}
    \label{thm:upper-bound}
    Let $D$ be a dataset containing data points $x \in \uni$, where $\uni \coloneq \{0,1,\dots,m-1\}^d$. 
    Then there exists a $\mu$-GDP mechanism that estimates all $k$-way marginal queries of $D$, where the error of each estimate of $q_{S,t}(D)$ for $\vert S \vert = k$ is distributed as $\mathcal{N}(0, \sigma^2)$ where
    \[
    \sigma = \frac{1}{\mu m^{k} \sqrt{{d\choose k}} }\sum_{\ell = 0}^k {d \choose \ell} (m - 1)^\ell\sqrt{{d - \ell} \choose {k - \ell}} \enspace .
    \]
    Additionally, the noise for all marginal estimates can be sampled in time 
    $O({d \choose k}m^kk\log{(m)} + \gausstime \sum_{\ell = 0}^k {d \choose \ell}(m-1)^\ell)$
    where $O(\gausstime)$ is the time required for sampling a standard Gaussian. 
\end{theorem}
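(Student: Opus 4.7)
The plan is to present Theorem~\ref{thm:upper-bound} as a direct consequence of the three lemmas proven immediately above in the same subsection. Specifically, I would exhibit Algorithm~\ref{alg:allkway} as the mechanism whose existence the theorem asserts, and verify the three claimed properties (privacy, per-query error distribution, running time) by simply invoking Lemma~\ref{lem:algo-DP}, Lemma~\ref{lem:error-k-way}, and Lemma~\ref{lem:FFT-speedup} respectively. This is a "putting it all together" statement, so no new technical work is required.

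For privacy, Lemma~\ref{lem:algo-DP} establishes that Algorithm~\ref{alg:allkway} satisfies $\mu$-GDP. The argument there is the standard composition-and-postprocessing recipe: each noisy Fourier coefficient $\tilde F_a(D)$ is released under $\mu_a$-GDP by the complex Gaussian mechanism (Lemma~\ref{lem:complex-to-real-reduction}) applied to a sensitivity-$1$ query with variance calibrated to the weight $\|a\|_0$, the choices $\mu_a^2$ sum to $\mu^2$ by the definition of $\tau$, and the marginal estimates $\tilde q_{S,t}(D)$ are postprocessing of the $\tilde F_a(D)$. For the per-query error, Lemma~\ref{lem:error-k-way} gives exactly the claimed $\mathcal{N}(0,\sigma^2)$ distribution with the stated closed-form expression for $\sigma$; the computation there reduces to recognizing $\tilde q_{S,t}(D)-q_{S,t}(D)$ as a linear combination of independent complex Gaussians, applying Lemma~\ref{lem:gauss-lincomb}, using $|\chi_a(t)|=1$, and simplifying via the identity ${d\choose k}{k\choose\ell}={d\choose \ell}{d-\ell\choose k-\ell}$.

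For the running time, Lemma~\ref{lem:FFT-speedup} handles both pieces of the bound: the $O(\gausstime \sum_{\ell=0}^k {d\choose\ell}(m-1)^\ell)$ term comes from sampling one complex Gaussian per Fourier coefficient $a$ with $\|a\|_0\leq k$, and the $O({d\choose k}m^k k\log m)$ term comes from observing that, for each fixed $S$ of size $k$, computing $\{\tilde q_{S,t}(D):t\in\uni_S\}$ from the noisy Fourier coefficients $\{\tilde F_a(D):\supp(a)\subseteq S\}$ is exactly a $k$-dimensional inverse discrete Fourier transform over a domain of size $m^k$, which is computable by a multidimensional FFT (Lemma~\ref{lem:multidim-fft}) in time $O(m^k k\log m)$.

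There is no real obstacle here since all the technical work has been isolated into the three preceding lemmas. The only thing worth mentioning explicitly in the proof is that the same estimates $\tilde F_a(D)$ are reused across all $S$ containing $\supp(a)$, so summing over $S$ does not inflate the noise budget—this is precisely why the composition bound in Lemma~\ref{lem:algo-DP} only ranges over $a$ with $\|a\|_0 \le k$ rather than over pairs $(S,a)$. Thus the proof reduces to a one-paragraph assembly:

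\begin{proof}[Proof sketch]
The mechanism is Algorithm~\ref{alg:allkway}. Privacy is Lemma~\ref{lem:algo-DP}, the per-query error distribution is Lemma~\ref{lem:error-k-way}, and the running time bound for sampling and reconstructing all marginal estimates is Lemma~\ref{lem:FFT-speedup}.
\end{proof}
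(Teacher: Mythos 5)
Your proof is correct and matches the paper's own argument exactly: the paper also proves Theorem~\ref{thm:upper-bound} by exhibiting Algorithm~\ref{alg:allkway} and citing Lemmas~\ref{lem:algo-DP}, \ref{lem:error-k-way}, and \ref{lem:FFT-speedup} for privacy, error distribution, and running time respectively. The extra context you supply about why each lemma applies (in particular the remark about reusing $\tilde{F}_a(D)$ across different $S$) is accurate and consistent with the lemmas' own proofs.
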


\begin{proof}
    That mechanism is \cref{alg:allkway}. The privacy guarantees, error distribution, and running time follow from
    \cref{lem:algo-DP}, \ref{lem:error-k-way}, and \ref{lem:FFT-speedup}, respectively.
\end{proof}

\subsection{Estimating Arbitrary Marginal Query Workloads}
\label{sec:upper-bound-arbitrary-marginal}

We now consider a general set up for answering arbitrary workloads of marginal queries. Recall that the workload $Q_{\SS}$ of marginal queries on the universe $\uni := \uni_1 \times \ldots \times \uni_d$, where $\uni_i := \{0, \ldots, m_i - 1\}$, is defined by a collection $\SS$ of subsets of $[d]$. For each subset $S \in \SS$, the workload contains all queries $q_{S,t}$ for all $t \in \uni_S := \prod_{i \in S}\uni_i$. 

We first consider an error metric which is a weighted version of root mean squared error. In this case, together with the workload $Q_{\SS}$, we are also given a weight function $p:\SS \to \R_{\ge 0}$. Although it is not essential to our algorithm, we will assume the weights are normalized, i.e., $\sum_{S \in \SS} p(S) = 1$. Then the goal is to compute estimates $\tilde{q}_{S,t}(D)$ for all $S \in \SS$ and all $t \in \uni_S$ that minimize the error
\[
\err_p(q,\tilde{q}) := 
\left(\sum_{S \in \SS} \frac{p(S)}{|\uni_S|} \sum_{t \in \uni_S}\E[(\tilde{q}_{S,t}(D) - q_{S,t}(D))^2] \right)^{1/2}.
\]
Notice that, in this definition, the weight $p(S)$ of a set of
attributes $S$ is evenly split between the $\uni_S$ marginal queries
corresponding to $S$.
The error definition allows for zero-weight queries. 
This is required when we optimize for maximum variance later in the section.
Whenever a zero-weight query is a subset of a non-zero weight query we can release an unbiased estimate at no additional privacy cost.
This is the case e.g. when releasing all $k$-way marginals with Algorithm~\ref{alg:allkway}, as we can estimate all $1,2,\dots,(k-1)$-way marginals from the Fourier queries.
In the rest of the section we do not explicitly handle zero-weight queries that are not a subset of a non-zero weight query. 
We cannot release an unbiased estimate for these queries but since they do not affect the error measure it does not matter how they are estimated.

Our algorithm in this setting is an extension of
Algorithm~\ref{alg:allkway}. We answer all Fourier aggregate  queries $\fq_{a}$
necessary for reconstructing the answers to queries in $Q_{\SS}$. The
amount of noise added to $\fq_a$ is proportional to how much this noise contributes to $\err_p(q,\tilde{q})^2$ in expectation. 
The full description is given in Algorithm~\ref{alg:gen}. In \Cref{sec:lower-bound} we also show that this choice of noise magnitudes gives an optimal
factorization for $Q_{\SS}$.
Below we present some intuition behind the design of our algorithm, and then present the proofs of the privacy and error guarantees.

\begin{algorithm}
\begin{algorithmic}


\State For any $a \in \uni$, let
\[
\tau_a :=
\sqrt{\sum_{\substack{S \in \SS\\S \supseteq \supp(a)}}\frac{p(S)}{|\uni_S|^2}},
\hspace{2em}
\text{and }
\hspace{2em}
\tau := \frac{1}{\mu^2} \sum_{a \in \uni}\tau_a.
\]

\State For any $a$ such that $\tau_a > 0$, privately release an estimate of $\fq_{a}(D)$ (see \cref{eq:fourier-query}) by adding independent noise such that
    \[
        \tilde \fq_{a}(D) =\fq_{a}(D) + Z_{a}, \mathrm{~where~}Z_{a} \sim \mathcal{CN}\left(0, \frac{2\tau}{\tau_{a}}\right).
    \]
    
\State Estimates for marginal queries for $S \in \SS$ can be recovered using post-processing by computing
    \[
      \tilde{q}_{S,t}(D) = \mathrm{Re}\left( \frac{1}{ |\uni_S| }\sum_{\substack{a \in \uni\\ \supp(a) \subseteq S}} {\chi_a(t)}{\tilde \fq_a(D)}  \right) , \text{ where } \chi_a(t) = \prod_{i \in S} \omega_{m_i}^{a_i t_i}.
    \] 
\end{algorithmic}
\caption{Differentially private estimate of a workload $Q_{\SS}$ of weighted marginals.}
\label{alg:gen}
\end{algorithm}

It is easy to check that Algorithm~\ref{alg:allkway} is a special case of Algorithm~\ref{alg:gen} when $\SS$ consists of all subsets of $[d]$ of cardinality $k$, and $p(S) = \frac{1}{{d\choose k}}$ for all $S \in \SS$.

As for the intuition behind $\tau$, note that the error $\tilde{q}_{S,t}(D) - q_{S,t}(D)$ is a sum of $\vert \uni_S \vert$ random variables. 
Specifically, each $\fq_a$ where $\supp(a) \subseteq S$ contributes a random variable to the sum with variance 
\[
    \E\left[\left(\mathrm{Re}\left(\frac{1}{\vert \uni_S \vert} {\chi_a(t)}{(\tilde \fq_a(D) - \fq_a(D))} \right)\right)^2\right] = \E\left[\left(\mathrm{Re}\left(\frac{\tilde{\fq}_a(D) - \fq_a(D)}{\vert \uni_S \vert}\right)\right)^2\right] = \frac{\sigma_a^2}{\vert \uni_S \vert^2} ,
\]
where $\sigma_a^2$ is half the variance of $Z_a$. 
Since the noise variables added to all $\fq_a$ queries are independent, the variance of $\tilde{q}_{S,t}(D) - q_{S,t}(D)$ is simply the sum of variances for each of the $\vert \uni_S \vert$ random variables.
Likewise, the noise of the estimate for $\fq_a$ contributes to the error for all marginals where $\supp(a) \subseteq S$.
In total, the private estimate of the query $\fq_a(D)$ adds 
\[
    \sum_{\substack{S \in \SS\\\supp(a) \subseteq S}} \frac{p(S)}{\vert \uni_S \vert} \sum_{t \in \uni_S} \frac{\sigma_a^2}{|\uni_S|^2} 
    = \sigma_a^2 \sum_{\substack{S \in \SS\\\supp(a) \subseteq S}}\frac{p(S)}{|\uni_S|^2}
\]
to the expectation of $\err_p(q,\tilde{q})^2$.
As such, we set $\tau_a$ to the square root of the weight for $\tilde{\fq}_a(D)$. 
This leads to the optimal values for all $\sigma_a$ when minimizing $\err_p(q,\tilde{q})^2$.
See the work of Lebeda and Pagh for more details on calibrating Gaussian noise for $\ell_2^2$ error when queries have different scales~\cite[Chapter~4]{lebeda2023phd}.

We first give the privacy guarantees.

\begin{lemma}
    \label{lem:gen-DP}
    Algorithm~\ref{alg:gen} satisfies $\mu$-GDP.
\end{lemma}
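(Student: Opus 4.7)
The plan is to mirror the structure of the proof of Lemma~\ref{lem:algo-DP}, which handled the uniform $k$-way marginal case. First I would observe that the marginal estimates $\tilde{q}_{S,t}(D)$ are produced entirely by post-processing of the released Fourier estimates $\{\tilde{F}_a(D) : \tau_a > 0\}$. By the post-processing property of GDP (Lemma~\ref{lem:post-processing}), it therefore suffices to prove that the joint release of $\{\tilde{F}_a(D) : \tau_a > 0\}$ satisfies $\mu$-GDP.

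Next I would handle each individual Fourier release with the complex Gaussian mechanism. Recall from \cref{eq:fourier-query} that $F_a(D) = \sum_i \overline{\chi_a(x^{(i)})}$ with each $\chi_a(x^{(i)})$ a unit complex number, so the contribution of a single data point has absolute value $1$ and hence $F_a$ has $\ell_2$ sensitivity exactly $1$. By \Cref{lem:complex-to-real-reduction}, adding noise $Z_a \sim \mathcal{CN}(0, 2\tau/\tau_a)$ satisfies $\mu_a$-GDP, where $\mu_a$ is determined by equating the noise variance $2\tau/\tau_a$ with $2(\Delta F_a)^2/\mu_a^2 = 2/\mu_a^2$; this gives $\mu_a^2 = \tau_a/\tau$.

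Finally I would combine these per-query guarantees by Gaussian composition (Lemma~\ref{lem:composition}), which yields a guarantee of $\sqrt{\sum_{a:\tau_a>0} \mu_a^2}$-GDP. Substituting $\mu_a^2 = \tau_a/\tau$ and using the definition $\tau = \tfrac{1}{\mu^2}\sum_{a \in \uni} \tau_a$,
\[
\sum_{a : \tau_a > 0}\mu_a^2 \;=\; \frac{1}{\tau}\sum_{a \in \uni}\tau_a \;=\; \mu^2,
\]
so the composed mechanism is $\mu$-GDP, as required.

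There is no real obstacle here: every ingredient has already been assembled earlier in the paper. The only minor care needed is verifying the unit-sensitivity claim for the complex-valued $F_a$ (which follows since characters are unit complex numbers and we use the add/remove neighboring relation), and checking the algebraic identity $\sum_a \tau_a/\tau = \mu^2$ that makes the budget allocation in the definition of $\tau$ exactly calibrated to $\mu$-GDP rather than leaving slack.
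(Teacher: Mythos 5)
Your proposal is correct and follows essentially the same route as the paper's proof: post-processing reduces the problem to releasing the Fourier estimates, each $\tilde{F}_a(D)$ is $\mu_a$-GDP with $\mu_a^2 = \tau_a/\tau$ by the unit sensitivity of $F_a$ and Lemma~\ref{lem:complex-to-real-reduction}, and composition together with the definition of $\tau$ gives $\sum_a \mu_a^2 = \mu^2$. If anything, you spell out the sensitivity verification and the final algebraic identity slightly more explicitly than the paper does.
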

\begin{proof}
The proof is analogous to that of Lemma~\ref{lem:algo-DP}.
We argue that releasing the Fourier query estimates $\tilde{\fq}_a(D)$ satisfies $\mu$-GDP, and then the privacy of the estimates $\tilde{q}_{S,t}(D)$ follows by post-processing. Once again, Lemma~\ref{lem:complex-to-real-reduction} gives us that $\tilde{\fq}_a(D)$ satisfies $\mu_a$-GDP, where 
\(
\mu_a^2 := \frac{\tau_{a}}{\tau}. 
\)
Now the lemma follows from composition and the definition of $\tau$. 
\end{proof}

Next we compute the error of the algorithm. 
Recall that 
we use the notation $\SS_{\downarrow}$ for the collection of all sets $R \subseteq [d]$ such that $R\subseteq S$ for some $S \in \SS$. I.e., this is the closure of $\SS$ under taking subsets.

\begin{lemma}
    \label{lem:error-gen}
    The estimates $\tilde{q}_{S,t}(D)$ computed by Algorithm~\ref{alg:gen} for $S\in \SS$ and $t \in \uni_S$ have weighted root mean squared error
    \begin{align*}
    \err_p(q,\tilde{q}) 
    &=\left(\sum_{S \in \SS} \frac{p(S)}{|\uni_S|} \sum_{t \in \uni_S}\E[(\tilde{q}_{S,t}(D) - q_{S,t}(D))^2] \right)^{1/2}\\
    &= \frac{1}{\mu}\sum_{R \in \SS_{\downarrow}}\left(\prod_{j \in R}(m_j-1)\right)\sqrt{\sum_{\substack{S \in \SS\\S \supseteq R}}\frac{p(S)}{|\uni_S|^2}}.
    \end{align*}
\end{lemma}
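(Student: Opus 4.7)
The plan is to compute the variance of each estimator error $\tilde q_{S,t}(D) - q_{S,t}(D)$ explicitly, plug these variances into the definition of $\err_p(q,\tilde q)$, and then swap the order of summation so that the characters $\chi_a(t)$ disappear and the carefully chosen weights $\tau_a$ precisely telescope with the variance of the noise $Z_a$. I do not expect any serious obstacle; the argument is essentially bookkeeping with complex Gaussians.

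First I would use equation~\eqref{eq:marginals-inverse-fourier} together with the linearity of the reconstruction step to write
\[
\tilde q_{S,t}(D) - q_{S,t}(D) \;=\; \mathrm{Re}\!\left(\frac{1}{|\uni_S|}\sum_{\substack{a \in \uni\\ \supp(a) \subseteq S}} \chi_a(t)\, Z_a\right).
\]
Since $|\chi_a(t)| = 1$ for all $a$ and $t$, and the $Z_a$ are independent with $Z_a \sim \mathcal{CN}(0, 2\tau/\tau_a)$, Lemma~\ref{lem:gauss-lincomb} implies that the argument of $\mathrm{Re}(\cdot)$ is distributed as $\mathcal{CN}\bigl(0,\,\tfrac{2\tau}{|\uni_S|^2}\sum_{a:\,\supp(a)\subseteq S} \tfrac{1}{\tau_a}\bigr)$. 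Recalling that the real part of a $\mathcal{CN}(0,\sigma^2)$ random variable has variance $\sigma^2/2$, I get
\[
\E[(\tilde q_{S,t}(D) - q_{S,t}(D))^2] \;=\; \frac{\tau}{|\uni_S|^2} \sum_{\substack{a \in \uni\\ \supp(a) \subseteq S}} \frac{1}{\tau_a}.
\]

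Next, I substitute this into the definition of $\err_p(q,\tilde q)^2$. The inner sum over $t \in \uni_S$ contributes a factor of $|\uni_S|$, so
\[
\err_p(q,\tilde q)^2 \;=\; \tau \sum_{S \in \SS} \frac{p(S)}{|\uni_S|^2} \sum_{\substack{a \in \uni\\ \supp(a) \subseteq S}} \frac{1}{\tau_a}.
\]
Interchanging the order of summation and using the definition $\tau_a^2 = \sum_{S\in\SS,\, S\supseteq\supp(a)} p(S)/|\uni_S|^2$, the inner sum telescopes:
\[
\err_p(q,\tilde q)^2 \;=\; \tau \sum_{a \in \uni} \frac{1}{\tau_a}\sum_{\substack{S\in\SS\\ S\supseteq\supp(a)}} \frac{p(S)}{|\uni_S|^2} \;=\; \tau \sum_{a \in \uni} \tau_a,
\]
where any terms with $\tau_a = 0$ contribute $0$ to both sides (those coordinates carry no noise by construction). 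Plugging in $\tau = \mu^{-2}\sum_{a\in\uni}\tau_a$ gives $\err_p(q,\tilde q) = \mu^{-1}\sum_{a\in\uni}\tau_a$.

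Finally, to match the form stated in the lemma I group the Fourier indices $a$ by their support $R = \supp(a)$. For each $R \subseteq [d]$ there are exactly $\prod_{j\in R}(m_j-1)$ vectors $a$ with $\supp(a) = R$, and all of them share the same value $\tau_a = \sqrt{\sum_{S\in\SS,\,S\supseteq R} p(S)/|\uni_S|^2}$. Any $R$ that is not contained in some $S\in\SS$ contributes $\tau_a = 0$, so the sum reduces to $R \in \SS_\downarrow$, yielding the claimed equality.
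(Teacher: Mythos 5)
Your proposal is correct and follows essentially the same path as the paper's proof: write the error as the real part of a complex Gaussian linear combination via Lemma~\ref{lem:gauss-lincomb}, compute the per-query variance $\sigma_S^2 = \tau|\uni_S|^{-2}\sum_{a:\supp(a)\subseteq S}\tau_a^{-1}$, swap the order of summation so the $\tau_a$ factors telescope to $\tau\sum_a\tau_a$, plug in the value of $\tau$, and group by $\supp(a)=R$ to get the multiplicity $\prod_{j\in R}(m_j-1)$. Your parenthetical about $\tau_a=0$ mirrors the paper's remark that the argument is carried out for $S$ with $p(S)>0$ (which forces $\tau_a>0$ for all relevant $a$), so nothing is missing.
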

\begin{proof}
    Take some $S \in \SS$ for which $p(S) > 0$, and some $t \in \uni_S$.
    By equation~\eqref{eq:marginals-inverse-fourier}, 
    \[
      \tilde{q}_{S,t}(D) - {q}_{S,t}(D)
      = 
      \mathrm{Re}\left(\frac{1}{\vert \uni_S \vert} \sum_{\substack{a \in \uni\\ \supp(a) \subseteq S}} {\chi_a(t)}{Z_a}\right).
    \]
    By Lemma~\ref{lem:gauss-lincomb}, $\frac{1}{\vert \uni_S \vert} \sum_{a} {\chi_a(t)}{Z_a}$ 
    is a complex mean zero normal random variable, and its real part is also normally distributed with half the variance. Again
    by Lemma~\ref{lem:gauss-lincomb}, and since $|\chi_a(t)| = 1$, we get that $\tilde{q}_{S,t}(D) - {q}_{S,t}(D)\sim \mathcal{N}(0,\sigma_S^2)$, where
    \begin{equation}\label{eq:gen-variance}
    \sigma_S^2 = 
    \frac{\tau}{|\uni_S|^2} \sum_{\substack{a \in \uni\\\supp(a)\subseteq S}} \frac{1}{\tau_{a}}
    = 
     \frac{\tau}{|\uni_S|^2} \sum_{\substack{a \in  \uni\\\supp(a)\subseteq S}} \left(\sum_{\substack{T \in  \SS\\\supp(a) \subseteq T }}\frac{p(T)}{|\uni_T|^2}\right)^{-1/2}.
    \end{equation}
    Here we used the observation that $\tau_a > 0$ for any $a$ such that $\supp(a) \subseteq S$, since $p(S)>0$ contributes a positive amount to $\tau_a$.
    Then the average squared error, weighted by $p$, is 
    \[
    \err_p(q,\tilde{q})^2 
    = 
    \sum_{S \in \SS} p(S) \sigma_S^2
    = 
    \tau \sum_{S \in \SS}\sum_{\substack{a \in \uni\\\supp(a)\subseteq S}} \frac{p(S)}{|\uni_S|^2} \left(\sum_{\substack{T \in \SS\\\supp(a) \subseteq T}}\frac{p(T)}{|\uni_T|^2}\right)^{-1/2}.
    \]
    Let $\SS_{p,\downarrow}$ 
    be the collection of all $R \subseteq [d]$ such that $R\subseteq S$ for some $S \in \SS$ with $p(S) > 0$.\footnote{It is natural to assume that $p(S) > 0$ for all $S \in \SS$, in which case $\SS_{p,\downarrow} = \SS_{\downarrow}$. We do not make this assumption here because later we optimize over choices of $p$ for a fixed $\SS$.}
    Changing the order of summation above, we get 
    \begin{align*}
    \err_p(q,\tilde{q})^2 
    &=
    \tau
    \sum_{\substack{a\in\uni\\\supp(a) \in \SS_{p,\downarrow}}}
    \sum_{\substack{S \in \SS\\S\supseteq \supp(a)}}
    \frac{p(S)}{|\uni_S|^2} \left(\sum_{\substack{T \in \SS\\T\supseteq\supp(a)}}\frac{p(T)}{|\uni_T|^2}\right)^{-1/2} \\
    &= \tau\sum_{\substack{a\in\uni\\\supp(a) \in \SS_{\downarrow}}}\sqrt{\sum_{\substack{S \in \SS\\S\supseteq\supp(a)}}\frac{p(S)}{|\uni_S|^2}}.
  \end{align*}
  After plugging in the value of $\tau$, and recalling that, for any $R$, there are $\prod_{j \in R}(m_j-1)$ choices of $a$ with $\supp(a) = R$, we have
  \begin{align*}
    \err_p(q,\tilde{q})^2
    &=
    \frac{1}{\mu^2}\left(\sum_{\substack{a\in\uni\\\supp(a) \in \SS_{p,\downarrow}}}\sqrt{\sum_{\substack{S \in \SS\\S\supseteq\supp(a)}}\frac{p(S)}{|\uni_S|^2}}\right)^2 \\
    &=
    \frac{1}{\mu^2}\left(\sum_{R \in \SS_{\downarrow}}\left(\prod_{j \in R}(m_j-1)\right)\sqrt{\sum_{\substack{S \in \SS\\S\supseteq R}}\frac{p(S)}{|\uni_S|^2}}\right)^2. \hspace{1em} \qedhere
  \end{align*}
\end{proof}

We now may conclude with the following theorem:
\begin{theorem}
    \label{thm:upper-bound-general-avg}
    Let $D$ be a dataset containing data points $x \in \uni$, where $\uni := \uni_1 \times \ldots \times \uni_d$, $\uni_i := \{0, \ldots, m_i - 1\}$. Let the workload $Q_{\SS}$ of marginal queries on the universe $\uni$ be defined by a collection $\SS$ of subsets of $[d]$ such that for each subset $S \in \SS$, the workload contains all queries $q_{S,t}$ for all $t \in \uni_S := \prod_{i \in S}\uni_i$. Given a weight function $p:\SS \to \R_{\ge 0}$,
    there exists a $\mu$-GDP mechanism that estimates all marginal queries in $Q_{\SS}$ and satisfies
     \begin{align*}
         \err_p(q,\tilde{q}) 
    &=\left(\sum_{S \in \SS} \frac{p(S)}{|\uni_S|} \sum_{t \in \uni_S}\E[(\tilde{q}_{S,t}(D) - q_{S,t}(D))^2] \right)^{1/2} \\
    &= \frac{1}{\mu}\sum_{R \in \SS_{\downarrow}}\left(\prod_{j \in R}(m_j-1)\right)\sqrt{\sum_{\substack{S \in \SS\\S \supseteq R}}\frac{p(S)}{|\uni_S|^2}}.
    \end{align*}
    Additionally, the noise for all marginal estimates can be sampled in time 
    \[
    O\left(\sum_{S\in \SS}\left(\prod _{i\in S}m_i \right)\log\left(\prod _{i\in S}m_i \right) + \gausstime \sum_{R\in \SS_{\downarrow}}\prod _{i\in R}(m_i - 1)\right) 
    \]
    where $O(\gausstime)$ is the time required for sampling a standard Gaussian. 
\end{theorem}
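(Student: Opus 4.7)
The plan is to prove the theorem by verifying each of its three claims—privacy, error, and running time—separately, combining the lemmas that have already been established for the components of Algorithm~\ref{alg:gen}.

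For privacy, I would appeal to \Cref{lem:gen-DP}. The underlying idea is that the algorithm can be viewed as (i) releasing the vector of Fourier estimates $\tilde{\fq}_a(D)$ for all $a$ with $\tau_a>0$, followed by (ii) the deterministic post-processing step that reconstructs $\tilde{q}_{S,t}(D)$ via the inverse Fourier transform. Since each Fourier query $\fq_a$ has $\ell_2$-sensitivity equal to $1$ (as $|\overline{\chi_a(x)}|=1$ for every $x$ and neighbors differ by a single data point), releasing $\tilde{\fq}_a(D)$ with noise $\mathcal{CN}(0, 2\tau/\tau_a)$ satisfies $\mu_a$-GDP with $\mu_a^2 = \tau_a/\tau$ by \Cref{lem:pods-complex-to-real-reduction}. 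The definition of $\tau$ is chosen precisely so that $\sum_a \mu_a^2 = \mu^2$, and so Gaussian composition and post-processing finish the argument.

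For the error, I would invoke \Cref{lem:error-gen}. The calculation proceeds as follows: plug the reconstruction formula into $\tilde{q}_{S,t}(D)-q_{S,t}(D)$ using the Fourier inversion identity \eqref{eq:marginals-inverse-fourier}, so that the error becomes the real part of a linear combination of the independent complex Gaussian noises $Z_a$. By \Cref{lem:gauss-lincomb}, the variance at each $(S,t)$ equals $\sigma_S^2$ as in \eqref{eq:gen-variance}, depending only on $S$. Summing weighted by $p(S)/|\uni_S|$ over $t\in \uni_S$ removes the inner sum over $t$, yielding $\sum_{S} p(S) \sigma_S^2$. The main calculation is swapping the order of summation over $S$ and $a$: for each fixed $a$, the contributing $S$ are precisely those with $S\supseteq \supp(a)$, and the telescoping between the numerator and the $\tau_a^{-1}$ factor leaves $\sqrt{\sum_{S\supseteq \supp(a)} p(S)/|\uni_S|^2}$. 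Finally, grouping $a$'s by their support $R$ gives the multiplicative factor $\prod_{j\in R}(m_j-1)$ and produces the closed-form expression claimed.

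For the running time, I would argue as in \Cref{lem:FFT-speedup}. The number of Gaussian samples required equals twice the number of $a\in\uni$ with $\tau_a>0$, i.e., those whose support lies in some $S\in \SS$; grouping by support $R$ gives $\sum_{R\in\SS_\downarrow}\prod_{j\in R}(m_j-1)$ samples, each costing $O(\gausstime)$. For the reconstruction, I would fix each $S\in\SS$ and observe that the map from $(\tilde{\fq}_a(D))_{\supp(a)\subseteq S}$ to $(\tilde{q}_{S,t}(D))_{t\in \uni_S}$ is exactly a $|S|$-dimensional inverse DFT over $\uni_S$, so by \Cref{lem:multidim-fft} it can be carried out in time $O(|\uni_S|\log|\uni_S|)$. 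Summing over $S\in\SS$ gives the stated bound.

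The main conceptual step, and also the only non-routine one, is the double summation reorganization in the error calculation—specifically, recognizing that the choice $\tau_a=\sqrt{\sum_{S\supseteq\supp(a)}p(S)/|\uni_S|^2}$ is exactly what makes the weighted variance collapse into a perfect square, yielding the clean closed-form expression for $\err_p(q,\tilde{q})$. Everything else is a straightforward application of standard primitives (Gaussian composition, post-processing, multi-dimensional FFT).
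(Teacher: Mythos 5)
Your proposal is correct and follows essentially the same route as the paper: privacy via the complex Gaussian mechanism plus composition and post-processing (Lemma~\ref{lem:gen-DP}), error via the variance computation and the sum-swap in Lemma~\ref{lem:error-gen}, and running time via the multi-dimensional FFT argument of Lemma~\ref{lem:FFT-speedup}. You correctly identify the central step---that the choice of $\tau_a$ makes the $p(S)/|\uni_S|^2$ factor and the $\tau_a^{-1}$ factor combine into $\sqrt{\sum_{S\supseteq\supp(a)}p(S)/|\uni_S|^2}$ after swapping the order of summation over $S$ and $a$---which is exactly the paper's calculation.
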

\begin{proof}
    The error bound is already proven in Lemma \ref{lem:error-gen}.
    The time complexity of sampling noise is straightforward with FFT, with the same argument as in Lemma \ref{lem:FFT-speedup}.
\end{proof}

Next we turn to the setting in which our measure of error is the maximum standard deviation of $\tilde{q}_{S,t}(D) - {q}_{S,t}(D)$ over all $S \in \SS$ and all $t \in \uni_S$. We reuse Algorithm~\ref{alg:gen}, with ``worst case'' choice of $p$, i.e., the $p$ that maximizes the error in Lemma~\ref{lem:error-gen}. The privacy of this algorithm follows directly from Lemma~\ref{lem:gen-DP}. We also have the following error bound.

\begin{lemma}\label{lem:error-gen-max}
  Let $\tilde{q}_{S,t}(D)$ be the estimates produced by Algorithm~\ref{alg:gen} with weights
  \begin{equation}\label{eq:pstar}
    p^* \in \arg \max\left\{\sum_{R \in \SS_{\downarrow}}\left(\prod_{j \in R}(m_j-1)\right)\sqrt{\sum_{\substack{S \in \SS\\S \supseteq R}}\frac{p(S)}{|\uni_S|^2}}: \sum_{S \in \SS}p(S) = 1, p(S) \ge 0\ \forall S \in \SS\right\}.
  \end{equation}
  Then, 
  \[
    \max_{S \in \SS, t \in \uni_S} (\E[(\tilde{q}_{S,t}(D)-q_{S,t}(D))^2])^{1/2} =
    \frac{1}{\mu}\sum_{R \in \SS_{\downarrow}}\left(\prod_{j \in R}(m_j-1)\right)\sqrt{\sum_{\substack{T\in \SS\\T\supseteq R}}\frac{p^*(T)}{|\uni_T|^2}}.
  \]
\end{lemma}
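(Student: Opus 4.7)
The plan is to apply first-order optimality (KKT) conditions to the concave maximization problem defining $p^*$, and to relate a partial derivative of the objective to the per-query variance computed inside the proof of Lemma~\ref{lem:error-gen}. Write $c_R := \prod_{j \in R}(m_j-1)$ and $g_R(p) := \sum_{T \in \SS,\, T \supseteq R} p(T)/|\uni_T|^2$, so that the quantity maximized in \eqref{eq:pstar} is $f(p) = \sum_{R \in \SS_{\downarrow}} c_R \sqrt{g_R(p)}$. Since each $g_R$ is linear in $p$, each $\sqrt{g_R(\cdot)}$ is concave, hence $f$ is concave, so a maximizer $p^*$ on the probability simplex exists and is characterized by KKT: there is $\lambda \in \R$ with $\frac{\partial f}{\partial p(S)}(p^*) = \lambda$ whenever $p^*(S) > 0$, and $\frac{\partial f}{\partial p(S)}(p^*) \le \lambda$ otherwise.

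Next, I would compute
\[
\frac{\partial f}{\partial p(S)}(p) \;=\; \frac{1}{2|\uni_S|^2}\sum_{R \subseteq S} \frac{c_R}{\sqrt{g_R(p)}},
\]
and compare it with the variance expression from Equation~\eqref{eq:gen-variance} in the proof of Lemma~\ref{lem:error-gen}. Using that there are $c_R$ values of $a \in \uni$ with $\supp(a) = R$, and that $\tau_a = \sqrt{g_{\supp(a)}(p)}$, one obtains
\[
\sigma_S^2 \;=\; \frac{\tau(p)}{|\uni_S|^2}\sum_{R \subseteq S} \frac{c_R}{\sqrt{g_R(p)}} \;=\; 2\,\tau(p)\cdot \frac{\partial f}{\partial p(S)}(p),
\]
where $\sigma_S^2 = \mathrm{Var}[\tilde q_{S,t}(D) - q_{S,t}(D)]$ (independent of $t$), and $\tau(p) = \frac{1}{\mu^2}\sum_a \tau_a = \frac{f(p)}{\mu^2}$ as in Algorithm~\ref{alg:gen}.

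Combining the two, KKT at $p^*$ yields $\sigma_S^2 = 2\tau(p^*)\lambda$ for every $S$ with $p^*(S) > 0$ and $\sigma_S^2 \le 2\tau(p^*)\lambda$ for every remaining $S \in \SS$, so $\max_{S,t} \sigma_S^2 = 2\tau(p^*)\lambda$. On the other hand, the identity $\err_{p^*}(q,\tilde q)^2 = \sum_S p^*(S)\sigma_S^2$ (which the proof of Lemma~\ref{lem:error-gen} already establishes) together with $\sum_S p^*(S) = 1$ and the KKT equality on the support of $p^*$ give $\err_{p^*}(q,\tilde q)^2 = 2\tau(p^*)\lambda$ as well. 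Hence $\max_{S,t}\sigma_S = \err_{p^*}(q,\tilde q)$, which is the expression in Lemma~\ref{lem:error-gen}. The only subtlety I expect is the corner case when $p^*(S) = 0$ for some $S \in \SS$ for which some $a \subseteq S$ has $\tau_a(p^*) = 0$; in that case $\tilde \fq_a$ is not released by Algorithm~\ref{alg:gen}. For such $S$ the natural convention is to omit the missing term in the reconstruction, and the same derivative computation still upper bounds $\sigma_S^2$ by $2\tau(p^*)\lambda$, so the maximum variance is unaffected.
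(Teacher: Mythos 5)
Your approach is essentially the same as the paper's: Lemma~\ref{lem:pstar-optimality} in the paper derives exactly the first-order optimality conditions you invoke via KKT, the identity $\sigma_S^2 = 2\tau(p^*)\,\partial f/\partial p(S)(p^*)$ matches the paper's computation of the partial derivative, and the concluding weighted-average step is identical.

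However, your treatment of the corner case is wrong, and this is a genuine gap. Suppose $g_R(p^*) = 0$ for some $R \in \SS_\downarrow$ (equivalently, $p^*(T) = 0$ for every $T \in \SS$ with $T \supseteq R$). Then two things break. First, $f$ is not differentiable at $p^*$: the term $\sqrt{g_R(\cdot)}$ has a cusp where $g_R$ vanishes, so the gradient-based KKT characterization you wrote down does not apply there; one would have to pass to subgradients and the argument would change. Second, and more seriously, the fix you propose (``omit the missing term in the reconstruction'') makes the estimate \emph{biased}: for any $S \in \SS$ with $R \subseteq S$, dropping $\tilde{\fq}_a$ with $\supp(a) = R$ from the reconstruction of $q_{S,t}$ introduces a deterministic bias of $\mathrm{Re}(\chi_a(t)\fq_a(D))/|\uni_S|$, whose magnitude can be as large as $n/|\uni_S|$. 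Then $\E[(\tilde q_{S,t}-q_{S,t})^2]$ contains this squared bias and is \emph{not} bounded by $2\tau(p^*)\lambda$, so the lemma's conclusion does not survive your convention. The paper closes this hole by proving, as the first claim of Lemma~\ref{lem:pstar-optimality}, that the corner case simply cannot occur at the optimum: if $g_R(p^*) = 0$, then moving mass $\beta$ from some $S'$ with $p^*(S') > 0$ onto any $S \in \SS$ with $S \supseteq R$ increases $f$ by $\Omega(\sqrt{\beta}) - O(\beta) > 0$ for small $\beta$, contradicting optimality. You need this perturbation argument; without it the KKT conditions at $p^*$ are not well-posed, and the statement can in fact be false under the convention you proposed.
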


Lemma~\ref{lem:error-gen-max} follows easily from the following lemma, which gives first order optimality conditions for the optimization problem in \eqref{eq:pstar}.

\begin{lemma}\label{lem:pstar-optimality}
    The function $p^*:\SS\to\R_{\ge 0}$ defined in \eqref{eq:pstar} is such that 
    \(
    \sum_{\substack{S \in \SS\\S \supseteq R}}\frac{p(S)}{|\uni_S|^2} > 0
    \)
    for all $R \in \SS_{\downarrow}$. Moreover, for any $S\in \SS$, and any $S' \in \SS$ such that $p^*(S') > 0$, we have the inequality
    \[
    \frac{1}{|\uni_S|^2} \sum_{\substack{a\in \uni \\\supp(a) \subseteq S}} \left(\sum_{\substack{T \in \SS\\T \supseteq \supp(a)}}\frac{p^*(T)}{|\uni_T|^2}\right)^{-1/2}
    \le 
    \frac{1}{|\uni_{S'}|^2} \sum_{\substack{a\in \uni \\\supp(a) \subseteq S'}} \left(\sum_{\substack{T \in \SS\\T \supseteq \supp(a)}}\frac{p^*(T)}{|\uni_T|^2}\right)^{-1/2}.
    \]
\end{lemma}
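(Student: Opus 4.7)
The plan is to view \eqref{eq:pstar} as a concave maximization problem over the probability simplex and apply first-order KKT conditions. Writing $\alpha_R := \prod_{j \in R}(m_j-1)$, which equals the number of $a \in \uni$ with $\supp(a) = R$, and $g_R(p) := \sum_{T \in \SS, T \supseteq R} p(T)/|\uni_T|^2$, the objective $f(p) := \sum_{R \in \SS_\downarrow} \alpha_R \sqrt{g_R(p)}$ is concave since each $g_R$ is linear and $\sqrt{\cdot}$ is concave nondecreasing. Hence $p^*$ is a global maximum characterized by KKT conditions.

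The first step is to establish the strict positivity claim: $g_R(p^*) > 0$ for every $R \in \SS_\downarrow$. I argue by contradiction. If $g_R(p^*) = 0$ for some such $R$, then $p^*(T) = 0$ for every $T \in \SS$ with $T \supseteq R$; pick any $S^* \in \SS$ with $S^* \supseteq R$ (which exists by definition of $\SS_\downarrow$), and any $S' \in \SS$ with $p^*(S') > 0$ (which exists since $p^*$ sums to $1$). Consider the perturbation $p_\epsilon$ that shifts mass $\epsilon \in (0, p^*(S')]$ from $S'$ to $S^*$. Then $g_R(p_\epsilon) = \epsilon/|\uni_{S^*}|^2$, so the $R$-term of $f$ gains $\Theta(\sqrt{\epsilon})$. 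Every other term changes by $O(\epsilon)$ when $g_{R'}(p^*) > 0$, or shifts by a nonnegative $\Theta(\sqrt{\epsilon})$ amount when $g_{R'}(p^*) = 0$. For small enough $\epsilon$ the sum is a strict increase, contradicting the optimality of $p^*$.

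With all $g_R(p^*)$ strictly positive, $f$ is differentiable in a neighborhood of $p^*$. A direct computation, regrouping the sum over $R \subseteq S$ as a sum over $a \in \uni$ with $\supp(a) \subseteq S$ using the identity $\alpha_R = |\{a : \supp(a) = R\}|$, gives
\[
\frac{\partial f}{\partial p(S)}(p^*) = \frac{1}{2|\uni_S|^2} \sum_{\substack{a \in \uni \\ \supp(a) \subseteq S}} \left(\sum_{\substack{T \in \SS\\T \supseteq \supp(a)}}\frac{p^*(T)}{|\uni_T|^2}\right)^{-1/2}.
\]
The KKT conditions for maximizing a concave function over the simplex $\{p : \sum_S p(S) = 1,\ p(S) \ge 0\}$ then give a Lagrange multiplier $\lambda$ such that $\partial f/\partial p(S) \le \lambda$ for every $S \in \SS$, with equality whenever $p^*(S) > 0$. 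Applied to any $S \in \SS$ and any $S'$ with $p^*(S') > 0$, this yields $\partial f/\partial p(S) \le \partial f/\partial p(S') = \lambda$, which after cancelling the common factor $1/2$ is exactly the claimed inequality.

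The main obstacle will be making the perturbation argument for strict positivity airtight: it hinges on the non-differentiability of $\sqrt{\cdot}$ at zero, so the infinitesimal mass shift produces a $\Theta(\sqrt{\epsilon})$ gain that beats the $O(\epsilon)$ losses from the finitely many terms with already-positive $g_{R'}$, while the remaining zero terms can only contribute nonnegatively. Once this step is established, the derivative calculation and standard KKT reasoning go through mechanically.
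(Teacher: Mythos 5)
Your proposal is correct and follows essentially the same route as the paper: prove strict positivity of $g_R(p^*)$ by a mass-shift perturbation that exploits the $\Theta(\sqrt{\epsilon})$ gain at a zero term versus the $O(\epsilon)$ loss from positive terms, then apply first-order (KKT) conditions on the simplex and compute the partial derivatives by regrouping $R \subseteq S$ into $a$ with $\supp(a) \subseteq S$. The only surface difference is that you phrase step two in Lagrange-multiplier language with an explicit appeal to concavity, while the paper argues directly that a mass shift from $S'$ to $S$ would increase the objective, which amounts to the same inequality.
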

\begin{proof}
  Let us denote the objective in the optimization problem in \eqref{eq:pstar} by
  \[
    f(p) := \sum_{R \in \SS_{\downarrow}} f_R(p) \enspace\text{ where } \enspace f_R(p) := \left(\prod_{j \in R}(m_j-1)\right)\sqrt{\sum_{\substack{S \in \SS\\S \supseteq R}}\frac{p(S)}{|\uni_S|^2}}.
  \]
  Let $p^*$ be as in \eqref{eq:pstar}. We first argue that 
  \(f_R(p^*) > 0\) for all $R \in \SS_{\downarrow}$, which is equivalent to the first claim. Assume this were not the case for some $R \in \SS_{\downarrow}$, and consider increasing $p^*(S)$ by a small amount $\beta$ for some $S \in \SS$ that contains $R$, and decreasing $p^*(S')$ by $\beta$ for some $S' \in \SS$ such that $p^*(S') > 0$. Increasing $p^*(S)$ increases $f_R(p^*)$, and thus also $f(p^*)$, by at least $c\sqrt{\beta}$ for a constant $c> 0$ independent of $\beta$ (but possibly dependent on $\SS$).
  At the same time, decreasing $p^*(S')$ only affects $f_R(p^*)$ when $R\subseteq S'$. For any such $R$, $p^*(S') > 0$ implies $f_R(p^*) > 0$, and, therefore, $\frac{\partial f_R(p^*)}{\partial p(S')}(p^*)$ 
  exists, and $f_R(p^*)$ decreases by at most $c'\beta$ for all small enough $\beta$ and some constant $c'$ independent of $\beta$. From this, we conclude that decreasing $p^*(S')$ decreases $f(p^*)$ by at most $c''\beta$ for another constant $c''$ and all small enough $\beta$. The overall change in $f(p^*)$ is thus at least $c\sqrt{\beta} - c''\beta > 0$ as long as $\beta$ is small enough, contradicting the optimality of $p^*$.

  Let us now choose any $S' \in \SS$ such that $p^*(S') > 0$. For any $S \in \SS$, first order optimality conditions imply that
  \begin{equation}\label{eq:partialder}
    \frac{\partial f}{\partial p(S)}(p^*) - \frac{\partial f}{\partial p(S')}(p^*) \le 0,
  \end{equation}
  where the fact that $f_R(p^*) > 0$ for all $R\in \SS_{\downarrow}$ guarantees that the partial derivatives exist at $p^*$.
  Indeed, if \eqref{eq:partialder} did not hold, then adding $\beta$ to $p^*(S)$ and subtracting $\beta$ from $p^*(S')$ for a small enough $\beta > 0$ would increase the value of $f$, contradicting the optimality of $p^*$. We calculate the partial derivatives as 
  \begin{align*}
    \frac{\partial f}{\partial p(S)}(p^*)
    &=
    \frac{1}{2|\uni_S|^2} \sum_{\substack{R \subseteq [d]\\R \subseteq S}} \left(\prod_{j \in R} (m_j-1)\right)\left(\sum_{\substack{T \in \SS\\T \supseteq R}}\frac{p^*(T)}{|\uni_T|^2}\right)^{-1/2}\\
    &=
      \frac{1}{2|\uni_S|^2} \sum_{\substack{a\in \uni \\\supp(a) \subseteq S}} \left(\sum_{\substack{T \in \SS\\T \supseteq \supp(a)}}\frac{p^*(T)}{|\uni_T|^2}\right)^{-1/2},
  \end{align*}
  where the second equality holds because the number of choices of $a$ with $\supp(a) = R$ equals $\prod_{j \in R} (m_j-1)$. Plugging back into \eqref{eq:partialder} and multiplying both sides by $2$ completes the proof.
\end{proof}

\begin{proof}[Proof of Lemma~\ref{lem:error-gen-max}]
   Recalling \eqref{eq:gen-variance}, Lemma~\ref{lem:pstar-optimality} implies that $\sigma^2_S \le \sigma^2_{S'}$ for any $S \in \SS$, and any $S' \in \SS$ such that $p(S') > 0$, where $\sigma_S^2 = \E[(\tilde{q}_{S,t}(D)-q_{S,t}(D))^2]$ for any $t \in \uni_S$. Therefore,
  \begin{align*}
    \err_{p^*}(q,\tilde{q})^2
    &=
      \sum_{S' \in \SS} \frac{p^*(S')}{|\uni_{S'}|} \sum_{t \in \uni_{S'}}\E[(\tilde{q}_{S',t}(D) - q_{S',t}(D))^2]
    = \sum_{S' \in \SS} p^*(S')\sigma_{S'}^2 = \max_{S \in \SS} \sigma^2_S.
  \end{align*}
  The lemma now follows from Lemma~\ref{lem:error-gen}.
\end{proof}

The following theorem is our main result for maximum error when releasing answers to general workloads of marginal queries. 

\begin{theorem}
    \label{thm:upper-bound-general}
    Let $D$ be a dataset containing data points $x \in \uni$, where $\uni := \uni_1 \times \ldots \times \uni_d$, $\uni_i := \{0, \ldots, m_i - 1\}$. Let workload $Q_{\SS}$ of marginal queries on universe $\uni$ be defined by a collection $\SS$ of subsets of $[d]$ such that for each subset $S \in \SS$, the workload contains all queries $q_{S,t}$ for all $t \in \uni_S := \prod_{i \in S}\uni_i$. Let $p^*:\SS \to \R_{\ge 0}$ be as defined in \eqref{eq:pstar}.
    There exists a $\mu$-GDP mechanism that estimates all marginal queries in $Q_{\SS}$ and satisfies
     \[
    \max_{S \in \SS, t \in \uni_S} (\E[(\tilde{q}_{S,t}(D)-q_{S,t}(D))^2])^{1/2} =
    \frac{1}{\mu}\sum_{R \in \SS_{\downarrow}}\left(\prod_{j \in R}(m_j-1)\right)\sqrt{\sum_{\substack{T\in \SS\\T\supseteq R}}\frac{p^*(T)}{|\uni_T|^2}}.
  \]
    Additionally, given the optimal $p^*$, which can be computed in time polynomial in $|\SS_{\downarrow}| + \max_{S \in \SS}|S|$, the noise for all marginal estimates can be sampled in time 
    \[
    O\left(\sum_{S\in \SS}\left(\prod _{i\in S}m_i \right)\log\left(\prod _{i\in S}m_i \right) + \gausstime \sum_{R\in \SS_{\downarrow}}\prod _{i\in R}(m_i - 1)\right) 
    \]
    where $O(\gausstime)$ is the time required for sampling a standard Gaussian. 
\end{theorem}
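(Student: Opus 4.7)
The plan is to assemble the theorem from ingredients already established for Algorithm~\ref{alg:gen} and the optimality analysis of $p^*$. Concretely, the mechanism is Algorithm~\ref{alg:gen} instantiated with the weight function $p^*$ from \eqref{eq:pstar}. Privacy is inherited immediately from Lemma~\ref{lem:gen-DP}, which does not depend on the particular choice of weights (the weights only control the variance allocation via $\tau_a$ and $\tau$, not the composition bookkeeping). The exact error identity
\[
\max_{S \in \SS, t \in \uni_S} (\E[(\tilde{q}_{S,t}(D)-q_{S,t}(D))^2])^{1/2}
= \frac{1}{\mu}\sum_{R \in \SS_{\downarrow}}\left(\prod_{j \in R}(m_j-1)\right)\sqrt{\sum_{\substack{T\in \SS\\T\supseteq R}}\frac{p^*(T)}{|\uni_T|^2}}
\]
is exactly the content of Lemma~\ref{lem:error-gen-max}, so nothing new needs to be proved for the main error statement.

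The one thing that requires justification is the claim that $p^*$ can be computed in time polynomial in $|\SS_{\downarrow}| + \max_{S \in \SS}|S|$. Here I would observe that the maximization in \eqref{eq:pstar} is a \emph{concave} maximization over the probability simplex in $\R^{\SS}$: the objective is
\[
f(p) = \sum_{R \in \SS_{\downarrow}} c_R \sqrt{\sum_{\substack{S \in \SS\\ S \supseteq R}}\frac{p(S)}{|\uni_S|^2}}, \qquad c_R := \prod_{j \in R}(m_j - 1) \ge 0,
\]
and each summand is a nonnegative concave function of $p$ (composition of $\sqrt{\cdot}$ with a nonnegative linear function), so $f$ is concave. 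Evaluating $f$ and its (super)gradient at any $p$ requires $O(|\SS_\downarrow| \cdot \max_S |S|)$ arithmetic operations once one pre-computes the inner sums $\sum_{T \supseteq R} p(T)/|\uni_T|^2$. Hence any standard polynomial-time method for concave maximization over the simplex (ellipsoid, interior point, or projected subgradient with appropriate accuracy) yields $p^*$ in time polynomial in $|\SS_\downarrow| + \max_{S \in \SS}|S|$.

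Finally, the noise-sampling running time is identical to the one in Theorem~\ref{thm:upper-bound-general-avg}: the number of Fourier aggregate queries that receive nonzero noise is $\sum_{R \in \SS_\downarrow}\prod_{j\in R}(m_j-1)$, each such sample costs $O(\gausstime)$, and the post-processing step that reconstructs all $|\uni_S|$ marginals for a fixed $S \in \SS$ is a multidimensional DFT on $\uni_S$, which runs in time $O(|\uni_S|\log |\uni_S|)$ by Lemma~\ref{lem:multidim-fft}. Summing over $S \in \SS$ gives the stated bound. I expect no serious obstacle: the only nontrivial step is formally arguing the polynomial-time solvability of \eqref{eq:pstar}, and concavity plus efficient (super)gradient evaluation reduce this to an invocation of standard convex optimization machinery.
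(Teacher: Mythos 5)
Your proposal is correct and follows essentially the same route as the paper: Algorithm~\ref{alg:gen} with the weights $p^*$ from \eqref{eq:pstar}, privacy from Lemma~\ref{lem:gen-DP}, the exact error identity from Lemma~\ref{lem:error-gen-max}, polynomial-time computation of $p^*$ via concave maximization over the simplex, and the FFT-based sampling-time bound as in Lemma~\ref{lem:FFT-speedup}. Your added remarks on concavity and (super)gradient evaluation only flesh out what the paper cites to standard convex optimization, so there is no substantive difference.
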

\begin{proof}
    The weight function $p^*$ can be computed is the solution to a concave maximization problem in $|\SS|$ variables, and the objective and its gradients can be computed to arbitrary precision in time polynomial in $\abs{\SS_{\downarrow}} + \max_{S \in \SS} |S|$ (see, e.g., \cite{Bubeck-convopt}). Therefore, $p^*$ can be computed within polynomial time in $\abs{\SS_{\downarrow}} + \max_{S \in \SS} |S|$. 
    The error bound was proven in  Lemma \ref{lem:error-gen-max}. The time complexity of sampling noise is straightforward with FFT, with the same argument as in Lemma \ref{lem:FFT-speedup}.
\end{proof}

Assuming, without loss of generality that $m_i > 1$ for each $i\in [d]$, one can check that 
\[
|\SS_{\downarrow}| \le \sum_{R\in \SS_{\downarrow}}\prod _{i\in R}(m_i - 1)
\le \sum_{S \in \SS} |\uni_S|
= |Q_{\SS}|,
\]
and the running time bound above is certainly polynomial in the number of queries.

It is worth noting also that, since the noise added to each query is normally distributed, standard concentration bounds show that 
\[
\E\left[\max_{S \in \SS, t \in \uni_S} |\tilde{q}_{S,t}(D)-q_{S,t}(D)|\right] \le
\sqrt{2 \ln |Q_{\SS}|} \, \max_{S \in \SS, t \in \uni_S} (\E[(\tilde{q}_{S,t}(D)-q_{S,t}(D))^2])^{1/2}.
\]

\begin{remark}\label{rem:ellp-algo}
    One may also consider error measures that interpolate between weighted root mean squared error and maximum variance, as was done in~\cite{NikolovT24,LiuUZ24}. For example, we can consider, for an exponent $r\ge 1$, the error measure 
    \[
    \E\left[\sum_{S \in \SS, t \in \uni_S} |\tilde{q}_{S,t}(D)-q_{S,t}(D)|^{2r}\right]^{1/2r}
    \lesssim
    \sqrt{r} \left(\sum_{S \in \SS, t \in \uni_S} \E[|\tilde{q}_{S,t}(D)-q_{S,t}(D)|^2]^{r}\right)^{1/2r},
    \]
    where the inequality follows from standard Gaussian moment bounds - see~\cite{NikolovT24,LiuUZ24} for the details. The quantity 
    \(
    \left(\sum_{S \in \SS, t \in \uni_S} \E[|\tilde{q}_{S,t}(D)-q_{S,t}(D))|^2]^{r}\right)^{1/r}
    \)
    is the $\ell_r$ norm of the vector of noise variances for each query. We can derive a mechanism minimizing this error measure analogously to the proof of Theorem~\ref{thm:upper-bound-general}. In particular, we can show an analogous result to Lemma~\ref{lem:error-gen-max}, with $p^*$ defined as
    \[
    p^* \in \arg \max\left\{\sum_{R \in \SS_{\downarrow}}\left(\prod_{j \in R}(m_j-1)\right)\sqrt{\sum_{\substack{S \in \SS\\S \supseteq R}}\frac{p(S)}{|\uni_S|^2}}: \sum_{S \in \SS}p(S)^{r'} \le 1, p(S) \ge 0\ \forall S \in \SS\right\},
    \]
    where $r' = \frac{r}{r-1}$. We can then show that, for this choice of $p^*$,
    \[
    \left(\sum_{S \in \SS, t \in \uni_S} \E[|\tilde{q}_{S,t}(D)-q_{S,t}(D)|^2]^{r}\right)^{1/2r} 
    = 
    \frac{1}{\mu} \sum_{R \in \SS_{\downarrow}}\left(\prod_{j \in R}(m_j-1)\right)\sqrt{\sum_{\substack{S \in \SS\\S \supseteq R}}\frac{p^*(S)}{|\uni_S|^2}}.
    \]
    More generally, we can extend this approach to any error measure which is a norm of the vector of noise variances by defining $p^*$ as the solution to an analogous optimization problem where the optimization is over the dual norm ball. We omit the details of these extensions from this paper, and focus on the two most commonly studied and natural error measures - the weighted root mean squared error, and the maximum variance measures. 
\end{remark}
\fi

\ifarxiv
\section{Upper Bounds for Product Queries and Extended Marginals}
\label{sec:upper-bound-product-and-extended}

In this section, we first introduce a generalization of marginal queries, which we call product queries. Our algorithms extend, with only small modifications, to answering arbitrary workloads of product queries. Later, in Section~\ref{sec:lower-bounds-product}, we also show that the resulting upper bounds are optimal within the class of factorization mechanisms. We further show how to use product queries to answer extended marginal queries, in which the data points have both categorical and numerical attributes, and the predicates on numerical attributes are threshold functions.

\subsection{Estimating Product Queries}\label{sec:weighted-product-queries}

 

Let us now extend our Fourier queries framework to product queries, as defined in \cref{sec:prelim-marginals}. For any $j \in [d]$, and $a \in \uni_j$, let 
\begin{equation}\label{eq:phi-hat}
    \widehat{\phi_j}(a) := \sum_{z = 0}^{m_j-1}\phi_j(z)\omega_{m_j}^{-a z}
\end{equation}
be the Fourier coefficient of $\phi_j$ corresponding to $a$. 
Then, by the inverse (discrete) Fourier transform, for any $t,z \in \uni_j$ we have
\begin{equation}\label{eq:phi-inv-fourier}
\phi_j((t-z)\bmod m_j) = \frac{1}{m_j}\sum_{a = 0}^{m_j-1}\widehat{\phi_j}(a)\omega_{m_j}^{a (t-z)} \,.
\end{equation}
Therefore, 
\begin{equation}\label{eq:prod-inv-x}
q^{\phi}_{S,t}(x) = \prod_{j \in S}\left(\frac{1}{m_j}\sum_{a_j = 0}^{m_j-1}\widehat{\phi_j}(a_j)\omega_{m_j}^{a_j\cdot (t_j-x_j)}\right)
= \frac{1}{|\uni_S|}\sum_{\substack{a \in \uni\\\supp(a) \subseteq S}}\left(\prod_{j\in S} \widehat{\phi_j}(a_j)\right)\chi_a(t)\overline{\chi_a(x)},
\end{equation}
and, summing over data points, 
\begin{equation}\label{eq:prod-inv-fourier}
    q^{\phi}_{S,t}(D) = \frac{1}{|\uni_S|}\sum_{\substack{a \in \uni\\\supp(a) \subseteq S}}\left(\prod_{j\in S} \widehat{\phi_j}(a_j)\right)\chi_a(t)F_a(D).
\end{equation}
This is a generalization of our formulas for marginals, since when $\phi_j(z) = \mathbbm{1}\{z = 0\}$, then $\widehat{\phi_j}(a) = 1$ for all $a\in \uni_j$.

\begin{algorithm}
\begin{algorithmic}


\State For any $a \in \uni$, define $\widehat{\phi_1}(a_1), \ldots, \widehat{\phi_d}(a_d)$ as in \cref{eq:phi-hat}, and let
\[
\tau_a :=
\sqrt{\sum_{\substack{S \in \SS\\S \supseteq \supp(a)}}\frac{p(S)\prod_{j \in S}|\widehat{\phi_j}(a_j)|^2}{|\uni_S|^2}},
\hspace{2em}
\text{and }
\hspace{2em}
\tau := \frac{1}{\mu^2} \sum_{a \in \uni}\tau_a.
\]

\State For any $a$ such that $\tau_a > 0$, privately release an estimate of $\fq_{a}(D)$ (see \cref{eq:fourier-query}) by adding independent noise such that
    \[
        \tilde \fq_{a}(D) =\fq_{a}(D) + Z_{a}, \mathrm{~where~}Z_{a} \sim \mathcal{CN}\left(0, \frac{2\tau}{\tau_{a}}\right).
    \]
    
\State Estimates for product queries for $S \in \SS$ can be recovered using post-processing by computing
    \[
      \tilde{q}^\phi_{S,t}(D) = \mathrm{Re}\left( \frac{1}{ |\uni_S| }\sum_{\substack{a \in \uni\\ \supp(a) \subseteq S}} \left(\prod_{j \in S}\widehat{\phi_j}(a_j)\right){\chi_a(t)}{\tilde \fq_a(D)}  \right) , \text{ where } \chi_a(t) = \prod_{i \in S} \omega_{m_i}^{a_i t_i}.
    \] 
\end{algorithmic}
\caption{Algorithm for estimating a workload $Q^\phi_{\SS}$ of weighted product queries.}
\label{alg:gen-product}
\end{algorithm}

The privacy proof is analogous to the proof of Lemma~\ref{lem:gen-DP}. For the error analysis, observe that, reasoning as we did with Algorithm~\ref{alg:gen}, we have that for any $S$ and $t \in \uni_S$,  $\tilde{q}^\phi_{S,t} - q^\phi_{S,t}$ is distributed as a real Gaussian with mean $0$ and variance
\begin{equation*}
    \sigma_S^2 := \frac{\tau}{|\uni_S|^2}\sum_{\substack{a\in \uni\\\supp(a)\subseteq S}}\left(\prod_{j \in S}|\widehat{\phi_j}(a_j)|^2\right)\frac{1}{\tau_a}.
\end{equation*}
The rest of the error analysis is also analogous to the proof of Lemma~\ref{lem:error-gen}. 

To implement \cref{alg:gen-product} efficiently, we first compute, for each $j \in [d]$ that is contained in some $S \in \SS$, the Fourier coefficients $\widehat{\phi_j}(0), \ldots \widehat{\phi_j}(m_j-1)$ using standard (one-dimensional) FFT in time $O(m_j\log m_j)$. The total time to compute all the Fourier coefficients is then $O\left(\sum_{j: \{j\} \in \SS_\downarrow} m_j \log(m_j)\right)$. Then, we compute the noisy Fourier query answers $\tilde{F}_a(D)$ for those $a$ such that $\tau_a > 0$, which is a subset of those $a$ such that $\supp(a) \in \SS_{\downarrow}$. Finally, for any $S \in \SS$, to reconstruct $\tilde{q}^\phi_{S,t}$ for all $t \in \uni_S$, we apply the multi-dimensional FFT (\cref{lem:multidim-fft}) to the values $\prod_{j \in S}\widehat{\phi_j}(a_j){\tilde \fq_a(D)}$, allowing us to perform the reconstruction in time $O(|\uni_S|\log(|\uni_S|))$.

In summary, we have the following theorem, generalizing Theorem~\ref{thm:upper-bound-general-avg}.




\begin{theorem}
    \label{thm:upper-bound-product-avg}
    Let $D$ be a dataset containing data points $x \in \uni$, where $\uni := \uni_1 \times \ldots \times \uni_d$, $\uni_i := \{0, \ldots, m_i - 1\}$. For a choice of functions $\phi := (\phi_1, \ldots, \phi_d)$, $\phi_i:\uni_i \to \R$, let the workload $Q^\phi_{\SS}$ of product queries $q^\phi_{S,t}$ defined by a collection $\SS$ of subsets of $[d]$ be such that, for each subset $S \in \SS$, the workload contains all queries $q^\phi_{S,t}$ for all $t \in \uni_S := \prod_{i \in S}\uni_i$. Given a weight function $p:\SS \to \R_{\ge 0}$,
    there exists a $\mu$-GDP mechanism that estimates all queries in $Q^\phi_{\SS}$, and satisfies
    \begin{align*}
        \err_p(q,\tilde{q}) 
    &=\left(\sum_{S \in \SS} \frac{p(S)}{|\uni_S|} \sum_{t \in \uni_S}\E[(\tilde{q}^\phi_{S,t}(D) - q^\phi_{S,t}(D))^2] \right)^{1/2}
    \\
    &= \frac{1}{\mu}\sum_{\substack{a \in \uni\\\supp(a) \in \SS_{\downarrow}}}\sqrt{\sum_{\substack{S \in \SS\\S \supseteq \supp(a)}}\frac{p(S)\prod_{j \in S}|\widehat{\phi_j}(a_j)|^2}{|\uni_S|^2}}.
    \end{align*}
    Additionally, the noise for all product query estimates can be sampled in time 
    \[
    O\left(\sum_{S\in \SS}\left(\prod _{i\in S}m_i \right)\log\left(\prod _{i\in S}m_i \right) + \gausstime \sum_{R\in \SS_{\downarrow}}\prod _{i\in R}(m_i - 1)\right) 
    \]
    where $O(\gausstime)$ is the time required for sampling a standard Gaussian. 
\end{theorem}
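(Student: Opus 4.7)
The plan is to mirror the structure of the proofs for marginal queries (\Cref{lem:gen-DP} and \Cref{lem:error-gen}), extended with the factor $\prod_{j\in S}\widehat{\phi_j}(a_j)$ that appears in the inverse Fourier formula \eqref{eq:prod-inv-fourier}. The three things to establish are privacy, the claimed closed form for $\err_p(q,\tilde q)$, and the running time. The existing Fourier identities from \Cref{sec:fourier-prelim}, together with \eqref{eq:phi-inv-fourier}--\eqref{eq:prod-inv-fourier}, already reduce everything to the same bookkeeping used in the marginals argument; the main content is the calibration of the $\tau_a$'s.

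First I would verify privacy. Each Fourier aggregate $F_a(D)$ still has sensitivity $1$ under $D\sim D'$ (since adding or removing one point changes $F_a$ by a single unit complex number), so by \Cref{lem:complex-to-real-reduction} releasing $\tilde F_a$ satisfies $\mu_a$-GDP with $\mu_a^2 = \tau_a/\tau$. Summing $\mu_a^2$ over $a$ with $\tau_a>0$ gives $\mu^2$ by the definition of $\tau$, so composition yields $\mu$-GDP for the full vector of $\tilde F_a$'s. The reconstruction step is deterministic post-processing and thus preserves privacy. This part is essentially identical to \Cref{lem:gen-DP}, the only change being that the weighting inside $\tau_a$ now carries $\prod_{j\in S}|\widehat{\phi_j}(a_j)|^2$.

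Next I would compute the variance of $\tilde q^\phi_{S,t}(D)-q^\phi_{S,t}(D)$. Subtracting \eqref{eq:prod-inv-fourier} from the reconstruction expression in \Cref{alg:gen-product}, the error is the real part of $\frac{1}{|\uni_S|}\sum_{\supp(a)\subseteq S}\bigl(\prod_{j\in S}\widehat{\phi_j}(a_j)\bigr)\chi_a(t)Z_a$. Since $|\chi_a(t)|=1$ and the $Z_a$'s are independent complex Gaussians, \Cref{lem:gauss-lincomb} gives that this real part is a real Gaussian with variance
\[
\sigma_S^2 \;=\; \frac{\tau}{|\uni_S|^2}\sum_{\substack{a\in\uni\\ \supp(a)\subseteq S}}\Bigl(\prod_{j\in S}|\widehat{\phi_j}(a_j)|^2\Bigr)\frac{1}{\tau_a}.
\]
Weighting by $p(S)$, summing over $S\in\SS$, and swapping the order of summation (so the outer sum is over $a$ with $\supp(a)\in\SS_\downarrow$), the inner sum becomes $\sum_{S\supseteq\supp(a)}p(S)\prod_{j\in S}|\widehat{\phi_j}(a_j)|^2/|\uni_S|^2 = \tau_a^2$, which cancels one factor of $\tau_a$ to produce $\sum_a \tau_a$. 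Plugging in $\tau=\mu^{-2}\sum_a\tau_a$ and taking a square root gives $\err_p(q,\tilde q) = \mu^{-1}\sum_a \tau_a$, which is exactly the claimed formula.

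Finally, for the running time I would reuse the argument of \Cref{lem:FFT-speedup}: for each $S\in\SS$ the reconstruction of all $|\uni_S|$ values $\tilde q^\phi_{S,t}(D)$ is a multidimensional inverse DFT on the coefficients $\prod_{j\in S}\widehat{\phi_j}(a_j)\tilde F_a(D)$ of size $\prod_{i\in S}m_i$, doable in time $O(|\uni_S|\log|\uni_S|)$ by \Cref{lem:multidim-fft}; the number of independent Gaussian samples needed is twice the number of $a$'s with $\supp(a)\in\SS_\downarrow$, which is $\sum_{R\in\SS_\downarrow}\prod_{i\in R}(m_i-1)$. The Fourier coefficients $\widehat{\phi_j}(\cdot)$ themselves can be precomputed by $d$ one-dimensional FFTs, which is absorbed in the first term. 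I expect the main subtlety to be the variance/summation-interchange step where $\tau_a$ is designed to produce the telescoping cancellation; everything else is a direct generalization of the marginal case, and no new mechanism-level idea is required.
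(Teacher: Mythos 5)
Your proposal is correct and follows essentially the same route as the paper: the paper's proof of Theorem~\ref{thm:upper-bound-product-avg} is presented as a brief "analogous to Lemmas~\ref{lem:gen-DP}, \ref{lem:error-gen}, and \ref{lem:FFT-speedup}" argument, computing exactly the same per-$S$ variance $\sigma_S^2 = \frac{\tau}{|\uni_S|^2}\sum_{\supp(a)\subseteq S}\bigl(\prod_{j\in S}|\widehat{\phi_j}(a_j)|^2\bigr)\tau_a^{-1}$ and then interchanging the sum over $S$ and $a$ so that $\tau_a$ cancels, yielding $\err_p = \mu^{-1}\sum_a\tau_a$. Your calibration step and running-time argument match the paper's; the only minor point you could add (which the paper addresses for marginals just before Algorithm~\ref{alg:gen}) is that Fourier queries with $\tau_a=0$ contribute zero reconstruction coefficient to every $S$ with $p(S)>0$, so omitting them does not bias the estimates that the error measure actually weighs.
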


We can extend this result to the maximum variance error measure, as we did for marginals. 

\begin{lemma}\label{lem:error-product-max}
  Let $\tilde{q}^\phi_{S,t}(D)$ be the estimates produced by Algorithm~\ref{alg:gen-product} with weights
  \begin{equation}\label{eq:pstar-prod}
    p^* \in \arg \max\left\{\sum_{\substack{a \in \uni\\\supp(a) \in \SS_{\downarrow}}}\sqrt{\sum_{\substack{S \in \SS\\S \supseteq \supp(a)}}\frac{p(S)\prod_{j \in S}|\widehat{\phi_j}(a_j)|^2}{|\uni_S|^2}}: \sum_{S \in \SS}p(S) = 1, p(S) \ge 0\ \forall S \in \SS\right\}.
  \end{equation}
  Then, 
  \[
    \max_{S \in \SS, t \in \uni_S} (\E[(\tilde{q}^\phi_{S,t}(D)-q^\phi_{S,t}(D))^2])^{1/2} =
    \frac{1}{\mu}\sum_{\substack{a \in \uni\\\supp(a) \in \SS_{\downarrow}}}\sqrt{\sum_{\substack{S \in \SS\\S \supseteq \supp(a)}}\frac{p^*(S)\prod_{j \in S}|\widehat{\phi_j}(a_j)|^2}{|\uni_S|^2}}.
  \]
\end{lemma}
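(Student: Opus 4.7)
The plan is to mirror the proof of \Cref{lem:error-gen-max}, with the only substantive change being the extra Fourier weights $\widehat{\phi_j}(a_j)$ that appear in the variance and gradient formulas. First I would extract the per-query variance: from the noise calibration in \Cref{alg:gen-product}, the reconstruction formula \eqref{eq:prod-inv-fourier}, and \Cref{lem:gauss-lincomb}, for every $S\in\SS$ and every $t\in\uni_S$,
\[
\sigma_S^2 \;\coloneq\; \E\bigl[(\tilde q^\phi_{S,t}(D) - q^\phi_{S,t}(D))^2\bigr] \;=\; \frac{\tau}{|\uni_S|^2}\sum_{\substack{a\in\uni\\\supp(a)\subseteq S}} \frac{\prod_{j\in S}|\widehat{\phi_j}(a_j)|^2}{\tau_a},
\]
with the convention that summands whose numerator vanishes are dropped (so no $1/\tau_a$ with $\tau_a=0$ ever appears). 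Hence $\err_{p^*}(q,\tilde q)^2 = \sum_{S\in\SS} p^*(S)\sigma_S^2$, and the lemma reduces to showing $\sigma_S^2 \le \sigma_{S'}^2$ for every $S\in\SS$ and every $S'\in\SS$ with $p^*(S')>0$, because then the support of $p^*$ is contained in $\arg\max_{S\in\SS}\sigma_S^2$.

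Second, I would develop first-order optimality for $p^*$ in exact analogy with \Cref{lem:pstar-optimality}. Writing the objective in \eqref{eq:pstar-prod} as $f(p) = \sum_a f_a(p)$, where
\[
f_a(p) \;=\; \sqrt{\sum_{\substack{S\in\SS\\ S\supseteq\supp(a)}} \frac{p(S)\prod_{j\in S}|\widehat{\phi_j}(a_j)|^2}{|\uni_S|^2}},
\]
the same $\sqrt{\beta}$-versus-$\beta$ perturbation argument shows that $f_a(p^*) > 0$ whenever $f_a \not\equiv 0$ on the simplex, that is, whenever at least one $S \in \SS$ with $S\supseteq\supp(a)$ has $\prod_{j\in S}|\widehat{\phi_j}(a_j)|^2 > 0$: otherwise, moving a small mass $\beta$ from some $S'$ with $p^*(S') > 0$ into a suitable $S$ in the active set of $f_a$ would increase $f(p^*)$ by $c\sqrt{\beta}-O(\beta)$, contradicting optimality. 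All partial derivatives appearing below are therefore well defined, and a direct computation gives
\[
\frac{\partial f}{\partial p(S)}(p^*) \;=\; \frac{1}{2|\uni_S|^2}\sum_{\substack{a\in\uni\\\supp(a)\subseteq S}} \frac{\prod_{j\in S}|\widehat{\phi_j}(a_j)|^2}{\tau_a} \;=\; \frac{\sigma_S^2}{2\tau}.
\]

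Third, I invoke first-order optimality of $p^*$ on the probability simplex: for every $S\in\SS$ and every $S'\in\SS$ with $p^*(S')>0$, $\partial_{p(S)} f(p^*) \le \partial_{p(S')} f(p^*)$, which combined with the identity above yields $\sigma_S^2 \le \sigma_{S'}^2$. Consequently, the values $\sigma_{S'}^2$ with $S'$ in the support of $p^*$ all coincide with $\max_S \sigma_S^2$, so $\err_{p^*}(q,\tilde q)^2 = \sum_S p^*(S)\sigma_S^2 = \max_S \sigma_S^2$. Taking square roots and applying \Cref{thm:upper-bound-product-avg} with weights $p^*$ delivers the stated identity.

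The main technical obstacle relative to the marginal case is that the factors $\prod_{j\in S}|\widehat{\phi_j}(a_j)|^2$ can vanish even when $S \supseteq \supp(a)$, whereas in \Cref{lem:pstar-optimality} the analogous coefficients $(\prod_{j \in R}(m_j-1))/|\uni_S|^2$ are always positive. This forces me to restrict the positivity argument for $f_a(p^*)$ and the subsequent partial-derivative computation to the ``active'' pairs $(a,S)$ with $\prod_{j\in S}|\widehat{\phi_j}(a_j)|^2 > 0$, and to check that the conventions used to drop inactive summands in $\sigma_S^2$ and $\partial_{p(S)} f$ match. Once this bookkeeping is in place, the rest of the argument is a line-by-line transcription of \Cref{lem:pstar-optimality,lem:error-gen-max}.
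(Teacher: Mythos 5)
Your proof is correct and follows exactly the route the paper indicates: mirror \Cref{lem:error-gen-max} and \Cref{lem:pstar-optimality}, threading the Fourier weights $\prod_{j\in S}|\widehat{\phi_j}(a_j)|^2$ through the variance and gradient formulas, and handling the possibility that those weights vanish. As a side note, your computation of $\partial f/\partial p(S)$ (which carries the factor $\prod_{j\in S}|\widehat{\phi_j}(a_j)|^2$ inside the sum over $a$) is the one actually needed to deduce $\sigma_S^2\le\sigma_{S'}^2$, and it reveals that the inequality displayed in the paper's \Cref{lem:pstar-prod-optimality} is missing those factors on both sides — without them that inequality would not relate $\sigma_S^2$ to $\sigma_{S'}^2$.
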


The proof of Lemma~\ref{lem:error-product-max}, which is analogous to the proof of Lemma~\ref{lem:error-gen-max}, relies on the following lemma, whose proof is analogous to the proof of Lemma~\ref{lem:pstar-optimality}.

\begin{lemma}\label{lem:pstar-prod-optimality}
    The function $p^*:\SS\to\R_{\ge 0}$ defined in \eqref{eq:pstar-prod} is such that 
    \[
    \sum_{\substack{S \in \SS\\S \supseteq \supp(a)}}\frac{p(S)\prod_{j \in S}|\widehat{\phi_j}(a_j)|^2}{|\uni_S|^2} > 0
    \]
    for all $a\in \uni$ for which there is at least one $S \in \SS$ satisfying $S\supseteq \supp(a)$ and $\prod_{j \in S}|\widehat{\phi_j}(a_j)|^2 \neq 0$. Moreover, for any $S\in \SS$, and any $S' \in \SS$ such that $p^*(S') > 0$, we have the inequality
    \begin{multline*}
    \frac{1}{|\uni_S|^2} \sum_{\substack{a\in \uni \\\supp(a) \subseteq S}} \left(\sum_{\substack{T \in \SS\\T \supseteq \supp(a)}}\frac{p^*(T)\prod_{j \in T}|\widehat{\phi_j}(a_j)|^2}{|\uni_T|^2}\right)^{-1/2}\\
    \le 
    \frac{1}{|\uni_{S'}|^2} \sum_{\substack{a\in \uni \\\supp(a) \subseteq S'}} \left(\sum_{\substack{T \in \SS\\T \supseteq \supp(a)}}\frac{p^*(T)\prod_{j \in T}|\widehat{\phi_j}(a_j)|^2}{|\uni_T|^2}\right)^{-1/2}.
    \end{multline*}
\end{lemma}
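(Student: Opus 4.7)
The plan is to mirror the two-step proof of \Cref{lem:pstar-optimality} almost verbatim, with the role of a subset $R \in \SS_\downarrow$ now played by a Fourier index $a \in \uni$ with $\supp(a) \in \SS_\downarrow$, and with the weight $\prod_{j \in R}(m_j-1)$ replaced by the products of Fourier magnitudes $c_{S,a} := \prod_{j \in S}|\widehat{\phi_j}(a_j)|^2$ attached to each summand. Concretely, I will rewrite the objective in \eqref{eq:pstar-prod} as $f(p) = \sum_{a:\supp(a) \in \SS_\downarrow} f_a(p)$ with $f_a(p) := \sqrt{\sum_{S \supseteq \supp(a)} p(S) c_{S,a}/|\uni_S|^2}$, so that the analogy with \Cref{lem:pstar-optimality} becomes exact once we track the extra coefficient $c_{S,a}$ at each level.

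For the first assertion, I will argue by contradiction following the perturbation step from \Cref{lem:pstar-optimality}. Suppose some $a$ admits an $S^* \in \SS$ with $S^* \supseteq \supp(a)$ and $c_{S^*,a} > 0$, yet $f_a(p^*) = 0$. Fix any $S' \in \SS$ with $p^*(S') > 0$ (which exists since $p^*$ sums to $1$), then add $\beta > 0$ to $p^*(S^*)$ and subtract $\beta$ from $p^*(S')$. Then $f_a$ jumps up by $\Theta(\sqrt{\beta})$ because its radicand was zero and now contains a positive term $\beta c_{S^*,a}/|\uni_{S^*}|^2$. For every other $f_{a'}$ whose value changes, the change must come from $\supp(a') \subseteq S^*$ (with $c_{S^*,a'} > 0$) or from $\supp(a') \subseteq S'$ (with $c_{S',a'} > 0$); in both cases the relevant radicand is already strictly positive at $p^*$ (on the $S^*$ side because the perturbation only increases it, on the $S'$ side because $p^*(S') > 0$ already makes it positive), so $f_{a'}$ is smooth there and the change is only $O(\beta)$. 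For small enough $\beta$ the sum strictly improves, contradicting the optimality of $p^*$.

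For the second assertion, I will apply the first-order optimality condition on the simplex $\{p \ge 0, \sum_S p(S) = 1\}$: for any $S \in \SS$ and any $S' \in \SS$ with $p^*(S') > 0$, $(\partial f/\partial p(S))(p^*) \le (\partial f/\partial p(S'))(p^*)$. Thanks to the first assertion, $f_a(p^*) > 0$ whenever $c_{S,a} > 0$ for some $S \supseteq \supp(a)$ in $\SS$, so each partial derivative exists at $p^*$ and evaluates to
\[
\frac{\partial f}{\partial p(S)}(p^*) = \frac{1}{2|\uni_S|^2}\sum_{\substack{a \in \uni\\\supp(a) \subseteq S}} c_{S,a}\left(\sum_{\substack{T \in \SS\\T \supseteq \supp(a)}}\frac{p^*(T) c_{T,a}}{|\uni_T|^2}\right)^{-1/2},
\]
since only summands with $\supp(a) \subseteq S$ depend on $p(S)$ and each such dependence enters the radicand through $c_{S,a}/|\uni_S|^2$. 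Substituting into the first-order inequality and multiplying by $2$ yields the stated inequality (with summands for which $c_{S,a} = 0$ contributing nothing).

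The main obstacle is the perturbation argument in the first step: one must verify that decreasing $p^*(S')$ does not itself introduce a $\Theta(\sqrt{\beta})$ drop in some other $f_{a'}$ that would cancel the gain in $f_a$. As in \Cref{lem:pstar-optimality}, the resolution is to choose $S'$ with $p^*(S') > 0$, which guarantees that every $f_{a'}$ sensitive to $p(S')$ already has a strictly positive radicand at $p^*$ and hence depends smoothly on $p(S')$ in a neighborhood, producing only $O(\beta)$ changes. Once this is pinned down, the remainder is a routine computation of partial derivatives essentially identical to the marginals case.
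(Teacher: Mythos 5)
Your proof is correct in substance and takes exactly the route the paper intends (it explicitly says the proof is analogous to \cref{lem:pstar-optimality}). The decomposition of the objective into $f_a(p) = \sqrt{\sum_{S \supseteq \supp(a)} p(S) c_{S,a}/|\uni_S|^2}$, the perturbation argument for the first assertion, and the first-order optimality conditions for the second are all the right moves, and the subtlety you flag at the end (why decreasing $p^*(S')$ only produces $O(\beta)$ changes) is handled correctly.

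However, there is a discrepancy between what your derivation yields and what the lemma statement actually says, and you gloss over it rather than resolving it. Your partial derivative computation gives
\[
\frac{\partial f}{\partial p(S)}(p^*) = \frac{1}{2|\uni_S|^2}\sum_{\substack{a \in \uni\\\supp(a) \subseteq S}} \left(\prod_{j\in S}|\widehat{\phi_j}(a_j)|^2\right)\left(\sum_{\substack{T \in \SS\\T \supseteq \supp(a)}}\frac{p^*(T) \prod_{j\in T}|\widehat{\phi_j}(a_j)|^2}{|\uni_T|^2}\right)^{-1/2},
\]
with the factor $c_{S,a} = \prod_{j\in S}|\widehat{\phi_j}(a_j)|^2$ inside the outer sum, whereas the inequality stated in the lemma does \emph{not} have that factor. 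These are genuinely different inequalities, not the same one "with zero summands dropped." Indeed, if some $a$ with $\supp(a)\subseteq S$ has $c_{T,a}=0$ for every $T\in\SS$ with $T\supseteq\supp(a)$, then the $a$-th term of the lemma's left-hand side is $(0)^{-1/2}=+\infty$ while the $a$-th term of your partial derivative is $0$. Your version is the correct one: it is exactly what is used in the downstream proof of \cref{lem:error-product-max}, where the per-query variance $\sigma_S^2$ carries the same $\prod_{j\in S}|\widehat{\phi_j}(a_j)|^2$ weighting. The lemma statement in the paper appears to have been copied from the marginal-query version \cref{lem:pstar-optimality} (where $\widehat{\phi_j}\equiv 1$ and this factor disappears) without updating the outer sum. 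You should state explicitly that your argument proves the inequality with the factor $\prod_{j\in S}|\widehat{\phi_j}(a_j)|^2$ included, rather than claiming to obtain the inequality exactly as written.
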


The next Theorem follows from Lemma~\ref{lem:error-product-max}. 

\begin{theorem}
    \label{thm:upper-bound-product}
    Let $D$ be a dataset containing data points $x \in \uni$, where $\uni := \uni_1 \times \ldots \times \uni_d$, $\uni_i := \{0, \ldots, m_i - 1\}$. For a choice of functions $\phi := (\phi_1, \ldots, \phi_d)$, $\phi_i:\uni_i \to \R$, let the workload $Q^\phi_{\SS}$ of product queries $q^\phi_{S,t}$ defined by a collection $\SS$ of subsets of $[d]$ be such that, for each subset $S \in \SS$, the workload contains all queries $q^\phi_{S,t}$ for all $t \in \uni_S := \prod_{i \in S}\uni_i$. Let $p^*:\SS\to \R_{\ge 0}$ be the function defined by \eqref{eq:pstar-prod}.
    There exists a $\mu$-GDP mechanism that estimates all product queries in $Q^\phi_{\SS}$ and satisfies
     \[
    \max_{S \in \SS, t \in \uni_S} (\E[(\tilde{q}^\phi_{S,t}(D)-q^\phi_{S,t}(D))^2])^{1/2} =
    \frac{1}{\mu}\sum_{\substack{a \in \uni\\\supp(a) \in \SS_{\downarrow}}}\sqrt{\sum_{\substack{S \in \SS\\S \supseteq \supp(a)}}\frac{p^*(S)\prod_{j \in S}|\widehat{\phi_j}(a_j)|^2}{|\uni_S|^2}}.
  \]
    Additionally, given the optimal $p^*$, which can be computed in time polynomial in $|\SS_{\downarrow}| + \max_{S \in \SS} |S|$, the noise for all product query estimates can be sampled in time 
    \[
    O\left(\sum_{S\in \SS}\left(\prod _{i\in S}m_i \right)\log\left(\prod _{i\in S}m_i \right) + \gausstime \sum_{R\in \SS_{\downarrow}}\prod _{i\in R}(m_i - 1)\right) 
    \]
    where $O(\gausstime)$ is the time required for sampling a standard Gaussian. 
\end{theorem}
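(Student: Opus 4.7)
The plan is to show that Algorithm~\ref{alg:gen-product}, instantiated with the weight function $p^*$ from \eqref{eq:pstar-prod}, is the desired mechanism. The privacy analysis is identical to the proof of \cref{lem:gen-DP}: each Fourier query $F_a(D)$ has sensitivity $1$, so by \cref{lem:complex-to-real-reduction} the release of $\tilde{F}_a(D)$ is $\mu_a$-GDP with $\mu_a^2 = \tau_a/\tau$; composition (\cref{lem:composition}) and the definition of $\tau$ give $\mu$-GDP overall, and the marginal estimates $\tilde{q}^\phi_{S,t}(D)$ are post-processing. For the error bound, first I would invoke \cref{lem:error-product-max} directly, which states that the maximum standard deviation equals exactly the claimed expression once the weights are chosen according to \eqref{eq:pstar-prod}.

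Next I would address the computability of $p^*$. The key point to verify is that the optimization problem in \eqref{eq:pstar-prod} is a concave maximization in $p$: the objective is a sum over $a$ of square roots of affine nonnegative functions of $p$, and the square root is concave and nondecreasing, so each term is a concave function of $p$, and the sum of concave functions is concave. The constraint set $\{p : \sum_{S}p(S) = 1, p \ge 0\}$ is a simplex. Since the objective and its (super)gradients can be evaluated to arbitrary precision in time polynomial in $|\SS_{\downarrow}| + \max_{S \in \SS}|S|$ (the sum has at most $\prod_{j \in [d]: \{j\} \in \SS_\downarrow} m_j$ terms, but those $a$ with $\supp(a) \notin \SS_\downarrow$ contribute $0$, leaving at most $\sum_{R \in \SS_\downarrow} \prod_{j \in R}(m_j - 1)$ active terms), standard convex optimization methods (e.g., subgradient or ellipsoid methods, cf.~\cite{Bubeck-convopt}) produce an approximately optimal $p^*$ in polynomial time.

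Finally, the running time for sampling the noise and reconstructing the estimates is inherited verbatim from the implementation analysis given for \cref{thm:upper-bound-product-avg}: pre-compute each $\widehat{\phi_j}$ by one-dimensional FFT, sample $\tilde{F}_a(D)$ for those $a$ with $\tau_a > 0$, and for each $S \in \SS$ apply a multi-dimensional FFT (\cref{lem:multidim-fft}) to the weighted values $\prod_{j \in S}\widehat{\phi_j}(a_j)\tilde{F}_a(D)$ to reconstruct all $\tilde{q}^\phi_{S,t}$, $t \in \uni_S$, in time $O(|\uni_S|\log|\uni_S|)$. Summing yields the stated bound. The only step requiring care beyond cited results is the argument that the $p^*$-optimization is concave; the rest is assembly of previously established lemmas.
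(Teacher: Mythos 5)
Your proposal matches the paper's own derivation: the paper obtains this theorem exactly by running Algorithm~\ref{alg:gen-product} with the weights $p^*$ of \eqref{eq:pstar-prod}, citing \cref{lem:error-product-max} (itself proved via \cref{lem:pstar-prod-optimality}) for the error identity, the same privacy argument as \cref{lem:gen-DP}, concave maximization over the simplex for computing $p^*$ (as in the proof of \cref{thm:upper-bound-general}), and the FFT-based implementation from \cref{thm:upper-bound-product-avg} for the sampling time. Your explicit note that each term of the objective is a concave, nondecreasing function of an affine map of $p$ simply spells out what the paper leaves implicit, so no substantive difference remains.
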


\begin{remark}
    The model can be generalized further by allowing the functions $(\phi_i)_{i \in S}$ to depend on the set of attributes $S$. We do not pursue this here, mostly to avoid introducing more complex notation.
\end{remark}

\subsection{Estimating Workloads of Extended Marginals}\label{sec:product-apps}


In this subsection we consider extended marginal queries, which we defined in \cref{sec:prelim-marginals}, and we recall the definition here. The set of attributes $[d]$ is partitioned into subsets $C$ (the categorical attributes) and $N$ (the numerical ones). As with standard marginal queries, the domain of attribute $i$ is $\uni_i$, and the universe is $\uni:= \uni_1 \times \ldots\times \uni_d$. 
We allow prefix and suffix queries on the numerical attributes. To encode them, we allow, for any numerical attribute $i \in N$, $t_i$ to be positive or negative, with positive values denoting prefix predicates, and negative values denoting suffix predicates. We thus define $T_i:= \uni_i$ for $i \in C$, and  $T_i := \{-m_i, \ldots, 0, \ldots, m_i-1\}$ 
for $i\in N$, and $T_S := \prod_{i \in S}T_i$.
We now redefine the query $q_{S,t}$ for $S \subseteq [d]$ and $t \in T_S$ as 
\begin{equation}\label{eq:marg-prefix}
    q_{S,t}(x) := \left(\prod_{j \in S \cap C} \mathbbm{1}\{x_j = t_j\}\right)\left(\prod_{\substack{j \in S \cap N\\t_j \ge 0}}\mathbbm{1}\{x_j \le t_j\}\right)\left(\prod_{\substack{j \in S \cap N\\t_j < 0}}\mathbbm{1}\{x_j \ge |t_j|\}\right).
\end{equation}
In the case when $C = [d]$, these are just marginal queries. On the other hand, if $N = [d]$, these are multi-dimensional (prefix and suffix) range queries. 
Note that we do not explicitly support the $\mathbbm{1}\{x_j \ge 0\}$ suffix query because it is equivalent to a $\mathbbm{1}\{x_j \le m_j - 1\}$ prefix query. 
Given a collection of subsets $\SS$ of $[d]$, let us define $Q^{\text{pr-suf}}_\SS$ to be the workload of all queries $q_{S,t}$ for $S \in \SS$ and $t \in T_S$. We also define a variant which contains only prefix predicates for numerical attributes: $Q^{\text{pref}}_{\SS}$ contains all queries $q_{S,t}$ for $S\in \SS$ and $t \in \uni_S$. 

\begin{remark}
When $t_i = m_i-1$, $\mathbbm{1}\{x_i \le t_i\}$ always evaluates to $1$, and when $t_i = -m_i$, $\mathbbm{1}\{x_i \ge |t_i|\}$ always evaluates to $0$. It is possible to achieve slightly stronger results by removing these values. Then the query with $t_i = m_i-1$ can then be recovered from other extended marginal queries that omit the attribute $i$. 
\end{remark}

Next, we show how to ``embed'' extended marginal queries into the product queries studied in the previous subsection. This idea was already used for prefix queries by Choquette-Choo et al.~on factorization mechanisms for optimization~\cite{Choquette-ChooM23}, and is implicit in the group algebra based factorization mechanism of Henzinger and Upadhyay~\cite{HenzingerU25} (see also Section 5.1 of~\cite{henzingerKU2026normalized}).

\begin{lemma}\label{lem:extended-marginals-embedding}
    Given a partition of $[d]$ into numerical attributes $N$ and categorical attributes $C$, and a corresponding universe $\uni = \uni_1\times \ldots \times \uni_d$, define $\uni' = \uni'_1\times \ldots \times \uni'_d$, where $\uni'_i = \uni_i$ for $i \in C$, and $\uni'_i = \{0, \ldots, 2m_i-1\}$ for $i \in N$. There exist functions $\phi_1, \ldots, \phi_d$, $\phi_i:\uni'_i \to \{0,1\}$, such that for any $S\subseteq [d]$ and any $t \in T_S$, there exists a $t' \in \uni'_S$ for which $q_{S,t}(x) = q^\phi_{S,t'}(x)$ for all $x \in \uni$. Moreover, the mapping from $t$ to $t'$ is a bijection.
\end{lemma}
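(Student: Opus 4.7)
The plan is to construct the functions $\phi_j$ and the bijection $t \mapsto t'$ explicitly, and then verify the equality $q_{S,t}(x) = q^\phi_{S,t'}(x)$ coordinate by coordinate. Since both $q_{S,t}$ and $q^\phi_{S,t'}$ are products over $j \in S$, it suffices to exhibit, for each $j$, a function $\phi_j$ and, for each $t_j \in T_j$, an element $t'_j \in \uni'_j$, such that the $j$-th factor of $q_{S,t}$ matches $\phi_j((t'_j - x_j) \bmod m'_j)$ for every $x_j \in \uni_j$.

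For $j \in C$, I would take the standard choice $\phi_j(z) := \mathbbm{1}\{z = 0\}$ and $t'_j := t_j$, which reduces to an ordinary marginal predicate and is clearly a bijection from $T_j = \uni_j$ to $\uni'_j = \uni_j$. For $j \in N$, the key idea is to use the extra $m_j$ values in $\uni'_j = \{0, \ldots, 2m_j - 1\}$ to double up the domain, so that the ``wraparound'' on $\bmod\, 2m_j$ separates prefix from suffix predicates. Concretely, I would define
\[
\phi_j(z) := \mathbbm{1}\{0 \le z \le m_j - 1\},
\]
and set $t'_j := t_j$ when $t_j \ge 0$ and $t'_j := m_j - 1 - t_j$ when $t_j < 0$ (so $t_j = -s$ with $s \in \{1, \ldots, m_j\}$ maps to $t'_j = m_j + s - 1$).

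The verification then splits into two easy cases. For a prefix predicate ($t_j \ge 0$), if $x_j \le t_j$ then $t'_j - x_j \in \{0, \ldots, m_j-1\}$ already, so $\phi_j((t'_j-x_j)\bmod 2m_j) = 1$; if $x_j > t_j$, then $t'_j - x_j$ is negative with absolute value at most $m_j - 1$, so $(t'_j-x_j)\bmod 2m_j$ lies in $\{m_j+1, \ldots, 2m_j-1\}$ and $\phi_j$ returns $0$. For a suffix predicate ($t_j = -s$, $s \ge 1$), the substitution $t'_j = m_j + s - 1$ makes $t'_j - x_j$ fall into $\{0, \ldots, m_j-1\}$ exactly when $x_j \ge s$, and into $\{m_j, \ldots, 2m_j-1\}$ otherwise. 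I would handle the boundary cases $t_j = m_j - 1$ (trivially true prefix) and $t_j = -m_j$ (trivially false suffix) explicitly to confirm the formula still holds.

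Finally, for the bijection claim I would observe that for each $j \in N$, the map $t_j \mapsto t'_j$ sends the $m_j$ nonnegative values of $T_j$ bijectively onto $\{0, \ldots, m_j - 1\}$ and the $m_j$ negative values of $T_j$ bijectively onto $\{m_j, \ldots, 2m_j-1\}$, so it is a bijection from $T_j$ to $\uni'_j$. Taking products over $j \in S$ yields the bijection from $T_S$ to $\uni'_S$. The whole argument is straightforward once the right $\phi_j$ is chosen; the only mildly delicate point is verifying the boundary behavior at $t_j \in \{-m_j, -1, 0, m_j-1\}$, which I expect to be the main place where care is needed.
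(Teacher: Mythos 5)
Your proposal is correct and follows essentially the same construction as the paper: $\phi_j = \mathbbm{1}\{z = 0\}$ for categorical $j$, $\phi_j = \mathbbm{1}\{z \le m_j-1\}$ for numerical $j$, the map $t_j \mapsto t_j$ for $t_j \ge 0$ and $t_j \mapsto m_j - 1 - t_j$ for $t_j < 0$, and the same case analysis on $(t'_j - x_j) \bmod 2m_j$. The only cosmetic difference is that you write $\mathbbm{1}\{0 \le z \le m_j-1\}$ where the paper writes $\mathbbm{1}\{z \le m_j-1\}$, which is equivalent since $z \in \uni'_j$ is always nonnegative.
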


\begin{proof}
    We define $\phi_i(z) := \mathbbm{1}\{z = 0\}$ for $i \in C$, and $\phi_i(z):= \mathbbm{1}\{z \le m_i-1\}$ for $i \in N$. 
    Notice that, for $t,z \in \{0, \ldots, 2m_i-1\}$, 
    \begin{equation}\label{eq:t-z-mod}
    (t - z) \bmod 2m_i =
    \begin{cases}
        t-z & z \le t\\
        2m_i + (t-z) & z > t
    \end{cases}.
    \end{equation}
    Suppose that $z,t \in \{0, \ldots, m_i - 1\}$. When $z \le t$, clearly $t-z \le m_i-1$. When $z>t$, since $z-t \le m_i-1$, we get $2m_i + (t-z) > m_i-1$. Therefore, for $z,t \in \{0, \ldots, m_i - 1\}$, $\phi_i(t-z \bmod 2m_i) = \mathbbm{1}\{z \le t\}$. 

    Now consider $z \in \{0, \ldots, m_i-1\}$ but $t \in \{m_i, \ldots, 2m_i-1\}$. In this case, $z\le t$ always holds, so $(t - z) \bmod 2m_i = t-z$. Then $t-z \le m_i -1$ if and only if $z \ge t-m_i+1$. Therefore, for $z \in \{0, \ldots, m_i - 1\}$, and $t \in \{-m_i, \ldots, -1\}$, $\phi_i(t'-z \bmod 2m_i) = \mathbbm{1}\{z \ge |t|\}$ where $t':= |t| + m_i-1 = m_i -1 -t$. 
    
    This means that, when $x \in \uni \subseteq \uni'$, $S\subseteq [d]$, and $t \in T_S$, $q^{\phi}_{S,t'}(x) = q_{S,t}(x)$, where
    \[
    t'_i := 
    \begin{cases}
        t_i & i \in C, \text{ or } i\in N, t_i \ge 0\\
        m_i - 1 -t_i & i \in N \text{ and } t_i < 0
    \end{cases}.
    \]
    It is easy to check that this mapping between $t$ and $t'$ is a bijection. 
\end{proof}

We can then use Algorithm~\ref{alg:gen-product} to answer the extended marginal queries $Q_{\SS}$. Doing so gives us the guarantees in the following theorem.
In \Cref{sec:lower-bound-extended-marginals} we give a lower bound for $Q^{\text{pref}}_\SS$. 
Note that unlike our other results our bounds for extended marginals are ``only'' nearly optimal.
The gap occurs because our product query embedding is slightly more expressive than $Q^{\text{pref}}_\SS$.
Nevertheless, no optimal construction is known even for a single prefix query despite significant attention from the research community~(see \cite{henzingerKU2026normalized} for the current best known bounds).


\begin{theorem}
    \label{thm:upper-bound-pref-avg}
    Let $D$ be a dataset containing data points $x \in \uni$, where $\uni := \uni_1 \times \ldots \times \uni_d$, $\uni_i := \{0, \ldots, m_i - 1\}$, and the attributes $[d]$ are partitioned into categorical attributes $C$, and numerical attributes $N$. For any collection $\SS$ of subsets of $[d]$, the workload $Q^{\text{pr-suf}}_\SS$ defined above, and a weight function $p:\SS \to \R_{\ge 0}$, there exists a $\mu$-GDP mechanism that estimates all queries in $Q^{\text{pr-suf}}_\SS$ and satisfies 
    \begin{align*}
    \err_p(q,\tilde{q}) 
    &:=\left(\sum_{S \in \SS} \frac{p(S)}{|T_S|} \sum_{t \in T_S}\E[(\tilde{q}_{S,t}(D) - q_{S,t}(D))^2] \right)^{1/2}\\
    &= \frac{1}{\mu}\sum_{\substack{R\subseteq C\\O \subseteq N}}
    \left(\prod_{j \in R}(m_j-1)\right)\cdot\left(\prod_{j \in O}\eta(m_j)\right)
    \sqrt{\sum_{\substack{S \in \SS\\S\supseteq R \cup O}}\frac{p(S)}{|\uni_{S\cap C}|^2 \cdot 4^{|S\cap N|}}},
    \end{align*}
    where $\eta(m):= \frac{1}{m}\sum_{\ell=1}^m\frac{1}{\sin\left(\frac{\pi (2\ell-1)}{2m}\right)}.$ Furthermore, there is a weight function $p^*:\SS \to \R_{\ge 0}$ computable in time polynomial in $|{\SS}_{\downarrow}| +\max_{S \in \SS} |S|$, such that 
    \begin{multline*}
    \max_{S \in \SS,t \in T_S}\left(\E[(\tilde{q}_{S,t}(D) - q_{S,t}(D))^2] \right)^{1/2}\\
    = \frac{1}{\mu}\sum_{\substack{R\subseteq C\\O \subseteq N}}
    \left(\prod_{j \in R}(m_j-1)\right)\cdot\left(\prod_{j \in O}\eta(m_j)\right)
    \sqrt{\sum_{\substack{S \in \SS\\S\supseteq R \cup O}}\frac{p^*(S)}{|\uni_{S\cap C}|^2 \cdot 4^{|S\cap N|}}}.
    \end{multline*}

    Additionally, the noise for all extended marginal estimates can be sampled in time 
    \[
    O\left(\sum_{S\in \SS}\left(\prod _{i\in S}m'_i \right)\log\left(\prod _{i\in S}m'_i \right) + \gausstime \sum_{R\in \SS_{\downarrow}}\prod _{i\in R}(m_i - 1)\right) 
    \]
    where $O(\gausstime)$ is the time required for sampling a standard Gaussian, and $m'_i = m_i$ if $i \in C$ and $m'_i = 2m_i$ if $i \in N$

    The same guarantees hold when $Q^{\text{pr-suf}}_{\SS}$ is replaced by $Q^{\text{pref}}_{\SS}$.
\end{theorem}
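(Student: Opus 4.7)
The plan is to combine the embedding of \cref{lem:extended-marginals-embedding} with the analysis of \cref{alg:gen-product} provided by \cref{thm:upper-bound-product-avg} and \cref{thm:upper-bound-product}. The lemma realizes each extended marginal $q_{S,t}$ as a product query $q^\phi_{S,t'}$ on the enlarged universe $\uni'$, where $\phi_j(z) = \mathbbm{1}\{z = 0\}$ for $j \in C$ and $\phi_j(z) = \mathbbm{1}\{z \le m_j - 1\}$ for $j \in N$, and $t \mapsto t'$ is a bijection from $T_S$ to $\uni'_S$. Since $|T_S| = |\uni'_S|$ and the variance of each noisy product query is independent of $t'$, the weighted root mean squared error for the extended-marginal workload coincides with that of the embedded product workload. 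I will therefore invoke \cref{thm:upper-bound-product-avg} and simplify the resulting expression in terms of the Fourier coefficients $\widehat{\phi_j}$.

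For $j \in C$ the coefficient is $\widehat{\phi_j}(a) = 1$ for all $a \in \uni_j$. For $j \in N$, a short geometric-series calculation gives $\widehat{\phi_j}(0) = m_j$, $\widehat{\phi_j}(a) = 0$ when $a \ne 0$ is even, and
\[
|\widehat{\phi_j}(a)| = \left|\frac{1 - (-1)^a}{1 - \omega_{2m_j}^{-a}}\right| = \frac{1}{\sin(\pi a/(2m_j))}
\]
when $a$ is odd (the sine is positive on this range). Hence only $a \in \uni'$ whose numerical coordinates are all either $0$ or odd contribute in the outer sum of \cref{thm:upper-bound-product-avg}. I will split this sum by setting $R = \supp(a) \cap C$ and $O = \supp(a) \cap N$: for $j \in R$ the coordinate $a_j$ takes $m_j - 1$ values, each contributing factor $1$; for $j \in O$ it ranges over the $m_j$ odd values in $\{1, 3, \ldots, 2m_j - 1\}$, contributing $1/\sin(\pi a_j/(2m_j))$; and for $j \in (S \cap N) \setminus O$ we have $a_j = 0$, contributing $|\widehat{\phi_j}(0)|^2 = m_j^2$.

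Next I substitute these values into the square root inside \cref{thm:upper-bound-product-avg}, using the factorization $|\uni'_S|^2 = |\uni_{S \cap C}|^2 \cdot 4^{|S \cap N|} \cdot \prod_{j \in S \cap N} m_j^2$. The $m_j^2$ factors for $j \in (S \cap N) \setminus O$ cancel the matching denominator factors, and the remaining $\prod_{j \in O} m_j^2$ combines with the $a_O$-dependent factor in the numerator, producing an $a$-independent square root $\sqrt{\sum_{S \supseteq R \cup O} p(S)/(|\uni_{S \cap C}|^2 \cdot 4^{|S \cap N|})}$ times $\prod_{j \in O} 1/(m_j \sin(\pi a_j/(2m_j)))$. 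Summing over the free coordinates then factorizes: the $R$-part yields $\prod_{j \in R}(m_j - 1)$, while the $O$-part yields $\prod_{j \in O}\sum_{\ell = 1}^{m_j} 1/\sin(\pi(2\ell-1)/(2m_j)) = \prod_{j \in O} m_j\,\eta(m_j)$; dividing by the residual $\prod_{j \in O} m_j$ produces exactly $\prod_{j \in O} \eta(m_j)$, giving the claimed weighted-error bound.

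The maximum-variance bound follows from the same manipulations starting from \cref{thm:upper-bound-product} with $p^*$ the maximizer of \eqref{eq:pstar-prod}; \cref{lem:pstar-prod-optimality} and \cref{lem:error-product-max} apply verbatim. The running time is obtained by instantiating \cref{thm:upper-bound-product-avg} on the universe $\uni'$, for which $|\uni'_S| = \prod_{i \in S} m'_i$. Finally, the statement for $Q^{\text{pref}}_\SS$ uses the same mechanism: because the per-query variance depends only on the noise magnitudes $\tau_a$ and not on $t$, averaging with weight $1/|T_S|$ or $1/|\uni_S|$ produces the same error bound, and the identical argument handles the maximum-variance case. The main obstacle is bookkeeping: tracking which $(R, O)$ pairs give nonzero contributions and chasing the cancellation of the $m_j^2$ factors; once the Fourier coefficients $\widehat{\phi_j}$ are in hand, the identity drops out mechanically.
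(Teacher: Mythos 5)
Your proposal is correct and follows essentially the same route as the paper's proof: embed via \cref{lem:extended-marginals-embedding}, invoke \cref{thm:upper-bound-product-avg} (resp.\ \cref{thm:upper-bound-product}) on the enlarged universe $\uni'$, compute $\widehat{\phi_j}$, observe that only $a$ with all numerical coordinates zero or odd contribute, and reorganize the sum by the pair $(R,O)=(\supp(a)\cap C,\supp(a)\cap N)$. Your bookkeeping differs only cosmetically — you cancel the $\prod_{j\in S\cap N}m_j^2$ factors against $|\uni'_S|^2$ directly rather than routing through $|\uni_{S\cap\mathrm{null}(a)}|$ and $|\uni_{S\setminus\mathrm{null}(a)}|$ as the paper does — and you arrive at the identical closed form, including the treatment of $Q^{\text{pref}}_\SS$ via $t$-independence of the per-query variance.
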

\begin{proof}
The proof is the same for $Q^{\text{pr-suf}}_{\SS}$ and $Q^{\text{pref}}_{\SS}$, since the noise variance $\E[(\tilde{q}_{S,t}(D) - q_{S,t}(D))^2]$ is independent of $t$ for each $S$. Moreover, the proofs for weighted mean squared error and maximum variance are the same, and only differ in whether we use Theorem~\ref{thm:upper-bound-product-avg} or Theorem~\ref{thm:upper-bound-product}. We present the proof for weighted root mean squared error.

The privacy and running time guarantees follow directly from Theorems~\ref{thm:upper-bound-product-avg}~and~\ref{thm:upper-bound-product}. For error, we get that for any weight function $p:\SS\to \R_{\ge 0},$ the weighted root mean squared error is
\[
    \frac{1}{\mu}\sum_{\substack{a \in \uni'\\\supp(a) \in \SS_{\downarrow}}}\sqrt{\sum_{\substack{S \in \SS\\S \supseteq \supp(a)}}\frac{p(S)\prod_{j \in S}|\widehat{\phi_j}(a_j)|^2}{|T_S|^2}}.
\]
To evaluate this expression, we need to compute $\widehat{\phi_j}(a_j)$. When $j \in C$, we have $\widehat{\phi_j}(a) = 1$. When $j \in N$, we have $\widehat{\phi_j}(0) = m_j$, and, for $a_j > 0$, 
\[
\widehat{\phi_j}(a_j) = \sum_{z = 0}^{m_j-1} \omega_{2m_j}^{-a_j\cdot z}
= \frac{1-\omega_{2m_j}^{-a_j\cdot m_j}}{1-\omega_{2m_j}^{-a_j}}; 
\hspace{2em}
|\widehat{\phi_j}(a_j)|^2 = \frac{|1-\omega_{2m_j}^{-a_j\cdot m_j}|^2}{|1-\omega_{2m_j}^{-a_j}|^2} = \frac{1-\cos(\pi a_j)}{1-\cos\left(\frac{\pi a_j}{m_j}\right)}.
\]
Since $\omega_{2m_j}^{-a_j} \neq 1$ for $a_j > 0$, we could use the standard formula for a finite geometric series in the left equation. In the last equality we used the identity $\vert 1 - \exp(i\theta) \vert^2 = 2(1 - \cos(\theta))$. 
Note that when $a_j \neq 0$ is even, $\cos(\pi a_j) = 1$, and when $a_j$ is odd, $\cos(\pi a_j) = -1$, so 
\[
|\widehat{\phi_j}(a_j)|^2 = \frac{2\cdot \mathbbm{1}\{a_j \text{ is odd}\}}{1-\cos\left(\frac{\pi a_j}{m_j}\right)}
= \frac{\mathbbm{1}\{a_j \text{ is odd}\}}{\sin\left(\frac{\pi a_j}{2m_j}\right)^2}.
\]  
Let $\mathrm{odd}(a):= \{j \in N:  a_j \text{ is odd}\}$, and $\mathrm{null}(a):= \{j \in N: a_j = 0\} = N\setminus \supp(a)$. If $\prod_{j \in S}|\widehat{\phi_j}(a_j)|^2 = \prod_{j \in S\cap N}|\widehat{\phi_j}(a_j)|^2\neq 0$, then it must be the case that, for any $j \in S\cap N$, $a_j$ is either odd or $0$, i.e., $S\cap N \subseteq \mathrm{odd}(a)\cup \mathrm{null}(a)$. Notice that this contradicts $S\supseteq \supp(a)$ unless $\supp(a)\cap N = \mathrm{odd}(a)$, since, otherwise, there would be some $j \in N$ such that $a_j$ is nonzero and even, and $S\supseteq \supp(a)$ would imply that this $j$ is also in $S$. 
Therefore, for any $S$ and $a$ such that $\supp(a) \cap N = \mathrm{odd}(a)$, and $S\supseteq \supp(a)$ (which implies $S \supseteq \mathrm{odd}(a)$), we have
\begin{align*}
    \prod_{j \in S}|\widehat{\phi_j}(a_j)|^2
=
    \prod_{j\in S\cap\mathrm{null}(a)} m_j^2 \prod_{j\in \mathrm{odd}(a)}\frac{1}{\sin\left(\frac{\pi a_j}{2m_j}\right)^2}
=
    |\uni_{S\cap\mathrm{null}(a)}|^2 \prod_{j\in \mathrm{odd}(a)}\frac{1}{\sin\left(\frac{\pi a_j}{2m_j}\right)^2}.
\end{align*}
If $\supp(a) \cap N \neq \mathrm{odd}(a)$, then $\prod_{j \in S}|\widehat{\phi_j}(a_j)|^2 = 0$ for all $S\supseteq \supp(a)$. 

We can now rewrite the error as
\begin{align*}
     \err_p(q,\tilde{q}) 
     & = \frac{1}{\mu}\sum_{\substack{a \in \uni'\\\supp(a) \in \SS_{\downarrow}}}\sqrt{\sum_{\substack{S \in \SS\\S \supseteq \supp(a)}}\frac{p(S)\prod_{j \in S}|\widehat{\phi_j}(a_j)|^2}{|T_S|^2}}
     \\
     & =
     \frac{1}{\mu}\sum_{\substack{a \in \uni'\\\supp(a) \cap N = \mathrm{odd}(a)}}\prod_{j \in \mathrm{odd}(a)}\frac{1}{\sin\left(\frac{\pi a_j}{2m_j}\right)}\sqrt{\sum_{\substack{S \in \SS\\S\supseteq \supp(a)}}\frac{p(S)|\uni_{S\cap\mathrm{null}(a)}|^2}{|T_S|^2}}.
 \end{align*}
 Here we used the observation that the sum equals $0$ for any $a$ where $\supp(a) \cap N \neq \mathrm{odd}(a)$ as discussed above.
 Using the observation that $|T_S| = |\uni_{S}|\cdot 2^{|S\cap N|}$, we can rewrite further as
 \begin{align}
     \err_p(q,\tilde{q}) 
     &=
     \frac{1}{\mu}\sum_{\substack{a \in \uni'\\\supp(a) \cap N = \mathrm{odd}(a)}}\prod_{j \in \mathrm{odd}(a)}\frac{1}{\sin\left(\frac{\pi a_j}{2m_j}\right)}\sqrt{\sum_{\substack{S \in \SS\\S\supseteq \supp(a)}}\frac{p(S)}{|\uni_{S\setminus \mathrm{null}(a)}|^2 \cdot 4^{|S\cap N|}}}\notag\\
     &= 
     \frac{1}{\mu}\sum_{\substack{a \in \uni'\\\supp(a) \cap N = \mathrm{odd}(a)}}\prod_{j \in \mathrm{odd}(a)}\frac{1}{m_j\sin\left(\frac{\pi a_j}{2m_j}\right)}\sqrt{\sum_{\substack{S \in \SS\\S\supseteq \supp(a)}}\frac{p(S)}{|\uni_{S\cap C}|^2 \cdot 4^{|S\cap N|}}},\label{eq:error-marg-pref}
 \end{align}
 where, in the second line, we used the observation that if $\supp(a) \cap N = \mathrm{odd}(a)$ and $S \supseteq \supp(a)$, then $S \setminus\mathrm{null}(a) = (S\cap C)\cup \mathrm{odd}(a)$, so $|\uni_{S\setminus\mathrm{null}(a)}| = |\uni_{S\cap C}||\uni_{\mathrm{odd}(a)}| = |\uni_{S\cap C}|\prod_{j \in \mathrm{odd}(a)}m_j$.
Note that in these formulas we use the convention $|\uni_{\emptyset}| = 1$. 

We can re-write the right hand side further by first choosing the sets $R := \supp(a) \cap C$, and $O := \mathrm{odd}(a)$. For any fixed such choice of $R\subseteq C$ and $O\subseteq N$, we have 
\begin{multline}\label{eq:error-marg-pref-sets}
\sum_{\substack{a \in \uni'\\\supp(a)\cap C = R\\ \supp(a) \cap N = \mathrm{odd}(a) = O}}\prod_{j \in O}\frac{1}{m_j\sin\left(\frac{\pi a_j}{2m_j}\right)}\\
= \left(\prod_{j \in R}(m_j-1)\right)\prod_{j \in O}\left(\frac{1}{m_j}\sum_{\ell = 1}^{m_j}\frac{1}{\sin\left(\frac{\pi (2\ell-1)}{2m_j}\right)}\right).    
\end{multline}
Recall that in the theorem statement we defined $\eta(m):= \frac{1}{m}\sum_{\ell = 1}^{m}\frac{1}{\sin\left(\frac{\pi (2\ell-1)}{2m}\right)}$. From \eqref{eq:error-marg-pref} and \eqref{eq:error-marg-pref-sets}, we have
\begin{equation*}
    \err_p(q,\tilde{q}) = 
    \frac{1}{\mu}\sum_{\substack{R\subseteq C\\O \subseteq N}}
    \left(\prod_{j \in R}(m_j-1)\right)\cdot\left(\prod_{j \in O}\eta(m_j)\right)
    \sqrt{\sum_{\substack{S \in \SS\\S\supseteq R \cup O}}\frac{p(S)}{|\uni_{S\cap C}|^2 \cdot 4^{|S\cap N|}}}.\qedhere
\end{equation*}
    
\end{proof}

In the case when $d := 1$, $m_1 := m$, $N := \{1\}$, and the only set in $\SS$ is $S:= \{1\}$, we simply have the workload of prefix and suffix queries on $\uni = \uni_1 = \{0,\ldots, m - 1\}$. Then Theorem~\ref{thm:upper-bound-pref-avg} gives us (for $p(S) = 1$)
\begin{equation*}
    \err_p(q,\tilde{q}) = 
    \frac{1}{2\mu}(1 + \eta(m))
     = \frac{1}{2\mu} + 
     \frac{1}{2\mu m}\sum_{\ell = 1}^{m}\frac{1}{\sin\left(\frac{\pi (2\ell-1)}{2m}\right)},
\end{equation*}
which recovers the explicit upper bound on the error of a factorization algorithm for this workload due to Henzinger and Upadhyay~\cite{HenzingerU25}. We remark that the function $\eta(m)$ is asymptotically $\eta(m) = \frac{1}{\pi}\ln(m) + O(1)$.
\ifarxiv
Our bound is a small additive constant worse than the best known explicit factorization~\cite{henzingerKU2026normalized} for a single prefix query. 
We note that the only known explicit factorization that outperforms ours on prefix queries is concurrent work \citep{henzingerKU2026normalized}, and our mechanism handles extended marginals which is a much more general class of queries.  
\fi

\fi

\ifarxiv
\section{Lower Bounds}
\label{sec:lower-bound}
  
In this section we present lower bounds for factorization mechanisms and marginal queries that show that the algorithms from Sections~\ref{sec:algorithm}~and~\ref{sec:weighted-product-queries} are optimal, and that the algorithm for extended marginals in Section~\ref{sec:product-apps} is optimal up to lower order terms.
First, we present properties of factorization mechanisms.
\ifpods
Then, we show how our algorithms from \Cref{sec:all-kway,sec:upper-bound-arbitrary-marginal} can be represented in the factorization framework.
\else
Then, we show how our algorithms can be represented in the factorization framework.
\fi
Then, we present lower bounds for marginal queries that match our upper bounds. We also show how to extend these results to product queries, and, finally, we give a lower bound for extended marginals that nearly matches the factorization implied by  the algorithms in \cref{sec:product-apps}.

\subsection{Properties of Factorization Norms}\label{sec:fact-duality}


We start with some general properties of factorization norms. 
For a matrix $M$, let $\|M\|_{1\to 2}$ be the maximum $\ell_2$ norm of a column of $M$, and $\|M\|_{2\to\infty}$ the maximum $\ell_2$ norm of a row of $M$. Let $\|M\|_F$ be the Frobenius norm of $M$, i.e., $\|M\|_F = \sqrt{\tr(M^TM)}$ for real matrices, and $\sqrt{\tr(M^*M)}$ for complex matrices. Finally, let $\|M\|_{tr}$ be the trace norm of $M$, i.e., the sum of its singular values, and let $\|M\|_{op}$ be the operator norm of $M$, i.e., its largest singular value.

For a $K \times N$ real matrix $W$, we have the factorization norms
\begin{align*}
\gamma_F(W) &:= \inf\{\|L\|_F \|R\|_{1\to 2}: LR = W\};\\
\gamma_2(W) &:= \inf\{\|L\|_{2\to\infty} \|R\|_{1\to 2}: LR = W\}.
\end{align*}
Here the infima are over real matrices $L$ and $R$ for which $LR = W$.
The $\gamma_2$ norm is classical, while the definition of the $\gamma_F$ norm is due to Edmonds, Nikolov, and Ullman~\cite{EdmondsNU20}, except the definition in the latter paper is normalized differently. 


We deviate slightly from these definitions and allow the factors $L$ and $R$ to be complex matrices. This does not change the quantities we study, as explained in the following remark.



\begin{remark}\label{rem:complex-fact}
It is not hard to see that, when the matrix $W$ has real entries, allowing the matrices $L$ and $R$ to have complex entries does not change $\gamma_F(W)$ or $\gamma_2(W)$. It is obvious that allowing complex entries can only decrease the norms. In the other direction, take any complex factorization $LR = W$ where $L = L_{\R} + i L_{\C}$ and $R = R_{\R} + i R_{\C}$ for real matrices $L_{\R}, L_{\C}, R_{\R}, R_{\C}$. Since $\mathrm{Re}(LR) = W$, we have the real factorization
\[
\begin{pmatrix}
    L_{\R} & L_{\C}
\end{pmatrix}
\begin{pmatrix}
    R_{\R}\\
    -R_{\C}
\end{pmatrix} = W
\]
achieving the same value as the complex factorization. This observation was also made by Henzinger and Upadhyay~\cite{HenzingerU25}.
\end{remark}

We also note that, for any diagonal matrix $P$ with non-negative entries such that $\tr(P) = 1$, using the fact that
\[
  \|P^{1/2}L\|_{F}^2 = \sum_{i = 1}^K P_{i,i} \|\ell_i\|_2^2 \le \max_{i = 1}^K \|\ell_i\|_2^2 = \|L\|_{2\to\infty},
\]
we have that $\gamma_F(P^{1/2} W) \le \gamma_2(W)$. Above, we used $\ell_i$ for the $i$-th row of $L$. 

The next lemma gives a lower bound on $\gamma_F$, as well as a necessary and sufficient condition for a factorization to achieve the lower bound. 


\begin{lemma}\label{lm:gammaF-dual}
    For any $K\times N$ matrix $W$, and any $N\times N$ diagonal matrix $S$ with non-negative entries such that $\tr(S) = 1$, we have
    \[
    \gamma_F(W) \ge \|WS^{1/2}\|_{tr}.
    \]
    Moreover, for any factorization $W = LR$, we have $\|WS^{1/2}\|_{tr} = \gamma_F(W) = \|L\|_F\|R\|_{1\to 2}$ if and only if
    \begin{itemize}
        \item for any $j$ such that $S_{j,j} \neq 0$, the $\ell_2$ norm of the $j$-th column of $R$ equals $\|R\|_{1\to 2}$, and
        \item $L^*L = c\,RSR^*$ for some real number $c\ge 0$.
    \end{itemize}
\end{lemma}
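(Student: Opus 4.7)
The plan is to prove the bound via a two-step chain of inequalities, each of which yields one of the stated equality conditions. For any factorization $W = LR$, I would write $WS^{1/2} = L(RS^{1/2})$ and apply the standard trace-Frobenius inequality $\|AB\|_{tr} \le \|A\|_F \|B\|_F$ to get $\|WS^{1/2}\|_{tr} \le \|L\|_F \|RS^{1/2}\|_F$. Next, I would compute $\|RS^{1/2}\|_F^2 = \tr(SR^*R) = \sum_j S_{j,j}\|r_j\|_2^2$, where $r_j$ is the $j$-th column of $R$. Since $S$ is diagonal with non-negative entries summing to $1$, this is a convex combination of $\|r_j\|_2^2$ and thus at most $\max_j \|r_j\|_2^2 = \|R\|_{1 \to 2}^2$. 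Combining gives $\|WS^{1/2}\|_{tr} \le \|L\|_F \|R\|_{1 \to 2}$; infimizing over factorizations yields the claimed bound.

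For the equality characterization, the second inequality is tight iff $\|r_j\|_2 = \|R\|_{1 \to 2}$ for every $j$ with $S_{j,j} > 0$, which is exactly the first stated condition. For the first inequality, I would prove $\|AB\|_{tr} \le \|A\|_F \|B\|_F$ via the polar decomposition $AB = U|AB|$, so that
\[
\|AB\|_{tr} = \tr(|AB|) = \tr(U^*AB) = \langle A^*U, B\rangle_F \le \|A^*U\|_F \|B\|_F \le \|A\|_F \|B\|_F,
\]
where the first step is Cauchy-Schwarz in the Frobenius inner product and the second uses $UU^* \preceq I$. The Cauchy-Schwarz step is tight iff $A^*U = \lambda B$ for some scalar $\lambda$, and since $\tr(|AB|) \ge 0$ we may take $\lambda \ge 0$. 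The second step is tight iff the range of $A$ is contained in the range of $U$, equivalently $UU^*A = A$. Together they give $A^*A = A^*UU^*A = \lambda^2 BB^*$, i.e., $L^*L = c \cdot RSR^*$ with $c = \lambda^2 \ge 0$, yielding the second stated condition.

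For the converse (sufficiency), I would verify directly via the SVD. If $L^*L = c(RS^{1/2})(RS^{1/2})^*$ with $c \ge 0$, then comparing eigendecompositions of $L^*L$ and $(RS^{1/2})(RS^{1/2})^*$ shows that the right singular vectors of $L$ coincide with the left singular vectors of $RS^{1/2}$ and that $\sigma_i(L) = \sqrt{c}\,\sigma_i(RS^{1/2})$. Multiplying the SVDs yields an SVD of $L(RS^{1/2})$ with singular values $\sqrt{c}\,\sigma_i(RS^{1/2})^2$, hence $\|L(RS^{1/2})\|_{tr} = \sqrt{c}\,\|RS^{1/2}\|_F^2 = \|L\|_F\|RS^{1/2}\|_F$. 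Combined with equality of the column norms of $R$, this delivers $\|WS^{1/2}\|_{tr} = \|L\|_F \|R\|_{1 \to 2}$. The degenerate case $W = 0$ is handled separately with $L = 0$ and $c = 0$. The main technical point requiring care is tracking the partial isometry $U$ from the polar decomposition and verifying that the two equality conditions translate cleanly into the algebraic statement $L^*L = c RSR^*$; this is the only non-routine step in the argument.
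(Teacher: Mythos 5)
Your proof is correct and follows essentially the same route as the paper: the same two-stage inequality chain $\|L(RS^{1/2})\|_{tr}\le\|L\|_F\|RS^{1/2}\|_F\le\|L\|_F\|R\|_{1\to 2}$, with the matrix Cauchy--Schwarz step proved by tracking the polar partial isometry of $AB$, which coincides with the dual test matrix $Z$ from the paper's Lemma~\ref{lm:trace-dual}. The only cosmetic differences are that you absorb the isometry into the left factor (bounding $\|A^*U\|_F\le\|A\|_F$) where the paper absorbs it into the right (bounding $\|ZY^*\|_F\le\|Y\|_F$), and your sufficiency argument chooses compatible SVDs of $L$ and $RS^{1/2}$ while the paper sidesteps that (and the ambiguity from repeated singular values) by computing $(XY)(XY)^*=\tfrac{1}{c}(XX^*)^2$ directly --- both routes deliver the same characterization $L^*L=c\,RSR^*$.
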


The next lemma is the analogous result for $\gamma_2$.

\begin{lemma}\label{lm:gamma2-dual}
    For any $K\times N$ matrix $W$, and any diagonal matrices $P$ and $S$ with non-negative entries, respectively of dimensions $K\times K$ and $N\times N$, and such that $\tr(P) = \tr(S) = 1$, we have
    \[
    \gamma_2(W) \ge \|P^{1/2}WS^{1/2}\|_{tr}.
    \]
    Moreover, for any factorization $W = LR$, we have $\|P^{1/2} WS^{1/2}\|_{tr} = \gamma_2(W) = \|L\|_{2\to \infty}\|R\|_{1\to 2}$ if and only if
    \begin{itemize}
        \item for any $i$ such that $P_{i,i} \neq 0$, the $\ell_2$ norm of the $i$-th row of $L$ equals $\|L\|_{2\to \infty}$, and
        \item for any $j$ such that $S_{j,j} \neq 0$, the $\ell_2$ norm of the $j$-th column of $R$ equals $\|R\|_{1\to 2}$, and
        \item $L^*PL = c\,RSR^*$ for some real number $c\ge 0$.
    \end{itemize}
\end{lemma}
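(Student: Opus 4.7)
The plan is to mirror the proof of \Cref{lm:gammaF-dual}, adapting it to account for the additional weighting matrix $P$ on the left. At the heart of the argument is the matrix Cauchy--Schwarz inequality for the trace norm: for any matrices $A,B$ of conforming sizes,
\[
\|AB\|_{tr} \le \|A\|_F\,\|B\|_F,
\]
with equality if and only if $A^*A$ is a scalar multiple of $BB^*$ (or one of the two factors is zero). I would record this as a preliminary observation, deducing it either by writing $\|AB\|_{tr}=\sup_{\|U\|_{op}\le 1}|\tr(UAB)|$ and applying Cauchy--Schwarz in the Hilbert--Schmidt inner product together with $\|UA\|_F\le \|U\|_{op}\|A\|_F$, or by appealing to von Neumann's trace inequality.

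For the lower bound, given any factorization $W = LR$, I would write
\[
\|P^{1/2}WS^{1/2}\|_{tr} = \|(P^{1/2}L)(RS^{1/2})\|_{tr} \le \|P^{1/2}L\|_F\,\|RS^{1/2}\|_F,
\]
by the matrix Cauchy--Schwarz above. Then I would bound each Frobenius norm by expanding row-wise and column-wise:
\[
\|P^{1/2}L\|_F^2 = \sum_{i} P_{i,i}\,\|\ell_i\|_2^2 \le \|L\|_{2\to\infty}^2\,\tr(P) = \|L\|_{2\to\infty}^2,
\]
and analogously $\|RS^{1/2}\|_F \le \|R\|_{1\to 2}$. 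Multiplying these bounds and taking the infimum over all factorizations $W=LR$ yields $\gamma_2(W)\ge \|P^{1/2}WS^{1/2}\|_{tr}$.

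For the equality conditions, I would trace back what makes each of the three inequalities above tight. The row-wise estimate is tight exactly when $\|\ell_i\|_2 = \|L\|_{2\to\infty}$ for every $i$ with $P_{i,i}\neq 0$, and similarly on the $R$ side. The matrix Cauchy--Schwarz step applied with $A = P^{1/2}L$ and $B = RS^{1/2}$ is tight exactly when $A^*A$ is a non-negative scalar multiple of $BB^*$, that is, $L^*PL = c\,RSR^*$ for some $c\ge 0$. Conversely, if all three conditions hold, reading the chain backwards shows $\|P^{1/2}WS^{1/2}\|_{tr} = \|L\|_{2\to\infty}\|R\|_{1\to 2}$, which together with the lower bound gives $\gamma_2(W) = \|L\|_{2\to\infty}\|R\|_{1\to 2}$.

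The main obstacle I expect is pinning down the equality case of the matrix Cauchy--Schwarz inequality cleanly, especially since the factors $P^{1/2}L$ and $RS^{1/2}$ need not be square or invertible and we are working over $\mathbb{C}$. I would handle this by phrasing the Cauchy--Schwarz step via $|\tr(U(P^{1/2}L)(RS^{1/2}))| \le \|(P^{1/2}L)^*\|_F\,\|U(P^{1/2}L) \cdot \text{stuff}\|$\ldots\ and working out that equality across the chain forces $(P^{1/2}L)^*$ to be proportional to $(RS^{1/2})U^*$ for the optimal dual $U$, which after squaring gives the stated condition $L^*PL = c\,RSR^*$. Everything else is bookkeeping analogous to the proof of \Cref{lm:gammaF-dual}.
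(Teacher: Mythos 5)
Your proposal matches the paper's proof exactly: the paper states that Lemma~\ref{lm:gamma2-dual} follows by the same argument as Lemma~\ref{lm:gammaF-dual}, except that Lemma~\ref{lm:mat-CS} (the matrix Cauchy--Schwarz inequality and its equality case) is applied with $X := P^{1/2}L$ and $Y := RS^{1/2}$, and this is precisely your decomposition, your two Frobenius-norm estimates, and your three tightness conditions. The equality case of matrix Cauchy--Schwarz that you flag as a potential obstacle is already handled in the paper by Lemma~\ref{lm:mat-CS}, proved independently, so there is no gap to fill.
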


The lower bounds in Lemmas~\ref{lm:gammaF-dual}~and~\ref{lm:gamma2-dual} themselves are not new. A special case of the lower bound on $\gamma_F$, known as the singular value lower bound, was shown by Li and Miklau~\cite{LiM13,LiM15}, and the lower bound for $\gamma_2$ goes back to work of Haagerup in the 1980s~\cite{Haagerup85}, see also~\cite{LeeSS08}. Both lower bounds also follow from the duality results in~\cite{NikolovT24}. Moreover, these duality results show that, for any $W$, both lower bounds are achieved for some choice of diagonal matrices. We have, however, not found the characterization of the tight cases in Lemmas~\ref{lm:gammaF-dual}~and~\ref{lm:gamma2-dual} in the literature, which is why we include proofs of the lemmas in Appendix~\ref{app:duality}.


\subsection{Our Algorithms as Factorizations}
\label{sec:our-algs-are-factorizations}

Let us recall the factorization mechanisms framework. Suppose we are given a query workload $Q$ over a universe $\uni$: $Q$ is a set of queries, where each $q$ is a function from $\uni$ to $\R$, and induces a query on datasets $D = (x^{(1)}, \ldots, x^{(n)})$ given by $q(D) := \sum_{i=1}^n q(x^{(i)})$. We also use $Q(D)$ to denote the vector of query answers $(q(D))_{q \in Q}$. We represent $Q$ by a workload matrix $W \in \R^{Q \times \uni}$ with entries $W_{q,x} := q(x)$. We also represent the dataset $D$ by a histogram vector $h \in \mathbb{Z}^\uni$, where $h_x := |\{i: x^{(i)} = x\}|$ is the number of occurrences of $x$ in $D$. These definitions turn the workload into a linear function of the histogram: $Q(D) = Wh$. 

For a factorization $W = LR$ of workload matrix, we define a corresponding private factorization mechanism for answering queries in $Q$ as follows. We draw a normally distributed vector $Z \sim \mathcal{N}\left(0,\frac{\|R\|_{1\to 2}^2}{\mu^2}I_{r}\right)$, where $r$ is the number of rows of $R$, and output the vector of query answers $L(Rh + Z) = Wh +LZ = Q(D) + LZ$. Note that, using $r_x$ to denote the column of $R$ indexed by $x\in \uni$, $Rh = \sum_{i=1}^n r_{x^{(i)}}$ is simply another workload of $r$ queries, and they have sensitivity $\max_{x \in \uni} \|r_x\|_2 = \|R\|_{1\to 2}$. 
These are usually called the strategy queries, and we can view the factorization mechanism as answering a well chosen set of strategy queries and using them to reconstruct answers to the queries in $Q$. It is easy to verify that the mechanism satisfies $\mu$-GDP: outputting $Rh + Z$ satisfies $\mu$-GDP by Lemma~\ref{lem:gaussian}, and then multiplying by $L$ is just post-processing. Moreover, an easy calculation shows that the factorization mechanism achieves error bounds
\begin{align*}
\E[\|Wh - (L(Rh+Z))\|_2^2]^{1/2}
&= \E[\|LZ\|_2^2]^{1/2} = \frac1\mu \|L\|_{F}\|R\|_{1\to 2};\\
\max_{q \in Q}\E[\|Wh - (L(Rh+Z))_q\|_2^2]^{1/2}
&= \max_{q \in Q} \E[|(LZ)_q|^2]^{1/2} = \frac1\mu \|L\|_{2\to\infty}\|R\|_{1\to 2}.
\end{align*}
This motivates the definitions of the $\gamma_F(W)$ and $\gamma_2(W)$ norms: they correspond to the minimal error bounds achievable by a factorization mechanism. Note also that if we are interested in weighted root mean squared error, i.e., if each query $q \in Q$ is assigned non-negative weight $p(q)$, then we get the error bound
\[
\E\left[\sum_{q \in Q} p(q)( (Wh)_q - (L(Rh + Z))_q )^2\right]^{1/2}
= \E[\|P^{1/2} LZ\|_2^2]^{1/2} = \frac1\mu \|P^{1/2}L\|_{F}\|R\|_{1\to 2},
\]
where $P$ is the diagonal matrix indexed by queries and with diagonal entries $P_{q,q} := p(q)$. This error bound is optimized by the factorization that achieves $\gamma_F(P^{1/2}W)$.


Our algorithms fall in this framework: the Fourier queries defined in \eqref{eq:fourier-query} are the strategy queries, and the formula~\eqref{eq:marginals-inverse-fourier} shows how to reconstruct the query answers. We now describe this re-interpretation of the algorithm more precisely. We only do so for Algorithm~\ref{alg:gen}, since our other algorithms are special cases of it. 

Let us fix a set of marginals $\SS$ and a weight function $p:\SS\to \R_{\ge 0}$, as in Section~\ref{sec:upper-bound-arbitrary-marginal}. Let $W$, from now on, be the workload matrix $W$ of $Q_{\SS}$. Recall the definition of $\tau_a$ for $a \in \uni$ from Algorithm~\ref{alg:gen}:
\begin{equation}\label{eq:tau-def}
\tau_a :=
\sqrt{\sum_{\substack{S \in \SS\\S \supseteq \supp(a)}}\frac{p(S)}{|\uni_S|^2}} \,.
\end{equation}
Let $\tilde V$ be the complex matrix with columns indexed by those $a \in \uni$ for which $\tau_a > 0$, and rows indexed by $\uni$, and with entries 
\(
\tilde{V}_{x,a} = \chi_a(x).
\)
Next, let
$\tilde{U}$ be the matrix for which $\tilde{U}\tilde{V}^* = W$. The rows of $\tilde{U}$ are indexed by  $(S,t)$ where $S \in \SS$, and $t \in \uni_S$. The columns are indexed by the same set as the columns of $\tilde{v}$, i.e., by those $a \in \uni$ for which $\tau_a > 0$. The entries of $\tilde{U}$ can be deduced from \eqref{eq:marginals-inv-x}, and are
\[
\tilde{U}_{(S,t),a}=
\begin{cases}
    \frac{1}{|\uni_S|} \chi_a(t) & \supp(a)\subseteq S\\
    0 & \supp(a) \not\subseteq S
\end{cases}.
\]
Now $\tilde{U}\tilde{V}^* = W$ follows from \eqref{eq:marginals-inv-x}. This factorization, however, does not, in general, correspond to Algorithm~\ref{alg:gen} because the noise variance we use for different Fourier queries is different. Instead, let $L$ be a matrix of the same dimensions as $\tilde{U}$, and let $R$ be a matrix of the same dimensions as $\tilde{V}^*$, and define their entries by 
\begin{equation}\label{eq:factorization-gen}
L_{(S,t),a} := \tilde{U}_{(S,t),a}\sqrt{\frac{\sum_{b \in \uni} \tau_b}{\tau_a}}
\hspace{2em}
R_{a,x} := \sqrt{\frac{\tau_a}{\sum_{b \in \uni} \tau_b}} \overline{\tilde{V}_{x,a}}
= \sqrt{\frac{\tau_a}{\sum_{b \in \uni} \tau_b}} (\tilde{V}^*)_{a,x} \,.
\end{equation}
In other words, if we define a diagonal matrix $E$ with rows and columns indexed by $a\in \uni$ such that $\tau_a >0$, and let $E_{a,a} := \frac{\tau_a}{\sum_{b \in \uni} \tau_b}$, then $L := \tilde{U} E^{-1/2}$ and $R := E^{1/2} \tilde{V}^*$. Clearly $LR = \tilde{U}\tilde{V}^* = W$. 

To see the equivalence with Algorithm~\ref{alg:gen}, consider the mechanism that outputs $L(Rh + \tilde{Z})$ for $\tilde{Z} \sim \mathcal{CN}\bigl(0,\tfrac{2}{\mu^2}I\bigr)$.
We can think of this mechanism as first releasing $E^{1/2}(Rh + \tilde{Z}) = \tilde{V}^* h + E^{1/2} \tilde{Z}$, and then multiplying on the left by $\tilde{U}$. 
Now it is immediate that that $\tilde{V}^* h$ gives the answers to the Fourier queries $F_a(D)$, and that $(E^{1/2} \tilde{Z})_a$ is distributed as $Z_a$ in Algorithm~\ref{alg:gen}. Therefore $E^{1/2}(Rh + \tilde{Z})$ is distributed identically to the estimates $\tilde{F}_a$ in Algorithm~\ref{alg:gen}. Finally, multiplying this vector on the left by $\tilde{U}$ and taking real parts gives the estimates $\tilde{q}_{S,t}(D)$. Thus, Algorithm~\ref{alg:gen} corresponds to running a complex version of the factorization mechanism and taking real parts of the result. It is easy to see that this is also equivalent to making the factorization real (as described in Remark~\ref{rem:complex-fact}), and running the usual real-valued factorization mechanism.

The following lemma records, for future reference, the fact that the error bounds from Lemmas~\ref{lem:error-gen}~and~\ref{lem:error-gen-max} are achieved by factorization mechanisms. 

\begin{lemma}\label{lem:factorization-gen}
    For a workload of marginal queries $Q_{\SS}$ over a universe $\uni$ with workload matrix $W$, a weight function $p:\SS \to \R_{\ge 0}$, and a corresponding diagonal matrix $P$ indexed by queries $(S,t)$, $S\in \SS$, $t \in \uni_S$, with entries $P_{(S,t),(S,t)} = \frac{p(S)}{|\uni_S|}$, the factorization $LR = W$ in \eqref{eq:factorization-gen} satisfies
    \[
    \gamma_F(P^{1/2} W) \le \|P^{1/2} L\|_F \|R\|_{1\to 2} 
    = 
    \sum_{R \in \SS_{\downarrow}}\left(\prod_{j \in R}(m_j-1)\right)\sqrt{\sum_{\substack{T \in \SS\\T \supseteq R}}\frac{p(T)}{|\uni_T|^2}} \,.
    \]
    Moreover, for $p^*:\SS\to \R_{\ge 0}$ chosen as in \eqref{eq:pstar}, the factorization satisfies
    \[
    \gamma_2(W) \le \|L\|_{2\to\infty}\|R\|_{1\to 2} 
    = 
    \sum_{R \in \SS_{\downarrow}}\left(\prod_{j \in R}(m_j-1)\right)\sqrt{\sum_{\substack{T\in \SS\\T\supseteq R}}\frac{p^*(T)}{|\uni_T|^2}} \,,
    \]
    where we recall that $\SS_{\downarrow}$ is the collection of sets $R$ which are subsets of $S$ for some $S \in \SS$.
\end{lemma}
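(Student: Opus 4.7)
The plan is to take the factorization $LR = W$ from \eqref{eq:factorization-gen} as given (the identity $LR = W$ was established in the paragraph preceding the lemma, and we may work with complex matrices throughout by Remark~\ref{rem:complex-fact}), and verify both equalities by direct computation of the relevant column/row norms of $L$ and $R$. The stated inequalities $\gamma_F(P^{1/2}W) \le \|P^{1/2}L\|_F\,\|R\|_{1\to 2}$ and $\gamma_2(W) \le \|L\|_{2\to\infty}\,\|R\|_{1\to 2}$ then follow immediately from the definitions. I would begin by observing that $\|R\|_{1\to 2} = 1$: since $|\chi_a(x)| = 1$ for all $a,x$, the squared $\ell_2$ norm of the column of $R$ indexed by $x$ equals $\sum_{a:\,\tau_a > 0} \tau_a/\sum_b \tau_b = 1$.

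Next, I would compute $\|P^{1/2}L\|_F^2$ by expanding it as
\[
\sum_{S \in \SS}\sum_{t \in \uni_S} \frac{p(S)}{|\uni_S|}\sum_{a:\,\supp(a)\subseteq S} \frac{1}{|\uni_S|^2}\cdot\frac{\sum_b \tau_b}{\tau_a}.
\]
The inner sum over $t$ contributes a factor of $|\uni_S|$. After swapping the order of summation between $a$ and $S$, and applying the defining identity $\tau_a^2 = \sum_{T \in \SS,\,T \supseteq \supp(a)}p(T)/|\uni_T|^2$, the whole expression collapses to $\left(\sum_b \tau_b\right)\left(\sum_a \tau_a\right) = \left(\sum_a \tau_a\right)^2$. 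Grouping $a \in \uni$ by support, and using that there are $\prod_{j\in R}(m_j-1)$ choices of $a$ with $\supp(a) = R$, the sum $\sum_a \tau_a$ equals the expression claimed on the right-hand side of the first displayed identity.

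For the $\gamma_2$ bound, the same style of calculation shows that the squared $\ell_2$ norm of the row of $L$ indexed by $(S,t)$ is $\frac{\sum_b \tau_b}{|\uni_S|^2}\sum_{a:\,\supp(a)\subseteq S}\tau_a^{-1}$, which depends only on $S$. The key step is then to invoke Lemma~\ref{lem:pstar-optimality} with $p = p^*$: it asserts that, for any $S \in \SS$ and any $S' \in \SS$ with $p^*(S') > 0$, the quantity $|\uni_S|^{-2}\sum_{a:\,\supp(a)\subseteq S}\tau_a^{-1}$ at $S$ is bounded by the corresponding quantity at $S'$. Hence the maximum over $S \in \SS$ is attained on the support of $p^*$, where (by applying the inequality in both directions) all values must be equal, and so the maximum coincides with the $p^*$-weighted average $\sum_S p^*(S)|\uni_S|^{-2}\sum_{a:\,\supp(a)\subseteq S}\tau_a^{-1}$. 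A further swap of sums, together with $\tau_a^2 = \sum_{T\supseteq \supp(a)}p^*(T)/|\uni_T|^2$, reduces this to $\sum_a \tau_a$; thus $\|L\|_{2\to \infty}^2 = \left(\sum_b \tau_b\right)\left(\sum_a \tau_a\right) = \left(\sum_a \tau_a\right)^2$, recovering the claimed formula after again recollecting $a$ by support.

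The only real obstacle is the connection of the $\gamma_2$ case to Lemma~\ref{lem:pstar-optimality}, which turns a maximum into a $p^*$-weighted average and lets us reuse the same computation as in the $\gamma_F$ case; the rest of the argument is repeated bookkeeping with the defining identity for $\tau_a$.
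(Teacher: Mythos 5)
Your proposal is correct and takes essentially the same approach as the paper: you compute $\|R\|_{1\to 2} = 1$ the same way, and the swap-of-sums collapse of $\|P^{1/2}L\|_F^2$ to $\left(\sum_a \tau_a\right)^2$ via the defining identity $\tau_a^2 = \sum_{T\supseteq\supp(a)}p(T)/|\uni_T|^2$ is precisely the calculation the paper invokes by citing Lemma~\ref{lem:error-gen}. Your treatment of the $\gamma_2$ case via Lemma~\ref{lem:pstar-optimality} --- noting the row norm depends only on $S$, that the maximum is attained and constant on $\supp(p^*)$, and therefore equals the $p^*$-weighted average --- is the same argument the paper compresses into ``$\|L\|_{2\to\infty} = \|(P^*)^{1/2}L\|_F$''.
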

\begin{proof}
    The lemma follows from the calculations in Lemmas~\ref{lem:error-gen}~and~\ref{lem:error-gen-max}. First, observe that, since every entry of $\tilde{V}$ is a root of unity, so has absolute value $1$, the squared $\ell_2$ norm of every column of $R$ is
    \(
    \left(\sum_{a \in \uni} \tau_a\right)/\left({\sum_{b\in\uni} \tau_b}\right) = 1.
    \)
    Therefore, $\|R\|_{1\to 2} = 1$. To compute $\|P^{1/2}L\|_F$, observe that the squared $\ell_2$ norm of the row of $L$ indexed by $(S,t)$ is
    \[
    \frac{\sum_{b\in \uni}\tau_b}{|\uni_S|^2}\sum_{\substack{a \in  \uni\\\supp(a)\subseteq S}}\frac{1}{\tau_a}
    =\frac{\sum_{b\in \uni}\tau_b}{|\uni_S|^2} \sum_{\substack{a \in  \uni\\\supp(a)\subseteq S}} \left(\sum_{\substack{T \in  \SS\\T\supseteq \supp(a) }}\frac{p(T)}{|\uni_T|^2}\right)^{-1/2}.
    \]
    This is exactly $\mu^2\sigma_S^2$, for $\sigma_S^2$ as defined in \eqref{eq:gen-variance}. We now have $\|P^{1/2}L\|_F^2 = \mu^2\sum_{S \in \SS}p(S)\sigma_S^2$, and the result follows as in the proof of Lemma~\ref{lem:error-gen}.

    These observations also hold for $p^*:\SS\to \R_{\ge 0}$ chosen as in \eqref{eq:pstar}: the squared $\ell_2$ norm of the row of $L$ indexed by $(S,t)$ is $\mu^2\sigma_S^2$. Moreover, by Lemma~\ref{lem:pstar-optimality} we know that for any $S \in \SS$ such that $p^*(S) > 0$, we have $\sigma_S = \max_{T \in \SS} \sigma_T$. Therefore, $\|L\|_{2\to\infty} = \|(P^*)^{1/2} L\|_F$, where $P^*$ is defined as $P$ above but with $p^*$ in place of $p$. The result follows from the equation for $\|(P^*)^{1/2} L\|_F$ above.
\end{proof}


\begin{remark}
When we transform the matrices $L$ and $R$ in our factorization into real matrices as in Remark~\ref{rem:complex-fact}, we create some redundant rows of $R$ and columns of $L$. Indeed, for any $a \in \uni$, let $a'$ be defined by $a_j = m_j - a_j \bmod m_j$ for all $j \in [d]$. Then, for any $x \in \uni$, $\chi_a(x) = \overline{\chi_{a'}(x)}$. Writing $R = R_{\R} + i \R_{\C}$ as in Remark~\ref{rem:complex-fact}, we see that the rows of $R_{\R}$ corresponding to $a$ and $a'$ are the same, and the rows of $R_{\C}$ corresponding to $a$ and $a'$ are either both all-zero, or negations of each other. Similar observations can be made about $L_{\R}$ and $L_{\C}$. It is possible to remove these redundancies without changing the quality of the factorization, resulting in smaller matrices. We leave the details to the reader. 
\end{remark}

\subsection{Lower Bounds for Marginals}
\label{sec:lower-bounds-marginals}

Now we are ready prove that the upper bounds on $\gamma_F(P^{1/2}W)$ and $\gamma_2(W)$ in Lemma~\ref{lem:factorization-gen} in fact hold with equality, and, therefore, the factorization $LR = W$ in \eqref{eq:factorization-gen} is optimal. We use the same notation as the previous subsection. Our proof relies on Lemmas~\ref{lm:gammaF-dual}~and~\ref{lm:gamma2-dual}.

We will make extensive use of the following basic identity, which states the classical and easy to check fact that the Fourier basis is orthonormal (with the correct normalization). We have that, for any $a,b \in \uni$,
\begin{equation}\label{eq:Fourier-orth}
    \frac{1}{|\uni|}\sum_{x \in \uni} \chi_a(x) \overline{\chi_b(x)} = 
    \mathbbm{1}{\{a=b\}},
\end{equation}
where $\mathbbm{1}\{a=b\}$ is $1$ if $a=b$ and $0$ otherwise.

\begin{lemma}\label{lem:gammaF-opt}
    For any workload of marginal queries $Q_{\SS}$, and for any weights $p:\SS\to\R_{\ge 0}$, with $W$ and $P$ as in Lemma~\ref{lem:factorization-gen}, the factorization $LR = W$ for $L$ and $R$ defined in \eqref{eq:factorization-gen} satisfies $\gamma_F(P^{1/2}W) = \|P^{1/2}L\|_F \|R\|_{1\to 2}$.

    Similarly, for $p^*:\SS\to \R_{\ge 0}$ chosen as in \eqref{eq:pstar}, the factorization satisfies $\gamma_2(W)=\|L\|_{2\to \infty}\|R\|_{1\to 2}$.
\end{lemma}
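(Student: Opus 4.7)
The plan is to verify the necessary and sufficient optimality conditions given by Lemmas~\ref{lm:gammaF-dual}~and~\ref{lm:gamma2-dual}. Combined with the upper bounds of Lemma~\ref{lem:factorization-gen}, verifying these conditions for the factorization \eqref{eq:factorization-gen} immediately yields equality. For both statements, I would use the same dual matrix $S$: the uniform measure $S_{x,x} = 1/|\uni|$ on columns of $W$. The column-norm condition on $R$ then holds trivially, since every entry of $R$ is $\sqrt{\tau_a/\sum_b \tau_b}$ times a root of unity, so every column of $R$ has squared $\ell_2$-norm $\sum_{a: \tau_a > 0} \tau_a / \sum_b \tau_b = 1 = \|R\|_{1\to 2}^2$.

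The main computation is to show that $L^* P L$ and $R S R^*$ are proportional. For $RSR^*$, the orthogonality identity \eqref{eq:Fourier-orth} directly gives that it is diagonal with $a$-th entry $\tau_a/\sum_b \tau_b$. For $L^* P L$, the $(a,a')$ entry is
\[
\sum_{\substack{S \in \SS\\\supp(a),\supp(a')\subseteq S}} \frac{p(S)}{|\uni_S|^3}\cdot\sqrt{\frac{\sum_b \tau_b}{\tau_a}}\sqrt{\frac{\sum_b \tau_b}{\tau_{a'}}}\sum_{t \in \uni_S}\overline{\chi_a(t)}\chi_{a'}(t).
\]
For each fixed $S \supseteq \supp(a)\cup\supp(a')$, the inner sum factors as a product over $j \in S$ of one-dimensional sums of roots of unity, and evaluates to $|\uni_S|\cdot\mathbbm{1}\{a = a'\}$ (since coordinates outside $S$ of $a$ and $a'$ are zero). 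Hence $L^* P L$ is diagonal, and plugging the definition of $\tau_a$ from \eqref{eq:tau-def} collapses the $a$-th entry to $\tau_a \cdot \sum_b \tau_b$. Therefore $L^*PL = (\sum_b \tau_b)^2 \cdot RSR^*$, which is exactly the proportionality required by Lemma~\ref{lm:gammaF-dual}. This proves the first claim.

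For the $\gamma_2$ statement, I would keep the same $S$, so the column-norm and proportionality conditions remain verified with $P$ replaced by $P^*$. It remains only to check the row-norm condition on $L$: for every $(S,t)$ with $p^*(S) > 0$, the squared $\ell_2$-norm of the $(S,t)$-th row of $L$ must equal $\|L\|_{2\to\infty}^2$. From the calculation in Lemma~\ref{lem:factorization-gen}, this squared row norm is $\mu^2 \sigma_S^2$ with $\sigma_S^2$ as in \eqref{eq:gen-variance}, and this depends only on $S$. Lemma~\ref{lem:pstar-optimality} provides precisely the statement that $\sigma_S^2 = \max_{T \in \SS}\sigma_T^2$ whenever $p^*(S) > 0$, which is the missing condition. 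Applying Lemma~\ref{lm:gamma2-dual} then yields $\gamma_2(W) = \|L\|_{2\to\infty}\|R\|_{1\to 2}$.

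The main obstacle is the careful bookkeeping in the $L^*PL$ calculation: one must verify that when $\supp(a),\supp(a') \subseteq S$, the product $\overline{\chi_a(t)}\chi_{a'}(t)$ only involves coordinates of $t$ in $S$, so that orthogonality can be applied separately along each coordinate of $\uni_S$, and that the diagonal entry then telescopes cleanly against the definition of $\tau_a$. Once this computation is handled, everything else is a direct substitution into the criteria of Lemmas~\ref{lm:gammaF-dual}~and~\ref{lm:gamma2-dual}, and the Fourier-theoretic structure of our factorization makes the proof much simpler than arguments based on symmetrization of the covariance of factorization mechanisms.
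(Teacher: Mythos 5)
Your proof is correct and follows essentially the same route as the paper: you take the uniform dual matrix $S = \frac{1}{|\uni|}I$, verify the column-norm condition on $R$, show $L^*PL$ and $RSR^*$ are diagonal and proportional via Fourier orthogonality \eqref{eq:Fourier-orth}, and for $\gamma_2$ invoke Lemma~\ref{lem:pstar-optimality} to verify the row-norm condition on $L$. The only cosmetic difference is that the paper writes the target identity as $L^*PL = \frac{c}{|\uni|}RR^*$ while you fold the $\frac{1}{|\uni|}$ into $RSR^*$; the computations are identical.
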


\begin{proof}
    We will use Lemma~\ref{lm:gammaF-dual} with $S := \frac{1}{|\uni|}I$. In the proof of Lemma~\ref{lem:factorization-gen} we observed that all columns of $R$ have the same $\ell_2$ norm. It then suffices to check that $L^*PL = \frac{c}{|\uni|} RR^*$ for some real constant $c \ge 0$. 
    
    First we compute $RR^*$. Recall that $E$ is a diagonal matrix with diagonal entries $E_{a,a} := \frac{\tau_a}{\sum_{b \in \uni}\tau_b}$ for all $a \in \uni$ such that $\tau_a > 0$. For any $a$ and $b$ such that $\tau_a > 0$ and $\tau_b > 0$, we have
    \begin{align*}
      (RR^*)_{a,b} = (E^{1/2} \tilde{V}^* \tilde{V} E^{1/2})_{a,b}
      &= \sqrt{E_{a,a}E_{b,b}} (\tilde{V}^*\tilde{V})_{a,b}\\
      &= \sqrt{E_{a,a}E_{b,b}} \sum_{x\in \uni} \chi_b(x) \overline{\chi_a(x)}
      = |\uni|\sqrt{E_{a,a}E_{b,b}} \ \mathbbm{1}\{a=b\}.
    \end{align*}
    The last equality follows from \eqref{eq:Fourier-orth}.
    Therefore, $RR^* = |\uni| E$.

    Now we turn to computing $L^*PL$. For any $a$ and $b$ such that $\tau_a > 0$ and $\tau_b > 0$, we have
    \begin{align*}
      (L^*PL)_{a,b} = (E^{-1/2} \tilde{U}^* P\tilde{U} E^{-1/2})_{a,b}
      &= \frac{1}{\sqrt{E_{a,a}E_{b,b}}} (\tilde{U}^*P\tilde{U})_{a,b}\\
      &= \frac{1}{\sqrt{E_{a,a}E_{b,b}}} \sum_{\substack{S\in \SS\\S \supseteq \supp(a)\cup\supp(b)}}\frac{p(S)}{|\uni_S|^3}\sum_{t \in \uni_S} \chi_b(t) \overline{\chi_a(t)}\\
      &= \left(\frac{1}{\sqrt{E_{a,a}E_{b,b}}} \sum_{\substack{S\in \SS\\S \supseteq \supp(a)\cup\supp(b)}}\frac{p(S)}{|\uni_S|^2}\right)\cdot \mathbbm{1}\{a = b\} .
    \end{align*}
    The last equality again follows from \eqref{eq:Fourier-orth} after observing that, since any $S$ in the outer sum contains both $\supp(a)$ and $\supp(b)$, $\chi_a$ and $\chi_b$ can be interpreted as Fourier characters over $\uni_S$. We then have that $L^*PL$ is a diagonal matrix, and the diagonal entries are
    \begin{align*}
      (L^*PL)_{a,a}  = \frac{1}{E_{a,a}} \sum_{\substack{S\in \SS\\S \supseteq \supp(a)}}\frac{p(S)}{|\uni_S|^2}
      &= \frac{\sum_{b \in \uni} \tau_b}{\tau_a} \sum_{\substack{S\in \SS\\S \supseteq \supp(a)}}\frac{p(S)}{|\uni_S|^2}\\
      &= \left(\sum_{b \in \uni} \tau_b\right) \tau_a 
      = \left(\sum_{b \in \uni} \tau_b\right)^2 E_{a,a}.
    \end{align*}
    In the penultimate equality we used the definition of $\tau_a$ in \eqref{eq:tau-def}.
    In conclusion, \[L^*PL = \left(\sum_{b \in \uni} \tau_b\right)^2 E = \frac{\left(\sum_{b \in \uni} \tau_b\right)^2}{|\uni|} RR^*,\] as we needed to show.

    Let $P^*$ be the diagonal matrix $P$ but with $p^*$ in place of $p$. By Lemma~\ref{lm:gamma2-dual}, and the proof so far, it is enough to check that for any $S \in \SS$ such that $p^*(S) > 0$, and any $t \in \uni_S$, the $\ell_2$ norm of the row of $L$ indexed by $(S,t)$ equals $\|L\|_{2\to\infty}$, i.e., is maximal. The square of this $\ell_2$ norm equals
    \[
    \frac{\sum_{b \in \uni} \tau_b}{|\uni_S|^2} \sum_{\substack{a \in \uni\\\supp(a) \subseteq S}}\frac{1}{\tau_a}
    = \frac{\sum_{b \in \uni} \tau_b}{|\uni_S|^2} \sum_{\substack{a \in \uni\\\supp(a) \subseteq S}}\left(\sum_{\substack{T \in \SS\\T \supseteq \supp(a)}}\frac{p^*(T)}{|\uni_T|^2}\right)^{-1/2}.
    \]
    The maximality of the row norm is now implied by Lemma~\ref{lem:pstar-optimality}.
\end{proof}

\begin{remark}\label{rem:svd}
    The proof of Lemma~\ref{lem:gammaF-opt} gives a singular value decomposition of the matrix $P^{1/2}W$. The matrix of right singular vectors is $V := \frac{1}{\sqrt{|\uni|}}\tilde{V}$. Let $\Sigma$ be a diagonal matrix indexed by those $a$ for which $\tau_a > 0$, and with entries $\Sigma_{a,a} := \tau_a$. Then the matrix of left singular vectors is $U := \tilde{U} \Sigma^{-1}$, and the diagonal entries of $\Sigma$ are the singular values. The singular value decomposition is then $P^{1/2} W = U \Sigma V^*$. Our factorization results from splitting the singular value decomposition into a left matrix $U\Sigma^{1/2}$ and a right matrix $\Sigma^{1/2} V^*$, and normalizing the right matrix to have unit norm columns. 
\end{remark}

The next theorem is our main result giving optimal factorization norms for the workload matrix of a workload of weighted marginal queries. The theorem is a direct consequence of Lemmas~\ref{lem:factorization-gen}~and~\ref{lem:gammaF-opt}.

\begin{theorem}\label{thm:fact-main}
For a workload of marginal queries $Q_{\SS}$ over a universe $\uni$ with workload matrix $W$, a weight function $p:\SS \to \R_{\ge 0}$, and a corresponding diagonal matrix $P$ indexed by queries $(S,t)$, $S\in \SS$, $t \in \uni_S$, with entries $P_{(S,t),(S,t)} = \frac{p(S)}{|\uni_S|}$, the factorization $LR = W$ in \eqref{eq:factorization-gen} satisfies
    \begin{equation}\label{eq:gammaF-opt}
    \gamma_F(P^{1/2} W) = \|P^{1/2} L\|_F \|R\|_{1\to 2} 
    = 
    \sum_{R \in \SS_{\downarrow}}\left(\prod_{j \in R}(m_j-1)\right)\sqrt{\sum_{\substack{T \in \SS\\T \supseteq R}}\frac{p(T)}{|\uni_T|^2}}.
    \end{equation}
    Moreover, for $p^*:\SS\to \R_{\ge 0}$ chosen as in \eqref{eq:pstar}, the factorization satisfies
    \[
    \gamma_2(W) = \|L\|_{2\to\infty}\|R\|_{1\to 2} 
    = 
    \sum_{R \in \SS_{\downarrow}}\left(\prod_{j \in R}(m_j-1)\right)\sqrt{\sum_{\substack{T\in \SS\\T\supseteq R}}\frac{p^*(T)}{|\uni_T|^2}},
    \]
    where we recall that $\SS_{\downarrow}$ is the collection of sets $R$ which are subsets of $S$ for some $S \in \SS$.    
\end{theorem}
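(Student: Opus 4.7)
The theorem follows directly by combining Lemmas~\ref{lem:factorization-gen} and~\ref{lem:gammaF-opt}. Lemma~\ref{lem:factorization-gen} already establishes the upper bounds --- both the inequalities $\gamma_F(P^{1/2}W) \le \|P^{1/2}L\|_F\|R\|_{1\to 2}$ and $\gamma_2(W) \le \|L\|_{2\to\infty}\|R\|_{1\to 2}$, and their explicit closed-form evaluation in terms of $\SS_{\downarrow}$, the $m_j$'s, and $p$ (or $p^*$). Lemma~\ref{lem:gammaF-opt} turns these into equalities by showing that the factorization in \eqref{eq:factorization-gen} is in fact optimal among all factorizations of $W$. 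Thus the proof plan reduces to citing and assembling these two statements.

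If the task were instead to prove the theorem from scratch, the only substantive step would be Lemma~\ref{lem:gammaF-opt}, and I would follow the strategy indicated there. I would invoke the dual characterizations in Lemmas~\ref{lm:gammaF-dual} and~\ref{lm:gamma2-dual} with the right-hand certificate $S := \tfrac{1}{|\uni|} I$, and with $P$ on the left (respectively, a matrix $P^*$ built from $p^*$). I would then verify the three-part sufficient conditions: (i) every column of $R$ has the same $\ell_2$ norm, which holds because every entry of $\tilde V$ is a unit complex number; (ii) for the $\gamma_2$ claim, the rows of $L$ indexed by $(S,t)$ with $p^*(S) > 0$ all attain $\|L\|_{2\to\infty}$, which is exactly the first-order optimality condition established in Lemma~\ref{lem:pstar-optimality}; and (iii) the key identity $L^* P L = c\, R S R^*$ for some $c \ge 0$.

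Condition (iii) is the main calculation. The orthogonality of characters in \eqref{eq:Fourier-orth} collapses $RSR^*$ to a scalar multiple of the diagonal matrix $E$ whose $(a,a)$ entry is $\tau_a / \sum_{b \in \uni} \tau_b$. A parallel expansion decomposes each entry $(L^* P L)_{a,b}$ into an inner sum of the form $\sum_{t \in \uni_S}\chi_b(t)\overline{\chi_a(t)}$ over $S \in \SS$ with $S \supseteq \supp(a) \cup \supp(b)$; since each such $\chi$ can be interpreted as a character on $\uni_S$, orthogonality forces the sum to vanish unless $a = b$. The resulting diagonal entries of $L^* P L$ evaluate, using the definition of $\tau_a$ in \eqref{eq:tau-def}, to $\bigl(\sum_{b \in \uni} \tau_b\bigr)^2 E_{a,a}$, so $L^* P L$ is proportional to $RSR^*$ as required. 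The very definition of $\tau_a$ is engineered precisely to enforce this proportionality.

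The main obstacle is bookkeeping: rows and columns of $\tilde U$, $\tilde V$, $L$, and $R$ are indexed by very different sets (queries $(S,t)$ versus Fourier frequencies $a \in \uni$), and one must carefully interleave the sum over $S \in \SS$ containing a given support with the sum over $t \in \uni_S$ producing character orthogonality. The $\gamma_F$ and $\gamma_2$ arguments differ only in that the latter additionally appeals to Lemma~\ref{lem:pstar-optimality} to equalize the row norms of $L$ on the support of $p^*$; this is the one place where the optimization defining $p^*$ in \eqref{eq:pstar} enters, and it is what forces the $\gamma_2$-optimal weight to be the one that makes $\max_S \sigma_S^2$ coincide with the average weighted by $p^*$.
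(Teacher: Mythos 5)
Your proposal is correct and matches the paper's proof exactly: the theorem is indeed stated as a direct consequence of Lemmas~\ref{lem:factorization-gen} and~\ref{lem:gammaF-opt}, and your sketch of how Lemma~\ref{lem:gammaF-opt} itself would be proved (dual certificate $S=\tfrac{1}{|\uni|}I$, Fourier orthogonality collapsing both $RSR^*$ and $L^*PL$ to diagonal multiples of $E$, and Lemma~\ref{lem:pstar-optimality} to equalize the relevant row norms of $L$ for the $\gamma_2$ case) is precisely the paper's argument.
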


\begin{remark}
     Since in the proof of Lemma~\ref{lem:gammaF-opt} we showed that $L^*PL$ is a constant multiple of $RR^*$, Lemma~\ref{lm:gammaF-dual} implies that the right hand side of \eqref{eq:gammaF-opt} equals the singular value lower bound of $P^{1/2} W$, i.e., the value of $\frac{1}{\sqrt{|\uni|}} \|P^{1/2} W\|_{tr}$. This formula is similar but different from the one given in by McKenna et al.~\cite{McKenna_Miklau_Hay_Machanavajjhala_2023}. We explain the discrepancy in Appendix~\ref{app:duality}.
\end{remark}

Finally, we record as a corollary the optimal factorization norms for the workload of all $k$-way marginal queries when $m_j = m$ for all $j \in [d]$. This is the special cases considered in Algorithm~\ref{alg:allkway}, and the corollary shows the algorithm is optimal among factorization mechanisms for this workload.

\begin{corollary}\label{cor:allkway}
    Let $\SS$ be the set of all subsets of $[d]$ of size $k$, and let $\uni$ be a universe such that $|\uni_j| = m_j = m$ for all $j \in [d].$ Let $W$ be the workload matrix of $Q_{\SS}$. The factorization $LR = W$ in \eqref{eq:factorization-gen} with $p(S) = 1/{d\choose k}$ for all $S \in \SS$ satisfies
    \[
    \frac{1}{\sqrt{m^{k}{d\choose k}}}\gamma_F(W) = \gamma_2(W) = \|L\|_{2\to\infty} \|R\|_{1\to 2} 
    = 
    \frac{1}{m^{k} \sqrt{{d\choose k}} }\sum_{\ell = 0}^k {d \choose \ell} (m - 1)^\ell\sqrt{{d - \ell} \choose {k - \ell}}.
    \]
\end{corollary}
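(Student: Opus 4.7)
\textbf{Proof plan for Corollary~\ref{cor:allkway}.}
The plan is to derive both equalities by specializing Theorem~\ref{thm:fact-main} to the fully symmetric workload and performing the combinatorial count of subsets. First, I would set $p(S) = 1/\binom{d}{k}$ for every $S \in \SS$, in which case the diagonal matrix $P$ from the theorem has all diagonal entries equal to $\frac{1}{m^k\binom{d}{k}}$, so $P = \frac{1}{m^k \binom{d}{k}} I$. In particular $P^{1/2}W = \frac{1}{\sqrt{m^k\binom{d}{k}}} W$, and, since $\gamma_F$ is absolutely homogeneous, $\gamma_F(P^{1/2}W) = \frac{1}{\sqrt{m^k\binom{d}{k}}}\gamma_F(W)$. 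This gives the left-hand equality of the corollary up to a calculation of the theorem's formula.

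Next I would compute that formula. Because $\SS$ consists of all $k$-subsets of $[d]$, the down-closure $\SS_\downarrow$ consists of all subsets of $[d]$ of size at most $k$, and for every $R \in \SS_\downarrow$ of size $\ell$, the number of $T \in \SS$ with $T \supseteq R$ is $\binom{d-\ell}{k-\ell}$. Grouping the outer sum in \eqref{eq:gammaF-opt} by $\ell = |R|$, and using $|\uni_T| = m^k$ and $\prod_{j\in R}(m_j-1) = (m-1)^\ell$, the inner square root evaluates to $\sqrt{\binom{d-\ell}{k-\ell}/(\binom{d}{k} m^{2k})}$, and the whole right-hand side becomes $\frac{1}{m^k\sqrt{\binom{d}{k}}}\sum_{\ell=0}^k \binom{d}{\ell}(m-1)^\ell\sqrt{\binom{d-\ell}{k-\ell}}$, as required.

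For $\gamma_2(W)$, Theorem~\ref{thm:fact-main} expresses the answer through the maximizer $p^*$ of \eqref{eq:pstar}. The key claim is that, under the symmetry assumption $m_j = m$ and $\SS = \binom{[d]}{k}$, the uniform choice $p^*(S) = 1/\binom{d}{k}$ is a maximizer. To see this, observe that the objective in \eqref{eq:pstar} is a sum of square roots of nonnegative linear functions of $p$, hence concave, and that the problem is invariant under the natural action of the symmetric group $\mathrm{Sym}([d])$ on $\SS$: for any permutation $\pi$ and any feasible $p$, the function $p^\pi(S) := p(\pi(S))$ is feasible and gives the same objective value. Therefore, starting from any maximizer $p^*$, the average $\bar p := \frac{1}{d!}\sum_\pi p^{*\pi}$ is feasible, attains value at least that of $p^*$ by concavity of the objective, and is the only $\mathrm{Sym}([d])$-invariant probability distribution on $\SS$, namely $\bar p(S) = 1/\binom{d}{k}$. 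Plugging this uniform $p^*$ into the $\gamma_2$-formula of Theorem~\ref{thm:fact-main} and repeating the combinatorial identities from the previous paragraph yields exactly the claimed closed form, matching $\gamma_F(P^{1/2}W)$ and concluding the corollary.

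The only nontrivial step is the symmetry argument for $\gamma_2$; once that is in place, the remainder is pure counting. A mild technical point is that the averaging argument only requires \emph{some} maximizer to be uniform, which is sufficient because Theorem~\ref{thm:fact-main} takes any $p^* \in \arg\max$. The rest is bookkeeping with $\binom{d}{\ell}$, $(m-1)^\ell$, and $\binom{d-\ell}{k-\ell}$.
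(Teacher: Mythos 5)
Your proposal is correct, and the combinatorial evaluation of the formula from Theorem~\ref{thm:fact-main} matches the paper exactly. The interesting difference is in how you establish that the uniform weight $p(S)=1/\binom{d}{k}$ is a valid choice of $p^*$ for the $\gamma_2$ part. You run a symmetrization argument: the objective in \eqref{eq:pstar} is concave and $\mathrm{Sym}([d])$-invariant (thanks to $m_j\equiv m$ and $\SS=\binom{[d]}{k}$), so averaging any maximizer over the group gives a maximizer, and transitivity of the action on $\SS$ forces that average to be uniform; then you invoke Theorem~\ref{thm:fact-main} as a black box with $p^*$ uniform. The paper instead observes directly that the uniform $p$ satisfies the conclusion of Lemma~\ref{lem:pstar-optimality} (trivially, since by symmetry all the partial derivatives in that lemma coincide), and relies on the fact that those conditions are precisely what the proof of Lemma~\ref{lem:gammaF-opt}, and hence Theorem~\ref{thm:fact-main}, actually uses. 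Both routes are sound. Yours has the advantage of being a clean application of the stated theorem without peeking into its proof; the paper's is shorter but implicitly asks the reader to recall that the first-order conditions in Lemma~\ref{lem:pstar-optimality} are sufficient for optimality by concavity, or equivalently that they are all the proof of Theorem~\ref{thm:fact-main} depends on. You also correctly flag the small but real point that one only needs \emph{some} maximizer to be uniform, which the averaging argument delivers.
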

\begin{proof}
    Follows from Theorem~\ref{thm:fact-main} and the observation that $p(S) = p^*(S) = \frac{1}{{d\choose k}}$ satisfies the conclusion in Lemma~\ref{lem:pstar-optimality}.
\end{proof}

\begin{remark}
    Similarly to Remark~\ref{rem:ellp-algo}, the results in this section can be extended to other factorization norms, corresponding to error measures that interpolate between root mean squared error, and maximum variance. These are the $\gamma_{(p)}$ norms defined in~\cite{NikolovT24}, and studied further in~\cite{LiuUZ24}. We believe our methods extend to these norms, but we do not pursue them further.
\end{remark}

\subsection{Lower Bounds for Product Queries}
\label{sec:lower-bounds-product}


Here we show that Algorithm~\ref{alg:gen-product} describes an optimal factorization mechanism for workloads of product queries.

First we describe how to formulate Algorithm~\ref{alg:gen-product} as a factorization, as we did with Algorithm~\ref{alg:gen}.
Let us fix the functions $\phi_1, \ldots, \phi_d$, a collection $\SS$ of subsets of $[d]$, and a weight function $p:\SS\to \R_{\ge 0}$. 
Let $W$ be, for the rest of this subsection, the workload matrix of $Q^\phi_{\SS}$. Recall the definition of $\tau_a$ for $a \in \uni$ from Algorithm~\ref{alg:gen-product}:
\begin{equation}\label{eq:tau-product-def}
\tau_a :=
\sqrt{\sum_{\substack{S \in \SS\\S \supseteq \supp(a)}}\frac{p(S)\prod_{j \in S}|\widehat{\phi_j}(a_j)|^2}{|\uni_S|^2}} \,.
\end{equation}
Let $\tilde V$ be, as before, the complex matrix with columns indexed by those $a \in \uni$ for which $\tau_a > 0$, and rows indexed by $\uni$, and with entries 
\(
\tilde{V}_{x,a} = \chi_a(x).
\)
The matrix $\tilde{U}$ is such that $\tilde{U}\tilde{V}^* = W$, and can be described explicitly using the formula \eqref{eq:prod-inv-fourier} used in Algorithm~\ref{alg:gen-product}. The rows of $\tilde{U}$ are indexed by  $(S,t)$ where $S \in \SS$, and $t \in \uni_S$, and the columns are indexed by the same set as the columns of $\tilde{V}$. The entries of $\tilde{U}$ are
\[
\tilde{U}_{(S,t),a}=
\begin{cases}
    \frac{1}{|\uni_S|}\left(\prod_{j \in S}\widehat{\phi_j}(a_j)\right) \chi_a(t) & \supp(a)\subseteq S\\
    0 & \supp(a) \not\subseteq S
\end{cases}.
\]
To define the factorization $LR = W$, we take $L$ and $R$ to have the same dimensions as, respectively $\tilde{U}$ and $\tilde{V}^*$, and define their entries by 
\begin{equation}\label{eq:factorization-gen-product}
L_{(S,t),a} := \tilde{U}_{(S,t),a}\sqrt{\frac{\sum_{b \in \uni} \tau_b}{\tau_a}}
\hspace{2em}
R_{a,x} := \sqrt{\frac{\tau_a}{\sum_{b \in \uni} \tau_b}} \overline{\tilde{V}_{x,a}}
= \sqrt{\frac{\tau_a}{\sum_{b \in \uni} \tau_b}} (\tilde{V}^*)_{a,x} \,.
\end{equation}
Once again, we can equivalently write this by defining a diagonal matrix $E$ with rows and columns indexed by $a\in \uni$ such that $\tau_a >0$, and letting $E_{a,a} := \frac{\tau_a}{\sum_{b \in \uni} \tau_b}$; then $L := \tilde{U} E^{-1/2}$, and $R := E^{1/2} \tilde{V}^*$. The equivalence with Algorithm~\ref{alg:gen-product} is easy to check, as we did with Algorithm~\ref{alg:gen}. This construction gives the following upper bounds on factorization norms of the workload matrix $W$, which simply record the error guarantees of Algorithm~\ref{alg:gen-product} in the language of factorization norms. The proof is analogous to the proof of Lemma~\ref{lem:factorization-gen}.

\begin{lemma}\label{lem:factorization-gen-product}
    Let $Q^\phi_{\SS}$ be a workload of product queries over a universe $\uni$, with functions $\phi := (\phi_1, \ldots, \phi_d)$ associated with the attributes, with workload matrix $W$. Let $p:\SS \to \R_{\ge 0}$ be a weight function, with corresponding diagonal matrix $P$ indexed by queries $(S,t)$, $S\in \SS$, $t \in \uni_S$, with entries $P_{(S,t),(S,t)} = \frac{p(S)}{|\uni_S|}$. Then the factorization $LR = W$ in \eqref{eq:factorization-gen-product} satisfies
    \[
    \gamma_F(P^{1/2} W) \le \|P^{1/2} L\|_F \|R\|_{1\to 2} 
    = 
    \sum_{\substack{a \in \uni\\\supp(a) \in \SS_{\downarrow}}}\sqrt{\sum_{\substack{S \in \SS\\S \supseteq \supp(a)}}\frac{p(S)\prod_{j \in S}|\widehat{\phi_j}(a_j)|^2}{|\uni_S|^2}} \,.
    \]
    Moreover, for $p^*:\SS\to \R_{\ge 0}$ chosen as in \eqref{eq:pstar-prod}, the factorization satisfies
    \[
    \gamma_2(W) \le \|L\|_{2\to\infty}\|R\|_{1\to 2} 
    = 
    \sum_{\substack{a \in \uni\\\supp(a) \in \SS_{\downarrow}}}\sqrt{\sum_{\substack{S \in \SS\\S \supseteq \supp(a)}}\frac{p^*(S)\prod_{j \in S}|\widehat{\phi_j}(a_j)|^2}{|\uni_S|^2}} \,,
    \]
    where we recall that $\SS_{\downarrow}$ is the collection of sets $R$ which are subsets of $S$ for some $S \in \SS$.
\end{lemma}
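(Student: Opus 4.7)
The plan is to mirror the structure of the proof of Lemma~\ref{lem:factorization-gen}, which handled the marginal case. The key is that the factorization in \eqref{eq:factorization-gen-product} has the same algebraic form $L = \tilde{U}E^{-1/2}$, $R = E^{1/2}\tilde{V}^*$ as in the marginal case, with the only differences being the extra factor $\prod_{j \in S}\widehat{\phi_j}(a_j)$ inside $\tilde{U}$ and the corresponding extra factor in $\tau_a$. So essentially the same computation goes through, with these extra factors being tracked carefully.

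First, I would compute $\|R\|_{1\to 2}$. Since $|\tilde{V}_{x,a}| = |\chi_a(x)| = 1$ for every $x$ and every $a$, the squared $\ell_2$ norm of the column of $R$ indexed by $x$ equals $\sum_{a:\tau_a>0} \tau_a/\sum_{b\in\uni}\tau_b = 1$, so $\|R\|_{1\to 2} = 1$. Next I would compute the squared $\ell_2$ norm of the row of $L$ indexed by $(S,t)$, obtaining
\[
\frac{\sum_{b \in \uni}\tau_b}{|\uni_S|^2}\sum_{\substack{a \in \uni\\\supp(a)\subseteq S, \tau_a > 0}} \frac{\prod_{j\in S}|\widehat{\phi_j}(a_j)|^2}{\tau_a}.
\]
One verifies (as in the proof of Theorem~\ref{thm:upper-bound-product-avg}) that this equals $\mu^2\sigma_S^2$, where $\sigma_S^2 = \E[(\tilde{q}^\phi_{S,t}(D)-q^\phi_{S,t}(D))^2]$ is the per-query noise variance computed in Section~\ref{sec:weighted-product-queries}. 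Terms with $\tau_a = 0$ but $\supp(a) \subseteq S$ vanish either because the corresponding product of $|\widehat{\phi_j}(a_j)|^2$ is zero (when $S$ itself has vanishing Fourier weight at $a$) or do not affect the value of the relevant query.

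With these two norm computations in hand, the Frobenius-type bound follows directly: $\|P^{1/2}L\|_F^2 = \sum_{S \in \SS} p(S)\sigma_S^2$, which by the calculation already done in Theorem~\ref{thm:upper-bound-product-avg} equals the square of the claimed expression. Since $\|R\|_{1\to 2} = 1$, this yields the stated upper bound on $\gamma_F(P^{1/2}W)$. For the $\gamma_2$ bound, I would instantiate the same factorization with the weight function $p^*$ from \eqref{eq:pstar-prod}. By Lemma~\ref{lem:pstar-prod-optimality}, the row norms $\sigma_S$ attain their common maximum on every $S$ with $p^*(S) > 0$, which forces $\|L\|_{2\to \infty} = \max_S \sigma_S\cdot \mu = \|(P^*)^{1/2}L\|_F \cdot \mu$ (where $P^*$ is $P$ formed from $p^*$). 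The $\gamma_2$ inequality then follows from the $\gamma_F$ computation applied with $p^*$.

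The proof involves no genuine obstacle beyond bookkeeping: the only subtlety is correctly handling indices $a$ with $\supp(a) \subseteq S$ but $\tau_a = 0$, which I would resolve by observing that for such $a$ either $\prod_{j \in S}|\widehat{\phi_j}(a_j)|^2 = 0$ (so they contribute zero to the row norm of $L$), or they could in principle cause $\tilde{U}_{(S,t),a}$ to be nonzero while absent from the strategy — but this only happens when $p$ itself annihilates all relevant $T$ containing $\supp(a)$, in which case the corresponding column of $\tilde{V}$ is simply omitted and the remaining factorization still reproduces $W$ on its support. I would state this carefully at the start of the proof to clean up the indexing.
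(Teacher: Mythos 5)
Your proof takes essentially the same approach as the paper, which itself gives no standalone proof of this lemma and simply remarks that "the proof is analogous to the proof of Lemma~\ref{lem:factorization-gen}"; you carry out exactly that analogy — computing $\|R\|_{1\to 2}=1$ from $|\chi_a(x)|=1$, identifying the squared $\ell_2$ norm of each row of $L$ with $\mu^2\sigma_S^2$ where $\sigma_S^2$ is the per-query noise variance, and invoking Lemma~\ref{lem:pstar-prod-optimality} for the $\gamma_2$ bound, with the discussion of $a$ having $\tau_a=0$ being a reasonable, slightly more explicit version of what the paper handles implicitly. Watch the $\mu$-factor bookkeeping: the correct identities are $\|P^{1/2}L\|_F^2 = \mu^2\sum_{S\in\SS}p(S)\sigma_S^2$ (not $\sum_{S\in\SS}p(S)\sigma_S^2$) and $\|L\|_{2\to\infty}=\|(P^*)^{1/2}L\|_F$ (not $\mu\,\|(P^*)^{1/2}L\|_F$), though these slips are internally consistent and cancel, so the final claimed equalities are unaffected.
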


As with Lemma~\ref{lem:factorization-gen}, the upper bound in Lemma~\ref{lem:factorization-gen-product} holds with equality, as shown in the following lemma, which is analogous to Lemma~\ref{lem:gammaF-opt}.

\begin{lemma}\label{lem:gammaF-opt-product}
    For any workload of product queries $Q^\phi_{\SS}$, and for any weights $p:\SS\to\R_{\ge 0}$, with $W$ and $P$ as in Lemma~\ref{lem:factorization-gen-product}, the factorization $LR = W$ for $L$ and $R$ defined in \eqref{eq:factorization-gen-product} satisfies $\gamma_F(P^{1/2}W) = \|P^{1/2}L\|_F \|R\|_{1\to 2}$.

    Similarly, for $p^*:\SS\to \R_{\ge 0}$ chosen as in \eqref{eq:pstar-prod}, the factorization satisfies $\gamma_2(W)=\|L\|_{2\to \infty}\|R\|_{1\to 2}$.
\end{lemma}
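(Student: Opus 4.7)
The plan is to mirror the proof of Lemma~\ref{lem:gammaF-opt}, applying the tight-case characterizations in Lemmas~\ref{lm:gammaF-dual} and \ref{lm:gamma2-dual} with $S := \tfrac{1}{|\uni|} I$ (and, for the $\gamma_2$ statement, with $P := P^*$ the diagonal matrix of weights corresponding to $p^*$). Since $|\chi_a(x)| = 1$ for all $a,x$, the squared $\ell_2$ norm of every column of $R$ equals $\sum_{a:\tau_a>0} E_{a,a} = 1$, so the equal-column-norm hypothesis is automatic. The bulk of the work is verifying $L^*PL = c\,RR^*$ for some $c \ge 0$, and, for $\gamma_2$, the maximality of the norms of the rows of $L$ indexed by $(S,t)$ with $p^*(S) > 0$.

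For the matrix identity I would first compute $RR^* = E^{1/2}\tilde{V}^*\tilde{V} E^{1/2} = |\uni|E$, exactly as in the marginal case, using the Fourier orthogonality relation \eqref{eq:Fourier-orth}. For $L^*PL$, the new ingredient relative to Lemma~\ref{lem:gammaF-opt} is the appearance of products $\overline{\widehat{\phi_j}(a_j)}\,\widehat{\phi_j}(b_j)$ inside the entries. When $a \neq b$ with $\tau_a,\tau_b > 0$, the inner sum $\sum_{t \in \uni_S} \chi_b(t)\overline{\chi_a(t)}$ still vanishes whenever $\supp(a)\cup\supp(b) \subseteq S$ and $a \neq b$, by the same character-orthogonality argument, so every off-diagonal entry of $L^*PL$ is zero. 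For the diagonal entries, the factors collapse to $|\widehat{\phi_j}(a_j)|^2$ and the defining formula \eqref{eq:tau-product-def} of $\tau_a$ yields
\[
(L^*PL)_{a,a} \;=\; \frac{\sum_{c}\tau_c}{\tau_a}\sum_{\substack{S \in \SS\\ S \supseteq \supp(a)}}\frac{p(S)\prod_{j \in S}|\widehat{\phi_j}(a_j)|^2}{|\uni_S|^2} \;=\; \frac{\tau_a^2 \sum_c \tau_c}{\tau_a} \;=\; \bigl(\textstyle\sum_c \tau_c\bigr)^2 E_{a,a},
\]
which matches $\tfrac{(\sum_c \tau_c)^2}{|\uni|}(RR^*)_{a,a}$. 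Combined with the column-norm condition and the upper bound from Lemma~\ref{lem:factorization-gen-product}, Lemma~\ref{lm:gammaF-dual} then gives $\gamma_F(P^{1/2}W) = \|P^{1/2}L\|_F\|R\|_{1\to 2}$.

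For the $\gamma_2$ part, the same computation gives condition on $L^*P^*L$ for $P^*$ in place of $P$, so the last hypothesis of Lemma~\ref{lm:gamma2-dual} is the maximality of the row norms of $L$ at indices $(S,t)$ with $p^*(S) > 0$. The squared row norm equals $\mu^2\sigma_S^2$ where
\[
\sigma_S^2 \;=\; \frac{\tau}{|\uni_S|^2}\sum_{\substack{a \in \uni\\ \supp(a)\subseteq S}}\frac{\prod_{j \in S}|\widehat{\phi_j}(a_j)|^2}{\tau_a},
\]
and the maximality for $S$ with $p^*(S) > 0$ follows from the first-order optimality conditions for the concave program \eqref{eq:pstar-prod} given in Lemma~\ref{lem:pstar-prod-optimality}, by the same deduction that was used to derive Lemma~\ref{lem:error-product-max}.

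The only real obstacle is bookkeeping: tracking the Fourier coefficients $\widehat{\phi_j}(a_j)$ through the trace computation and confirming that character orthogonality still kills the off-diagonal entries despite these extra factors. Once those cancellations are written out cleanly, the rest of the argument is structurally identical to the marginal case, so I expect the proof to proceed smoothly.
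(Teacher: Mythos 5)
Your proposal is correct and follows essentially the same route as the paper's proof: apply Lemma~\ref{lm:gammaF-dual} (resp.\ Lemma~\ref{lm:gamma2-dual}) with $S=\tfrac{1}{|\uni|}I$, compute $RR^*=|\uni|E$, use character orthogonality to kill the off-diagonal entries of $L^*PL$ despite the $\widehat{\phi_j}$ factors, match the diagonal via the definition of $\tau_a$, and invoke Lemma~\ref{lem:pstar-prod-optimality} for the row-norm maximality in the $\gamma_2$ case. No gaps beyond the level of detail the paper itself leaves implicit.
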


\begin{proof}
    The proof is almost identical to the proof of Lemma~\ref{lem:gammaF-opt}. We again use Lemma~\ref{lm:gammaF-dual} with $S := \frac{1}{|\uni|}I$, and it suffices to check that $L^*PL = \frac{c}{|\uni|} RR^*$ for some real constant $c \ge 0$. Once again $RR^* = |\uni|E$ by the same calculation (despite the definition of $\tau_a$ and therefore $E$ being different).

    Now we turn to computing $L^*PL$. For any $a$ and $b$ such that $\tau_a > 0$ and $\tau_b > 0$, we have
    \begin{align*}
      (L^*PL)_{a,b} 
      &= \frac{1}{\sqrt{E_{a,a}E_{b,b}}} (\tilde{U}^*P\tilde{U})_{a,b}\\
      &= \frac{1}{\sqrt{E_{a,a}E_{b,b}}} \sum_{\substack{S\in \SS\\S \supseteq \supp(a)\cup\supp(b)}}\frac{p(S)\prod_{j \in S} (\widehat{\phi_j}(b_j)\overline{\widehat{\phi_j}(a_j)})}{|\uni_S|^3}\sum_{t \in \uni_S} \chi_b(t) \overline{\chi_a(t)}\\
      &= \left(\frac{1}{\sqrt{E_{a,a}E_{b,b}}} \sum_{\substack{S\in \SS\\S \supseteq \supp(a)\cup\supp(b)}}\frac{p(S)\prod_{j \in S} (\widehat{\phi_j}(b_j)\overline{\widehat{\phi_j}(a_j)})}{|\uni_S|^2}\right)\cdot \mathbbm{1}\{a = b\} .
    \end{align*}
    The last equality again follows from \eqref{eq:Fourier-orth} after identifying $\chi_a$ and $\chi_b$ with Fourier characters over $\uni_S$. We then have that $L^*PL$ is a diagonal matrix, and the diagonal entries are
    \begin{align*}
      (L^*PL)_{a,a}  &= \frac{1}{E_{a,a}} \sum_{\substack{S\in \SS\\S \supseteq \supp(a)}}\frac{p(S)\prod_{j \in S} |\widehat{\phi_j}(a_j)|^2}{|\uni_S|^2}\\
      &= \frac{\sum_{b \in \uni} \tau_b}{\tau_a} \sum_{\substack{S\in \SS\\S \supseteq \supp(a)}}\frac{p(S)\prod_{j \in S} |\widehat{\phi_j}(a_j)|^2}{|\uni_S|^2}\\
      &= \left(\sum_{b \in \uni} \tau_b\right) \tau_a 
      = \left(\sum_{b \in \uni} \tau_b\right)^2 E_{a,a}.
    \end{align*}
    Thus, \[L^*PL = \left(\sum_{b \in \uni} \tau_b\right)^2 E = \frac{\left(\sum_{b \in \uni} \tau_b\right)^2}{|\uni|} RR^*,\] as we needed to show.

    Let $P^*$ be the diagonal matrix $P$ but with $p^*$ in place of $p$. By Lemma~\ref{lm:gamma2-dual}, and the proof so far, it is enough to check that for any $S \in \SS$ such that $p^*(S) > 0$, and any $t \in \uni_S$, the $\ell_2$ norm of the row of $L$ indexed by $(S,t)$ equals $\|L\|_{2\to\infty}$, i.e., is maximal. This is verified analogously to Lemma~\ref{lem:gammaF-opt}, using Lemma~\ref{lem:pstar-prod-optimality}.
\end{proof}

The next theorem is a direct consequence of Lemmas~\ref{lem:factorization-gen-product}~and~\ref{lem:gammaF-opt-product}. It gives optimal factorization norms for a workload of weighted product queries, and shows optimality of Algorithm~\ref{alg:gen-product} among all factorization mechanisms.

\begin{theorem}\label{thm:fact-main-product}
Let $Q^\phi_{\SS}$ be a workload of product queries over a universe $\uni$, with functions $\phi:=(\phi_1, \ldots, \phi_d)$ associated with the attributes, and with workload matrix $W$. Let $p:\SS \to \R_{\ge 0}$ be a weight function, and define the corresponding diagonal matrix $P$ indexed by queries $(S,t)$, $S\in \SS$, $t \in \uni_S$, with entries $P_{(S,t),(S,t)} = \frac{p(S)}{|\uni_S|}$. Then the factorization $LR = W$ in \eqref{eq:factorization-gen} satisfies
    \begin{equation}\label{eq:gammaF-opt-product}
    \gamma_F(P^{1/2} W) = \|P^{1/2} L\|_F \|R\|_{1\to 2} 
    = 
    \sum_{\substack{a \in \uni\\\supp(a) \in \SS_{\downarrow}}}\sqrt{\sum_{\substack{S \in \SS\\S \supseteq \supp(a)}}\frac{p(S)\prod_{j \in S}|\widehat{\phi_j}(a_j)|^2}{|\uni_S|^2}}\, ,
    \end{equation}
    where we recall that $\SS_{\downarrow}$ is the collection of sets $R$ which are subsets of $S$ for some $S \in \SS$.
    
    Moreover, for $p^*:\SS\to \R_{\ge 0}$ chosen as in \eqref{eq:pstar-prod}, the factorization satisfies
    \[
    \gamma_2(W) = \|L\|_{2\to\infty}\|R\|_{1\to 2} 
    = 
    \sum_{\substack{a \in \uni\\\supp(a) \in \SS_{\downarrow}}}\sqrt{\sum_{\substack{S \in \SS\\S \supseteq \supp(a)}}\frac{p^*(S)\prod_{j \in S}|\widehat{\phi_j}(a_j)|^2}{|\uni_S|^2}}\, .
    \]    
\end{theorem}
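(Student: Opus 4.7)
The plan is to derive the theorem as an immediate consequence of the two preceding lemmas by combining the upper bound (achievability) from Lemma~\ref{lem:factorization-gen-product} with the matching lower bound (optimality) from Lemma~\ref{lem:gammaF-opt-product}. Concretely, Lemma~\ref{lem:factorization-gen-product} already gives me that the explicit factorization $LR=W$ defined in \eqref{eq:factorization-gen-product} produces the claimed values of $\|P^{1/2}L\|_F\|R\|_{1\to 2}$ and $\|L\|_{2\to\infty}\|R\|_{1\to 2}$, and Lemma~\ref{lem:gammaF-opt-product} asserts that these values coincide with $\gamma_F(P^{1/2}W)$ and $\gamma_2(W)$ respectively. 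Putting these two together yields the two displayed equalities in the theorem. So the body of the proof really amounts to invoking these two lemmas and noting that the resulting explicit formulas are exactly those of the theorem statement.

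Since the real content lives in Lemma~\ref{lem:gammaF-opt-product}, let me briefly describe the strategy underlying it, which is the analog of the argument for marginals in Lemma~\ref{lem:gammaF-opt}. The approach is to invoke the sufficient conditions for tightness in Lemmas~\ref{lm:gammaF-dual} and \ref{lm:gamma2-dual} with the test diagonal matrix $S := \tfrac{1}{|\uni|}I$. With $R = E^{1/2}\tilde{V}^*$ where $E_{a,a} = \tau_a/\sum_b\tau_b$, the Fourier orthogonality identity \eqref{eq:Fourier-orth} immediately gives $RR^* = |\uni|E$, which also shows all columns of $R$ have equal norm. The next step is to compute $L^*PL$ using $L = \tilde{U}E^{-1/2}$ and the explicit entries of $\tilde{U}$; Fourier orthogonality collapses the sum over $t \in \uni_S$ to the diagonal $a=b$, and the diagonal entry simplifies, by the very definition of $\tau_a$ in \eqref{eq:tau-product-def}, to $(\sum_b\tau_b)^2 E_{a,a}$. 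This establishes $L^*PL = c\,RR^*/|\uni|$ for $c = (\sum_b\tau_b)^2$, verifying the tightness criterion of Lemma~\ref{lm:gammaF-dual} and proving the $\gamma_F$ identity.

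For the $\gamma_2$ part, with $p^*$ chosen as in \eqref{eq:pstar-prod}, I would use Lemma~\ref{lm:gamma2-dual} with the same $S$ and with $P$ replaced by the $p^*$-weighted diagonal. The computations of $RR^*$ and $L^*P^*L$ go through verbatim, so the only additional condition to verify is that all rows of $L$ indexed by $(S,t)$ with $p^*(S)>0$ achieve the maximum row $\ell_2$ norm. This is exactly the content of the optimality characterization in Lemma~\ref{lem:pstar-prod-optimality}: the squared norm of the row of $L$ indexed by $(S,t)$ is proportional to the left-hand side of the inequality in that lemma, so the lemma guarantees equality (i.e., maximum) whenever $p^*(S)>0$. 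Thus the tightness conditions of Lemma~\ref{lm:gamma2-dual} are met and $\gamma_2(W) = \|L\|_{2\to\infty}\|R\|_{1\to 2}$ as claimed.

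The main obstacle is not technical but notational: the generalization from $\chi_a(t)$ in the marginal case to $(\prod_{j\in S}\widehat{\phi_j}(a_j))\chi_a(t)$ in the product-query case introduces extra factors in $\tilde{U}$, and one must carefully track how these interact with $P$, $\tilde{V}^*$, and the definition of $\tau_a$. The key observation is that these additional factors appear in a conjugate-pair pattern $\widehat{\phi_j}(b_j)\overline{\widehat{\phi_j}(a_j)}$ in $\tilde{U}^*P\tilde{U}$, which becomes $|\widehat{\phi_j}(a_j)|^2$ precisely on the diagonal $a=b$ selected by Fourier orthogonality; this is exactly the squared weight that appears in the definition of $\tau_a$ in \eqref{eq:tau-product-def}, making the identity $L^*PL = c\,RR^*/|\uni|$ fall out cleanly without requiring new ideas beyond those already used for plain marginals.
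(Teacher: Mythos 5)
Your proposal is correct and takes essentially the same approach as the paper: the theorem is indeed a direct consequence of Lemma~\ref{lem:factorization-gen-product} (the achievability half) and Lemma~\ref{lem:gammaF-opt-product} (the tightness half), and your sketch of the latter—applying the optimality criteria of Lemmas~\ref{lm:gammaF-dual} and \ref{lm:gamma2-dual} with $S=\frac{1}{|\uni|}I$, computing $RR^*=|\uni|E$ and $L^*PL=(\sum_b\tau_b)^2E$ via Fourier orthogonality, and invoking Lemma~\ref{lem:pstar-prod-optimality} for the row-norm condition in the $\gamma_2$ case—matches the paper's proof of that lemma. One small imprecision: the equal column norms of $R$ do not follow from the identity $RR^*=|\uni|E$ (which constrains the rows of $R$); rather, they follow from the fact that every entry of $\tilde V$ is a root of unity and hence has modulus one, so the $x$-th column of $R$ has squared norm $\sum_a E_{a,a}=1$ independently of $x$.
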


\subsection{Lower Bounds for Extended Marginals}
\label{sec:lower-bound-extended-marginals}

Our goal in this subsection is to show that the upper bound in Theorem~\ref{thm:upper-bound-pref-avg} is \emph{nearly} optimal among all factorization mechanisms. We do so by proving a lower bound on the trace norm $\|P^{1/2}W\|_{tr}$ of the weighted workload matrix $P^{1/2}W$ of a workload of extended marginals. 

In the special case where $d=1$, and the only attribute is numerical, the workload is just the workload of prefix queries. The corresponding workload matrix is lower triangular, and has 1's on and below the main diagonal. The singular value decomposition (SVD) of this matrix is well known, see~\cite{matouvsek2020factorization}. 
Unfortunately, we are not able to give an explicit description of the singular value decomposition of $W$ when we have a mix of numerical and categorical attributes. A technical hurdle is that the all-ones vector is not a singular vector of the lower triangular  matrix mentioned above.

To get around this issue, we instead prove a lower bound on $\|P^{1/2}W\|_{tr}$ using the fact that $\|W\|_{tr} \ge \frac{|\tr(P^{1/2}WY^*)|}{\|Y\|_{op}}$ for any matrix $Y$ of the same dimensions as $W$ (Lemma~\ref{lm:trace-dual} in Appendix~\ref{app:duality}). To construct such a ``test matrix'' $Y$, we take inspiration from the proof of Theorem~\ref{thm:fact-main-product}. Note that, if the SVD of $P^{1/2}W$ is $P^{1/2}W = U\Sigma V^*$, then $UV^*$ is an optimal choice of $Y$, i.e., choosing $Y := UV^*$, we have $\tr(P^{1/2}WY^*) = \|P^{1/2}W\|_{tr}$ and $\|Y\|_{op} = 1$. While we do not know closed forms for the matrices $U$ and $V$ for extended marginals, we ``pretend'' $W$ is defined by product queries, and construct $U$ and $V$ analogously to the construction of $\tilde{U}$ and $\tilde{V}$ in Section~\ref{sec:lower-bounds-product}.

Before presenting our lower bound, let us illustrate this idea with the case of $d=1$ and a single numerical attribute, i.e., a workload of prefix queries on a universe of size $m$. Then the workload matrix $W$ is an $m\times m$ Toeplitz lower triangular matrix, with 1's on and below the main diagonal. If this matrix were circulant, then its left and right singular vectors would be proportional (up to a phase shift) to the Fourier basis vectors $v_a := \frac{1}{\sqrt{m}}(1, \omega_m^{a}, \omega_m^{2a}, \ldots, \omega_m^{(m-1)a})$, $a \in \{0, \ldots, m-1\}$. While $W$ is not circulant, we can still use the Fourier basis vectors to construct a matrix $Y$ such that $\tr(WY^*) \approx \|W\|_{tr}$.  In particular, it is not hard to calculate that 
\[
v_a^* W v_a = 
\begin{cases}
    \frac{m+1}{2} &a = 0\\
    \frac{1}{1-\omega_m^{-a}} & a\neq 0
\end{cases}.
\]
Using this observation, we can define an $m \times m$ matrix $V$ which columns the Fourier basis vectors $v_0, \ldots v_{m-1}$, and an $m\times m$ matrix $U := VE$, where $E$ is a diagonal matrix with entries 
\[
E_{a,a} := 
\begin{cases}
    1 & a = 0\\
    \frac{1-\omega_m^{a}}{|1-\omega_m^{-a}|} &a \neq 0
\end{cases}.
\]
This matrix is defined exactly so that $\overline{E_{a,a}} v_a^* W v_a = |v_a^* W v_a|$.
Then we define the matrix $Y := UV^* = VEV^*$, which clearly satisfies that $\|Y\|_{op} = 1$. We have
\begin{align*}
\|W\|_{tr} \ge |\tr(WY^*)|
&= 
|\tr(WVE^*V^*)|\\
&= |\tr(E^*V^*WV)|
= \left|\sum_{a = 0}^{m-1} \overline{E_{a,a}}\,v_a^* W v_a\right|
= \frac{m+1}{2} + \sum_{a = 1}^{m-1} \frac{1}{|1-\omega_m^{-a}|}.
\end{align*}
The right hand side is $\frac{m+1}{2} + \sum_{a = 1}^{m-1} \frac{1}{\sin\left(\frac{\pi a}{m}\right)}$, which can be shown to be within an additive constant of $\|W\|_{tr}$ (see, again, \cite{matouvsek2020factorization}).

We now extend this approach to arbitrary workloads of extended marginals. Let us first recall the notation for extended marginals from Section~\ref{sec:product-apps}. The universe is $\uni := \uni_1\times \ldots \times \uni_d$, and the $d$ attributes are partitioned into $C$ and $N$, where $C$ are the categorical attributes, and $N$ are the numerical ones. We consider the workload $Q_{\SS}^{\text{pref}}$ of extended marginal queries defined by a family of subsets $\SS$ of $[d] = C \cup N$. The corresponding workload matrix $W$ has rows indexed by pairs $(S,t)$ for $S \in \SS$ and $t \in \uni_S$, and columns indexed by $\uni$. The entries of the workload matrix are 
\(
W_{(S,t), x} = q_{S,t}(x),
\)
with $q_{S,t}(x)$ as defined in  \eqref{eq:marg-prefix}. We also fix a weight functions $p:\SS \to \R_{\ge 0}$, and define the corresponding diagonal matrix $P$  indexed by pairs $(S,t)$ for $S \in \SS$, and $t \in \uni_S$, with entries $P_{(S,t), (S,t)} := \frac{p(S)}{|\uni_S|}$.  We focus on lower bounds for factorization mechanisms for releasing $Q_{\SS}^{\text{pref}}$, but an analogous argument would apply to $Q_{\SS}^{\text{pr-suf}}$.

We  define a test matrix $Y$ such that $\|Y\|_{op} = 1$, and compute $\tr(P^{1/2} WY^*)$, which is a lower bound on $\|P^{1/2}W\|_{tr}$. First we define a few auxiliary matrices. Let $A\subseteq \uni$ be the set of all $a \in \uni$ such that $\supp(a) \in \SS_{p,\downarrow}$, i.e., such that $\supp(a)$ is contained in some $S \in \SS$ with $p(S) > 0$. As in our other lower bound arguments, let $\tilde{V}$ be a complex matrix with columns indexed by $A$ and rows indexed by $\uni$, and with entries $\tilde{V}_{x,a} := \chi_a(x)$. We also define another complex matrix $\tilde{U}$, with rows indexed by pairs $(S,t)$ for $S \in \SS$, and $t \in \uni_S$, and columns indexed by $A$. Let us define the functions $f_j:\uni_j \to \C$ for every $j \in N$ by
\[
f_j(a) := \begin{cases}
    \frac{m_j+1}{2} & a = 0\\[0.5em]
    \frac{1}{1-\omega_{m_j}^{-a}} &a\neq 0
\end{cases}.
\]
We define the entries of $\tilde{U}$ as
\[
\tilde{U}_{(S,t),a}:= 
\begin{cases}
    \frac{1}{|\uni_S|} \left(\prod_{\substack{j \in S \cap N}}f_j(a_j)\right) \chi_{a}(t)&\supp(a) \subseteq S\\
    0 &\text{otherwise}
\end{cases}.
\]
This choice of $\tilde{U}$ is analogous to the definition in Section~\ref{sec:lower-bounds-product}. In particular, if we had a workload of product queries on $\uni$ for which $\phi_j(z) := \mathbbm{1}\{z = 0\}$ for $j \in C$, and for $j \in N$ the function $\phi_j$ were such that $\widehat{\phi_j}(a) = f_j(a)$, then this would be the same $\tilde{U}$. While this workload of product queries is not same as the workload of extended marginals, it turns out that the same definition of $\tilde U$ works for our purposes. 

Next we normalize $\tilde{U}$ and $\tilde V$ so that they have orthonormal columns. We define $V := \frac{1}{\sqrt{|\uni|}}\tilde{V}$. We also define
\[
\kappa_a:= \sqrt{\sum_{\substack{S \in \SS\\t\in \uni_S}}\frac{p(S)}{|\uni_S|}|\tilde{U}_{(S,t),a}|^2}
= \sqrt{\sum_{\substack{S \in \SS\\S \supseteq \supp(a)}}\frac{p(S)\prod_{j \in S\cap N}|f_j(a_j)|^2}{|\uni_S|^2}}.
\]
We then define $U_{(S,t),a}:= \frac{\tilde{U}_{(S,t),a}}{\kappa_a}$. It is now easy to check that $V$ and $P^{1/2}U$ have orthonormal columns. Our test matrix is  $Y := P^{1/2} UV^*$. Our lower bound, derived from computing $\tr(P^{1/2}WY^*)$, is given by the following theorem.


\begin{theorem}\label{thm:ext-marginals-lb}
Let $Q^{\text{pref}}_{\SS}$ be a workload of extended marginal queries over a universe $\uni := \uni_1\times \ldots \times \uni_d$, where $[d]$ is partitioned in categorical attributes $C$, and numerical attributes $N$. Let $p:\SS \to \R_{\ge 0}$ be a weight function, and define the corresponding diagonal matrix $P$ indexed by queries $(S,t)$, $S\in \SS$, $t \in \uni_S$, with entries $P_{(S,t),(S,t)} = \frac{p(S)}{|\uni_S|}$. Then 
    \begin{equation*}
    \gamma_F(P^{1/2} W) \ge
    \sum_{\substack{R\subseteq C\\O \subseteq N}}
    \left(\prod_{j \in R}(m_j-1)\right)\cdot\left(\prod_{j \in O}\zeta(m_j)\right)
    \sqrt{\sum_{\substack{S \in \SS\\S\supseteq R \cup O}}\frac{p(S)\prod_{j \in (S\cap N)\setminus O} \left(1+\frac{1}{m_j}\right)^2}{|\uni_{S\cap C}|^2 \cdot 4^{|S\cap N|}}},
    \end{equation*}
    where $\zeta(m):= \frac{1}{m}\sum_{\ell=1}^{m-1}\frac{1}{\sin\left(\frac{\pi \ell}{m}\right)}.$
\end{theorem}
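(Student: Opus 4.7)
The plan is to apply Lemma~\ref{lm:gammaF-dual} with $S := \frac{1}{|\uni|}I$, which reduces the problem to lower bounding the trace norm: $\gamma_F(P^{1/2}W) \ge \frac{1}{\sqrt{|\uni|}}\|P^{1/2}W\|_{tr}$. Since I cannot explicitly diagonalize $P^{1/2}W$, I use the trace-operator norm duality (Lemma~\ref{lm:trace-dual}): $\|P^{1/2}W\|_{tr} \ge |\tr(P^{1/2}WY^*)|/\|Y\|_{op}$ for any test matrix $Y$ of the same dimensions, and take $Y := P^{1/2}UV^*$ as defined just before the theorem. To show $\|Y\|_{op} = 1$, I will verify $V^*V = I$ and $U^*PU = I$. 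The first is immediate from the Fourier orthogonality relation \eqref{eq:Fourier-orth}. The second is a direct computation analogous to the one in the proof of Lemma~\ref{lem:gammaF-opt}: off-diagonal entries vanish because $\sum_{t \in \uni_S}\overline{\chi_a(t)}\chi_b(t) = |\uni_S|\,\mathbbm{1}\{a=b\}$ whenever $\supp(a),\supp(b)\subseteq S$, and the definition of $\kappa_a$ is precisely what normalizes each diagonal entry to $1$. Then $Y^*Y = VV^*$ is a projection, hence $\|Y\|_{op} = 1$.

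The next step is to compute $\tr(P^{1/2}WY^*) = \tr(U^*PWV)$ by the cyclic property. Evaluating the diagonal entry $(U^*PWV)_{a,a}$ requires the quantity $\sum_{x\in\uni}q_{S,t}(x)\chi_a(x)$, which vanishes unless $\supp(a)\subseteq S$; in that case it factors attribute-wise as $|\uni_{[d]\setminus S}|\prod_{j\in S\cap C}\omega_{m_j}^{a_j t_j}\prod_{j\in S\cap N}g_j(a_j,t_j)$, where $g_j(a_j,t_j) := \sum_{x_j=0}^{t_j}\omega_{m_j}^{a_j x_j}$. Summing over $t\in \uni_S$, the categorical Fourier characters cancel against those in $\overline{U_{(S,t),a}}$ exactly as in the proof of Lemma~\ref{lem:gammaF-opt}, producing a factor $|\uni_{S\cap C}|$, and the numerical part collapses to $h_j(a_j) := \sum_{t_j=0}^{m_j-1}\omega_{m_j}^{-a_j t_j}g_j(a_j,t_j)$.

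The technical heart of the argument is the identity $h_j(a_j) = m_j f_j(a_j)$ for every $a_j \in \uni_j$: for $a_j=0$ this is $m_j(m_j+1)/2$ from $\sum_{k=1}^{m_j}k$, and for $a_j\ne 0$ it follows by swapping the order of summation in $h_j(a_j)$ and using $\sum_{x_j=0}^{m_j-1}\omega_{m_j}^{a_j x_j}=0$. This is exactly what makes the test matrix $Y$ -- built from the ``would-be'' Fourier coefficients $f_j$ of the prefix indicator -- sharp: it converts every factor $\overline{f_j(a_j)}\cdot f_j(a_j)$ into $|f_j(a_j)|^2$, so that $(U^*PWV)_{a,a} = \sqrt{|\uni|}\kappa_a$, and hence $\tr(P^{1/2}WY^*) = \sqrt{|\uni|}\sum_{a\in A}\kappa_a$.

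It remains to rearrange $\sum_{a\in A}\kappa_a$ into the stated form. I group $a$ by $R := \supp(a)\cap C$ and $O := \supp(a)\cap N$. Using $|f_j(0)|^2 = (m_j+1)^2/4$ and, for $a_j\ne 0$, $|f_j(a_j)|^2 = 1/(4\sin^2(\pi a_j/m_j))$, together with the factorization $|\uni_S|^2 = |\uni_{S\cap C}|^2\prod_{j\in S\cap N}m_j^2$, the purely $O$-dependent factor $\prod_{j\in O}\frac{1}{m_j|\sin(\pi a_j/m_j)|}$ can be pulled out of the square root. Summing over the $m_j-1$ nonzero choices of $a_j$ for $j\in R$ yields $\prod_{j\in R}(m_j-1)$, and summing the pulled-out factor over the nonzero choices for $j\in O$ yields $\prod_{j\in O}\zeta(m_j)$, producing the stated bound. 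I expect the main obstacle to be the clean identity $h_j(a_j) = m_j f_j(a_j)$: without it the ``pretend product query'' construction of $Y$ would not align exactly with the true prefix workload, and one would either get a loose bound or have to hunt for a different test matrix.
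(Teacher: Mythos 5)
Your proposal is correct and follows essentially the same route as the paper's proof: the dual pair Lemma~\ref{lm:gammaF-dual} (with $S=\frac{1}{|\uni|}I$) and Lemma~\ref{lm:trace-dual}, the same test matrix $Y=P^{1/2}UV^*$ certified via $V^*V=I$ and $U^*PU=I$, the same key identity $\sum_{t_j}\omega_{m_j}^{-a_jt_j}f_{j,t_j}(a_j)=m_jf_j(a_j)$ giving $(U^*PWV)_{a,a}=\sqrt{|\uni|}\,\kappa_a$, and the same regrouping over $R\subseteq C$, $O\subseteq N$ to obtain the $\prod_{j\in R}(m_j-1)$ and $\prod_{j\in O}\zeta(m_j)$ factors. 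The only cosmetic differences (proving the key identity by swapping the order of summation rather than via the geometric-series closed form, and asserting $\|Y\|_{op}=1$ where only $\le 1$ is needed) do not affect correctness.
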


Before we prove the theorem, let us see how it implies that our factorizations for extended marginals are nearly optimal. Notice that, since $1+\frac{1}{m_j} > 1$ for every $j$, the lower bound in Theorem~\ref{thm:ext-marginals-lb} is at least as strong as 
\[
\gamma_F(P^{1/2}W)
        \ge
        \sum_{\substack{R\subseteq C\\O \subseteq N}}\left(\prod_{j \in R} (m_j-1)\right)\cdot\left(\prod_{j \in O}\zeta(m_j)\right)
        \sqrt{\sum_{\substack{S\in \SS\\S\supseteq R\cup O}}
        \frac{p(S)}{|\uni_{S\cap C}|^2 \cdot4^{|S\cap N|}}}.
\]
Recall also that $\gamma_2(W) \ge \gamma_F(P^{1/2}W)$ for any weight function $p$ and the corresponding diagonal matrix $P$.

On the other  hand, we can use Lemma~\ref{lem:factorization-gen-product} and the calculations in the proof of Theorem~\ref{thm:upper-bound-pref-avg} to get the upper bound 
\[
\gamma_F(P^{1/2}W)
        \le
        \sum_{\substack{R\subseteq C\\O \subseteq N}}\left(\prod_{j \in R} (m_j-1)\right)\cdot\left(\prod_{j \in O}\eta(m_j)\right)
        \sqrt{\sum_{\substack{S\in \SS\\S\supseteq R\cup O}}
        \frac{p(S)}{|\uni_{S\cap C}|^2 \cdot4^{|S\cap N|}}}.
\]
Moreover, there is a weight function $p^*$ for which 
\[
\gamma_2(W) \le
        \sum_{\substack{R\subseteq C\\O \subseteq N}}\left(\prod_{j \in R} (m_j-1)\right)\cdot\left(\prod_{j \in O}\eta(m_j)\right)
        \sqrt{\sum_{\substack{S\in \SS\\S\supseteq R\cup O}}
        \frac{p^*(S)}{|\uni_{S\cap C}|^2 \cdot4^{|S\cap N|}}}.
\]

The upper and lower bounds differ only in that $\eta(m):= \frac1m \sum_{\ell=1}^m \frac{1}{\sin\left(\frac{\pi (2\ell-1)}{2m}\right)}$ is replaced by 
\(
\zeta(m):= \frac{1}{m}\sum_{a = 1}^{m-1}\frac{1}{\sin\left(\frac{\pi a}{m}\right)}
= \frac1m \sum_{\ell=1}^{m-1} \frac{1}{\sin\left(\frac{\pi\cdot 2\ell}{2m}\right)}.
\)
These two functions are always within an additive constant of each other, and are both asymptotically equal to $\frac{1}{\pi}\ln(m) + O(1)$. See Figure~\ref{fig:zeta-vs-eta}. This means that the upper and lower bounds on $\gamma_F(P^{1/2}W)$ and $\gamma_2(W)$ are equal up to a multiplicative factor of $1 - o(1)$ as $\min_{j \in N} m_j \to \infty$ with $\max_{S \in \SS}|S|$ fixed. This regime of large domain size for the numerical attributes is arguably the most natural one.
\begin{figure}[tp]
    \centering
    \begin{minipage}{0.48\textwidth}
        \includegraphics[width=1\linewidth]{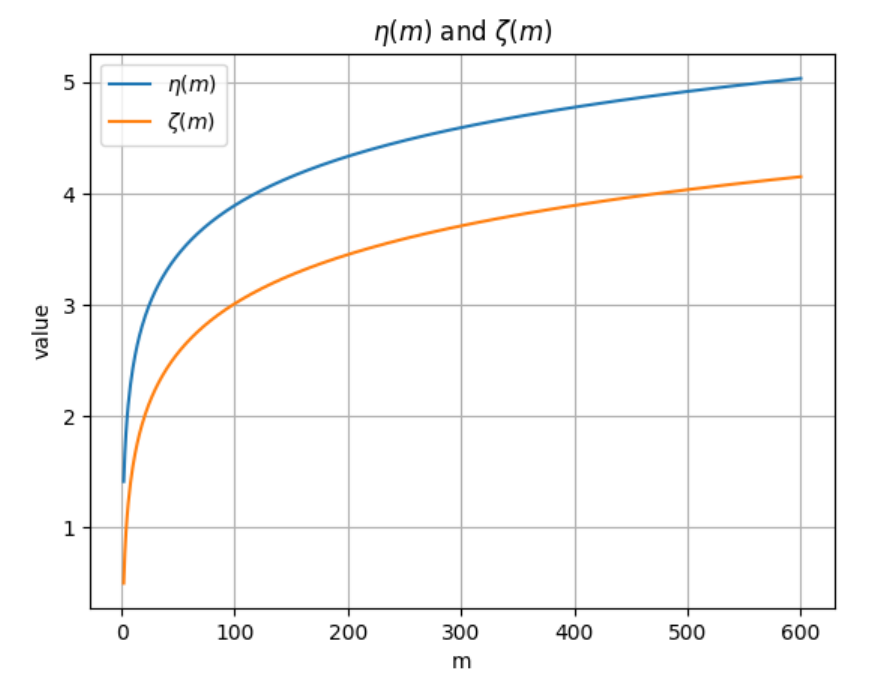}
    \end{minipage}
    \begin{minipage}{0.48\textwidth}
        \includegraphics[width=1\linewidth]{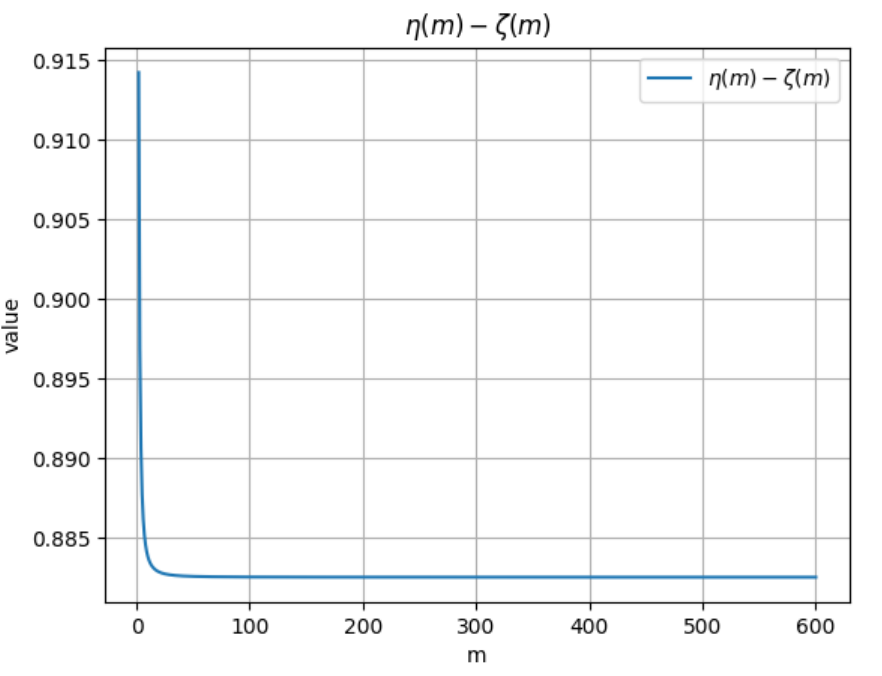}
    \end{minipage}
    \caption{The functions $\eta(m)$ and $\zeta(m)$, as well as their difference $\eta(m)-\zeta(m)$.}
    \label{fig:zeta-vs-eta}
\end{figure}

\begin{proof}[Proof of Theorem~\ref{thm:ext-marginals-lb}]
    We define $Y := P^{1/2} UV^*$ with $U$ and $V$ as defined above. First we claim that $\|P^{1/2}U\|_{op} = \|V\|_{op} = 1$, and, therefore, $\|Y\|_{op} \le 1$. We have
    \[
    (V^* V)_{a,b} = \frac{1}{|\uni|} (\tilde{V}^* \tilde{V})_{a,b}=\frac{1}{|\uni|}\sum_{x \in \uni}\chi_b(x)\overline{\chi_a(x)} = \mathbbm{1}\{a = b\}, 
    \]
    by \eqref{eq:Fourier-orth}. Therefore, $V^* V$ is the identity, and all singular values of $V$ are $1$, which implies $\|V\|_{op} \le 1$. For $P^{1/2} U$, we have
    \begin{align*}
      (U^*PU)_{a,b} 
      &= \frac{1}{\kappa_a\kappa_b} (\tilde{U}^*P\tilde{U})_{a,b}\\
      &= \frac{1}{\kappa_a\kappa_b} \sum_{\substack{S\in \SS\\S \supseteq \supp(a)\cup\supp(b)}}\frac{p(S)\prod_{j \in S\cap N} (f_j(b_j)\overline{f_j(a_j)})}{|\uni_S|^3}\sum_{t \in \uni_S} \chi_b(t) \overline{\chi_a(t)}\\
      &= \left(\frac{1}{\kappa_a\kappa_b} \sum_{\substack{S\in \SS\\S \supseteq \supp(a)\cup\supp(b)}}\frac{p(S)\prod_{j \in S\cap N} (f_j(b_j)\overline{f_j(a_j)})}{|\uni_S|^2}\right)\cdot \mathbbm{1}\{a = b\} .
    \end{align*}
    Therefore, $U^*PU$ is a diagonal matrix, with diagonal entries
    \begin{align*}
      (U^*PU)_{a,a}  &= \frac{1}{\kappa_a^2} \sum_{\substack{S\in \SS\\S \supseteq \supp(a)}}\frac{p(S)\prod_{j \in S\cap N} |f_j(a_j)|^2}{|\uni_S|^2} = 1.
    \end{align*}
    This shows that $U^*PU = I$, and $\|P^{1/2}U\|_{op} = 1$. Therefore, $\|Y\|_{op} \le \|P^{1/2}U\|_{op} \| V^*\|_{op} = 1$, and, by Lemma~\ref{lm:trace-dual}, $|\tr(P^{1/2}W Y^*)| \le \|P^{1/2}W\|_{tr}$. We proceed to compute $\tr(P^{1/2}W Y^*)$. We will need fact show that $\tr(P^{1/2}W Y^*)$ is real and non-negative, so the absolute value will be unnecessary.

    Note that 
    \[
    \tr(P^{1/2}W Y^*) = \tr(P^{1/2} WV U^*P^{1/2})
    = \tr(U^*P WV).
    \]
    Let us first compute $WV$. For each $S \in \SS$, $t \in \uni_S$, and each $a \in A$, we have
    \begin{align*}
        (WV)_{(S,t),a} &= \frac{1}{\sqrt{|\uni|}}\sum_{x \in \uni}q_{S,t}(x) \chi_a(x)\\
        &= \frac{1}{\sqrt{|\uni|}}\sum_{x \in \uni}\left(\prod_{j \in S \cap C} \mathbbm{1}\{x_j = t_j\}\right)\left(\prod_{j \in S \cap N} \mathbbm{1}\{x_j \le t_j\}\right) \chi_a(x)\\
        &= \frac{1}{\sqrt{|\uni|}} \left(\prod_{j \in S \cap C}\omega_{m_j}^{a_j\cdot t_j}\right) \left(\prod_{j \in S \cap N}\left(\sum_{x_j = 0}^{t_j} \omega_{m_j}^{a_j\cdot x_j}\right)\right)\left(\prod_{j \not \in S} \left(\sum_{x_j = 0}^{m_j-1} \omega_{m_j}^{a_j\cdot x_j}\right)\right).
    \end{align*}
    The final product on the right hand side is $0$ unless $a_j = 0$ for all $j \not \in S$. Therefore, $(WV)_{(S,t),a} = 0$ unless $\supp(a) \subseteq S$. Defining
    \[
    f_{j,t}(a) := \sum_{x = 0}^{t}\omega_{m_j}^{a \cdot x} = 
    \begin{cases}
        t+1 & a = 0\\
        \frac{1-\omega_{m_j}^{a(t+1)}}{1-\omega_{m_j}^{a}} & a \neq 0
    \end{cases},
    \]
    we have that, if $\supp(a) \subseteq S$, then 
    \begin{align*}
        (WV)_{(S,t),a} &= \frac{1}{\sqrt{|\uni|}} \left(\prod_{j \in S \cap C}\omega_{m_j}^{a_j\cdot t_j}\right) \left(\prod_{j \in S \cap N}f_{j,t_j}(a_j)\right)\prod_{j \not \in S} m_j\\
        &= \frac{|\uni_{[d]\setminus S}|}{\sqrt{|\uni|}} \left(\prod_{j \in S \cap C}\omega_{m_j}^{a_j\cdot t_j}\right) \left(\prod_{j \in S \cap N}f_{j,t_j}(a_j)\right)
        \\
        &=\frac{\sqrt{|\uni|}}{|\uni_{S}|} \left(\prod_{j \in S \cap C}\omega_{m_j}^{a_j\cdot t_j}\right) \left(\prod_{j \in S \cap N}f_{j,t_j}(a_j)\right).
    \end{align*}
    Next we fix $a \in A$, and proceed to compute $(U^*PWV)_{a,a}$. For any $S \in \SS$ such that $\supp(a) \subseteq S$, we have
    \begin{align}
        \sum_{t \in \uni_S}\overline{U_{(S,t),a}}(WV)_{(S,t),a} &= 
        \frac{\sqrt{|\uni|}}{\kappa_a|\uni_{S}|^2}
        \sum_{t \in \uni_S}\left(\prod_{j \in S \cap C}\omega_{m_j}^{a_j\cdot t_j}\right) \left(\prod_{j \in S \cap N}(f_{j,t_j}(a_j)\overline{f_j(a_j)})\right)\overline{\chi_{a}(t)}\notag\\
        &= 
        \frac{\sqrt{|\uni|}}{\kappa_a|\uni_{S}|^2}
        \sum_{t \in \uni_S} \left(\prod_{j \in S \cap N}(f_{j,t_j}(a_j)\omega_{m_j}^{-a_j \cdot t_j}\overline{f_j(a_j)})\right)\notag\\
        &= 
        \frac{\sqrt{|\uni|}\cdot|\uni_{S\cap C}|}{\kappa_a|\uni_{S}|^2}
        \sum_{t \in \uni_{S\cap N}} \left(\prod_{j \in S \cap N}(f_{j,t_j}(a_j)\omega_{m_j}^{-a_j \cdot t_j}\overline{f_j(a_j)})\right)\notag\\
        &= 
        \frac{\sqrt{|\uni|}|\cdot|\uni_{S\cap C}|}{\kappa_a|\uni_{S}|^2}
        \prod_{j \in S \cap N}\left(\overline{f_j(a_j)}\sum_{t_j = 0}^{m_j-1}f_{j,t_j}(a_j)\omega^{-a_j\cdot t_j}\right).\label{eq:extmarg-partialsum}
    \end{align}
    We now claim that $\sum_{t_j = 0}^{m_j-1}f_{j,t_j}(a_j)\omega^{-a_j\cdot t_j} = m_j f_j(a_j)$. If $a_j = 0$, then $\omega_{m_j}^{a_j\cdot t_j} = 1$, and we have 
    \[
    \sum_{t_j = 0}^{m_j-1}f_{j,t_j}(0)
    = \sum_{\ell = 1}^{m_j}\ell = \frac{m_j(m_j+1)}{2} = m_j f_j(0). 
    \]
    If $a_j \neq 0$, then 
    \[
    \sum_{t_j = 0}^{m_j-1}f_{j,t_j}(a_j)\omega^{-a_j\cdot t_j}
    = \sum_{t_j = 0}^{m_j-1} \frac{\omega_{m_j}^{-a_j \cdot t_j} -\omega_{m_j}^{a_j}}{1-\omega_{m_j}^{a_j}}
    = -\frac{m_j\omega_{m_j}^{a_j}}{1-\omega_{m_j}^{a_j}}
    = \frac{m_j}{1-\omega_{m_j}^{-a_j}} = m_j f_j(a_j),
    \]
    where we used that $\sum_{t_j= 0}^{m_j-1}\omega_{m_j}^{-a_j\cdot t_j} = 0$ whenever $a_j \neq 0$. Plugging back into \eqref{eq:extmarg-partialsum}, we have
    \begin{align*}
       \sum_{t \in \uni_S}\overline{U_{(S,t),a}}(WV)_{(S,t),a} &=
       \frac{\sqrt{|\uni}|\cdot|\uni_{S\cap C}|\cdot|\uni_{S\cap N}|}{\kappa_a|\uni_{S}|^2}
        \prod_{j \in S \cap N}|f_j(a_j)|^2
        = \frac{\sqrt{|\uni}|}{\kappa_a|\uni_{S}|}
        \prod_{j \in S \cap N}|f_j(a_j)|^2.
    \end{align*}
    Adding up these equalities over $S \in \SS$, and using that $U_{(S,t),a} = (WV)_{(S,t),a} = 0$ whenever $\supp(a) \not \subseteq S$, we have
    \begin{align*}
        (U^*PWV)_{a,a} &=
        \sum_{\substack{S\in \SS\\S\supseteq \supp(a)}}\left(\frac{p(S)}{|\uni_S|}\sum_{t \in \uni_S}\overline{U_{(S,t),a}}(WV)_{(S,t),a}\right)\\
        &= \frac{\sqrt{|\uni|}}{\kappa_a}\sum_{\substack{S\in \SS\\S\supseteq \supp(a)}}
        \frac{p(S)}{|\uni_{S}|^2} \prod_{j \in S \cap N}|f_j(a_j)|^2\\
        &= \sqrt{|\uni|}\sqrt{\sum_{\substack{S\in \SS\\S\supseteq \supp(a)}}
        \frac{p(S)}{|\uni_{S}|^2} \prod_{j \in S \cap N}|f_j(a_j)|^2}.
    \end{align*}
    Summing over $a \in A$, we finally get
    \begin{align}
        \gamma_F(P^{1/2}W)
        \ge
        \frac{1}{\sqrt{|\uni|}} \|P^{1/2}W\|_{tr}
        &\ge \frac{1}{\sqrt{|\uni|}} |\tr(P^{1/2}WY^*)|\notag\\
        &= \frac{1}{\sqrt{|\uni|}} |\tr(U^*PWV)|\notag\\
        &= \sum_{a \in A}\sqrt{\sum_{\substack{S\in \SS\\S\supseteq \supp(a)}}
        \frac{p(S)}{|\uni_{S}|^2} \prod_{j \in S \cap N}|f_j(a_j)|^2},\label{eq:extmarg-lb1}
    \end{align}
    where the first two inequalities are by Lemma~\ref{lm:gammaF-dual} and Lemma~\ref{lm:trace-dual}, respectively. 

    The final step in the proof is to manipulate \eqref{eq:extmarg-lb1} to put it in more explicit form. To this end, we first compute $|f_j(a_j)|$
    \[
    |f_j(a_j)| = 
    \begin{cases}
        \frac{m_j+1}{2} &a_j = 0\\
        \frac{1}{2\sin\left(\frac{\pi a_j}{m_j}\right)} & a_j \neq 0,
    \end{cases}
    \]
    where we use that $|1-\omega_{m_j}^{-a_j}|^2 = 2\left(1-\cos\left(\frac{2\pi a_j}{m_j}\right)\right) = 4\sin\left(\frac{\pi a_j}{m_j}\right)^2$. Therefore, for any $a$ with $R:= \supp(a) \cap C$ and $O:= \supp(a) \cap N$, we have
    \begin{align}
        \sum_{\substack{S\in \SS\\S\supseteq \supp(a)}}
        &\frac{p(S)}{|\uni_{S}|^2} \prod_{j \in S \cap N}|f_j(a_j)|^2
        = 
        \sum_{\substack{S\in \SS\\S\supseteq R\cup O}}
        \frac{p(S)}{|\uni_{S}|^2} \left(\prod_{j \in O}\frac{1}{4\sin\left(\frac{\pi a_j}{m_j}\right)^2}\right)\left(\prod_{j \in (S \cap N)\setminus O} \frac{(m_j+1)^2}{4}\right)\notag\\
       &=
       \left(\prod_{j \in O}\frac{1}{m_j^2\sin\left(\frac{\pi a_j}{m_j}\right)^2}\right)\sum_{\substack{S\in \SS\\S\supseteq R\cup O}}
        \frac{p(S)}{|\uni_{S\cap C}|^2 \cdot4^{|S\cap N|}}\prod_{j \in (S \cap N)\setminus O} \left(1+\frac{1}{m_j}\right)^2.\label{eq:extmarg-lb2-sqrt}
    \end{align}
    Moreover, for any $R \subseteq C$ and $O\subseteq N$, summing over $a$ such that $\supp(a) \cap C = R$ and $\supp(a) \cap N = O$, we get
    \begin{equation}\label{eq:extmarg-lb2-suma}
        \sum_{\substack{a \in A\\\supp(a) \cap C = R\\\supp(a)\cap N = O}}\prod_{j \in O}\frac{1}{m_j\sin\left(\frac{\pi a_j}{m_j}\right)}
        = 
        \left(\prod_{j \in R} (m_j-1)\right)\prod_{j \in O}\left(\frac{1}{m_j}\sum_{a_j = 1}^{m_j-1}\frac{1}{\sin\left(\frac{\pi a_j}{m_j}\right)}\right).
    \end{equation}
    
    Combining \eqref{eq:extmarg-lb1}, \eqref{eq:extmarg-lb2-sqrt}, and \eqref{eq:extmarg-lb2-suma}, and recalling $\zeta(m):= \frac{1}{m}\sum_{a = 1}^{m-1}\frac{1}{\sin\left(\frac{\pi a}{m}\right)}$ we get
    \begin{multline*}
        \gamma_F(P^{1/2}W)
        \ge\\
        \sum_{\substack{R\subseteq C\\O \subseteq N}}\left(\prod_{j \in R} (m_j-1)\right)\cdot\left(\prod_{j \in O}\zeta(m_j)\right)
        \sqrt{\sum_{\substack{S\in \SS\\S\supseteq R\cup O}}
        \frac{p(S)}{|\uni_{S\cap C}|^2 \cdot4^{|S\cap N|}} \prod_{j \in (S \cap N)\setminus O} \left(1+\frac{1}{m_j}\right)^2},
    \end{multline*}
    as we needed to show.
\end{proof}

\fi

\ifarxiv 
\section*{Acknowledgements}
We thank Ryan McKenna for pointing out relevant related work. 
The work of Christian Janos Lebeda is supported by grant ANR-20-CE23-0015 (Project PRIDE) and the ANR 22-PECY-0002 IPOP (Interdisciplinary Project on Privacy) project of the Cybersecurity PEPR. Aleksandar Nikolov and Haohua Tang were supported by an NSERC Discovery Grant.
\fi





\ifarxiv
\sasho{We can generalize our upper bounds for $\gamma_2$ (maximum error) to $\gamma_{(p)}$ $\ell_p$ error for any $p \ge 2$. The proofs are almost the same. I don't find this very important, but some reviewer may complain that the ResidualPlanner paper deals with more general objectives. We can also just make a remark that we focus on the most important objectives but we can handle others too.} \christian{How would the generalization to $\gamma_{(p)}$ work? Do you think we can get closed-form expression like in  $\gamma_{F}$ or do we need to optimize similar to the maximum error? If we add anything it makes sense to keep it short in my opinion.}
\fi



\ifarxiv
\christian{We could add a note about randomness complexity of DP. The recent work on this topic is motivated by the fact that the census required immense resources for generating high quality random bits. We use fewer samples for marginals. For k-way marginals the improvement is the most significant as we only use roughly a factor $(1 - 1/m)^k$ as many samples. (although we have to remember the factor 2 for non-binary attributes from using complex numbers) Conjecture: For binary attributes our approach is optimal also in terms of randomness complexity among all optimal factorizations.}
\fi

\bibliographystyle{alpha}
\bibliography{references}

\appendix

\ifpods

\section{Additional preliminaries}  

Here we provide preliminaries that were omitted from the main body due to space limitations.

\subsection{Differential Privacy}
\label{sec:pods-prelim-dp}

Here we state the definition of GDP and provide the baseline Gaussian mechanism for marginal queries.

\begin{definition}[{Gaussian Differential Privacy~\cite[Definition~4]{DongRS22-GDP}}]
    \label{def:gdp}
    A randomized mechanism $\mathcal{M} \colon \uni^* \rightarrow \mathcal{R}$ satisfies $\mu$-GDP if for all pairs of neighboring datasets $D \sim D'$ it holds that
    \[
        T(\mathcal{M}(D),\mathcal{M}(D')) \geq T(\mathcal{N}(0, 1), \mathcal{N}(\mu, 1)) \,,
    \]
    where $T(P,Q): [0,1] \rightarrow [0,1]$ denotes the trade-off function for two distributions $P$ and $Q$ defined on the same space. 
    The tradeoff function is defined as 
    \[         
        T(P,Q)(\alpha) = \inf\{\beta_\phi : \alpha_\phi \leq \alpha \} \,,     
    \] 
    where the infimum is taken over all (measurable) rejection rules $\phi$, and $\alpha_\phi$ and $\beta_\phi$ denote the type I and type II error rates, respectively. 
\end{definition}

The Gaussian mechanism~\cite{DinurNissim03,DworkN04,DworkKMMN06OurDataOurselves} is one of the most important tools in differential privacy. 
The mechanism achieves the desired privacy guarantees by adding unbiased Gaussian noise to all queries scaled by the $\ell_2$ sensitivity. 
Applying the Gaussian mechanism directly to our setting gives us a baseline for privately estimating marginal queries.

\begin{lemma}[The Gaussian mechanism]
    \label{lem:gaussian}
    Let $q \colon \uni^* \rightarrow \mathbb{R}^d$ be a set of queries with $\ell_2$ sensitivity $\Delta q \coloneq \max_{D \sim D'} \|q(D) - q(D')\|_2$. 
    Then the mechanism that outputs $q(X) + Z$ where $Z \sim \mathcal{N}\left(0, \frac{(\Delta q)^2}{\mu^2} I_d\right)$ satisfies $\mu$-GDP.
\end{lemma}

\begin{lemma}[Gaussian noise for marginal queries]
    \label{lem:baseline}
    Let $\SS = (S_1,\dots,S_m)$ be a collection of sets such that $S_i \subseteq [d]$.
    Then the mechanism that for each $i \in [m]$ and each assignment $t \in \uni_{S_i}$ independently samples noise $Z_{S_i,t} \sim \mathcal{N}(0, m/\mu^2)$ and releases $q_{S_i, t}(D) + Z_{S_i,t}$ satisfies $\mu$-GDP.

    In particular, the mechanism privately estimates all $k$-way marginal queries by adding independent noise from $\mathcal{N}(0, {d \choose k}/\mu^2)$ to each query.
\end{lemma}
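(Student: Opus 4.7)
The plan is to view the entire collection of marginal queries as a single vector-valued function and then invoke the Gaussian mechanism from Lemma~\ref{lem:gaussian}. The only real content is a clean calculation of the $\ell_2$ sensitivity of this vector of queries, which decomposes nicely across the sets $S_i$.

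First I would fix two neighboring datasets $D \sim D'$ differing in one data point $x$, and, for each $i \in [m]$, analyze the contribution of the queries indexed by $S_i$ to $\|q(D) - q(D')\|_2^2$. The key observation is that for a fixed $S_i$, the partial assignments $t \in \uni_{S_i}$ partition $\uni$ according to the attribute values on $S_i$, so the single data point $x$ contributes to exactly one query $q_{S_i, t^*}$, namely the one where $t^*_j = x_j$ for all $j \in S_i$; for every other $t$, we have $q_{S_i,t}(D) = q_{S_i,t}(D')$. Therefore $\sum_{t \in \uni_{S_i}} (q_{S_i, t}(D) - q_{S_i, t}(D'))^2 = 1$.

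Summing over $i = 1, \ldots, m$ gives $\|q(D) - q(D')\|_2^2 = m$, i.e., the joint $\ell_2$ sensitivity is $\sqrt{m}$. Applying Lemma~\ref{lem:gaussian} to the concatenated query vector with this sensitivity yields $\mu$-GDP when we add independent noise $\mathcal{N}(0, m/\mu^2)$ to each coordinate, which is exactly the mechanism described in the lemma. For the ``in particular'' part, I would simply instantiate $\SS$ as the collection of all $\binom{d}{k}$ subsets of $[d]$ of size $k$, giving $m = \binom{d}{k}$ and hence noise variance $\binom{d}{k}/\mu^2$ per query.

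There is no real obstacle here; the proof is mechanical once the per-set sensitivity calculation is in hand. The only point worth emphasizing in the write-up is that the partition structure of the $t \in \uni_{S_i}$ is what makes the contribution of a single data point equal to $1$ (rather than, e.g., $|\uni_{S_i}|$) for each fixed $S_i$, so that sensitivities combine as $\sqrt{m}$ across sets rather than scaling with the total number $\sum_i |\uni_{S_i}|$ of released queries.
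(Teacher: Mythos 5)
Your proposal is correct and follows the paper's own proof essentially verbatim: compute the $\ell_2$ sensitivity by noting that a single added/removed point affects exactly one marginal query per set $S_i$, giving total squared sensitivity $m$, and then invoke the Gaussian mechanism (Lemma~\ref{lem:gaussian}). The extra detail you add about the $t \in \uni_{S_i}$ partitioning $\uni$ is a fine elaboration of the same argument, not a different approach.
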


\begin{proof}
    Notice that for each $S_i$, adding or removing a data point changes exactly one marginal query by $1$ while the remaining $(\prod_{i \in S} \vert \uni_i \vert) - 1$ queries are unaffected. 
    The $\ell_2$ sensitivity for all queries in $\SS$ is thus $\sqrt{\sum_{i \in [m]} 1^2} = \sqrt{m}$.
    The privacy guarantee follows from Lemma \ref{lem:gaussian}.
\end{proof}

We use some standard properties of differential privacy in our proofs.

\begin{lemma}[Post-processing~{\cite[Proposition~4]{DongRS22-GDP}}]
    \label{lem:post-processing}
    Let $\mathcal{M} \colon \mathcal{U}^* \rightarrow \mathcal{R}$ denote any $\mu$-GDP mechanism. 
    Then for any (randomized) function $g \colon \mathcal{R} \rightarrow \mathcal{R}'$ the composed mechanism $g \circ \mathcal{M} \colon \mathcal{U}^* \rightarrow \mathcal{R}'$ also satisfies $\mu$-GDP.
\end{lemma}

\begin{lemma}[Composition~{\cite[Corollary~3.3]{DongRS22-GDP}}]
    \label{lem:composition}
    Let $\mathcal{M}_1 \colon \mathcal{U}^* \rightarrow \mathcal{R}_1$ and $\mathcal{M}_2 \colon \mathcal{U}^* \rightarrow \mathcal{R}_2$ denote a pair of mechanisms that satisfies $\mu_1$-GDP and $\mu_2$-GDP, respectively. 
    Then the mechanism $\mathcal{M}(D) = (\mathcal{M}_1(D), \mathcal{M}_2(D))$ that outputs the result from both mechanisms satisfies $\sqrt{\mu^2_1 + \mu_2^2}$-GDP.
\end{lemma}

\subsection{Complex Numbers} 
\label{sec:prelim-complex}

\fi

\ifpods

\section{More details on upper bounds for marginal queries}
\label{app:pods-marginals-upper-bounds}

Here we provide a full version of \Cref{sec:algorithm} including full proofs for the main results presented in the main body. 
We start by considering a special case of estimating all k-way marginals before introducing our more general mechanism.
In the main body we focus on the general mechanism which is our main result from this section.

In this section we introduce our technique for adding noise to marginal queries. 
We first show how the marginal queries can be recovered from aggregate queries in the Fourier basis of $\uni$. 
This observation immediately yields a correlated Gaussian noise mechanism, or, equivalently, a factorization that achieves noise with lower variance than the standard Gaussian mechanism. 
It is then easy to observe that some of the queries in the Fourier basis are used for more marginal queries than other. 
By allocating privacy budget weighted by the importance of the queries, we can further reduce the error of our factorization.
In \Cref{sec:lower-bound}, we show that our factorization is, in fact, optimal!

\fi

\ifpods
\section{Upper Bounds for Product Queries and Extended Marginals}
\label{sec:pods-product-queries}

In this section, we first introduce a generalization of marginal queries, which we call product queries. 
Our algorithms extend, with only small modifications, to answering arbitrary workloads of product queries. 
Later, in Section~\ref{sec:lower-bounds-product}, we also show that the resulting upper bounds are optimal within the class of factorization mechanisms. 
We further show how to use product queries to answer extended marginal queries, in which the data points have both categorical and numerical attributes, and the predicates on numerical attributes are threshold functions.

\subsection{Estimating Product Queries}

\subsection{Estimating Workloads of Extended Marginals}\label{sec:product-apps}

\fi

\ifpods

\fi

\section{Missing Proofs from Section~\ref{sec:fact-duality}}
\label{app:duality}

Towards proving Lemmas~\ref{lm:gammaF-dual}~and~\ref{lm:gamma2-dual}, let us first recall a well-known characterization of the trace norm. 

\begin{lemma}\label{lm:trace-dual}
    For any two matrices $X$ and $Y$ of the same dimensions, we have $|\tr(XY^*)| \le \|X\|_{tr}\|Y\|_{op}$. Moreover, for any matrix $X$, there is a matrix $Y$ of the same dimensions with $\|Y\|_{op} = 1$, and such that
    \[
    \tr(XY^*) = \|X\|_{tr}\|Y\|_{op} = \|X\|_{tr}.
    \]
\end{lemma}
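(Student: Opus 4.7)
The plan is to reduce everything to the singular value decomposition (SVD) of $X$. Write $X = U\Sigma V^*$, where $\Sigma$ is a diagonal matrix whose diagonal entries are the singular values $\sigma_1, \ldots, \sigma_r$ of $X$, and $U$, $V$ have orthonormal columns (so $\|U\|_{op} = \|V\|_{op} = 1$ whenever $X\neq 0$). In particular, $\|X\|_{tr} = \sum_i \sigma_i = \tr(\Sigma)$.

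For the inequality, I would use cyclicity of the trace to write $\tr(XY^*) = \tr(U\Sigma V^* Y^*) = \tr(\Sigma V^* Y^* U)$. Setting $M := V^* Y^* U$, this gives $\tr(XY^*) = \sum_{i=1}^r \sigma_i M_{ii}$. Each diagonal entry of $M$ is bounded in absolute value by the operator norm of $M$, and since $U, V$ have orthonormal columns we have $\|M\|_{op} \le \|V^*\|_{op}\|Y^*\|_{op}\|U\|_{op} = \|Y\|_{op}$. Combining these bounds yields $|\tr(XY^*)| \le \sum_i \sigma_i \|Y\|_{op} = \|X\|_{tr}\|Y\|_{op}$.

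For the tight case, I would take $Y := UV^*$ (a partial isometry pointing in the ``direction'' of $X$). Since the nonzero singular values of $Y$ are all equal to $1$, we have $\|Y\|_{op} = 1$ (when $X \neq 0$; the case $X = 0$ is trivial, as any matrix $Y$ of unit operator norm makes both sides zero). A direct computation gives
\[
\tr(XY^*) = \tr(U\Sigma V^* V U^*) = \tr(\Sigma) = \|X\|_{tr},
\]
using $V^* V = I_r$ and cyclicity of the trace.

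No serious obstacle is expected here: the argument is the standard duality of the trace and operator norms, and the only care needed is handling the rectangular shape of $X$ consistently (i.e., taking the reduced SVD of rank $r = \mathrm{rank}(X)$) and the degenerate case $X = 0$.
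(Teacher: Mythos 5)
Your proof is correct and follows essentially the same route as the paper's: reduce to the SVD $X = U\Sigma V^*$, bound $\tr(XY^*)$ by $\sum_i \sigma_i$ times the operator norm of $Y$ (your diagonal entries of $M = V^*Y^*U$ are exactly the paper's terms $u_i^* Y v_i$), and take $Y = UV^*$ for the equality case. Your explicit treatment of $X = 0$ is a harmless extra that the paper omits.
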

\begin{proof}
    Let $X = U\Sigma V^*$ be the SVD of $X$, where $\Sigma$ is an $r\times r$ diagonal matrix with non-negative entries, and $U$ and $V$ are matrices with $r$ orthonormal columns each. Then 
    \[
    |\tr(XY^*)| = |\tr(\Sigma (U^* Y V)^*)| \le \sum_{i=1}^r \sigma_i |u_i^* Y v_i|,
    \]
    where $u_i$ is the $i$-th column of $U$ (i.e., the $i$-th left singular vector of $Y$), and $v_i$ is the $i$-th column of $V$ (i.e., the $i$-th right singular vector of $X$). We have that 
    \[
    |u_i^* Y v_i| \le \|u_i\|_2 \|Y\|_{op} \|v_i\|_2 = \|Y\|_{op},
    \]
    so $\tr(XY^*) \le  \sum_{i=1}^r\sigma_i\|Y\|_{op} = \|X\|_{tr}\|Y\|_{op}$.

    For the claim after ``moreover'', take $U$ and $V$ to be the matrices of left and right singular vectors of $X$, as above, and define $Y = UV^*$. Then 
    \[
    \tr(XY^*) = \tr(U^*XV) = \tr(\Sigma) = \|X\|_{tr},
    \]
    and $\|UV^*\|_{op} = 1$.
\end{proof}

To prove Lemma~\ref{lm:gammaF-dual}, we also need a matrix version of the Cauchy-Schwarz inequality.
\begin{lemma}\label{lm:mat-CS}
    For any two matrices $X$ and $Y$ for which the product $XY$ is well defined,
    \[
    \|XY\|_{tr} \le \|X\|_F \|Y\|_F,
    \]
    and equality holds if and only if $X^* X = c\,YY^*$ for some real number $c\ge 0$.
\end{lemma}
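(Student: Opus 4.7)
The plan is to establish both the inequality and the equality characterization using the duality-based approach from Lemma~\ref{lm:trace-dual}, combined with the Cauchy-Schwarz inequality for the Frobenius inner product. The first step is to prove the inequality $\|XY\|_{tr} \le \|X\|_F \|Y\|_F$. By Lemma~\ref{lm:trace-dual}, $\|XY\|_{tr} = \sup_{\|Z\|_{op}\le 1}|\tr(XYZ^*)|$. For any such $Z$, I would rewrite $\tr(XYZ^*) = \tr((Z^*X)Y) = \langle Z^*X, Y^*\rangle_F$ and apply the scalar Cauchy-Schwarz inequality to get $|\tr(XYZ^*)| \le \|Z^*X\|_F\|Y\|_F$. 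The factor $\|Z^*X\|_F^2$ can then be bounded as $\tr(X^*ZZ^*X) = \tr(XX^*\cdot ZZ^*) \le \tr(XX^*) = \|X\|_F^2$, where the middle inequality uses that $ZZ^* \preceq I$ (since $\|Z\|_{op}\le 1$) together with the fact that $\tr(MN) \le \tr(M)$ whenever $M \succeq 0$ and $0 \preceq N \preceq I$. Taking the supremum over $Z$ gives the inequality.

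For the equality characterization, the ``if'' direction is a direct computation. Assuming $X^*X = cYY^*$ with $c \ge 0$, one computes $(XY)^*(XY) = Y^*X^*XY = cY^*YY^*Y = c(Y^*Y)^2$, so $|XY| = \sqrt{c}\,Y^*Y$ and $\|XY\|_{tr} = \sqrt{c}\|Y\|_F^2$. Simultaneously, $\|X\|_F^2 = \tr(X^*X) = c\tr(YY^*) = c\|Y\|_F^2$, so $\|X\|_F\|Y\|_F = \sqrt{c}\|Y\|_F^2$, matching $\|XY\|_{tr}$.

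For the ``only if'' direction, I would pick $Z$ (given by Lemma~\ref{lm:trace-dual}) with $\|Z\|_{op}\le 1$ achieving $\tr(XYZ^*) = \|XY\|_{tr}$, and then extract the two tightness conditions from the inequality chain: (a) the PSD inequality $\tr(XX^*ZZ^*) = \tr(XX^*)$ forces $ZZ^*X = X$ (since $(I-ZZ^*)^{1/2}X = 0$), and (b) Cauchy-Schwarz equality forces $Z^*X = tY^*$ for some real $t \ge 0$ (the sign is forced since the trace equals the non-negative number $\|XY\|_{tr}$). Combining, $X^*X = X^*(ZZ^*)X = (Z^*X)^*(Z^*X) = t^2\, YY^*$, which is the desired identity with $c = t^2$. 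The main obstacle is the extraction of condition (a) from the trace inequality (requiring the PSD square root argument) and the correct form of the Cauchy-Schwarz equality condition; I also note the mild degeneracy that if $Y=0$ but $X\neq 0$ the inequality is trivially an equality while no such $c$ exists, but this case does not arise in the applications of the lemma (where the relevant factors are always nonzero).
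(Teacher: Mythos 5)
Your proof is correct and takes essentially the same route as the paper: both reduce the trace norm via its dual characterization (Lemma~\ref{lm:trace-dual}) to a supremum of $|\tr(XYZ^*)|$ over $\|Z\|_{op}\le 1$, then apply the Frobenius Cauchy--Schwarz inequality and bound the remaining factor by a PSD trace argument. The only cosmetic difference is which factor absorbs $Z$: you bound $\|Z^*X\|_F \le \|X\|_F$, while the paper groups as $X\cdot(ZY^*)$ and bounds $\|ZY^*\|_F \le \|Y\|_F$; the two are mirror images, and the tightness conditions you extract ($ZZ^*X = X$ and $Z^*X = tY^*$) are the duals of the paper's ($Z^*ZY^* = Y^*$ and $X = c\,ZY^*$). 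In the ``if'' direction your computation $(XY)^*(XY) = c\,(Y^*Y)^2$ is marginally cleaner than the paper's $(XY)(XY)^* = \tfrac{1}{c}(XX^*)^2$, since it handles $c = 0$ without a separate case. Your observation that the ``only if'' direction literally fails when $Y = 0$ but $X\neq 0$ is also correct: the paper's proof asserts $X = c\,(ZY^*)$ from Cauchy--Schwarz equality, which breaks when $ZY^* = 0$; as you note, this degenerate case does not occur in the lemma's applications (where $Y = RS^{1/2}$ is always nonzero), but the lemma as stated does silently assume it away.
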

\begin{proof}
    Let $Z$ be a matrix with $\|Z\|_{op} = 1$, and of the same dimensions as the product $XY$, so that $\tr(XYZ^*) = \|XY\|_{tr}$. We have
    \begin{equation}\label{eq:CS}
    \|XY\|_{tr} = |\tr(X(ZY^*)^*)| \le \|X\|_F \|Z Y^*\|_F,
    \end{equation}
    where we just used the standard Cauchy-Schwarz inequality, treating $X$ and $Z Y^*$ as vectors, and $\tr((X)(ZY^*)^*)$  as the standard inner product between them. Now observe that
    \begin{equation}\label{eq:Y-Fnorm}
    \|ZY^*\|_F^2 = \tr(Y(Z^*Z)Y^*) \le \tr(Y Y^*) = \|Y\|_F^2.
    \end{equation}
    This is because, by assumption, $\|Z^*Z\|_{op} = \|Z\|_{op}^2\le 1$, so $y_i (Z^* Z) y_i^* \le y_i y_i^*$ for each row $y_i$ of $Y$ (seen as a row vector). 
    Substituting gives us the required inequality.

    If the inequality holds with equality, then \eqref{eq:CS} and \eqref{eq:Y-Fnorm} must also hold with equality. For the Cauchy-Schwarz inequality \eqref{eq:CS} to hold with equality, it must be the case that there is a number $c \in \C$ for which $X = c(Z Y^*)$. For \eqref{eq:Y-Fnorm} to hold with equality, $Z^*Z$ must act as the identity on the row space of $Y$, i.e., we must have $Z^* Z Y^* = Y^*$. Then we have
    \[
    X^* X = |c|^2 YZ^* ZY^* = |c|^2 YY^*,
    \]
    as we needed to show (after renaming $|c|^2$ to $c$).
    
    Finally, assume that $X^* X = c\,YY^*$ for some real number $c \ge 0$. If $c = 0$, then clearly $X = 0$ as well, so $\|XY\|_{tr} = 0 = \|X\|_F \|Y\|_F$. Let us assume then that $c> 0$. Then
    \[
    (XY)(XY)^* = XYY^*X^* = \frac{1}{c}\,(XX^*)^2.
    \]
    Therefore, the singular values of $XY$ are equal to the eigenvalues of the positive semidefinite matrix $\frac{1}{\sqrt{c}}XX^*$, and we have
    \begin{align*}
     \|XY\|_{tr} = \frac{1}{\sqrt{c}}\,\tr(XX^*) &= \sqrt{\tr(XX^*)}\sqrt{\tr((1/c)\,XX^*)}\\
    &= \sqrt{\tr(XX^*)}\sqrt{\tr(Y^*Y)}
    = \|X\|_F\|Y\|_F.    
    \end{align*}
    Thus the inequality holds with equality.
\end{proof}

We can now prove Lemma~\ref{lm:gammaF-dual}.

\begin{proof}[Proof of Lemma~\ref{lm:gammaF-dual}]
Let $W = LR$ be a factorization of $W$.
Applying Lemma~\ref{lm:mat-CS} to $X := L$ and $Y := RS^{1/2}$, we have
\begin{equation}\label{eq:gammaF-dual-CS}
\|WS^{1/2}\|_{tr}= \|L(RS^{1/2})\|_{tr} \le \|L\|_{F} \|RS^{1/2}\|_F.
\end{equation}
Let $r_j$ be the $j$-th column of $R$. We have
\begin{equation}\label{eq:D-avg}
\|RS^{1/2}\|_{F}^2=\sum_{j = 1}^N S_{j,j} \|r_j\|_2^2 \le \left(\sum_{j=1}^N S_{j,j}\right) \max_{j=1}^N \|r_j\|_2^2
= \|R\|_{1\to 2}^2.
\end{equation}
The inequalities \eqref{eq:gammaF-dual-CS} and \eqref{eq:D-avg} imply that $\|WS^{1/2}\|_{tr} \le \|L\|_F \|R\|_{1\to 2}$ for any $L$ and $R$ such that $W = LR$. Minimizing the right hand side over all such choices of $L$ and $R$ then gives us $\|WS^{1/2}\|_{tr} \le \gamma_F(W)$.

Suppose now that $LR = W$ is some factorization of $W$. Clearly $\|WS^{1/2}\|_{tr} = \gamma_F(W) = \|L\|_F\|R\|_{1\to 2}$ if and only if both \eqref{eq:gammaF-dual-CS} and \eqref{eq:D-avg} hold with equality. By Lemma~\ref{lm:mat-CS}, \eqref{eq:gammaF-dual-CS} is tight if and only if $L^*L = c RSR^*$ for some $c \ge 0$. The inequality \eqref{eq:D-avg} is clearly tight if and only if $\|r_j\|_{2} = \|R\|_{1\to 2}$ whenever $S_{j,j} \neq 0.$
\end{proof}

The proof of Lemma~\ref{lm:gamma2-dual} is analogous to the proof of Lemma~\ref{lm:gammaF-dual}, but we use Lemma~\ref{lm:mat-CS} with $X := P^{1/2}L$ and $Y := RS^{1/2}$ instead. 

\section{Discrepancy with the SVD Lower Bound of McKenna et al.}
\label{app:hdmm}

Theorem 12 in~\cite{McKenna_Miklau_Hay_Machanavajjhala_2023} claims the following formula holds for the singular value lower bound of weighted marginal queries. Here we use the notation from Section~\ref{sec:our-algs-are-factorizations}, rather the notation of McKenna et al.
\begin{equation}\label{eq:HDMM-svdbd1}
    \|P^{1/2} W\|_{tr}
    = 
    \frac{1}{\sqrt{|\uni|}}\sum_{R \subseteq [d]} |\uni_R| \sqrt{\sum_{T \supseteq R}\frac{p(T)|\uni_{[d]\setminus T}|}{|\uni_T|}}\,,
\end{equation}
where we define $|\uni_{\emptyset}| = 1$, and set $p(T) = 0$ if $T \not \in \SS$.
To see that this corresponds to Theorem 12 in~\cite{McKenna_Miklau_Hay_Machanavajjhala_2023}, note that the bit vectors $a,b \in \{0,1\}^d$ in their notation correspond to sets $R$ and $T$ whose indicator vectors are, respectively, $a$ and $b$; $c(\neg a)$ in their notation equals $|\uni_{R}|$, while $c(b) = |\uni_{[d]\setminus T}|$, and, with our normalization, $w(b) = \frac{p(R)}{|\uni_R|}$. 

Let us rewrite \eqref{eq:HDMM-svdbd1} to bring it closer to \eqref{eq:gammaF-opt}. By bringing the $1/\sqrt{|\uni|}$ term inside the square root, we get
\begin{align}
    \|P^{1/2} W\|_{tr}
    = 
    \sum_{R \subseteq [d]} |\uni_R| \sqrt{\sum_{T \supseteq R}\frac{p(T)|\uni_{[d]\setminus T}|}{|\uni_T|\cdot|\uni|}}
    &=
    \sum_{R \subseteq [d]} |\uni_R| \sqrt{\sum_{T \supseteq R}\frac{p(T)}{|\uni_T|^2}}\notag\\
    &=
    \sum_{R \subseteq [d]} \left(\prod_{j \in R}m_j\right) \sqrt{\sum_{T \supseteq R}\frac{p(T)}{|\uni_T|^2}}\,.\label{eq:HDMM-svdbd2}
\end{align}
Now we can see that the coefficient in front of each square root on the right hand side in \eqref{eq:HDMM-svdbd2} is $\prod_{j \in R}m_j$ rather than $\prod_{j \in R}(m_j-1)$ as in \eqref{eq:gammaF-opt}. In particular, \eqref{eq:HDMM-svdbd2} is always at least as large as \eqref{eq:gammaF-opt}, and would seem to contradict Lemma~\ref{lem:factorization-gen}. 

The reason for the discrepancy is an error in the proof of Theorem 12 in~\cite{McKenna_Miklau_Hay_Machanavajjhala_2023}. In their Theorem 9, McKenna et al.~define, for each set $R \subseteq [d]$, a matrix $V(R)$ (or $V(a)$ in their notation) with $|\uni_R|$ rows, where each row is an eigenvector of $W^TPW$ with eigenvalue $\kappa(R) := \sum_{T \supseteq R}\frac{p(T)|\uni_{[d]\setminus T}|}{|\uni_T|}$. The matrix is given by the formula
\(
V(R) := \prod_{j=1}^d V_j(R),
\)
where $V_j(R)$ equals the $1\times m_j$ all-ones matrix if $j \not \in R$, and $J-m_j I$ if $j \in R$, for the $m_j \times m_j$ all-ones matrix $J$, and the $m_j\times m_j$ identity matrix $I$. 
From this, they infer that, for any $R\subseteq [d]$, $W^TPW$ has $|\uni_R|$ singular values equal to $\sqrt{\kappa(R)}$, and add these singular values with these multiplicities to get \eqref{eq:HDMM-svdbd1}. This argument, however, overcounts the singular values. In particular, the rows of $V(R)$ are not linearly independent unless $R = \emptyset$, so the eigenspace spanned by $V(R)$ is not necessarily of dimension $|\uni_R|$. Indeed, notice that $J-m_j I$ has rank $m_j-1$, and, therefore, $V(R)$ has rank $\prod_{j\in R}(m_j -1)$. One needs to also verify that there are no non-trivial linear dependencies between the rows in different $V(R)$ matrices, but this turns out to be a non-issue since $V(R) V(R')^T = 0$ whenever $R \neq R'$. Correcting for the dimension of the rowspan of $V(R)$ in the McKenna et al.~proof recovers our formula \eqref{eq:gammaF-opt}.

Let us consider a small example to illustrate this. Suppose that $d=2$, $m_1 = m_2 = 2$, $\SS = \{\{1\},\{2\}\}$, and $p(\{1\}) = p(\{2\}) = \frac{1}{2}$. Then 
\[
W^T P W
= 
\frac14 
\begin{pmatrix}
    2 & 1 & 1 & 0\\
    1 & 2 & 0 & 1\\
    1 & 0 & 2 & 1\\
    0 & 1 & 1 & 2
\end{pmatrix}.
\]
This matrix has one eigenvalue equal to $1$, two eigenvalues equal to $\frac{1}{2}$, and one eigenvalue equal to $0$. The eigenvectors are just the columns of $\tilde V$:
\[
\tilde{V} = 
\begin{pmatrix}
    1 & 1 & 1 & 1\\
    1 & 1 & -1 & -1\\
    1 & -1 & 1 & -1\\
    1 & -1 & -1 & 1
\end{pmatrix}.
\]
Therefore, the SVD lower bound is 
\(
\frac{1}{2}\|P^{1/2}W\|_{tr} = \frac12 + \frac{1}{2\sqrt{2}} + \frac{1}{2\sqrt{2}} + 0 = \frac{1 + \sqrt{2}}{2}.
\)
At the same time, \eqref{eq:HDMM-svdbd1} and \eqref{eq:HDMM-svdbd2} give $\frac{1}{2}+ \sqrt{2}$ which is larger than the sensitivity of $W$, i.e.~$\|W\|_{1\to 2} = \sqrt{2}$. The eigenmatrices in Theorem 9 of~\cite{McKenna_Miklau_Hay_Machanavajjhala_2023} are
\begin{align*}
    V(\emptyset) &= 
    \begin{pmatrix}
        1 & 1 & 1 & 1
    \end{pmatrix};\\
    V(\{1\}) &=
    \begin{pmatrix}
        -1 & -1 & 1 & 1\\
        1 & 1 & -1 & -1
    \end{pmatrix};\\
    V(\{2\}) &=
    \begin{pmatrix}
        -1 & 1 & -1 & 1\\
        1 & -1 & 1 & -1
    \end{pmatrix}.
\end{align*}
Notice that the two rows of $V(\{1\})$ are colinear, as are the two rows of $V(\{2\})$, which is the cause of the overcounting.

\end{document}
